\documentclass[a4paper,twocolumn,11pt,unpublished]{quantumarticle}
\pdfoutput=1
\usepackage[T1]{fontenc}
\usepackage[utf8]{inputenc}
\usepackage{amsmath,amssymb,amsthm,mathtools}
\usepackage{mathrsfs}
\usepackage{graphicx,booktabs,microtype,xcolor}
\usepackage{needspace}
\makeatletter
\patchcmd{\section}{\sffamily}{\sffamily\raggedright\hyphenpenalty=10000\exhyphenpenalty=10000}{}{\PackageError{manuscript}{Section format changed}{Check the class heading definition.}}
\patchcmd{\subsection}{\sffamily}{\sffamily\raggedright\hyphenpenalty=10000\exhyphenpenalty=10000}{}{\PackageError{manuscript}{Subsection format changed}{Check the class heading definition.}}
\patchcmd{\subsubsection}{\sffamily}{\sffamily\raggedright\hyphenpenalty=10000\exhyphenpenalty=10000}{}{\PackageError{manuscript}{Subsubsection format changed}{Check the class heading definition.}}
\makeatother
\AddToHook{cmd/section/before}{\Needspace{8\baselineskip}}
\AddToHook{cmd/subsection/before}{\Needspace{5\baselineskip}}
\AddToHook{cmd/subsubsection/before}{\Needspace{4\baselineskip}}
\usepackage[numbers,sort&compress]{natbib}
\usepackage{hyperref}
\DeclareMathOperator{\Tr}{Tr}
\newtheorem{proposition}{Proposition}
\hypersetup{pdftitle={Krylov geometry and optimal recovery under uncertain coherent noise},pdfauthor={Abhishek Chowdhury}}

\begin{document}
\raggedbottom
\title{Krylov geometry and optimal recovery under uncertain coherent noise}
\author{Abhishek Chowdhury}
\email{achowdhury@iitbbs.ac.in}
\affiliation{School of Basic Sciences, Indian Institute of Technology Bhubaneswar, Argul, Odisha 752050, India}
\date{}
\maketitle

\begin{abstract}

Recovery from a coherent rotation with an unobserved angle depends on both the encoding and what calibration reveals about the noise. We determine their combined limit for a known finite-dimensional generator and one final recovery. Two polynomial projections organize the answer. A physical Krylov chain separates correctable error amplitudes from their first state-dependent distortion; projection onto lower angle powers identifies the part that a coherent decoder can cancel. Under explicit rank and moment conditions, their combination gives the exact leading optimal entanglement infidelity at weak noise and a recovery attaining it for any finite logical dimension. The optimal worst-input error lies between one and $d$ times that infidelity at leading order, for logical dimension $d$. When the angle distribution is unknown, we optimize one decoder for every distribution compatible with the calibration. When spectral and calibration frequencies are integer multiples of a common frequency, moment positivity turns this into an exact finite optimization. Small measured means can still allow rare finite rotations. If the sole calibration datum is $\beta_r=\mathbb E[\sin^{2r}(\theta/2)]$, with $\theta$ the wrapped rotation phase at that common frequency, the optimal worst-prior infidelity is $O(\beta_r)$ exactly when generator powers through degree $r-1$ are correctable. When the first uncorrectable power has degree $m<r$, the optimal loss scales as $\beta_r^{m/r}$, with a sharp coefficient fixed by a universal moment problem. In a loss-free four-photon memory, population measurements at two calibration durations retain the fully informed quartic error coefficient for a fixed symmetric narrow-noise family with positive variance and finite fourth moment; a single duration permits loss quadratic in the noise width.
\end{abstract}

\section{Introduction}
\label{sec:introduction}
A quantum memory must retain an unknown superposition, including its entanglement with other systems. A fluctuating field can threaten that information even when its effect in each run is a coherent rotation. If the angle were known, the rotation could be reversed. When it is unavailable, a final recovery must use the redundant physical encoding and the information obtained from calibration. Which features of the encoding decide the best recovery, and how much must calibration determine?

We study a fixed encoding $V$ that preserves inner products and embeds a finite logical register in a finite physical Hilbert space. One known Hermitian generator $H$ acts during storage through $e^{-i\sigma xH}$, where $x$ is drawn independently of the logical input and $\sigma$ controls the angle scale. The realized $x$ remains unannounced. One scalar multiplies the whole generator, which may act on many physical qubits or optical modes. The distribution of $x$, or prior, can be known or constrained by observations. A deterministic recovery acts after storage and can use joint operations and auxiliary systems. The same recovery acts on every unknown logical input and every unannounced angle realization. This setting includes a field that is constant during each run and varies between runs, as well as a time-dependent scalar field with the same integrated angle. Independently fluctuating couplings, noncommuting time-dependent generators and control during storage define additional problems.

The physical obstacle is logical distinguishability. Redundancy helps when different error images carry the same unknown logical coefficients, allowing their error dependence to move into a disposable auxiliary coordinate, called the syndrome. The Knill--Laflamme conditions for quantum error correction (QEC) express this requirement through logical-state-independent error overlaps~\cite{KnillLaflamme1997,KnillLaflammeViola2000}. Approximate and perturbative correction quantify departures from that factorization~\cite{LeungEtAl1997,Beny2011}; constructive recovery and environment-based bounds connect it to attainable fidelity~\cite{BarnumKnill2002,NgMandayam2010,BenyOreshkov2011,ZhengEtAl2024,LiEtAl2025}. Coherent errors also behave differently from stochastic Pauli errors under syndrome extraction and decoding~\cite{BealeEtAl2018,BravyiEtAl2018}. These results frame a quantitative question for coherent noise. Once the first imperfectly correctable amplitude is reached, which part can a final recovery remove, and what error must remain?

Expanding the rotation produces the amplitudes $V,HV,H^2V,\ldots$, so the physical error itself selects a Krylov sequence. Orthogonalizing these maps produces a block Krylov chain, following the entire encoded register through successive error subspaces. Its matrix-valued spectral measure~\cite{DamanikPushnitskiSimon2008} keeps the energy weights as operators on the logical register and retains the full finite-angle orbit. The initially correctable subspaces each carry the same logical amplitudes; at the first harmful boundary, those amplitudes undergo a state-dependent distortion. A separate orthogonalization of angle powers identifies which part can be cancelled by coherently combining earlier error amplitudes. These two constructions determine an attaining recovery and a matching bound on every final recovery. 

Lanczos chains also organize operator growth~\cite{ParkerEtAl2019} and quantum-state spread~\cite{BalasubramanianEtAl2022Spread,NandyEtAl2025}. Here each encoded state has its own state Krylov chain. Under the correctability hypotheses stated below, the initial recurrence coefficients are shared by every logical input. Their first input-dependent values detect the physical obstruction to further correction, whose recovery cost also depends on the angle statistics. Operator Krylov methods for a fixed decoder and a Markovian noise generator address a complementary observable-based problem~\cite{GordeychukKiselev2026}; the encoded amplitudes here retain the freedom to optimize recovery itself.

Our first principal result evaluates this optimal boundary cost. Suppose one recovery corrects generator powers through degree $L$; these powers form a correctable prefix. Put $m=L+1$. Correctability means that the compressed moments $V^\dagger H^jV$, acting on the logical register, are scalar multiples of its identity for $0\le j\le2L$. Assume also that each orthogonal error subspace through degree $L$ retains the full logical dimension. Hold the rescaled prior fixed with finite $2m$-th moment, and require $1,x,\ldots,x^L$ to be linearly independent in mean square under that prior. Under these hypotheses, Proposition~\ref{thm:optimal} gives
\begin{equation}
\begin{gathered}
1-F_{e,\mathrm{opt}}=C_{\mathrm{opt}}\sigma^{2m}+o(\sigma^{2m}),\\
C_{\mathrm{opt}}=\left(\frac{g_L}{m!}\right)^2h_m\mathcal V.
\end{gathered}
\label{eq:intro-leading-coefficient}
\end{equation}
Entanglement fidelity $F_e$ tests preservation of correlations with an untouched reference. The factor $g_L/m!$ is the propagation amplitude to the boundary beyond this correctable prefix. The quantity $\mathcal V$ measures its logical distortion after optimal physical alignment, and $h_m$ is the mean-square part of $x^m$ left after subtracting lower angle powers. One coherent adjustment of the decoder performs that subtraction. A matching bound covers all completely positive trace-preserving recoveries, including recovery of several logical qubits and singular leakage. The same boundary determines the precise leading gap of Petz recovery~\cite{Petz1986,BarnumKnill2002}, a widely used decoder constructed from the channel. 

For a $d$-dimensional logical register with $C_{\mathrm{opt}}>0$, the optimal worst-input error, half the diamond distance from the identity channel, has order $\sigma^{2m}$. Its ratio to the optimal entanglement infidelity has lower and upper limits bounded by $1$ and $d$. A known logical rotation appended to the attaining recovery gives the upper bound (Proposition~\ref{prop:worst-input}). Finite noise alphabets can cancel the leading statistical residual; their improved orders and sensitivity to contamination follow from the corresponding polynomial projection.

Calibration introduces a distinct limitation. Robust and universal recovery already address incomplete channel knowledge~\cite{KosutShabaniLidar2008,TaghaviKosutLidar2010,ElimelechAlbertBarg2026Universal}. Estimation within a specified channel family and partial noise characterization can improve decoding~\cite{Danageozian2023IncompleteRecovery,IyerEtAl2025}. Here we ask for the best guarantee valid for every angle distribution compatible with specified ensemble data. In the weak-noise limit, uncertain moments select one optimal residual polynomial and hence one common decoder. At finite angle, positivity and the measured Fourier moments define the compatible channels exactly when the occupied spectral gaps and calibration frequencies are integer multiples of a common frequency. This construction uses a moment bootstrap in which
 all moments must belong to the same positive probability measure. The shared positivity structure also underlies moment optimization~\cite{Lasserre2001} and quantum-mechanical bootstrap methods~\cite{HanHartnollKruthoff2020}; here the unknown object is an angle distribution, while the latter constrain quantum-state expectations. The resulting finite convex optimization includes a common recovery and a matching adverse prior.

A small calibration mean alone need not imply uniformly small rotations. Rare finite rotations can remain compatible with it; distributional tails already change coherent-error suppression in specified code models~\cite{CladerEtAl2021}. Our second principal result ties this ambiguity to the physical correction order. After removing a scalar energy shift and choosing a common frequency unit $\Lambda>0$, write the generator with integer occupied energies. The wrapped phase is $\theta\equiv\Lambda\sigma x\pmod{2\pi}$. For an integer $r\ge1$, measure $\beta_r=\mathbb E[\sin^{2r}(\theta/2)]$, the mean of a nonnegative observable that vanishes as $\theta^{2r}$ near zero.
For $r=1$, $\beta_1=(1-\mathbb E[\cos\theta])/2$ measures the reduction of the cosine interference contrast at the fundamental frequency. Suppose the sole known statistic is $\beta_r=\epsilon$, and allow every probability distribution of the phase modulo $2\pi$ with that mean. As $\epsilon\downarrow0$, one final recovery has worst-prior infidelity $O(\epsilon)$ precisely when the error span $I,H,\ldots,H^{r-1}$ is exactly correctable. Proposition~\ref{prop:calibration-prefix} also determines the sharp coefficient through a finite optimization. Thus the order justified by the data can be matched directly to a correctable Krylov prefix.

If the first defective generator power is $H^m$ and $r>m$, the optimal worst-prior loss scales as $\beta_r^{m/r}$ under the boundary hypotheses of Proposition~\ref{prop:calibration-mismatch}. Its sharp coefficient follows from a universal moment problem. For the first protected case $m=2$, an explicit three-point adverse prior attains that coefficient. For every $m$, a dual polynomial bound with one nonnegative multiplier determines the constant, which approaches the Chebyshev value for a known support interval monotonically as $r$ increases. This connects the local optimality result directly to a finite-calibration limit. Christoffel functions bound the largest probability that any compatible prior can place on one damaging phase~\cite{KordaPutinarMezic2020}. They 
  explain which finite-angle alternatives the data have excluded.

A four-photon memory turns these general limits into explicit calibration and recovery choices. The code of Ref.~\cite{ChuangLeungYamamoto1997} undergoes a calibrated number-conserving mode rotation in its loss-free sector. Krylov dynamics under time-dependent optical Hamiltonians has been studied in Ref.~\cite{ChowdhuryMahapatra2024}; here the Krylov chain of the stored rotation determines the recovery cost. Population measurements on separate calibration copies constrain the same quasi-static rate distribution as storage. We derive exact optimal recovery guarantees from one, two and three calibration durations, with all unmeasured sine harmonics allowed. Two durations retain the quartic coefficient of the fully informed decoder for a fixed symmetric narrow prior with positive variance and finite fourth moment; one duration leaves a quadratic worst-case loss. Proposition~\ref{prop:sector-data}, applied to the measured optical cross block and completed in Appendix~\ref{app:optical-bootstrap}, constructs one recovery throughout each data class, including mixtures of coherent alignments at singular boundaries. Finite-shot bounds, axis errors and a uniform contamination crossover specify what the ideal guarantee requires. A two-qubit example separately identifies joint coherence information that local channel measurements can miss.

The argument follows the information available to the decoder. We first retain the logical dependence of the energy weights and identify the first amplitude that an exact syndrome separation cannot absorb (Sections~\ref{sec:model} and~\ref{sec:geometry}). Its recovery cost separates into a physical boundary distortion and a statistical polynomial residual. The same construction controls the worst-input error once a known logical rotation is removed. Partial moment or support information changes the polynomial chosen by one common decoder, while finite alphabets can make its leading coefficient vanish (Section~\ref{sec:uncertainty}). A small periodic calibration mean poses a different problem because rare finite rotations can remain compatible with it. The exact moment construction in Section~\ref{sec:finite-angle} retains those alternatives and matches the protection justified by calibration to correctable Krylov depth. Conserved logical sectors then reduce the common recovery to coherence alignment (Section~\ref{sec:sector-applications}). The four-photon memory in Section~\ref{sec:optical-application} evaluates that route from physical shells to measured populations, an attaining decoder and its finite-precision requirements. Table~\ref{tab:model-map} keeps the hypotheses and information classes distinct.

Appendix~\ref{app:attainment-details} completes the attaining recovery; Appendix~\ref{app:petz-details} controls the singular Petz expansion.
 Appendices~\ref{app:alphabet-details} and~\ref{app:contamination-details} control finite-alphabet stability. Appendices~\ref{app:calibration-moments} and~\ref{app:calibration-witnesses} justify finite Fourier compatibility and polynomial fidelity bounds at singular data, while Appendices~\ref{app:calibration-rare-pair} and~\ref{app:calibration-interval} determine the adverse-phase bounds. The optical completion and harmonic frontiers are derived in Appendices~\ref{app:optical-implementation} and~\ref{app:optical-bootstrap}; Appendices~\ref{app:benchmark-details} and~\ref{app:joint-details} develop the finite-width comparisons and uniform crossover. The two-generator counterexample in Appendix~\ref{app:triangle-obstruction} identifies the matrix information needed for a multivariate extension. Appendix~\ref{app:computational-checks} summarizes the numerical checks of the analytical results.

\section{Encoded memory and its spectral geometry}
\label{sec:model}

The information to be stored is a density matrix $\rho$ on a $d$-dimensional logical Hilbert space. For a pure input, its amplitudes in $|\psi\rangle=\sum_a\psi_a|a\rangle$ are unknown to the memory. An encoding $V:\mathbb C^d\to\mathcal H$ embeds this register into a finite physical Hilbert space of dimension at least $d$. It is an isometry, meaning $V^\dagger V=I_d$, so it preserves inner products and hence every superposition. Its image is the code subspace. When the physical space is larger, its extra dimensions are the redundancy available for correction. The logical dimension is arbitrary; $d=2^k$ describes $k$ stored qubits.

The known Hermitian operator $H$ generates the unwanted rotation. For example, $H=\sum_j h_j^{\rm field}Z_j$ describes a fixed spatial profile of couplings to physical spins, where $Z_j$ is the Pauli $Z$ operator at site $j$ and the real $h_j^{\rm field}$ are calibrated. One random scalar multiplying this entire operator represents a fluctuating common field strength. More general finite Hermitian operators, including interactions, are allowed. We set $\hbar=1$, absorb the known coupling and time units into $H$, and write the dimensionless integrated angle as $\sigma x$. The scale $\sigma\ge0$ will control the weak-noise limit; the dimensionless random variable $x$ has probability distribution $\nu$. Averaging over runs defines the encoded noise channel
\begin{equation}
 \mathcal N_{\sigma,\nu}(\rho)
 =\int e^{-i\sigma xH}V\rho V^\dagger e^{i\sigma xH}\,d\nu(x).
 \label{eq:channel}
\end{equation}
 The channel maps a logical input to a physical output. In each run the stored state undergoes a unitary rotation; the average appears because the realized $x$ is unavailable. The distribution $\nu$ is independent of $\rho$. It can describe quasi-static noise, whose amplitude is constant during a run but varies between runs. A time-dependent scalar multiplying the same $H$ also reduces to Eq.~\eqref{eq:channel} through its integrated angle. One uncertain scalar can therefore drive an entire finite register. Independently fluctuating coefficients or noncommuting time-dependent generators generally need additional error operators. Finally, $\sigma$ sets only the scale of the angle; the root-mean-square angle is $\sigma\sqrt{\mathbb E_\nu[x^2]}$ when that moment exists and equals $\sigma$ only for unit second moment.

We use $\mathbb E_\nu[f(x)]=\int f(x)\,d\nu(x)$ for an expectation. The prior is the probability distribution of the per-run angle variable. Its support is the closed set of allowed angle values, defined by positive probability in every neighborhood of each such value. An atom is one value with positive probability, and its mass is that probability; more generally, the mass of a set is the probability assigned to that set. The integral notation in Eq.~\eqref{eq:channel} covers continuous densities, discrete choices and mixtures. Moments are expectations of powers of the angle variable, so the same conventions apply to every noise class below.

An unencoded qubit illustrates the loss caused by this average. Set $V=I_2$ and $H=Z=\operatorname{diag}(1,-1)$. The populations are unchanged, while the off-diagonal matrix element becomes
\begin{equation}
 \rho_{01}\longmapsto\gamma\rho_{01},\qquad
 \gamma=\mathbb E_\nu[e^{-2i\sigma x}].
 \label{eq:unencoded-dephasing}
\end{equation}
 In a run with known $x$, the factor $e^{-2i\sigma x}$ is a known phase and can be undone. Averaging different phases can give $|\gamma|<1$, reducing coherence even when their distribution is known exactly. For example, equally likely $x=\pm1$ gives $\gamma=\cos(2\sigma)$. This reduction of off-diagonal coherence is called dephasing. Redundant encoding can retain the relative phases in extra physical degrees of freedom. The question is when a final operation can recover them from those degrees of freedom.

The storage evolution in Eq.~\eqref{eq:channel} is prescribed and uncontrolled; recovery acts after it. Control during storage changes the problem.  For a quasi-static field, a known unitary $U_{\rm echo}$ satisfying $U_{\rm echo}^\dagger HU_{\rm echo}=-H$ gives the exact echo
\begin{equation}
  U_{\rm echo}^\dagger e^{-i\sigma xH/2} U_{\rm echo} e^{-i\sigma xH/2}=I
 \label{eq:storage-echo}
\end{equation}
  for every $x$. More generally, if the two halves accumulate angles $\tau_1,\tau_2$, this sequence leaves $e^{-i(\tau_1-\tau_2)H}$ and is exact when those angles agree. For $H=\sum_jh_j^{\rm field}Z_j$, a simultaneous Pauli $X$ pulse on every qubit reverses the field, implementing a spin echo~\cite{Hahn1950}. For a spin generator, a $\pi$ rotation about a perpendicular axis has the same effect. Our optimum concerns final recovery of the prescribed channel, without these intermediate pulses.

 A recovery, also called a decoder, is one deterministic operation $\mathcal R$ on the physical output that returns a $d$-dimensional logical state. It can act jointly on the entire memory, use auxiliary systems, and discard them. Mathematically these operations are completely positive trace-preserving (CPTP) maps~\cite{Watrous2018}. Trace preservation excludes postselection; complete positivity ensures a valid operation even on a memory entangled with another system. The map is selected using $V$, $H$, $\sigma$ and the declared noise information, and is then applied to every unknown logical input and every unannounced angle realization.\footnote{The recovery may measure the physical memory and condition later operations on its outcomes. Such measurement and feedback are included in one CPTP map. What is unavailable is an independent announcement of the angle, or postselection that discards unsuccessful runs.}

 A quantum memory must preserve the logical basis labels, their relative phases and any entanglement with other systems. We test this by entangling the input with an untouched reference and asking how much of that joint state survives. The resulting entanglement fidelity is
\begin{equation}
 F_e(\mathcal R\mathcal N)
 =\langle\Omega_d|[(\mathcal R\mathcal N)\otimes\mathrm{id}]
 (|\Omega_d\rangle\langle\Omega_d|)|\Omega_d\rangle,
 \label{eq:fidelity}
\end{equation}
 where $|\Omega_d\rangle=d^{-1/2}\sum_{a=1}^d|a\rangle_L|a\rangle_R$ is a maximally entangled logical/reference state and $\mathrm{id}$ acts on the reference. The logical marginal alone is maximally mixed, but its correlations with the reference contain the coherence that the memory must preserve. Perfect fidelity therefore means the recovered channel is the identity on the whole logical register. This untouched-reference definition is entanglement fidelity~\cite{Schumacher1996}. We write $F_{e,\mathrm{opt}}(\sigma,\nu)=\max_{\mathcal R}F_e(\mathcal R\mathcal N_{\sigma,\nu})$ and call $1-F_e$ the entanglement infidelity. Entanglement fidelity is the primary objective. Worst-input trace-distance error, allowing an entangled reference, is half the diamond-norm distance from the identity channel~\cite{Watrous2018}. It is never smaller than the entanglement infidelity and can be much larger for a coherent logical error. Proposition~\ref{prop:worst-input} bounds the optimal worst-input error below by the optimal entanglement infidelity and above by $d$ times it plus a higher-order remainder. The attaining recovery followed by a known logical rotation satisfies this upper bound at the first harmful order. The average fidelity over pure logical inputs equals $(dF_e+1)/(d+1)$~\cite{HorodeckiEtAl1999,Nielsen2002}, so each entanglement-fidelity statement below also determines it.

We use $T\succeq0$ for a positive-semidefinite matrix, $|T|=(T^\dagger T)^{1/2}$ for the positive operator absolute value, $\|T\|_1=\Tr|T|$ for the trace norm, and $\|T\|_\infty$ for the largest singular value. The unnormalized Hilbert--Schmidt norm is $\|T\|_{\rm HS}^2=\Tr(T^\dagger T)$. These conventions also apply to rectangular error maps.

\begin{table*}[t]
\centering\small
\begin{tabular}{@{}p{.16\textwidth}p{.27\textwidth}p{.26\textwidth}p{.235\textwidth}@{}}
\toprule
Problem & Fixed physical input & Available noise information & Recovery question\\\midrule
Weak coherent noise & Finite $V,H$ with a nondegenerate correctable prefix; arbitrary logical dimension & A fixed rescaled prior with the required finite moments & Leading optimal error, worst-input bracket and Petz comparison, Propositions~\ref{thm:optimal}--\ref{thm:petz}\\
Uncertain weak-noise statistics & The same physical prefix & Realizable ordinary moments with uniform tail control, trusted support, or a fixed alphabet and contamination & One common decoder; Proposition~\ref{thm:minimax}, alphabet stability in Propositions~\ref{prop:finite-alphabet} and~\ref{prop:contamination}\\
Finite periodic calibration & Commensurate occupied energy gaps and calibration frequencies & Measured Fourier moments compatible with a probability measure & Exact finite convex optimization, Proposition~\ref{prop:finite-calibration}\\
Calibration order & Integer-normalized occupied spectrum; rank degeneracy allowed in the prefix criterion & One positive periodic moment, with every compatible circle prior allowed & Sharp linear or fractional loss, Propositions~\ref{prop:calibration-prefix}--\ref{prop:calibration-mismatch}\\
Joint calibration & Conserved logical sectors; includes $V=I_4$, $H=Z_1+Z_2$ & Partial real or complex coherences; separate qubit data may omit a joint harmonic & Common recovery, Proposition~\ref{prop:sector-data}; joint-data example in Section~\ref{sec:joint-calibration-example}\\
Four-photon memory & Specified code and calibrated axis in Section~\ref{sec:optical-application} & One, two or three population means from ensembles sharing the rate distribution & Exact fidelity frontiers, required precision and axis/noise stability\\\bottomrule
\end{tabular}
\caption{The systems, information and objectives used in the paper. Each problem fixes the memory and selects one deterministic recovery using the stated information. The local coefficient needs the nondegenerate prefix of Proposition~\ref{thm:optimal}; the periodic prefix criterion also covers degenerate chains. Its fractional-loss extension returns to a nondegenerate first defective boundary. The optical application evaluates these general mechanisms for one specified code and generator. The proposition statements give the full hypotheses.}
\label{tab:model-map}
\end{table*}

The three subsections below develop the tools used throughout the paper. The first states the exact-correction condition in terms of logical overlaps between error images. The second applies that condition to the powers of $H$ generated by a weak rotation; their overlaps form a matrix-valued spectral measure, which also fixes the physical space the rotation can reach. The third introduces the complementary environment and the orthogonal polynomials of the angle distribution, so that the physical and statistical projections behind the leading recovery error appear side by side.

\subsection{What an error-correcting recovery must preserve}

Redundancy is useful when different physical error images carry the same unknown logical coefficients. Let $E_\alpha$ denote a member of an error family; the label $\alpha$ need not identify a measured event. Exact correction means that a single isometric realization $U_R$ of recovery, with a sufficiently large auxiliary space, can separate an error-dependent auxiliary vector from the logical input,
\begin{equation}
 U_R E_\alpha V|\psi\rangle
 =|\xi_\alpha\rangle_{\rm syn}\otimes|\psi\rangle_L.
 \label{eq:recovery-factorization}
\end{equation}
 The vectors $|\xi_\alpha\rangle$ can be unnormalized. If the $E_\alpha$ are the branch maps of a trace-preserving noise channel, their branch weights sum to one, and discarding the auxiliary system returns the normalized logical state. More generally, the $E_\alpha$ can be amplitude maps such as $H^n$; these are terms in an error expansion, not separate trace-preserving processes. This auxiliary information about the error is called the syndrome. Figure~\ref{fig:memory-task} illustrates this separation. Recovery can carry this syndrome coherently into the discarded auxiliary system. The same map must preserve the logical input for arbitrary superpositions of the error amplitudes.

Taking inner products of Eq.~\eqref{eq:recovery-factorization} gives $V^\dagger E_\alpha^\dagger E_\beta V=\langle\xi_\alpha|\xi_\beta\rangle I_d$. The converse is the Knill--Laflamme criterion~\cite{KnillLaflamme1997}. Thus correction is possible precisely when the overlaps between error images are independent of the logical state. This motivates keeping the entire logical matrix
\begin{equation}
 \mathscr G(E,F)=V^\dagger E^\dagger FV.
 \label{eq:kernel}
\end{equation}
 A nonscalar value obstructs the factorization. The syndrome vectors themselves need not be orthogonal. The map $\mathscr G$ is a positive matrix-valued kernel because $\sum_{ij}v_i^\dagger\mathscr G(E_i,E_j)v_j=\|\sum_jE_jVv_j\|^2\ge0$ for all $v_j\in\mathbb C^d$. 

\begin{figure*}[t]
\centering
\includegraphics[width=.98\textwidth]{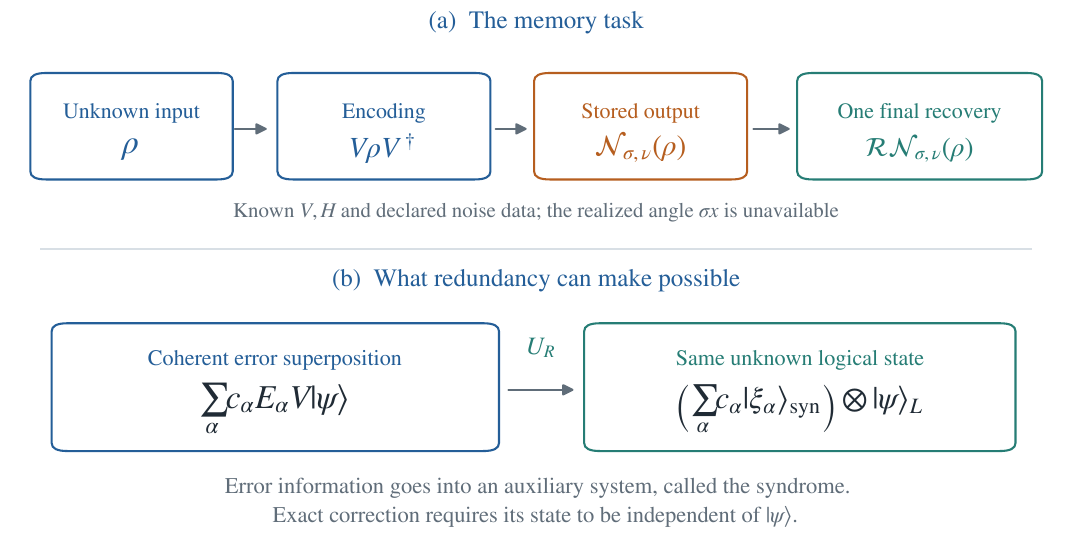}
\caption{The encoded-memory problem. (a) One final recovery acts on the averaged physical output. It knows the encoding, generator and declared noise information, but not the angle in the stored run. (b) Exact error correction sends every allowed error superposition to an auxiliary vector tensored with the same unknown logical state. The auxiliary vector can carry error amplitudes but must be independent of the logical input. Discarding it then preserves the stored quantum information.}
\label{fig:memory-task}
\end{figure*}

\subsection{Krylov spaces and their encoded spectral data}

The error family relevant to a weak rotation follows directly from the stored state,
\begin{equation}
\begin{split}
 e^{-i\sigma xH}V|\psi\rangle
 ={}&V|\psi\rangle-i\sigma xHV|\psi\rangle\\
 &-\frac{\sigma^2x^2}{2}H^2V|\psi\rangle+\cdots.
\end{split}
\label{eq:physical-error-hierarchy}
\end{equation}
 At order $n$, the rotation explores the image of $H^nV$. Choose a nonnegative integer $L$ such that one recovery corrects $I,H,\ldots,H^L$. Linearity puts the angle dependence of these terms into the auxiliary amplitude while preserving the logical state for the whole noise ensemble. Under the noise-moment assumptions stated below, the first potentially harmful amplitude has order $\sigma^{L+1}$ and its averaged squared error starts at order $\sigma^{2L+2}$. Our quantitative problem is to determine the best coefficient and the decoder that attains it, including cases where that coefficient vanishes.

 Different powers can reach the same physical directions. Subtracting the lower-order images from $H^nV$ separates genuinely new error directions from directions already reached. Gram--Schmidt on the sequence $V,HV,H^2V,\ldots$ is the block Krylov construction, or block Lanczos method~\cite{GolubUnderwood1977}; its moment representation is described by matrix-valued orthogonal polynomials~\cite{DamanikPushnitskiSimon2008}. Each block has $d$ logical input columns, so the orthogonalization follows an entire encoded register. The correctability and rank conditions below determine when the new images are full undistorted logical copies. The construction organizes the actual error amplitudes into orthogonal physical subspaces that a common recovery can map to syndrome labels. A finite list of moments determines a local expansion near zero angle. Finite-angle recovery depends on the complete exponential and can distinguish spectra sharing that list. We therefore first construct one spectral object retaining the whole orbit. Its degree filtration will give the weak-noise boundary in Section~\ref{sec:geometry}; its complete weights will determine the finite-angle problem in Section~\ref{sec:finite-angle}.

\paragraph{The spectral measure and exact cyclic reduction.}
\label{sec:spectral-data}

 Write $H=\sum_r E_rP_r$, where the distinct $E_r$ are eigenvalues and $P_r$ is the orthogonal projector onto the corresponding energy eigenspace, including any degeneracy. The logical matrix $S_r=V^\dagger P_rV$ determines its population in an encoded state, $\langle\psi|S_r|\psi\rangle$. The matrix weights retain how energy occupations distinguish logical states. The matrix-valued spectral measure collects them as
\begin{equation}
\begin{gathered}
 \Sigma(dE)=\sum_r S_r\delta_{E_r}(dE),\\
 S_r=V^\dagger P_rV\succeq0,\qquad \sum_rS_r=I_d.
\end{gathered}
 \label{eq:matrix-measure}
\end{equation}
Here $\delta_{E_r}$ is a unit point mass at energy $E_r$. Integrating a function against $\Sigma$ means summing that function times $S_r$. For scalar polynomials $p,q$, and real angle parameters $s,t$, the spectral calculus gives
\begin{equation}
\begin{aligned}
 V^\dagger\overline{p}(H)q(H)V
 &=\int\overline{p(E)}q(E)\,\Sigma(dE),\\
 V^\dagger e^{-i(t-s)H}V
 &=\int e^{-i(t-s)E}\,\Sigma(dE).
\end{aligned}
\label{eq:polynomial-kernel}
\end{equation}
The trace of the matrix measure averages over logical states and can erase their different energy distributions. To see why this matters, encode a qubit as $V|\psi\rangle=|0\rangle_s\otimes|\psi\rangle$ and compare $H_1=X_s\otimes I_2$ with $H_2=I_s\otimes Z_L$, using Pauli matrices $X,Z$. Under $H_1$ the state becomes $(\cos t|0\rangle-i\sin t|1\rangle)_s\otimes|\psi\rangle$ and discarding the first qubit recovers $|\psi\rangle$ exactly. Under $H_2$ that qubit stays fixed while the unknown rotation acts on the logical state. A centered Gaussian angle of variance $\sigma^2$ then has optimum $(1+e^{-2\sigma^2})/2$. Nevertheless, both scalar measures $d^{-1}\Tr\Sigma$ equal $(\delta_{-1}+\delta_1)/2$, so their scalar Lanczos chains, obtained by orthogonalizing generator powers with that scalar measure, coincide. Their matrix weights, respectively $I_2/2$ and $(I_2\pm Z_L)/2$, distinguish the two situations. A scalar Krylov description can track motion while losing precisely the logical dependence relevant to recovery.

The full matrix measure also determines an exact reduction of the reachable physical space. The code-cyclic space
\begin{equation}
 \mathcal H_{\rm cyc}(H,V)=\operatorname{span}\{H^nVv:n\ge0,\ v\in\mathbb C^d\}
 \label{eq:code-cyclic-space}
\end{equation}
is the smallest $H$-invariant subspace containing the code. The measure constructs it directly from vector polynomials $f(E)=\sum_nE^nv_n$ in $\mathscr P=\mathbb C[E]\otimes\mathbb C^d$, with
\begin{equation}
 \langle f,g\rangle_\Sigma
 =\sum_r f(E_r)^\dagger S_rg(E_r).
 \label{eq:vector-polynomial-inner-product}
\end{equation}
This form can have null vectors. Let $\mathscr N_\Sigma$ be its zero-norm subspace. Identifying polynomials whose difference lies in $\mathscr N_\Sigma$ gives the isometry
\begin{equation}
\begin{gathered}
 \mathscr P/\mathscr N_\Sigma\longrightarrow\mathcal H_{\rm cyc}(H,V),\\
 [f]\longmapsto\sum_nH^nVv_n.
\end{gathered}
\label{eq:cyclic-polynomial-quotient}
\end{equation}
 The inner product of its images is exactly Eq.~\eqref{eq:vector-polynomial-inner-product}, and its image is the cyclic space. Multiplication by $E$ becomes $H$, while constant polynomials embed the logical register as $V$. These operations are well defined because a null polynomial obeys $\sqrt{S_r}f(E_r)=0$ at each energy, also after multiplication by $E$. Thus the matrix measure fixes the entire reachable pair. Polynomial filtering and finite-support quotients also organize Krylov state changes~\cite{ChowdhuryMahapatra2026Christoffel}. Here the quotient removes absent directions, as in minimal dilation constructions~\cite{Stinespring1955}; it assumes no commutativity between the weights $S_r$.

\begin{proposition}[Cyclic equivalence]
\label{prop:equivalence}
Two finite code--Hamiltonian pairs with the same $\Sigma$, in a common logical basis, have equal optimal recovery fidelity for every distribution of rotation angles. More generally, they admit the same set of recovered logical channels, so every figure of merit of the recovered channel, including worst-input error, has the same optimum. Their minimal cyclic realizations are isometrically equivalent.
\end{proposition}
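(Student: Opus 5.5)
The plan is to prove the last claim first and then obtain the other two from it. For each pair $(H_i,V_i)$, $i=1,2$, the quotient isometry of Eq.~\eqref{eq:cyclic-polynomial-quotient} identifies $\mathscr P/\mathscr N_\Sigma$ with $\mathcal H_{\rm cyc}(H_i,V_i)$. The form in Eq.~\eqref{eq:vector-polynomial-inner-product} depends only on $\Sigma$, so the null space $\mathscr N_\Sigma$ and the quotient are the same for both pairs. Composing one quotient isometry with the inverse of the other yields a unitary
\[
W:\mathcal H_{\rm cyc}(H_1,V_1)\to\mathcal H_{\rm cyc}(H_2,V_2),\qquad \textstyle\sum_nH_1^nV_1v_n\mapsto\sum_nH_2^nV_2v_n .
\]
This map is well defined because two polynomials with the same image differ by a null vector. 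It is isometric by Eq.~\eqref{eq:polynomial-kernel}, and it is surjective because both cyclic spaces are spanned by the images of $\mathscr P$. Taking constant polynomials gives $WV_1=V_2$. Multiplication by $E$ preserves $\mathscr N_\Sigma$, so $WH_1|_{\rm cyc}=H_2|_{\rm cyc}W$. This establishes the isometric equivalence of the minimal realizations.

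Next, each cyclic space is $H_i$-invariant and $H_i$ is Hermitian, so it reduces $H_i$. Hence $e^{-i\sigma xH_i}V_1$ has range in $\mathcal H_{{\rm cyc},i}$ for every $x$. The relation $We^{-i\sigma xH_1}V_1=e^{-i\sigma xH_2}V_2$ follows from the intertwining relation applied to the power series, or equivalently from the second line of Eq.~\eqref{eq:polynomial-kernel}. Therefore the two noise channels satisfy
\[
\mathcal N^{(2)}_{\sigma,\nu}=\mathrm{Ad}_W\circ\mathcal N^{(1)}_{\sigma,\nu}
\]
as maps into operators supported on the cyclic subspaces. Because $\nu$ enters only through the integral in Eq.~\eqref{eq:channel}, this holds for every distribution of angles.

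To compare the sets of recovered channels, take any CPTP recovery $\mathcal R_2$ on the second physical space. Define $\mathcal R_1(X)=\mathcal R_2(WP_1XP_1W^\dagger)+\Tr[(I-P_1)X]\,\tau$, where $P_1$ is the projector onto the first cyclic space and $\tau$ is a fixed logical state. The map $\mathcal R_1$ is CPTP. It is completely positive as a sum of compressions and a trace-and-prepare map, and it is trace preserving because $\Tr(P_1XP_1)+\Tr((I-P_1)X)=\Tr X$. On every output of $\mathcal N^{(1)}$, which is supported in $P_1$, it satisfies $\mathcal R_1\mathcal N^{(1)}=\mathcal R_2\mathcal N^{(2)}$. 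Applying the same construction with $W^\dagger$ gives the reverse inclusion. The recovered logical channels $\{\mathcal R\mathcal N\}$ therefore coincide as sets. Every figure of merit of the recovered channel then has the same optimum, including $F_e$ from Eq.~\eqref{eq:fidelity} and the diamond-norm worst-input error.

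The main point to check is that the quotient map is well defined and intertwines $H$ without any commutativity among the $S_r$. This follows from the observation already noted in the text: a null vector satisfies $\sqrt{S_r}f(E_r)=0$ at every energy, so it remains null after multiplication by $E$. A second, minor point is extending the recovery off the cyclic subspace, which the trace-and-prepare term handles. The hypothesis of a common logical basis is what makes the constant embeddings agree, since it ensures that $V_1v$ and $V_2v$ correspond to the same $v\in\mathbb C^d$.
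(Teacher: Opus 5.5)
Your proposal is correct and follows the paper's proof essentially step for step. The common quotient $\mathscr P/\mathscr N_\Sigma$ yields the unitary intertwiner $W$ (the paper's $J$) with $WV_1=V_2$ and $WH_1=H_2W$ on the cyclic spaces, and the compression to the cyclic space followed by a trace-and-prepare completion transfers recoveries in both directions, independently of $\sigma$ and $\nu$; the only slip is a typo, since the range statement should read $e^{-i\sigma xH_i}V_i$ rather than $e^{-i\sigma xH_i}V_1$.
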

\begin{proof}
 Both pairs realize the quotient in Eq.~\eqref{eq:cyclic-polynomial-quotient}.
Their quotient isometries define a unitary $J:\mathcal H_{{\rm cyc},1}\to
\mathcal H_{{\rm cyc},2}$ with $JV_1=V_2$ and $JH_1=H_2J$ on the cyclic spaces.
Thus $J$ transfers every encoded rotation. For a recovery $\mathcal R_1$,
let $\Pi_{{\rm cyc},2}$ project onto the second cyclic space and fix a logical
density matrix $\rho_0$. The map
\begin{equation}
\begin{split}
\mathcal R_2(\rho)={}&
\mathcal R_1(J^\dagger\Pi_{{\rm cyc},2}\rho\Pi_{{\rm cyc},2}J)\\
&+\Tr[(I-\Pi_{{\rm cyc},2})\rho]\,\rho_0
\end{split}
\label{eq:cyclic-recovery-transfer}
\end{equation}
is CPTP and satisfies $\mathcal R_2\mathcal N_2=\mathcal R_1\mathcal N_1$
for every angle prior. Interchanging the two spaces gives the reverse transfer,
so both optimizations have the same value.
\end{proof}

A minimal model can be read directly from the matrix weights,
\begin{equation}
\begin{gathered}
 \mathcal H_\Sigma=\bigoplus_r\operatorname{Ran}S_r,\qquad
 V_\Sigma v=\bigoplus_r\sqrt{S_r}\,v,\\
 H_\Sigma=\bigoplus_r E_r I_{\operatorname{Ran}S_r}.
\end{gathered}
 \label{eq:minimal-model}
\end{equation}
The identification is $[f]\mapsto\bigoplus_r\sqrt{S_r}f(E_r)$. Polynomial interpolation at the distinct energies reaches every summand, giving dimension $\sum_r\operatorname{rank}S_r$. This reduction preserves unrestricted recovery, although the cyclic isometry can change laboratory implementation costs. A different operational reduction fixes a decoder and follows its decodable observables under a Lindbladian; code-visible operator Krylov spaces then describe the resulting logical channel~\cite{GordeychukKiselev2026}. Here the encoded amplitude measure is fixed before selecting a decoder, which permits comparison and optimization over all final recoveries.

For a prefix correcting powers through $L$, write $M_j=V^\dagger H^jV=\mu_jI_d$ for $j\le2L$. These scalar physical moments make the polynomial inner product factor into an error coordinate and a logical coordinate. For $p(E)=\sum_{i=0}^L\widehat p_iE^i$, $q(E)=\sum_{j=0}^L\widehat q_jE^j$ and $u,v\in\mathbb C^d$,
\begin{equation}
 \langle p\otimes u,q\otimes v\rangle_\Sigma
 =\left(\sum_{i,j=0}^L\overline{ \widehat p_i} \widehat q_j\mu_{i+j}\right)\langle u,v\rangle.
 \label{eq:prefix-polynomial-factorization}
\end{equation}
Scalar orthogonalization therefore produces whole logical shells. The boundary matrices test whether this factorization continues by one further power; beyond the prefix the matrix weights retain its logical dependence.

\subsection{The environment and two complementary projections}

 There is a complementary interpretation of the same overlaps. If an unobserved system can distinguish logical states, the memory alone cannot retain all their superpositions. An environment for the averaged channel makes this information loss explicit. It purifies the random-unitary average and represents information inaccessible to the decoder. Its construction requires no microscopic bath model. The externally sampled $x$ remains input independent; the purifying environment can nevertheless correlate with the logical state.

On $L^2(\nu)$ let $X_\nu$ multiply a function by $x$, and let $|1\rangle$ be the normalized constant function. The unitary $e^{-i\sigma H\otimes X_\nu}$ acts on $V|\psi\rangle\otimes|1\rangle$. Spectral calculus defines it even when $X_\nu$ is unbounded; moments are needed only for polynomial expansions. In the energy basis this same construction reads as follows.
Let $\varphi_\nu(u)=\int e^{-ixu}\,d\nu(x)$ and choose environment vectors $e_r^{\rm env}(x)=e^{-i\sigma xE_r}$ in $L^2(\nu)$. Their overlaps are $\langle e_r^{\rm env},e_s^{\rm env}\rangle=\varphi_\nu(\sigma(E_s-E_r))$. The isometry
\begin{equation}
 V|\psi\rangle\longmapsto
 \sum_rP_rV|\psi\rangle\otimes|e_r^{\rm env}\rangle
\end{equation}
 is a purification of the encoded channel, also called a Stinespring dilation~\cite{Stinespring1955}. Tracing its environment reproduces Eq.~\eqref{eq:channel}, since the overlaps of the $e_r^{\rm env}$ average the relative energy phases. Start instead with the logical/reference state $|\Omega_d\rangle$ and trace the physical output. The remaining reference--environment state is
\begin{equation}
 \rho_{RE}=\frac1d\sum_rS_r^{\mathsf T}\otimes|e_r^{\rm env}\rangle\langle e_r^{\rm env}|.
 \label{eq:environment}
\end{equation}
The transpose refers to the basis defining $|\Omega_d\rangle$ and follows from $(T\otimes I)|\Omega_d\rangle=(I\otimes T^{\mathsf T})|\Omega_d\rangle$. The weights $S_r$ describe logical dependence in energy; the noise determines how distinguishable the environmental vectors are. With root fidelity $f(\rho,\omega)=\|\sqrt\rho\sqrt\omega\|_1$, fixed-input recovery duality gives
\begin{equation}
 F_{e,\mathrm{opt}}
 =\max_{\omega_E}f\!\left(\rho_{RE},I_R/d\otimes\omega_E\right)^2.
 \label{eq:decoupling}
\end{equation}
The maximization is over environmental density matrices $\omega_E$. A product state $I_R/d\otimes\omega_E$ contains no information about the reference, so this formula measures how close the joint reference--environment state is to a product with the fixed reference marginal. The identity follows from recovery dilation and Uhlmann's purification theorem~\cite{BenyOreshkov2011,KonigRennerSchaffner2009}. It gives an interpretation of the recovery limit without granting the decoder access to the environment. Near zero noise, new environmental eigenvalues can emerge from zero; the shell proof in Section~\ref{app:optimal} handles that change of rank directly.

Orthogonalizing powers of $H$ on the code and powers of $X_\nu$ on $|1\rangle$ therefore performs two projections in the same dilation. The physical inner product is matrix valued because it must retain every logical direction. The statistical inner product is scalar because the sampled angle is independent of the input. If the monic noise polynomials, the residuals of $x^j$ orthogonal to all lower powers in $L^2(\nu)$ and thus with leading coefficient one, have squared norms $h_j>0$, their successive off-diagonal Jacobi coefficients, called hoppings, are $b_j^{\nu}=\sqrt{h_j/h_{j-1}}$, with $h_0=1$.  Hence
\begin{equation}
h_m=\prod_{j=1}^m( b_j^{\nu})^2.
\label{eq:noise-chain-product}
\end{equation}
A finite alphabet terminates this statistical chain. Gaussian uncertainty gives the Hermite chain. Neither construction grants recovery access to the auxiliary coordinate $x$. Their payoff is a factorization of the first unavoidable error into physical logical distortion and an uncancellable statistical amplitude.

Physical leakage and statistical cancellation use the same operation on a positive Gram matrix. Let $F$ and $X$ be linear maps with a common Hilbert-space codomain and finite-dimensional domains, and write
\begin{equation}
 \mathsf K=
 \begin{pmatrix}F^\dagger F&F^\dagger X\\X^\dagger F&X^\dagger X\end{pmatrix}
 =\begin{pmatrix}\mathsf G&\mathsf C\\\mathsf C^\dagger&\mathsf M\end{pmatrix}\succeq0.
 \label{eq:gram-residual-data}
\end{equation}
Then $\operatorname{Ran}\mathsf C\subseteq\operatorname{Ran}\mathsf G$. The Moore--Penrose inverse $\mathsf G^+$ acts as the inverse on that range and as zero on its kernel. Consequently
\begin{equation}
\begin{gathered}
 P_F=F\mathsf G^+F^\dagger,\\
 \mathsf M-\mathsf C^\dagger\mathsf G^+\mathsf C
 =X^\dagger(I-P_F)X\succeq0.
\end{gathered}
\label{eq:gram-residual-projection}
\end{equation}
The operator $P_F$ projects orthogonally onto $\operatorname{Ran}F$, so the Schur complement is the Gram matrix of the residual after removing the old images. This identity includes linearly dependent old images. In the physical problem, $F$ will collect the old error images and $X$ the next one; in the statistical problem they will collect lower angle powers and the next power. The same Schur-complement identity gives the physical boundary residual and the statistical residual $h_m$, respectively, 
 in Sections~\ref{sec:boundary-moments} and~\ref{sec:noise-moments}. Polar alignment and removal of the scalar logical part turn its matrix residual into a recovery cost.

The matrix measure $\Sigma$ fixes the physical recovery problem. Its moment prefix will identify the first logical distortion, while its occupied energy gaps will determine which Fourier moments of the angle distribution enter the full channel. This separates the fixed dynamics from the information available about the noise. We first evaluate the local optimum, then ask which part of that guarantee survives incomplete calibration.

\section{Correctable Krylov prefixes and optimal recovery}
\label{sec:geometry}

 We now quantify the first departure from exact error correction. There are two separate questions. At which order does the physical error image cease to carry every logical state in the same way? How much of that harmful amplitude can a decoder cancel using the lower-order amplitudes already present? Orthogonalizing the physical images answers the first question. Orthogonalizing their angle dependence answers the second. Figure~\ref{fig:two-orthogonalizations} summarizes how these two projections enter the recovery coefficient.

\subsection{Correctable shells and their boundary}

Let the nonnegative integer $L$ be a correction depth for this particular generator, meaning that the code exactly corrects $I,H,\ldots,H^L$. The correctable prefix is the initial error span $\operatorname{span}\{I,H,\ldots,H^L\}$ on the code, together with one recovery that corrects every member. This depth is specific to $H$; code distance describes correction against a different, prescribed family of physical errors. Every encoding allows $L=0$; larger depths require the scalar-moment conditions below. The overlap criterion of Eq.~\eqref{eq:kernel} turns this assumption into scalar compressed physical moments,
\begin{equation}
 M_j=V^\dagger H^jV=\mu_jI_d,\qquad 0\le j\le2L.
 \label{eq:scalar-moments}
\end{equation}
Indeed, $(H^i)^\dagger H^j=H^{i+j}$, so correction of the powers through $L$ tests moments through $2L$. The numbers $\mu_j$ describe the known code and generator; they are distinct from the moments of the random variable $x$.

Set $W_0=V$. The first subtraction makes the construction concrete. If $L\ge1$ and $\mu_2>\mu_1^2$, define
\begin{equation}
\begin{gathered}
 W_1=\frac{(H-\mu_1I)V}{\sqrt{\mu_2-\mu_1^2}},\\
 V^\dagger W_1=0,\qquad W_1^\dagger W_1=I_d.
\end{gathered}
 \label{eq:first-shell-explained}
\end{equation}
Thus $W_1|\psi\rangle$ is an undistorted copy of every logical state in a subspace orthogonal to the code. A recovery can send $V|\psi\rangle$ to $|0\rangle_{\rm syn}|\psi\rangle$ and $W_1|\psi\rangle$ to $|1\rangle_{\rm syn}|\psi\rangle$. Its syndrome coordinate changes, while all unknown coefficients of $|\psi\rangle$ survive.

Higher powers repeat this subtraction. Assume the scalar Gram matrix $[\mu_{i+j}]_{i,j=0}^L$ is positive definite. Such a matrix, whose entries depend on the sum of the indices, is called a Hankel moment matrix. Positivity means that no nonzero polynomial of degree at most $L$ annihilates the encoded states. Each new order therefore adds an independent $d$-dimensional image of the logical register.\footnote{The word ``copy'' denotes an alternative embedding $W_n|\psi\rangle$ into an orthogonal subspace. No operation creates several simultaneous copies of an unknown state. The nondegenerate prefix uses $(L+1)d$ physical dimensions, so $(L+1)d\le\dim\mathcal H$. Increasing the correction depth at fixed logical dimension requires enough physical space.} This is the nondegenerate-prefix assumption; it does not require a nondegenerate energy spectrum. Gram--Schmidt uses the same scalar coefficients for every logical vector because Eq.~\eqref{eq:scalar-moments} makes every overlap scalar. Its normalized maps obey $W_n^\dagger W_j=\delta_{nj}I_d$. We call their images the Krylov shells.

Their orthogonality permits one common decoding map on the old error span,
\begin{equation}
 U_{\rm pre}\sum_{n=0}^LW_nv_n
 =\sum_{n=0}^L|n\rangle_{\rm syn}\otimes v_n,
 \quad v_n\in\mathbb C^d.
 \label{eq:prefix-tensor-decoding}
\end{equation}
 It preserves inner products and can be extended, using a larger auxiliary space, to an isometry on the full physical space. Every $H^jV$ with $j\le L$ is a scalar combination of the $W_n$. For a fixed dimensionless angle $\tau$, equal to $\sigma x$ in one run of Eq.~\eqref{eq:channel}, the extended map therefore factors the evolving state through Taylor order $L$,
\begin{equation}
 \begin{split}
 U_{\rm pre}e^{-i\tau H}V|\psi\rangle
 ={}&\left[\sum_{n=0}^L f_n(\tau)|n\rangle_{\rm syn}\right]
       \otimes|\psi\rangle\\
 &{~}+O(\tau^{L+1}).
 \end{split}
 \label{eq:prefix-dynamic-factorization}
\end{equation}
The scalar polynomials $f_n$ do not depend on $|\psi\rangle$. Thus the rotation changes a syndrome amplitude while the unknown logical coefficients stay in one common factor. The remainder is where that separation can first fail.

Writing the subtracted degree-$n$ polynomial as $p_n$, with leading coefficient one, gives
\begin{equation}
 W_n=p_n(H)V/g_n,\quad
 g_n=\prod_{j=1}^n b_j,\quad g_0=1.
 \label{eq:shells}
\end{equation}
Here $g_n$ normalizes the residual map so that $W_n$ is an isometry. The positive numbers $b_j$ are the hoppings identified in Eq.~\eqref{eq:correctable-recurrence} below. Hermiticity of $H$ makes its matrix between these shells tridiagonal. Indeed, $W_j^\dagger HW_n=0$ for $j<n-1$, since $HW_j$ already lies in lower shells. For $n<L$ this gives
\begin{equation}
 HW_n=b_nW_{n-1}+a_nW_n+b_{n+1}W_{n+1},
 \label{eq:correctable-recurrence}
\end{equation}
 with scalar $a_n\in\mathbb R$ and $b_{n+1}>0$; the first term is absent at $n=0$. Comparing leading powers gives $b_{n+1}=g_{n+1}/g_n$, hence the product in Eq.~\eqref{eq:shells}. This tridiagonal representation is the Lanczos chain~\cite{Lanczos1950}, and its off-diagonal coefficients $b_n$ are its hoppings. Its sites label the orthogonal error subspaces reached by successive generator powers. For a single encoded state $V|\psi\rangle$, the same recurrence defines the Krylov basis used for spread complexity~\cite{BalasubramanianEtAl2022Spread}, with coefficients fixed by the moments $\langle\psi|M_j|\psi\rangle$. Scalar moments through $2L$ therefore give every encoded state the same $a_0,\ldots,a_{L-1}$ and $b_1,\ldots,b_L$. Conversely, if all encoded states share these coefficients, each $M_j$ with $j\le2L$ has the same expectation in every logical state and is scalar. A correctable prefix is thus a logical-state-independent initial segment of the encoded Lanczos chain. The first coefficient that can vary with the input is $a_L(\psi)=\langle\psi|W_L^\dagger HW_L|\psi\rangle$.

Until this point, $H$ moves the syndrome label with no logical dependence. Acting once more on the last correctable shell can break that property. Let $\Pi_L$ project onto all old shells and separate the part remaining in the last shell from the part entering new directions,
\begin{equation}
\begin{aligned}
 \Pi_L&=\sum_{n=0}^LW_nW_n^\dagger,\\
 A_\star&=W_L^\dagger HW_L, \\
 E_\star&=(I-\Pi_L)HW_L,\label{eq:defects}\\
 B_\star&=(E_\star^\dagger E_\star)^{1/2}.
\end{aligned}
\end{equation}
 The matrix $A_\star$ describes the action on the logical amplitudes within shell $L$. The map $E_\star$ describes displacement out of the old error span into new error subspaces; it is motion within the physical space, not loss from it. Its polar decomposition separates an orientation from the positive matrix $B_\star$. The orientation preserves inner products on the support of $B_\star$, including when the leakage has deficient rank. Recovery can undo the orientation. Unequal eigenvalues of $B_\star$, which are the singular amplitudes of $E_\star$, mean that different logical directions leak by different amounts. This dependence cannot be removed by an orientation change. The construction covers zero or deficient-rank leakage and arbitrary relative eigenbases of $A_\star$ and $B_\star$.

A small block Hamiltonian illustrates both effects. On $\mathbb C^3\otimes\mathbb C^d$, encode $V|\psi\rangle=|0\rangle\otimes|\psi\rangle$ and take
\begin{equation}
\begin{gathered}
 H=\begin{pmatrix}0&bI_d&0\\bI_d&A&B\\0&B&0\end{pmatrix},\\
 b>0,\qquad A=A^\dagger,\qquad B\succeq0.
\end{gathered}
 \label{eq:three-shell-teaching-model}
\end{equation}
 Here $M_1=0$, $M_2=b^2I_d$, $W_1|\psi\rangle=|1\rangle\otimes|\psi\rangle$, and the boundary matrices are $A_\star=A$, $B_\star=B$. For a small fixed angle $\tau$,
\begin{equation}
\begin{aligned}
 &e^{-i\tau H}V|\psi\rangle\\
 &\quad=|0\rangle\otimes(1-b^2\tau^2/2)|\psi\rangle\\
 &\qquad+|1\rangle\otimes(-ib\tau I_d-b\tau^2A/2)|\psi\rangle\\
 &\qquad-|2\rangle\otimes(b\tau^2B/2)|\psi\rangle+O(\tau^3).
\end{aligned}
\label{eq:three-shell-teaching-expansion}
\end{equation}
 The order-$\tau$ displacement carries an intact logical copy. At order $\tau^2$, $A$ acts on that copy while $B$ controls a new displacement. If $A$ and $B$ are both scalar, the whole Hamiltonian acts on the first tensor factor and the logical state survives exactly. The general construction isolates precisely the nonscalar parts of these two amplitudes.

A scalar action changes every logical direction equally and belongs to disposable syndrome motion. For a logical amplitude $T$, its overlap with the maximally entangled test state is $\Tr(T)/d$, since $\langle\Omega_d|(T\otimes I_d)|\Omega_d\rangle=\Tr(T)/d$. Subtracting this scalar component isolates the amplitude orthogonal to that test state. This explains why entanglement fidelity selects the normalized trace below. We write the normalized trace $\operatorname{tr}_d$, centered part and, for Hermitian $T$, variance as
\begin{equation}
\begin{gathered}
 \operatorname{tr}_d(T)=\Tr(T)/d,\qquad T_c=T-\operatorname{tr}_d(T)I_d,\\
 \operatorname{Var}_d(T)=\operatorname{tr}_d(T^2)-\operatorname{tr}_d(T)^2.
\end{gathered}
 \label{eq:variance}
\end{equation}
In particular, $\operatorname{Var}_d(T)=\|T_c\|_{\rm HS}^2/d$ for Hermitian $T$ and vanishes precisely when the action is logically independent. The combined boundary distortion is
\begin{equation}
 \mathcal V=\operatorname{Var}_d(A_\star)+\operatorname{Var}_d(B_\star).
 \label{eq:physical-defect}
\end{equation}

In the Lanczos language introduced after Eq.~\eqref{eq:correctable-recurrence}, every encoded state has $a_L(\psi)=\langle\psi|A_\star|\psi\rangle$, and its next hopping obeys $b_{L+1}(\psi)^2=\langle\psi|(A_\star-a_L(\psi))^2|\psi\rangle+\langle\psi|B_\star^2|\psi\rangle$. The two terms are squared norms of orthogonal components, $W_L(A_\star-a_L(\psi))|\psi\rangle$ in the last shell and $E_\star|\psi\rangle$ outside the old shells. They explain how logical dependence enters the next step even when the earlier chain is shared. Both boundary coefficients are input independent precisely when $\mathcal V=0$. Indeed, a constant expectation makes $A_\star$ scalar, and then a constant next hopping makes $B_\star^2$, and hence its positive square root, scalar. This includes a vanishing next hopping.

The defect $\mathcal V$ is also the value of a best-approximation problem. A perfectly correctable next boundary would have a scalar action $a$ in the old shell and one common amplitude $\beta$ into an undistorted new logical copy $J$. Compare such boundaries with the actual column $D_L=HW_L-b_LW_{L-1}=W_LA_\star+E_\star$, omitting the subtraction when $L=0$. In the normalized Hilbert--Schmidt norm $\|T\|_{2,d}^2=d^{-1}\Tr(T^\dagger T)$, their least squared distance is
\begin{equation}
 \min_{a\in\mathbb R,\,\beta\ge0,\,J}
 \|D_L-aW_L-\beta J\|_{2,d}^2=\mathcal V,
 \label{eq:boundary-distance}
\end{equation}
where $J$ is an isometry into a new shell orthogonal to the old ones. A larger unused codomain is allowed when the physical complement has dimension below $d$.

 The occupied-shell and leakage terms are orthogonal. Scalar projection fixes $a= \operatorname{tr}_d(A_\star)$. Polar decomposition gives $\max_J\operatorname{Re}\Tr(J^\dagger E_\star)=\Tr B_\star$, after which $\beta= \operatorname{tr}_d(B_\star)$. This proves Eq.~\eqref{eq:boundary-distance}. The old shells have fixed the logical frame, but the orientation of the new leakage can be undone. Its unequal singular values remain harmful. Thus the boundary cost depends on the spread of the singular amplitudes in $B_\star$.
Knowing the matrices does not by itself remove this distortion. In the block model, the occupied-shell amplitude contains both $-ib\tau I_d$ and $-b\tau^2A/2$. A fixed change of logical frame must preserve the first term for every input while correcting the second for every unknown $\tau$. These demands generally differ. The next step quantifies what can still be cancelled by mixing amplitudes from other shells with coefficients chosen from the noise information.

\subsection{The noise residual and the optimal coefficient}

Put $m=L+1$. The first potentially harmful physical amplitude carries the factor $\sigma^mx^m/m!$ from Eq.~\eqref{eq:physical-error-hierarchy}. The lower shells already carry amplitudes proportional to $1,x,\ldots,x^{m-1}$. A fixed small coherent rotation in the recovery can mix those amplitudes into the harmful one. Its coefficients can be chosen using the distribution, even though the realized $x$ is never measured. The best subtraction is consequently the least-squares approximation of $x^m$ by lower powers. The coefficient of $x^m$ is fixed by the physical Taylor amplitude; recovery chooses the lower-power coefficients. The residual is therefore monic, meaning that its highest coefficient is one. A fixed recovery here means one operation for the declared $\sigma$ and noise information, used in every run.

For the asymptotic result, keep the rescaled prior $\nu$ fixed as $\sigma\downarrow0$. Assume $\int|x|^{2m}d\nu<\infty$ and a positive-definite noise moment matrix through degree $m-1$. There is then a unique monic residual polynomial
\begin{equation}
\begin{gathered}
 q_m(x)=x^m-\sum_{n=0}^{m-1}d_nx^n,\\
 h_m=\int q_m(x)^2\,d\nu(x),
\end{gathered}
 \label{eq:noise-poly}
\end{equation}
 satisfying $\int q_m(x)x^n\,d\nu(x)=0$ for $n<m$. Its squared norm $h_m$ is the part of the harmful angle amplitude that lower powers cannot cancel in mean square. This orthogonality is with respect to the noise distribution, whereas the shell orthogonality used physical error overlaps. A positive-definite noise moment matrix through degree $m-1$ requires at least $m$ support points. The next norm $h_m$ can vanish if $x^m$ is already a lower-degree polynomial on that support. The polynomial relation holds across the declared noise ensemble and permits one common subtraction for its members.

\begin{proposition}[Optimal weak-noise recovery]
\label{thm:optimal}
 Fix the finite encoding $V$ and Hermitian generator $H$. Suppose
$M_j=\mu_jI_d$ for $0\le j\le2L$ and $[\mu_{i+j}]_{i,j=0}^L\succ0$.
Put $m=L+1$. Let the rescaled prior $\nu$ be fixed, with finite $2m$-th moment
and $[\mathbb E_\nu [x^{i+j}]]_{i,j=0}^{m-1}\succ0$.
With the boundary geometry of Eqs.~\eqref{eq:defects}--\eqref{eq:physical-defect}
and the noise residual of Eq.~\eqref{eq:noise-poly}, optimization over all CPTP recoveries gives
\begin{equation}
 1-F_{e,\mathrm{opt}}(\sigma,\nu)
 =\frac{g_L^2h_m}{(m!)^2}\mathcal V\,\sigma^{2m}
 +o(\sigma^{2m}).
 \label{eq:optimal-coefficient}
\end{equation}
The formula includes singular $B_\star$, arbitrary finite logical dimension and the terminal case $h_m=0$.
It also includes $L=0$, where the sole physical condition is $V^\dagger V=I_d$ and $h_1=\operatorname{Var}_\nu(x)$.
\end{proposition}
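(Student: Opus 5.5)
The proof has two halves: an achievability construction and a matching converse over all CPTP recoveries.

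\textbf{Step 1: reduce the relevant physical space.} Using Proposition~\ref{prop:equivalence} and Eq.~\eqref{eq:prefix-polynomial-factorization}, work in the span of the shells $W_0,\dots,W_L$ together with the new directions $\operatorname{Ran}E_\star$. Expand the per-run state $e^{-i\sigma xH}V|\psi\rangle$ through order $\sigma^m$. By Eq.~\eqref{eq:prefix-dynamic-factorization}, everything through order $\sigma^L$ factors as a syndrome vector tensored with $|\psi\rangle$. At order $\sigma^m$ the only new term is $\frac{(-i\sigma x)^m}{m!}H^mV$. Using $H^mV=g_L\,HW_L+(\text{lower shells})$, it reduces modulo scalar syndrome motion to
\[
\frac{(-i\sigma x)^m g_L}{m!}\bigl(W_LA_\star+E_\star\bigr).
\]
The remainder beyond order $\sigma^m$ contributes $o(\sigma^{2m})$ to the infidelity because $\nu$ has a finite $2m$-th moment. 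To keep this uniform in $x$, split the expectation at $|x|\le\sigma^{-\delta}$ and bound the tail mass by Markov's inequality.

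\textbf{Step 2: achievability.} Take the isometry $U_{\rm pre}$ of Eq.~\eqref{eq:prefix-tensor-decoding}. On $\operatorname{Ran}E_\star$, compose it with the polar isometry $J$ of $E_\star$, sending $J B_\star v\mapsto|\star\rangle_{\rm syn}\otimes B_\star v$. Then add a small coherent correction whose rotation mixes syndrome labels $n<m$ into the boundary amplitudes with coefficients $d_n$ from Eq.~\eqref{eq:noise-poly}. The effect is that the factor $x^m$ multiplying the logical distortion is replaced by $q_m(x)$ in the part overlapping the retained logical factor. Concretely, the logical branch amplitude becomes $c(x)I_d+\frac{(-i\sigma)^m g_L}{m!}q_m(x)(A_\star)_c$, plus a similar branch with $(B_\star)_c$ in the leakage syndrome. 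Computing $F_e$ with Eq.~\eqref{eq:fidelity} and the identity $\langle\Omega_d|T\otimes I|\Omega_d\rangle=\operatorname{tr}_d T$, the loss is the mean over $\nu$ of the squared normalized HS norm of the centered parts. That mean gives $\frac{g_L^2}{(m!)^2}\sigma^{2m}h_m\mathcal V+o(\sigma^{2m})$. The best-approximation identity Eq.~\eqref{eq:boundary-distance} and the Schur complement Eq.~\eqref{eq:gram-residual-projection} justify that these choices of $a$, $\beta$, $J$, $d_n$ are the optimal ones.

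\textbf{Step 3: converse, the main obstacle.} Any CPTP recovery has a Stinespring isometry $U$. Its Kraus-level action on the code defines logical amplitudes $R_k(x)=\langle k|U e^{-i\sigma xH}V$, and
\[
F_e=\sum_k\Bigl|\mathbb E\!\ldots\Bigr|^2 \text{-type expressions } d^{-2}\textstyle\sum_k\bigl|\Tr\,\mathbb E_\nu[\cdot]\bigr|^2 .
\]
The cleanest route is the variational bound $1-F_e\ge\inf\mathbb E_\nu\,\|\,\text{amplitude}-\text{scalar}\otimes I_d\|_{2,d}^2$ over an auxiliary vector depending polynomially on $x$. This follows from Uhlmann/decoupling, Eq.~\eqref{eq:decoupling}: fidelity with a product $I_R/d\otimes\omega_E$ is at most the overlap of purifications, and the optimal syndrome vector $|\xi(x)\rangle$ can be expanded in $L^2(\nu)\otimes$ syndrome space.

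The difficulty is showing that the optimal $\xi$ cannot beat the polynomial structure at order $\sigma^m$. Projecting onto the orthogonal-polynomial basis of $L^2(\nu)$ separates the degree-$<m$ part from the $q_m$ component. In the physical space, the minimum over scalar $a$, $\beta$ and isometry $J$ of the boundary column distance is exactly $\mathcal V$ by Eq.~\eqref{eq:boundary-distance}. Orthogonality of the $q_m$ component then gives the factor $h_m$. Environmental eigenvalues can emerge from zero here, which breaks a naive perturbative expansion of Eq.~\eqref{eq:decoupling}. To avoid that, I would first try the direct shell argument: bound $|\langle\Omega|\cdot|\Omega\rangle|$ by Cauchy--Schwarz block by block, with singular $B_\star$ handled through the Moore--Penrose form of Eq.~\eqref{eq:gram-residual-projection}.

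The special cases are read off from the same argument. For $h_m=0$, both sides are $o(\sigma^{2m})$. For $L=0$, $g_0=1$ and $q_1=x-\mathbb E x$ give $h_1=\operatorname{Var}_\nu(x)$.
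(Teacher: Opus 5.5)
Your achievability step matches the paper's construction: prefix decoding, polar alignment of $E_\star$, and a small coherent rotation with coefficients $d_n$ that leaves the residual $c\,\sigma^m q_m(x)\mathcal D$. It differs only in details such as pairing each coupling with its negative adjoint.

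\textbf{Gap 1: the old shells must be shown to be decoded.} Eq.~\eqref{eq:boundary-distance} measures the distance from $D_L$ to comparison boundaries $aW_L+\beta J$. It does not constrain an arbitrary recovery isometry $U$. The inequality your argument implicitly needs, $\|\mathcal Q(UH^mV)\|_{2,d}^2\ge g_L^2\mathcal V$, is false for general $U$. For example, at $L=0$ with $V=I_2$ and $H=Z$ one has $\mathcal V=\operatorname{Var}_2(Z)=1$, yet the isometry $U=|e\rangle\otimes Z$ gives $\mathcal Q(UHV)=0$. Such $U$ are excluded only because they fail to decode the code itself, and your sketch never derives that constraint.

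The paper supplies it as follows.
\begin{itemize}
\item Realize optimal recoveries as isometries $U_\sigma$ into a fixed finite auxiliary space; Kraus rank is at most $d\dim\mathcal H$.
\item Pass to a convergent subsequence $U_\sigma\to U_*$. This is needed because $U_\sigma$ can depend on $\sigma$ non-analytically.
\item Project the exact identity $1-F_e=\int\|\mathcal Q(U_\sigma e^{-i\sigma xH}V)\|_{2,d}^2\,d\nu$ onto the orthonormal noise functions $\phi_k$, $k<m$. The $\phi_k$ component is $\frac{(-i)^k\sqrt{h_k}}{k!}\sigma^k\mathcal Q(U_\sigma H^kV)+O(\sigma^{k+1})$, while its norm is $O(\sigma^m)$.
\item Hence $\mathcal Q(U_*H^kV)=0$, so $U_*W_n=|e_n\rangle\otimes I_d$ with orthonormal $e_n$.
\end{itemize}
This is where the hypothesis $[\mathbb E_\nu x^{i+j}]_{i,j<m}\succ0$ (that is, $h_k>0$) is really used; in your proposal it only serves to define $q_m$.

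Since $(\langle e_n|\otimes I_d)U_*E_\star=W_n^\dagger E_\star=0$, the leakage image lies in auxiliary directions orthogonal to $e_L$. This makes the $(A_\star)_c$ and leakage contributions orthogonal. Bessel's inequality on the $\phi_m$ component then gives $\liminf\sigma^{-2m}(1-F_{e,\mathrm{opt}})\ge h_m(m!)^{-2}\|\mathcal Q(U_*H^mV)\|_{2,d}^2$.

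\textbf{Gap 2: block-by-block Cauchy--Schwarz is too weak for the leakage term.} All you know about $T=U_*E_\star$ is $T^\dagger T=B_\star^2$. For the scalar part $z_T=\sum_j|j\rangle\operatorname{tr}_d(T_j)$, Cauchy--Schwarz gives only $\|z_T\|^2\le\operatorname{tr}_d(B_\star^2)$. That yields the empty bound $\|\mathcal Q(T)\|_{2,d}^2\ge0$ and loses $\operatorname{Var}_d(B_\star)$ entirely. The Moore--Penrose Schur complement \eqref{eq:gram-residual-projection} only identifies $B_\star^2$ and does not help here.

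What is needed is the trace-norm bound $\|z_T\|\le\operatorname{tr}_d(B_\star)$:
\begin{itemize}
\item For each unit $z$, $S_z=(\langle z|\otimes I_d)T$ satisfies $S_z^\dagger S_z\preceq B_\star^2$.
\item Operator monotonicity of the square root then gives $|S_z|\preceq B_\star$.
\item Hence $|\operatorname{tr}_d S_z|\le\operatorname{tr}_d|S_z|\le\operatorname{tr}_d B_\star$.
\end{itemize}
No invertibility is required. Since $\|\mathcal Q(T)\|_{2,d}^2=\operatorname{tr}_d(B_\star^2)-\|z_T\|^2$, this gives $\|\mathcal Q(T)\|_{2,d}^2\ge\operatorname{Var}_d(B_\star)$.

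\textbf{Two smaller points.}
\begin{itemize}
\item Your variational bound restricts the subtracted auxiliary vector to depend polynomially on $x$. Restricting an infimum can only raise it, so this is not a lower bound on $1-F_e$. The exact identity subtracts the pointwise scalar part, and the valid lower bound comes from Bessel's inequality in $L^2(\nu)$.
\item In Step 1, Markov's bound on the tail mass is not enough, because the Taylor remainder on the tail is only $O(\sigma^m|x|^m)$. You need $\int_{|x|>R}|x|^{2m}\,d\nu\to0$, i.e.\ dominated convergence with the finite $2m$-th moment.
\end{itemize}
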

\begin{proof}
Section~\ref{app:optimal} proves the converse by fixing the limiting logical frame on the earlier shells and bounding the remaining polar-alignment cost. Its coherent polynomial subtraction attains that bound; Appendix~\ref{app:attainment-details} gives the generators and the finite-moment remainder estimate.
\end{proof}

As in Eq.~\eqref{eq:intro-leading-coefficient}, we write $C_{\mathrm{opt}}=g_L^2h_m\mathcal V/(m!)^2$ for the coefficient in Eq.~\eqref{eq:optimal-coefficient}. Equation~\eqref{eq:optimal-coefficient} is both a performance limit and a recovery construction. The factor $g_L/m!$ is the propagation amplitude from the code to the boundary. The factor $\mathcal V$ measures its logical distortion after the best physical alignment, and $h_m$ measures the angle dependence left after coherent subtraction. The converse rules out a better coefficient from any deterministic recovery, including ones outside the shell construction.

To see which statistical information fixes the coefficient, first take a centered unit-variance Gaussian. It has $q_m=\mathrm{He}_m$ and $h_m=m!$, where $\mathrm{He}_m$ is the probabilists' Hermite polynomial. The first protected case, $m=2$, then tests whether variance alone is sufficient. For a general centered unit-variance prior the residual is
\begin{equation}
 q_2=x^2-(\mathbb E[x^3])x-1,\quad
 h_2=\mathbb E[x^4]-1-(\mathbb E[x^3])^2.
 \label{eq:m2}
\end{equation}
For symmetric noise this reduces to $q_2=x^2-1$. In the block model of Eq.~\eqref{eq:three-shell-teaching-model}, the optimal quartic coefficient is $b^2h_2[\operatorname{Var}_d(A)+\operatorname{Var}_d(B)]/4$. Unit-variance Gaussian noise has $h_2=2$, whereas equally likely $x=\pm1$ has $h_2=0$. Both have the same angle variance, but every binary run has the same squared angle and permits the same leading cancellation. A zero coefficient postpones the loss to higher order; Section~\ref{sec:finite-support} distinguishes this from exact recovery.

The subtraction can be seen directly in the same block model. For symmetric noise with $y_2=\mathbb E[x^2]$, let
\begin{equation}
\begin{aligned}
 D&=|1\rangle\langle0|\otimes A_c+|2\rangle\langle0|\otimes B_c,\\
 G_\sigma&=\frac{b\sigma^2y_2}{2}(D-D^\dagger).
\end{aligned}
\label{eq:teaching-coherent-subtraction}
\end{equation}
Applying the unitary $e^{G_\sigma}$ before discarding the shell coordinate adds $(b\sigma^2y_2/2)DV|\psi\rangle$ to the initial code amplitude. It changes the leading traceless error $-(b\sigma^2x^2/2)DV|\psi\rangle$ into $-[b\sigma^2(x^2-y_2)/2]DV|\psi\rangle$. Thus one fixed operation subtracts the ensemble mean of the harmful squared-angle amplitude without measuring either the angle or the logical state. The terms omitted here start at higher amplitude order; the following proof handles the general polynomial and controls the averaged remainder.

 \begin{figure*}[t]
\centering
\includegraphics[width=\textwidth]{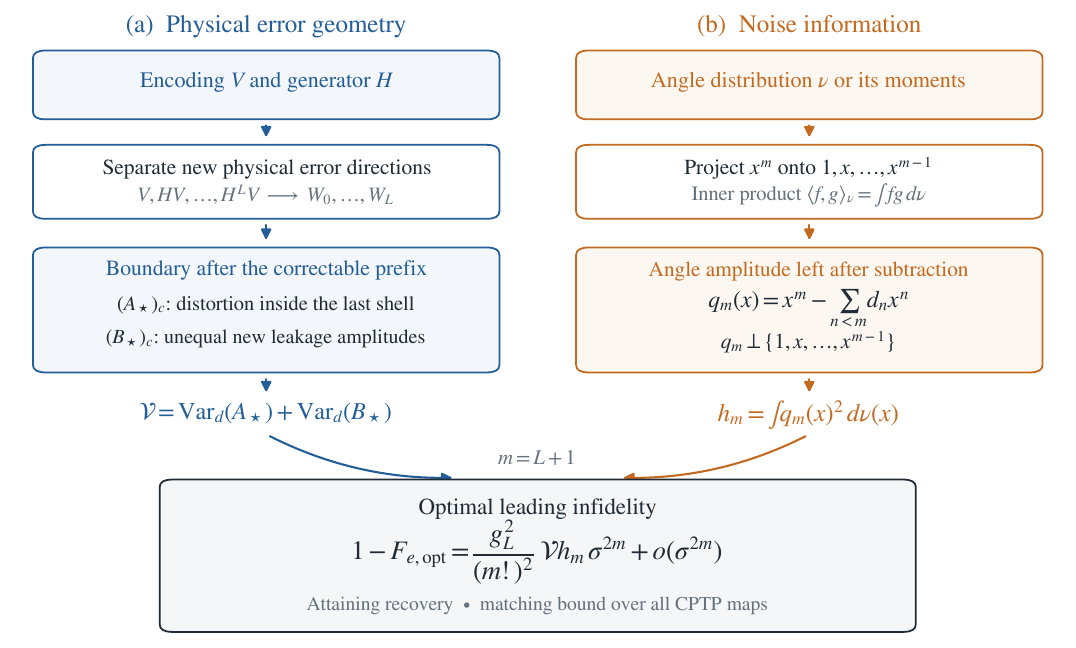}
\caption{The two projections determining the leading optimal recovery error. Physical orthogonalization separates undistorted images of the logical register. The first potentially harmful action, at degree $m=L+1$, has logical distortion $\mathcal V$ after its scalar part and reversible orientation are removed. Statistical orthogonalization subtracts lower angle powers from $x^m$, leaving a mean-square residual $h_m$. The factor $g_L$ is the product of hoppings through the correctable prefix; multiplication by $g_L^2/(m!)^2$ gives the leading optimal infidelity coefficient. The constructed recovery attains it, and the converse covers all CPTP recoveries under Proposition~\ref{thm:optimal}'s hypotheses.}
\label{fig:two-orthogonalizations}
\end{figure*}

\subsection{Proof and construction of the optimal recovery}
\label{app:optimal}\label{app:polynomial-setting}

This subsection proves Proposition~\ref{thm:optimal} and then uses the same construction to bound the worst-input error.
\begin{proof}[Proof of Proposition~\ref{thm:optimal}]
 The argument has two physical steps. The correctly recoverable shells fix a common logical frame. A coherent adjustment of their syndrome amplitudes then removes the predictable part of the first harmful angle power. We first express entanglement infidelity as a squared amplitude norm. This identity constrains every recovery and verifies the explicit construction.

The leading coefficient of $p_n$ gives
\begin{equation}
 \begin{gathered}
 H^nV=g_nW_n+\sum_{j<n}a_{jn}W_j,\\
 0\le n\le L,\qquad a_{jn}\in\mathbb R.
 \end{gathered}
 \label{eq:app-triangular-powers}
\end{equation}
Applying $H$ at $n=L$ shows that there are scalar coefficients $a_{jm}$
such that
\begin{equation}
 H^mV=\sum_{j=0}^{L}a_{jm}W_j
       +g_LW_L(A_\star)_c+g_LE_\star.
 \label{eq:app-first-uncontrolled-power}
\end{equation}
This is the sole physical expansion needed for the leading-order
optimality proof.

For the proof, let $q_n$ denote the monic noise polynomial at every $0\le n\le m$, set $h_n=\int q_n^2d\nu$, and use $\phi_n=q_n/\sqrt{h_n}$ whenever $h_n>0$. The prefix assumption makes $\phi_0,\ldots,\phi_{m-1}$ well defined.
These orthonormal functions separate components of the angle dependence under averaging with $\nu$.

The finite $2m$-th noise moment gives the
vector-valued Taylor estimate
\begin{equation}
\begin{gathered}
e^{-i\sigma xH}V=\sum_{j=0}^m\frac{(-i\sigma x)^j}{j!}H^jV+r_\sigma(x),\\
\|r_\sigma\|_{L^2(\nu;\|\cdot\|_{2,d})}=o(\sigma^m).
\end{gathered}
\label{eq:app-l2-taylor}
\end{equation}
For each fixed $x$, the remainder divided by $\sigma^m$ tends to zero.
Its norm is bounded by a constant times $|x|^m$, uniformly in small
$\sigma$. Dominated convergence applies to its square. This argument
requires neither bounded noise support nor completeness of the
polynomials in $L^2(\nu)$.

 Write $\mathcal H_{\rm log}=\mathbb C^d$. A recovery has a Kraus representation $\mathcal R(\rho)=\sum_jR_j\rho R_j^\dagger$, where each $R_j:\mathcal H\to\mathcal H_{\rm log}$ maps physical amplitudes to the logical register. Trace preservation requires $\sum_jR_j^\dagger R_j=I_{\mathcal H}$. Keeping the index coherently in an auxiliary space $\mathcal J$ gives the isometry $U=\sum_j|j\rangle\otimes R_j$; tracing that space implements the recovery. A channel from the physical space to the $d$-dimensional logical space admits a Kraus representation with at most $d\dim\mathcal H$ operators~\cite{Watrous2018}. Fix $\mathcal J$ large enough for this bound and the additional orthogonal syndrome directions used in the construction. Every recovery considered below then has an isometric realization in the same finite auxiliary space.
For a map $T:\mathcal H_{\rm log}\to\mathcal J\otimes\mathcal H_{\rm log}$, write
$T=\sum_j|j\rangle\otimes T_j$ and define
\begin{equation}
 \mathcal Q(T)=\sum_j|j\rangle\otimes(T_j)_c.
 \label{eq:app-traceless-amplitude}
\end{equation}
This is an orthogonal projection in the normalized Hilbert--Schmidt
inner product and is independent of the auxiliary basis.

To see why this projection measures error, set $T(x)=Ue^{-i\sigma xH}V$. Contracting each logical block against $|\Omega_d\rangle$ gives $ \operatorname{tr}_d(T_j(x))$, so
\begin{equation}
 F_e=\int\sum_j| \operatorname{tr}_d(T_j(x))|^2\,d\nu(x).
\end{equation}
Since $T(x)$ is an isometry, its blocks obey $\sum_j\|T_j(x)\|_{2,d}^2=1$. Separating their scalar and traceless parts gives the exact identity
\begin{align}
1-F_e(\mathcal R_U\mathcal N_{\sigma,\nu})
 &=\int\|\mathcal Q(Ue^{-i\sigma xH}V)\|_{2,d}^2d\nu(x),\notag \\
\mathcal R_U(\rho)&=\Tr_{\mathcal J}(U\rho U^\dagger). \label{eq:app-exact-error-functional}
\end{align}
Equivalently, $\mathcal Q$ removes the component proportional to
$|\Omega_d\rangle$ from the recovered logical/reference amplitude.
In particular,
$\mathcal Q(T)=0$ if and only if $T=|e\rangle\otimes I_d$ for some
auxiliary vector $e$.

\paragraph{The old shells fix the limiting logical frame.}
The optimizing recovery may depend on $\sigma$ without admitting a Taylor expansion. We therefore take a convergent subsequence of its isometric realizations. To achieve the order-$\sigma^{2m}$ target, its limit must correct every earlier shell. A surviving error at degree $k<m$ would occur at a larger order $\sigma^{2k}$. Orthogonal noise functions make that comparison precise and prevent cancellation between different powers from obscuring it.

A decoder that corrects $I,H,\ldots,H^L$ has error
$O(\sigma^{2m})$ by Eq.~\eqref{eq:app-l2-taylor}.
For any sequence $\sigma\downarrow0$, choose an optimal recovery
isometry $U_\sigma$ in the fixed auxiliary space above.
The finite-dimensional isometry space is compact, so a subsequence
converges to an isometry $U_*$. Its optimal error is bounded by
$C\sigma^{2m}$ for some fixed $C$.

For $0\le k<m$, projection onto $\phi_k$ in the noise variable gives
\begin{equation}
\begin{split}
&\int\phi_k(x)U_\sigma e^{-i\sigma xH}V\,d\nu(x)\\
&\qquad=\frac{(-i)^k\sqrt{h_k}}{k!}\sigma^kU_\sigma H^kV+o(\sigma^k).
\end{split}
\label{eq:app-noise-projection}
\end{equation}
The error functional bounds the norm of the traceless part of this
projection by $O(\sigma^m)$. Dividing by $\sigma^k$ and passing to the
subsequence limit therefore gives
$\mathcal Q(U_*H^kV)=0$.
Equation~\eqref{eq:app-triangular-powers} then implies
\begin{equation}
\begin{gathered}
U_*W_n=|e_n\rangle\otimes I_d,\\
\langle e_j|e_n\rangle=\delta_{jn},\qquad 0\le j,n\le L.
\end{gathered}
\label{eq:app-forced-prefix-decoding}
\end{equation}
The auxiliary vectors are orthonormal because $U_*$ is isometric and
the shells are orthogonal. Positivity of every $h_k$, $k<m$, is essential
to this inference.

\paragraph{The unavoidable cost of the leakage map.}
Recovery isometries cannot change the squared amplitude matrix $E_\star^\dagger E_\star=B_\star^2$. The remaining question is how much of an amplitude with this fixed matrix can become proportional to the logical identity. The following bound answers it without assuming invertibility.

Let $T:\mathcal H_{\rm log}\to\mathcal J'\otimes\mathcal H_{\rm log}$ obey
$T^\dagger T=B^2$, where $B\succeq0$, and write
$T=\sum_j|j\rangle\otimes T_j$.
The auxiliary vector of its scalar logical component is
$z_T=\sum_j|j\rangle \operatorname{tr}_d(T_j)$.
For any unit vector $z\in\mathcal J'$,
$S_z=(\langle z|\otimes I_d)T$ satisfies $S_z^\dagger S_z\preceq B^2$.
Consequently
\begin{equation}
 \|z_T\|
 =\max_{\|z\|=1}|\operatorname{tr}_d(S_z)|
 \le\max_{\|z\|=1}\operatorname{tr}_d(|S_z|)
 \le\operatorname{tr}_d(B),
 \label{eq:app-polar-trace-bound}
\end{equation}
where the last inequality uses monotonicity of the positive square
root. It follows that
\begin{equation}
 \|\mathcal Q(T)\|_{2,d}^2
 = \operatorname{tr}_d(B^2)-\|z_T\|^2
 \ge\operatorname{Var}_d(B).
 \label{eq:app-leakage-variance-bound}
\end{equation}
Polar alignment $T=|e\rangle\otimes B$ attains this inequality.
No inverse of $B$ appears, so the result includes deficient rank.

Apply this bound to $T=U_*E_\star$.
The physical orthogonality $W_n^\dagger E_\star=0$ and
Eq.~\eqref{eq:app-forced-prefix-decoding} imply that its auxiliary
support is orthogonal to every $e_n$, $n\le L$.
Its traceless part is therefore orthogonal to
$|e_L\rangle\otimes(A_\star)_c$.
Using Eq.~\eqref{eq:app-first-uncontrolled-power}, we obtain
\begin{equation}
 \big\|\mathcal Q(U_*H^mV)\big\|_{2,d}^2
 \ge g_L^2\left[
     \operatorname{Var}_d(A_\star)
    +\operatorname{Var}_d(B_\star)\right].
 \label{eq:app-boundary-converse}
\end{equation}

 If $h_m>0$, project Eq.~\eqref{eq:app-exact-error-functional} onto
$\phi_m$. Its orthogonality and normalization read
\begin{equation}
\int\phi_m x^j\,d\nu=
\begin{cases}
0,&j<m,\\
\sqrt{h_m},&j=m.
\end{cases}
\end{equation}
The Taylor expansion therefore gives
\begin{equation}
\begin{split}
&\liminf_{\sigma\downarrow0}\frac{1-F_{e,\mathrm{opt}}(\sigma,\nu)}{\sigma^{2m}}\\
&\qquad\ge\frac{h_m}{(m!)^2}\|\mathcal Q(U_*H^mV)\|_{2,d}^2\\
&\qquad\ge\frac{g_L^2h_m}{(m!)^2}
 [\operatorname{Var}_d(A_\star)+\operatorname{Var}_d(B_\star)].
\end{split}
\label{eq:app-optimal-converse}
\end{equation}
The first line is understood along a subsequence realizing the lower
limit. Every such subsequence admits the compactness argument above,
so the final bound is independent of the limiting recovery.
For $h_m=0$, nonnegativity of infidelity is the required converse.

\paragraph{An attaining recovery.}
Choose an isometry $U_0$ that decodes the old shells and polar-aligns the leakage,
\begin{equation}
 U_0W_n=|e_n\rangle\otimes I_d,\qquad
 U_0E_\star=|e_m\rangle\otimes B_\star.
\label{eq:main-polar-alignment}
\end{equation}
The new auxiliary state $e_m$ is orthogonal to the old ones. Polar decomposition constructs this map on $\operatorname{Ran}B_\star$ even at rank loss; an unused auxiliary complement completes the isometry. Define
\begin{equation}
\begin{gathered}
\mathcal D=|e_L\rangle\otimes(A_\star)_c+|e_m\rangle\otimes(B_\star)_c,\\
c=(-i)^mg_L/m!.
\end{gathered}
\label{eq:main-boundary-amplitude}
\end{equation}
Its first bad amplitude is $c\sigma^mx^m\mathcal D$. The degree-$n$ old-shell amplitude begins at $(-i)^ng_n\sigma^nx^n/n!$. A coherent output rotation of order $\sigma^{m-n}$ can therefore subtract $c d_n\sigma^mx^n\mathcal D$. This subtraction mixes syndrome amplitudes with fixed coefficients determined by the declared noise moments; it never measures or learns $x$. Pair each forward coupling with its negative adjoint to keep the rotation unitary. For $n=L$, the occupied-shell diagonal coefficient is purely imaginary, so it is already anti-Hermitian and must be included once. This phase condition is why the construction works for nonsymmetric priors too.

The resulting recovery isometry $U_\sigma=e^{G_\sigma}U_0$, in which the anti-Hermitian generator $G_\sigma$ sums these paired couplings as in Eq.~\eqref{eq:app-attaining-recovery}, has
\begin{equation}
\mathcal Q(U_\sigma e^{-i\sigma xH}V)
=c\sigma^m q_m(x)\mathcal D+o_{L^2}(\sigma^m).
\label{eq:main-attained-residual}
\end{equation}
The leading residual is the simple tensor $q_m\otimes\mathcal D$ in $L^2(\nu)\otimes\operatorname{Ran}\mathcal Q$. Here $q_m$ is the minimum-norm representative of $x^m$ modulo lower powers, while $\mathcal Q$ removes scalar logical blocks. The two orthogonal projections act on different factors and commute, so $\|q_m\otimes\mathcal D\|^2=h_m\mathcal V$. Polar alignment precedes these linear projections and minimizes the leakage distortion. The converse bounds the residual norm for every recovery, and the coherent rotation realizes the statistical projection physically. This proves Eq.~\eqref{eq:optimal-coefficient}. The reverse couplings and repeated generator actions enter at higher order. Appendix~\ref{app:attainment-details} gives every generator and proves that remainder under the stated finite $2m$-th moment. If $h_m=0$, the same construction cancels the entire order-$m$ bad amplitude. Thus the converse and attainment cover the terminal noise case without normalizing a zero polynomial.
\end{proof}

Entanglement fidelity probes the logical register through a maximally entangled reference. A coherent logical rotation by a small angle $\epsilon$ lowers it only at order $\epsilon^2$, whereas the worst-input error $D_\diamond(\Phi)=\tfrac12\|\Phi-\mathrm{id}\|_\diamond$ changes at order $\epsilon$. The attaining recovery can leave such a rotation at order $\sigma^{2m-1}$. It arises where a scalar old-shell amplitude meets a traceless amplitude of total degree $2m-1$, either directly through the moment $M_{2m-1}$ or through the coherent adjustment. Its average is a known logical Hamiltonian proportional to $(A_\star)_c$. A final logical rotation removes it. The remaining error can then be bounded at order $\sigma^{2m}$ by the same boundary cost that controls entanglement infidelity.

\begin{proposition}[Worst-input error at the first boundary]
\label{prop:worst-input}
Under the hypotheses of Proposition~\ref{thm:optimal}, there are Hermitian logical operators $K_\sigma=O(\sigma^{2m-1})$, determined by $V$, $H$ and $\nu$, such that appending $\mathcal W_\sigma(\rho)=e^{iK_\sigma}\rho e^{-iK_\sigma}$ to the attaining recovery $\mathcal R_\sigma$ constructed in the proof of Proposition~\ref{thm:optimal} preserves Eq.~\eqref{eq:optimal-coefficient} and gives
\begin{equation}
\begin{aligned}
1-F_{e,\mathrm{opt}}&\le\min_{\mathcal R}D_\diamond(\mathcal R\mathcal N_{\sigma,\nu})\\%
&\le D_\diamond(\mathcal W_\sigma\mathcal R_\sigma\mathcal N_{\sigma,\nu})\\%
&\le d\,(1-F_{e,\mathrm{opt}})+o(\sigma^{2m}).%
\end{aligned}
\label{eq:worst-input-bracket}
\end{equation}
For a reflection-symmetric prior one can take $K_\sigma=O(\sigma^{2m})$, and for $m=1$ the attaining recovery already contains the order-$\sigma$ rotation.
\end{proposition}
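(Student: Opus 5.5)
The bracket has three inequalities, and I would prove them in order, spending most of the effort on the last one.

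\emph{First inequality.} For any channel $\Phi$ on the logical register I will use $1-F_e(\Phi)\le D_\diamond(\Phi)$. To see it, write $1-F_e(\Phi)=1-\langle\Omega_d|(\Phi\otimes\mathrm{id})(\Omega_d)|\Omega_d\rangle$. This quantity is at most the trace distance between $(\Phi\otimes\mathrm{id})(\Omega_d)$ and $\Omega_d$, because for a pure reference state $1-\langle\Omega|\tau|\Omega\rangle\le\tfrac12\|\tau-\Omega\|_1$. That trace distance is in turn bounded by $D_\diamond(\Phi)$. Minimizing over recoveries gives $1-F_{e,\mathrm{opt}}\le\min_{\mathcal R}D_\diamond$.

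\emph{Second inequality.} It is immediate, since $\mathcal W_\sigma\mathcal R_\sigma$ is one admissible recovery.

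\emph{Third inequality.} I would work with the isometric realization $U_\sigma$ of the attaining recovery and the recovered amplitude $T(x)=U_\sigma e^{-i\sigma xH}V=\sum_j|j\rangle\otimes T_j(x)$. The plan is to extract a known logical unitary and control what remains.

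Split each block as $T_j=t_j I_d+(T_j)_c$ with $t_j=\operatorname{tr}_d T_j$. The scalar amplitude $z(x)=\sum_j t_j(x)|j\rangle$ has norm $1-O(\sigma^{2m})$ in the $L^2(\nu)$ sense, by Eq.~\eqref{eq:app-exact-error-functional}. Because of the frame fixing in Eq.~\eqref{eq:main-polar-alignment}, this scalar part sits on the old syndromes $e_0,\ldots,e_L$ at low order.

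The first step is to expose the coherent logical rotation. Form the contraction $\langle z(x)|T(x)\rangle/\|z(x)\|$, which is a logical operator. Its anti-Hermitian traceless part, after averaging over $\nu$, defines $-iK_\sigma$ up to higher-order terms. I would compute this part explicitly from the Taylor expansion~\eqref{eq:app-l2-taylor}, the decomposition~\eqref{eq:app-first-uncontrolled-power} and the coherent generator $G_\sigma$. Cross terms of total degree below $2m-1$ vanish: their traceless components are exactly zero because the prefix is correctable and the moments $M_j$ are scalar for $j\le 2L$. The first nonvanishing cross term is at degree $2m-1$, pairing scalar shell amplitudes with $g_L(A_\star)_c$. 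This gives $K_\sigma=\kappa\,\sigma^{2m-1}(A_\star)_c+O(\sigma^{2m})$, with $\kappa$ a real combination of $g_L$, $\mathbb E_\nu[x^{2m-1}]$ and the moments $d_n$. For a reflection-symmetric prior the odd moments and the corresponding $d_n$ vanish, which kills the $\sigma^{2m-1}$ term. For $m=1$ the rotation is already part of the diagonal coupling included in $G_\sigma$.

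The next step is to show that appending $e^{iK_\sigma}$ preserves Eq.~\eqref{eq:optimal-coefficient}. Since $K_\sigma$ is traceless and Hermitian, the change in $F_e$ is second order in $K_\sigma$, of order $\sigma^{4m-2}$, which is $o(\sigma^{2m})$ whenever $m\ge2$. It also cancels the first-order cross term between the scalar part and the rotation. For $m=1$ the attaining recovery already contains this rotation.

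The last step is the bound $D_\diamond\le d(1-F_e)+o(\sigma^{2m})$. After the correction, the channel is $\Phi(\rho)=\sum_j\bar T_j\rho\bar T_j^\dagger$ with $\bar T=e^{iK_\sigma}T$, averaged over $\nu$. Decompose $\Phi-\mathrm{id}$ into three kinds of terms:
\begin{itemize}
\item terms linear in the traceless amplitudes $(\bar T_j)_c$ paired with the scalar part;
\item quadratic terms in the traceless amplitudes;
\item the scalar-norm deficit.
\end{itemize}
The linear terms produce the effective Hamiltonian $\operatorname{Im}\mathbb E\langle z|(\bar T)_c\rangle$, which the choice of $K_\sigma$ has removed to $o(\sigma^{2m})$. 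The remaining linear Hermitian part $\mathbb E\operatorname{Re}\langle z|\bar T_c\rangle$ is fixed by trace preservation, together with the quadratic terms, to first nontrivial order. So $\Phi-\mathrm{id}$ is, up to $o(\sigma^{2m})$, the map $\rho\mapsto\sum_\alpha(N_\alpha\rho N_\alpha^\dagger-\tfrac12\{N_\alpha^\dagger N_\alpha,\rho\})$, with traceless amplitude operators $N_\alpha$ indexed by auxiliary direction and noise component. For such a map I would use the standard estimate
\[
\tfrac12\|\Phi-\mathrm{id}\|_\diamond\le\Big\|\sum_\alpha N_\alpha^\dagger N_\alpha\Big\|_\infty\le\Tr\sum_\alpha N_\alpha^\dagger N_\alpha=d\,\sum_\alpha\|N_\alpha\|_{2,d}^2.
\]
By Eq.~\eqref{eq:app-exact-error-functional}, the last quantity equals $d(1-F_e)$ to leading order.

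\emph{Main obstacle.} I expect the main difficulty to be this final estimate. The diamond norm has to be bounded by the operator norm of $\sum_\alpha N_\alpha^\dagger N_\alpha$ and not by its square root, and the square root would appear if any residual linear Hamiltonian term of order $\sigma^{2m}$ survived. I would therefore first verify that $K_\sigma$ cancels the coherent part through order $\sigma^{2m}$. I would also confirm, using the $L^2$ remainder from Appendix~\ref{app:attainment-details}, that all neglected terms are $o(\sigma^{2m})$ uniformly over inputs; finite dimension makes the trace-norm and Hilbert--Schmidt bounds comparable, which is what makes this uniformity available.
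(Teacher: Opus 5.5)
Your proof is correct. Through the identification of $K_1$ it follows the paper: the same lower bounds, the same split of $\Phi_\sigma-\mathrm{id}$ into $X_\sigma\rho+\rho X_\sigma^\dagger$, a quadratic term and a norm deficit, the same degree count giving a multiple of $(A_\star)_c$ at order $\sigma^{2m-1}$, the same parity and $m=1$ arguments, and the same closing bound $\|\sum_kD_k^\dagger D_k\|_\infty\le\Tr\sum_kD_k^\dagger D_k$.

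You differ from the paper in how you obtain the dissipative form and the order-$\sigma^{2m}$ Hamiltonian. The paper expands the corrected channel through order $2m$, shows $\widetilde\Phi_\sigma^{\lfloor t\sigma^{-2m}\rfloor}\to e^{t\mathcal L}$, and invokes the GKSL theorem to extract $K_2$. You read the structure directly off the Kraus blocks, and this step is exact rather than only leading-order. Put $N_j=(T_j)_c$ and $X_x=\sum_j\overline{t_j}N_j$. The isometry condition gives $X_x+X_x^\dagger=(1-\|z(x)\|^2)I_d-\sum_jN_j^\dagger N_j$ pointwise. Hence, at every $\sigma$,
\[
\Phi(\rho)-\rho=-i[\bar K,\rho]+\int\sum_j\bigl(N_j\rho N_j^\dagger-\tfrac12\{N_j^\dagger N_j,\rho\}\bigr)\,d\nu,\qquad \bar K=\tfrac i2\int(X_x-X_x^\dagger)\,d\nu,
\]
and therefore $D_\diamond(\Phi)\le\|\bar K\|_\infty+d\,[1-F_e(\Phi)]$. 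It then suffices to take $K_\sigma$ equal to $\bar K$ of the uncorrected channel. Appending $e^{iK_\sigma}$ leaves a residual Hamiltonian of size $O(\|K_\sigma\|\sigma^m+\|K_\sigma\|^2)$ and changes $1-F_e$ by $O(\|K_\sigma\|^2)$. Both are $o(\sigma^{2m})$ for $m\ge2$, and also for $m=1$, where $K_\sigma=O(\sigma^2)$. Your route thus avoids the semigroup limit and the GKSL representation theorem. The paper's route, in exchange, delivers fixed operators $K_1,K_2$ and a limiting generator $\mathcal L$.

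Three minor points. First, drop the normalization by $\|z(x)\|$. It is unnecessary, and it is undefined where $z(x)=0$, which can happen at large $|x|$ for priors with unbounded support. Second, the first-order change of $F_e$ under $e^{iK_\sigma}$ is small because it pairs $K_\sigma$ with the existing $O(\sigma^{2m-1})$ coherent term, not because $K_\sigma$ is traceless. Third, the danger you name in your last paragraph is not really a square root. A surviving order-$\sigma^{2m}$ Hamiltonian would add $\Theta(\sigma^{2m})$ to $D_\diamond$ while moving $1-F_e$ only at order $\sigma^{4m}$.
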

\begin{proof}
Write $\Phi_\sigma=\mathcal R_\sigma\mathcal N_{\sigma,\nu}$ and split the logical block of $U_\sigma e^{-i\sigma xH}V$ in syndrome direction $e_n$ into a scalar part $s_n(x)I_d$ and a traceless part $t_n(x)$. Then
\begin{gather}
\Phi_\sigma(\rho)-\rho=X_\sigma\rho+\rho X_\sigma^\dagger+\int\sum_nt_n\rho t_n^\dagger\,d\nu-\delta_\sigma\rho,\notag \\%
X_\sigma=\int\sum_n\overline{s_n}\,t_n\,d\nu, \label{eq:worst-input-expansion}\\%
\delta_\sigma=1-\int\sum_n|s_n|^2d\nu. \notag %
\end{gather}
Equation~\eqref{eq:main-attained-residual} bounds the traceless parts by $O_{L^2}(\sigma^m)$, so the quadratic term is $O(\sigma^{2m})$. The blocks form an isometry, $\sum_n(\overline{s_n}I_d+t_n^\dagger)(s_nI_d+t_n)=I_d$ for every $x$, and averaging makes $X_\sigma+X_\sigma^\dagger$ and $\delta_\sigma$ of order $\sigma^{2m}$.

The scalar amplitude in $e_n$ is $O(\sigma^n)$ for $n\le L$ and $O(\sigma^m)$ in the leakage direction. For $n\le L$, the traceless amplitude in $e_n$ is $O(\sigma^{2m-1-n})$. A nonscalar logical block first occurs in the moment of degree $2L+1$, and the reverse couplings of the coherent adjustment return the order-$\sigma^L$ shell amplitude with a factor $O(\sigma^{m-n})$. Each of these blocks is a multiple of $(A_\star)_c=g_L^{-2}(M_{2L+1})_c$. In direction $e_L$, the leading traceless amplitude $c\sigma^mq_m(x)\mathcal D$ is orthogonal in $L^2(\nu)$ to the leading scalar amplitude, which is proportional to $x^L$. Hence $X_\sigma=-i\sigma^{2m-1}K_1+O(\sigma^{2m})$, where the Hermitian $K_1$ is a real multiple of $(A_\star)_c$. The leading part of $X_\sigma$ is anti-Hermitian. Under reflection symmetry each term of $K_1$ contains an odd moment of $\nu$ or a coefficient $d_n$ with $m-n$ odd, so $K_1=0$. For $m=1$ the occupied-shell diagonal coupling of the coherent adjustment already cancels it.

Let $\widetilde\Phi_\sigma$ denote $\Phi_\sigma$ followed by the logical rotation $e^{i\sigma^{2m-1}K_1}$. Each entry of the finite-dimensional noise channel has a Taylor expansion through order $2m$ because $\mathbb E_\nu[|x|^{2m}]<\infty$. The constructed recovery is analytic in $\sigma$, since its generator is polynomial in $\sigma$. Their composition therefore has an expansion through the same order. The corrected channel satisfies $\widetilde\Phi_\sigma=\mathrm{id}+\sigma^{2m}\mathcal L+o(\sigma^{2m})$ after these cancellations. Appendix~\ref{app:attainment-details} controls the remainder. For each $t\ge0$, the powers $\widetilde\Phi_\sigma^{\,\lfloor t\sigma^{-2m}\rfloor}$ converge to $e^{t\mathcal L}$ as $\sigma\downarrow0$, so $\mathcal L$ generates completely positive trace-preserving maps. It therefore has the Gorini--Kossakowski--Sudarshan--Lindblad form~\cite{GoriniKossakowskiSudarshan1976,Lindblad1976}
\begin{equation}
\mathcal L(\rho)=-i[K_2,\rho]+\sum_k\bigl(D_k\rho D_k^\dagger-\tfrac12\{D_k^\dagger D_k,\rho\}\bigr),%
\label{eq:worst-input-generator}
\end{equation}
with traceless $D_k$. Put $K_\sigma=\sigma^{2m-1}K_1+\sigma^{2m}K_2$. These logical rotations change $F_e$ only at order $\sigma^{4m-2}$ or higher when $m\ge2$, and $K_1=0$ when $m=1$, so Eq.~\eqref{eq:optimal-coefficient} is retained.

The corrected channel is $\mathrm{id}+\sigma^{2m}\mathcal L_D+o(\sigma^{2m})$, where $\mathcal L_D$ is the dissipative part of Eq.~\eqref{eq:worst-input-generator}. Evaluating it on $|\Omega_d\rangle$ gives $1-F_e=\sigma^{2m}\sum_k\|D_k\|_{2,d}^2+o(\sigma^{2m})$. For every input $\omega$ on the memory and a reference, $\|(\mathcal L_D\otimes\mathrm{id})(\omega)\|_1\le2\|\sum_kD_k^\dagger D_k\|_\infty\le2\sum_k\Tr(D_k^\dagger D_k)=2d\sum_k\|D_k\|_{2,d}^2$, which proves the upper bound. For every channel $\Phi$, the trace distance between $(\Phi\otimes\mathrm{id})(|\Omega_d\rangle\langle\Omega_d|)$ and $|\Omega_d\rangle\langle\Omega_d|$ is at most $D_\diamond(\Phi)$ and at least $1-F_e(\Phi)$. Minimizing over recoveries gives the lower bounds.
\end{proof}

When $C_{\mathrm{opt}}>0$, Eq.~\eqref{eq:worst-input-bracket} fixes both optimized errors at order $\sigma^{2m}$; it bounds the lower and upper limits of their ratio by $1$ and $d$, without assuming that the ratio converges. When $C_{\mathrm{opt}}=0$, both errors are $o(\sigma^{2m})$, and their next nonzero orders require further analysis. For $m\ge2$, an asymmetric prior can leave a coherent term of order $\sigma^{2m-1}$ before the final rotation. For example, at $m=2$ with $\mathbb E[x]=0$ and $\mathbb E[x^2]=1$, the scalar--traceless pairings give $K_1=-g_1^2\mathbb E[x^3](A_\star)_c/6$. The third moment thus controls a known logical rotation. The fourth-order entanglement cost is proportional to $h_2=\mathbb E[x^4]-1-(\mathbb E[x^3])^2$ and can vanish.

 \subsection{Recovering the boundary from compressed moments}
\label{sec:boundary-moments}

The projection identity in Eq.~\eqref{eq:gram-residual-projection} evaluates the physical boundary without constructing a full physical basis. For the physical problem take $F=\mathsf F_L=[V\;HV\;\cdots\;H^LV]$ and $X=H^mV$. Compressed moments through degree $2m$ determine the required data,
\begin{equation}
\begin{gathered}
 \mathsf G_L=[M_{i+j}]_{i,j=0}^L,\\
 \mathsf C_L=\operatorname{col}(M_m,M_{m+1},\ldots,M_{m+L}).
\end{gathered}
 \label{eq:physical-gram}
\end{equation}
Here $\operatorname{col}$ stacks the matrix blocks vertically without transposing their entries. Then
\begin{equation}
 \mathscr S_L=M_{2m}-\mathsf C_L^\dagger\mathsf G_L^{-1}\mathsf C_L
             =g_L^2B_\star^2\succeq0.
 \label{eq:physical-schur}
\end{equation}
 The physical prefix makes $\mathsf G_L$ invertible, $P_F=\Pi_L$, and the residual map is $g_LE_\star$. For the statistical problem use the same construction in $L^2(\nu)$ with $Fu=\sum_{n<m}u_nx^n$ and $Xz=zx^m$; the scalar Schur complement is $h_m$. The matrix residual measures leakage before polar alignment and trace-centering. Recovery therefore uses the singular-amplitude spread $\operatorname{Var}_d(\sqrt{\mathscr S_L})$. The statistical residual has already completed its scalar projection and equals the squared noise norm. Since all lower moments are scalar, $(A_\star)_c=g_L^{-2}(M_{2L+1})_c$. Therefore
\begin{equation}
 \frac{g_L^2\mathcal V}{(m!)^2}
 =\frac1{(m!)^2}\left[
 \frac{\operatorname{Var}_d(M_{2L+1})}{g_L^2}
 +\operatorname{Var}_d(\sqrt{\mathscr S_L})\right].
 \label{eq:moment-coefficient}
\end{equation}
This expression uses physical moments through degree $2m$ and is invariant under logical conjugation and unitary physical realization. Its evaluation can exploit the scalar prefix before doing any logical matrix algebra. In fact, $\mathsf G_L=[\mu_{i+j}]\otimes I_d$, so a block matrix inverse is unnecessary. A still more direct route uses the already computed scalar polynomial $p_L$. Write $p_L(E)^2=\sum_{j=0}^{2L}\eta_jE^j$. The recurrence gives
\begin{equation}
\begin{aligned}
 A_\star&=g_L^{-2}\sum_{j=0}^{2L}\eta_jM_{j+1},\\
 B_\star^2&=g_L^{-2}\sum_{j=0}^{2L}\eta_jM_{j+2}
             -A_\star^2-b_L^2I_d.
\end{aligned}
\label{eq:boundary-scalar-computation}
\end{equation}
For $L=0$, take $p_0=1$, $g_0=1$ and $b_0=0$. The second identity follows by taking the squared norm of $HW_L=b_LW_{L-1}+W_LA_\star+E_\star$, whose three physical images are orthogonal. Thus scalar polynomial orthogonalization, the two potentially nonscalar moments $M_{2L+1},M_{2L+2}$, and one $d\times d$ positive square root determine the local coefficient. This evaluation uses moments only through degree $2m$ and bypasses complete spectral diagonalization and recovery optimization. For numerical work near a small residual, orthogonal factorization of the physical error images evaluates the same positive quantity while avoiding subtraction of nearly equal moment terms.

The geometry also gives a leakage-rank obstruction. With $\ell^2= \operatorname{tr}_d(B_\star^2)$, Cauchy--Schwarz implies
\begin{equation}
 \operatorname{Var}_d(B_\star)\ge(1- \operatorname{rank}E_\star/d)\ell^2.
 \label{eq:rank-bound}
\end{equation}
A nonzero leakage shell with fewer than $d$ independent directions forces positive leading loss whenever $h_m>0$. The result applies to a whole multi-qubit register without selecting a preferred logical basis.

Compressed moments therefore determine both the harmful boundary and its optimal cancellation. We next use this optimum to assess the Petz decoder while keeping the prior fixed. Section~\ref{sec:uncertainty} then changes the information available to recovery, from a known prior to partial moment or support data. Section~\ref{sec:finite-angle} treats finite-angle calibration directly, without a weak-noise expansion.

 \subsection{The leading cost of Petz recovery}
\label{sec:petz}

We continue in the weak-noise limit with the code and noise distribution fixed. The next question is whether a familiar constructive decoder already attains the optimal coefficient, or whether the coherent adjustment in Section~\ref{app:optimal} improves it. The transpose, or Petz, recovery maps output amplitudes back with the adjoints of the error maps after normalizing their overlapping images~\cite{BarnumKnill2002,NgMandayam2010}. It can be computed from the channel without solving the full recovery optimization. For entanglement fidelity at a fixed reference density operator, here $I_d/d$, its general guarantee bounds infidelity within a factor of two of the optimum; the boundary geometry will determine the precise leading gap here.

 \subsubsection{The Petz map and its leading gap}

 A Kraus representation writes a channel as $\mathcal N(\rho)=\sum_jN_j\rho N_j^\dagger$, where $N_j$ are logical-to-physical maps and $\sum_jN_j^\dagger N_j=I_d$ ensures trace preservation. The index $j$ labels a Kraus representation of the channel; it carries no assumed classical side information for recovery. Set $\Omega_{\rm out}=\sum_jN_jN_j^\dagger=\mathcal N(I_d)$. This is the output of the unnormalized logical identity. The Petz recovery uses maps $N_j^\dagger \Omega_{\rm out}^{-1/2}$. The inverse square root normalizes the occupied physical output, and $N_j^\dagger$ maps its amplitudes back to the logical space. The adjoint is not assumed to be an inverse. The inverse acts on the support of $\Omega_{\rm out}$; a trace-preserving completion handles the unused complement. We use the actual channel and the maximally mixed reference throughout this section. Designing a Petz map from an assumed distribution would instead introduce calibration mismatch.

The error overlaps form the block Gram matrix $\mathsf N^\dagger \mathsf N$ of the row of Kraus maps $\mathsf N=[N_1\ N_2\ \cdots]$. As before, scalar logical blocks carry no distortion. Let $\mathcal Q_c$ remove the normalized logical trace from every block. The exact Petz expression of Zheng et al.~\cite{ZhengEtAl2024} then turns its error into a norm of the nonscalar part of the positive square root,
\begin{equation}
 1-F_{e,\mathrm{Petz}}=\frac1d\|\mathcal Q_c|\mathsf N|\|_{\mathrm{HS}}^2,
 \quad |\mathsf N|=\sqrt{\mathsf N^\dagger \mathsf N}.
 \label{eq:petz-gram}
\end{equation}
 The identity follows because the effective Kraus maps after recovery are the blocks of $\mathsf N^\dagger \Omega_{\rm out}^{-1/2}\mathsf N=|\mathsf N|$, and $\Tr \mathsf N^\dagger \mathsf N=d$.

\begin{proposition}[Petz gap from the boundary geometry]
\label{thm:petz}
 For the channel and maximally mixed logical reference of
Proposition~\ref{thm:optimal}, define
$\kappa=g_L^2h_m/(m!)^2$,
$v_A=\operatorname{Var}_d(A_\star)$ and
$v_B=\operatorname{Var}_d(B_\star)$.
The Petz map built from that actual channel obeys
\begin{gather}
 1-F_{e,\mathrm{Petz}}
 =\kappa(2v_A+v_B)\sigma^{2m}+o(\sigma^{2m}),\label{eq:petz-coefficient}\\
 F_{e,\mathrm{opt}}-F_{e,\mathrm{Petz}}
 =\kappa v_A\sigma^{2m}+o(\sigma^{2m}) .\label{eq:petz-gap}
\end{gather}
 It pays the optimal leading leakage cost and twice the occupied-shell distortion. The statement includes singular leakage and $h_m=0$.
\end{proposition}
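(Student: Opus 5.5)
The plan is to start from the exact Petz identity in Eq.~\eqref{eq:petz-gram}. Since $|\mathsf N|$ depends only on the Gram operator $\mathsf N^\dagger\mathsf N$, I am free to choose a convenient Kraus family. I will use the continuous family $x\mapsto e^{-i\sigma xH}V$ indexed by $L^2(\nu)$, which is the dilation of Section~\ref{sec:model}, and then truncate to the span of $\phi_0,\ldots,\phi_m$. The projection of each block onto higher noise functions will be controlled by the Taylor estimate \eqref{eq:app-l2-taylor}. In this basis I will decompose $\mathsf N$ using Eqs.~\eqref{eq:app-triangular-powers}--\eqref{eq:app-first-uncontrolled-power}:
\begin{equation*}
\mathsf N=\mathsf W(\mathsf F_\sigma\otimes I_d)+\sigma^m c\,\bigl[W_L(A_\star)_c+E_\star\bigr]\otimes\langle \overline{x^m}|+o(\sigma^m).
\end{equation*}
Here $\mathsf W=[W_0\cdots W_L]$, and $\mathsf F_\sigma$ is the scalar $(L+1)\times(m+1)$ matrix of shell amplitudes, whose row $n$ starts at order $\sigma^n$. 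The first term is exactly correctable and contributes a scalar block to $|\mathsf N|$. The defect splits into an occupied-shell part, which lies in $\operatorname{Ran}\mathsf W$, and a leakage part orthogonal to it.

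Next I will compute $|\mathsf N|$ perturbatively through the polar decomposition $\mathsf N=\mathsf U|\mathsf N|$. For the leakage part, the Gram operator acquires a block-diagonal contribution $\sigma^{2m}|c|^2 B_\star^2\otimes|q_m\rangle\langle q_m|$ on the component of $x^m$ orthogonal to lower powers. After the old-shell contribution is removed, the square root of this block is $\sigma^m|c|\,B_\star\otimes|\phi_m\rangle\sqrt{h_m}\langle\phi_m|$ to leading order. Its trace-removed part contributes $\kappa v_B\sigma^{2m}$, which reproduces the polar-aligned optimum of Eq.~\eqref{eq:app-leakage-variance-bound}. This is why Petz pays exactly the optimal leakage cost.

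The occupied-shell defect enters only through the cross term $\mathsf F_\sigma^\dagger\mathsf W^\dagger\Delta+\mathrm{h.c.}$. Here the square root has to be obtained from the Sylvester equation $SX+XS=Y$, where $S=(\mathsf F_\sigma^\dagger\mathsf F_\sigma)^{1/2}\otimes I_d$. I expect the symmetric splitting inherent in the square root to put half of the $(A_\star)_c$ amplitude into the $\phi_m$ row and half into its Hermitian-conjugate position. A coherent rotation would instead put all of it into a single position. Because $(A_\star)_c$ is Hermitian, the two halves add in Hilbert--Schmidt norm rather than cancelling. Compared with the single aligned term of the optimum, this doubles the residual and gives $2\kappa v_A$. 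I will check this against the block model \eqref{eq:three-shell-teaching-model}, first with $L=1$ and then with $L=0$. Subtracting \eqref{eq:optimal-coefficient} from \eqref{eq:petz-coefficient} then gives \eqref{eq:petz-gap}.

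The main obstacle is that $\mathsf F_\sigma^\dagger\mathsf F_\sigma$ is singular in the limit, since its eigenvalues scale as $\sigma^{2n}$. This makes the Sylvester inversion scale-dependent, and it means new eigenvalues emerge from zero, the same rank change that the remark after Eq.~\eqref{eq:decoupling} warns about. My first approach will be to rescale the noise basis by $\operatorname{diag}(\sigma^{-n})$, which makes $\mathsf F_\sigma$ triangular with an $O(1)$ invertible diagonal. I will then do the square-root expansion in the rescaled frame, treating the leakage block through a Schur complement \eqref{eq:gram-residual-projection} so that singular $B_\star$ needs no inverse. The terminal case $h_m=0$ will follow because both defect contributions then carry the vanishing factor $h_m$, and the cross-term remainders will be shown to be $o(\sigma^{2m})$ as in Appendix~\ref{app:attainment-details}.
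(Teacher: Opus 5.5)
\textbf{Genuine gap: the mechanism you give for the factor $2v_A$.} Your setup (Petz Gram identity, Kraus labels from $L^2(\nu)$, Schur complement for the leakage) matches the paper's. But the step that carries the content of the proposition is argued by a mechanism that gives the wrong factor.

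In $SX+XS=Y$, the block of $X$ coupling the last occupied singular direction (singular value $s_L\asymp\sigma^L$, vector tending to $\phi_L$) to the emerging direction $\phi_m$ is $Y/(s_L+s_m)$. The ``symmetric splitting'' you describe, half the amplitude in each of the two positions, is the equal-scale case $s_m\approx s_L$. It would give off-diagonal blocks $\tfrac12 a_\sigma(A_\star)_c$, contributing only $\tfrac12 a_\sigma^2 v_A$ to Eq.~\eqref{eq:petz-gram}. The Petz coefficient would then be $\kappa(\tfrac12 v_A+v_B)$, below the all-CPTP converse of Proposition~\ref{thm:optimal}, which is impossible. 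Two halves cannot double anything. Hermiticity is also irrelevant here, since the $(L,m)$ and $(m,L)$ blocks are Hilbert--Schmidt orthogonal in any case.

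The correct mechanism is the scale separation you yourself flagged. Because $s_m=O(\sigma^m)\ll s_L$, the divisor is $s_L(1+O(\sigma))$. Each off-diagonal block of $|\mathsf N|$ therefore carries the \emph{full} amplitude $a_\sigma A_{\rm sv}$, giving $2a_\sigma^2 v_A$ as in Eq.~\eqref{eq:app-triangular-absolute-value}.

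Relatedly, the occupied-shell defect does not enter ``only through the cross term.'' Its own Gram contribution $a_\sigma^2(A_\star)_c^2$ in the $\phi_m$ block is of leading order. It is exactly what the product of the two full off-diagonal blocks reproduces, which leaves $a_\sigma B_\star$ on the diagonal. If you drop it, the $\phi_m$ Schur complement becomes $a_\sigma^2[B_\star^2-(A_\star)_c^2]$, which need not even be positive.

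\textbf{The technical route also needs repair.} Rescaling by $\operatorname{diag}(\sigma^{-n})$ is fine for locating singular values and vectors; the paper's factorization $f_\sigma=L_\sigma D_\sigma\Psi_\sigma$ does exactly this. However, $|\mathsf N|$ is covariant only under unitary changes of the noise labels and invariant only under output isometries. A square-root expansion in a non-unitarily rescaled frame is therefore not an expansion of $|\mathsf N|$. In addition, no linearization exists on the $(\phi_m,\phi_m)$ block, where $S$ has a vanishing eigenvalue, and the $o(\sigma^m)$ remainders must survive division by $s_L$.

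The paper avoids all of this by using only unitary moves:
\begin{itemize}
\item It first applies the attaining isometry $U_\sigma$. This leaves $\mathsf N^\dagger\mathsf N$ unchanged and confines the traceless part to the $\phi_m$ column. In your unrotated frame the defect also has traceless components proportional to $\langle\phi_n,x^m\rangle$ in the lower columns, and you would have to show these vanish from $\mathcal Q_c|\mathsf N|$.
\item It then reduces to the $2\times2$ block $T_\sigma$ in scalar singular bases and rotates by $R_\epsilon$ with $\epsilon=a_\sigma/s_L$.
\item It bounds every remainder with $\||X|-|Y|\|_{\rm HS}\le\sqrt2\|X-Y\|_{\rm HS}$, which needs no inverse and is insensitive to the rank change.
\end{itemize}

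For $h_m=0$, your claim that both defects carry the factor $h_m$ fails in the unrotated frame for the same reason: $x^m$ still has nonzero lower-column components. The paper instead uses the Barnum--Knill bound $1-F_{e,\mathrm{Petz}}\le2(1-F_{e,\mathrm{opt}})=o(\sigma^{2m})$.
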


\begin{proof}
Section~\ref{sec:petz-proof} evaluates the two relevant blocks of the Kraus Gram square root. Appendix~\ref{app:petz-details} justifies the reduction at changing rank and controls its remainder, including singular leakage.
\end{proof}

The distinction is physical. Leakage reaches a previously unused subspace, whose orientation Petz can reverse at the optimal leading cost. A nonscalar action within an already occupied shell competes with the logical frame fixed by lower errors. Petz pays twice that contribution. The proof below identifies these costs in two blocks of the Gram square root; this also explains why expanding an inverse around a zero-noise rank-deficient matrix would miss part of the answer.

 \subsubsection{Why two singular scales matter}
\label{sec:petz-proof}

\begin{proof}[Proof of Proposition~\ref{thm:petz}]

We first consider $h_m>0$ and then handle the zero norm. Choose a real orthonormal basis of $L^2(\nu)$ containing $\phi_0,\ldots,\phi_m$. Expanding the encoded propagator defines the Kraus maps
\begin{equation}
 \begin{gathered}
 N_j=\int\phi_j(x)e^{-i\sigma xH}V\,d\nu(x),\\
 \mathsf N=[N_0\ N_1\ \cdots].
 \end{gathered}
 \label{eq:noise-basis-kraus}
\end{equation}
Parseval's identity gives $\mathcal N(\rho)=\sum_jN_j\rho N_j^\dagger$ and $\|\mathsf N\|_{\rm HS}^2=d$. Orthogonal noise functions are thus a convenient Kraus labeling, with no assumption that the label is measured. The physical range stays finite even when the noise-label space is infinite.

The proof uses the rank-safe estimate
\begin{equation}
\||X|-|Y|\|_{\rm HS}\le\sqrt2\,\|X-Y\|_{\rm HS}.
\label{eq:main-absolute-lipschitz}
\end{equation}
It follows by applying the scalar Lipschitz function $|x|$ to Hermitian dilations; Appendix~\ref{app:petz-details} gives the spectral-basis proof. Unlike an inverse-square-root expansion, this estimate remains controlled when a new Gram eigenvalue emerges from zero.

\paragraph{The two relevant scales.}

Apply the attaining isometry before discarding its syndrome. It preserves $\mathsf N^\dagger \mathsf N$, while Eq.~\eqref{eq:main-attained-residual} isolates the first nonscalar noise column. Scalar singular bases then reduce the relevant part of $\mathsf N$ to
\begin{gather}
T_\sigma=\begin{pmatrix}s_LI_d&a_\sigma A_{\rm sv}\\0&a_\sigma B_\star\end{pmatrix}, \label{eq:main-petz-block} \\
a_\sigma=\sqrt\kappa\,\sigma^m,\quad s_L\asymp\sigma^L,
\quad A_{\rm sv}=e^{i\varphi_{\rm sv}}(A_\star)_c, \notag
\end{gather}
 up to $o_{\rm HS}(a_\sigma)$ and scalar blocks. The real phase $\varphi_{\rm sv}$ is fixed by scalar singular-vector conventions and does not affect $\|A_{\rm sv}\|_{\rm HS}$. Here $s_L\asymp\sigma^L$ means their ratio is bounded above and below by positive constants. The old scalar row of degree $n$ has size $\sigma^n$; orthogonalizing it in the noise variable selects $\phi_n$. Thus the last occupied row and first residual differ by one power of $\sigma$.

The two block rows in Eq.~\eqref{eq:main-petz-block} refer to the old syndrome direction and the new leakage direction; its columns refer to two orthogonal noise amplitudes. Each entry acts on the same $d$-dimensional logical register. The upper-right block is harmful amplitude entering an already occupied direction, whereas the lower-right block enters a new direction.
 The trace part of the new leakage combines with its centered part into the positive matrix $B_\star$. Appendix~\ref{app:petz-details} proves this singular-basis reduction, including $B_\star=0$, without an assumption on later shells.

\paragraph{The two-column square root.}

 Put $\epsilon=a_\sigma/s_L$ and define the unitary matrix
\begin{equation}
 R_\epsilon=
 \exp\begin{pmatrix}0&-\epsilon A_{\rm sv}\\\epsilon A_{\rm sv}^\dagger&0\end{pmatrix}.
 \label{eq:app-triangular-rotation}
\end{equation}
 We use $R_\epsilon$ solely to estimate $|T_\sigma|$ algebraically; the Kraus representation used to define the physical channel remains fixed.
Direct multiplication gives
\begin{equation}
 (T_\sigma/s_L)R_\epsilon
 =\operatorname{diag}(I_d,\epsilon B_\star)
  +O_{\mathrm{HS}}(\epsilon^2).
 \label{eq:app-triangular-diagonalization}
\end{equation}
 Unitary conjugation gives
\begin{equation}
|T_\sigma R_\epsilon|=R_\epsilon^\dagger|T_\sigma|R_\epsilon.
\end{equation}
Together with $B_\star\succeq0$ and the Lipschitz estimate
in Eq.~\eqref{eq:main-absolute-lipschitz}, this implies
\begin{equation}
 |T_\sigma|
 =\begin{pmatrix}s_LI_d&a_\sigma A_{\rm sv}\\a_\sigma A_{\rm sv}^\dagger&a_\sigma B_\star\end{pmatrix}
  +O_{\mathrm{HS}}(a_\sigma^2/s_L).
 \label{eq:app-triangular-absolute-value}
\end{equation}
 The error is $o_{\mathrm{HS}}(a_\sigma)$. The same Lipschitz estimate controls the remainder in the reduction
to Eq.~\eqref{eq:main-petz-block}, including at rank loss.
The two off-diagonal blocks contribute
$2a_\sigma^2\operatorname{Var}_d(A_\star)$ to
Eq.~\eqref{eq:petz-gram}; the new diagonal block contributes
$a_\sigma^2\operatorname{Var}_d(B_\star)$.
This proves Eq.~\eqref{eq:petz-coefficient} for $h_m>0$.

For $h_m=0$, Eq.~\eqref{eq:optimal-coefficient} and the
Barnum--Knill inequality \cite{BarnumKnill2002},
\begin{equation}
 F_{e,\mathrm{Petz}}\ge F_{e,\mathrm{opt}}^2,
 \quad
 1-F_{e,\mathrm{Petz}}\le2(1-F_{e,\mathrm{opt}}),
 \label{eq:app-barnum-knill}
\end{equation}
give the same zero-coefficient statement.
\end{proof}
 Proposition~\ref{thm:petz} controls the leading asymptotic coefficient. At subsequent orders, square roots of emerging Gram eigenvalues can produce a correction proportional to $|\sigma|$ in the scaled infidelity, even for symmetric noise.

 Subtracting Eq.~\eqref{eq:petz-gap} from Eq.~\eqref{eq:petz-coefficient} gives $C_{\mathrm{opt}}=\kappa(v_A+v_B)$. Whenever $C_{\mathrm{opt}}>0$, the Petz-to-optimal infidelity ratio therefore tends to $1+v_A/(v_A+v_B)\in[1,2]$. Thus the geometry identifies which part realizes the factor-two bound of Ref.~\cite{BarnumKnill2002}. When $A_\star$ is scalar, Petz attains the optimal leading coefficient even if leakage leaves a nonzero error. The occupied-shell defect creates the leading gap. Leading equality is weaker than exact finite-width Petz optimality, for which Li et al. give separate conditions~\cite{LiEtAl2025}.

A physical reflection can decide this comparison before either coefficient is computed. Suppose a Hermitian involution $\Xi$ fixes the code and reverses the generator about a scalar energy $E_{\rm ref}$,
\begin{equation}
 \Xi V=V,\quad
 \Xi(H-E_{\rm ref}I)\Xi=-(H-E_{\rm ref}I).
 \label{eq:physical-reflection}
\end{equation}
Every odd compressed moment of the shifted generator vanishes. Its scalar orthogonal polynomials have definite parity, so $\Xi W_n=(-1)^nW_n$ and $W_L^\dagger(H-E_{\rm ref}I)W_L=0$. Thus $A_\star=E_{\rm ref}I_d$ and Petz attains the optimal leading coefficient for every prior satisfying Proposition~\ref{thm:optimal}. The symmetry also separates the physical Hankel matrix into even and odd blocks. It places no symmetry requirement on the angle distribution. Reflection symmetry of that distribution is a separate statistical simplification discussed below.

The singular limit matters in comparing perturbative formulas. For $H=X_s\otimes B$, $B\succeq0$, encoded at syndrome zero and with centered unit-variance noise, an emerging Gram block is $\sigma^2B^2+o(\sigma^2)$. Taking its square root first gives the coefficient $\operatorname{Var}_d(B)$. Linearizing around the shrinking scalar average instead gives $\operatorname{Var}_d(B^2)/(4 \operatorname{tr}_d(B^2))$. For $B=\operatorname{diag}(1,2)$ these are $1/4$ and $9/40$. Higher derivatives can contribute at the same order, so a fixed-positive-reference expansion cannot be used uniformly here. The exact Gram formula remains applicable.

Perturbative correctability and polar recovery have earlier foundations~\cite{Beny2011,LeungEtAl1997}. Our result evaluates a nonzero optimal coefficient and resolves its separate boundary contributions. For example, at $L=0$ with a centered unit-variance prior, put $\beta= \operatorname{tr}_d(B_\star)$ and $v=\operatorname{Var}_d(A_\star)+ \operatorname{tr}_d(B_\star^2)$. B\'eny's comparison quantity specializes to $\Delta=1-\sigma^2(v-\beta\sqrt v)+o(\sigma^2)$, with $\Delta^2\le F_{e,\mathrm{opt}}\le\Delta$ after conversion to squared fidelity. Equation~\eqref{eq:optimal-coefficient} fixes the coefficient to $v-\beta^2$, consistent with but not fixed by those bounds. Appendix~\ref{app:petz-numerics} compares the coefficient with exact Gram square roots in five finite models, including two logical qubits with noncommuting defects and singular leakage at depth two.

\section{Recovery with incomplete noise information}
\label{sec:uncertainty}

 We now keep the physical encoding and generator fixed and change the information used to choose recovery. The weak-noise optimum of Proposition~\ref{thm:optimal} used the known angle distribution. Calibration often determines a finite set of moments, a support interval, or confidence intervals on measured quantities, leaving several distributions possible. A separate decoder optimized for each compatible distribution would use information that calibration has not obtained. We must choose one operation from the common data and assess its performance on every distribution still possible.

 We return first to the weak-noise scaling $\tau=\sigma x$ with fixed rescaled priors. The recovery construction in Section~\ref{app:optimal} makes the unknown distribution enter the leading error through a finite polynomial, so this decision can be made without reconstructing a probability density. We first derive the cost of choosing an inaccurate polynomial, then minimize it over one common polynomial under uncertainty.

\subsection{Moment information and calibration error}
\label{sec:noise-moments}

Write $y_j=\mathbb E_\nu[x^j]$ for the noise moments. Their noise moment matrix, the Gram matrix $\mathsf H_y=[y_{i+j}]_{i,j=0}^{m-1}$, contains overlaps among $1,x,\ldots,x^{m-1}$, while $r_y=(y_m,\ldots,y_{2m-1})^{\mathsf T}$ contains their overlaps with $x^m$. These are statistical moments; the physical matrices $M_j$ and the resulting boundary geometry are held fixed. For $\mathsf H_y\succ0$, the Gram projection in Eq.~\eqref{eq:gram-residual-projection} gives
\begin{equation}
 u_\nu=\mathsf H_y^{-1}r_y,\qquad
 h_m=y_{2m}-r_y^{\mathsf T}\mathsf H_y^{-1}r_y.
 \label{eq:noise-schur}
\end{equation}
Here $u_\nu=(d_0,\ldots,d_{m-1})^{\mathsf T}$ is the polynomial coefficient vector.

For any fixed real coefficient vector $u\in\mathbb R^m$, construct the recovery with $q_u(x)=x^m-\sum_{n<m}u_nx^n$ in place of $q_m$. Set
\begin{equation}
 \chi=\frac{g_L^2\mathcal V}{(m!)^2},\quad
 h_m(\nu;u)=\int q_u(x)^2\,d\nu(x).
 \label{eq:chi-J}
\end{equation}
The argument after the semicolon is a chosen coefficient vector. Without it, $h_m(\nu)=\min_u h_m(\nu;u)$ denotes the optimized residual. For priors with moment vector $y$, this cost is $h_m(y;u)=h_m(\nu;u)$. Expanding the square gives $h_m(y;u)=y_{2m}-2u^{\mathsf T}r_y+u^{\mathsf T}\mathsf H_yu$, whose minimizing coefficients obey $\mathsf H_yu=r_y$.
The same physical construction gives
\begin{equation}
 1-F_e(\mathcal R_{\sigma,u}\mathcal N_{\sigma,\nu})
 =\chi h_m(\nu;u)\sigma^{2m}+o(\sigma^{2m}).
 \label{eq:fixed-polynomial-cost}
\end{equation}
At $u=u_\nu$ the coefficient equals $\chi h_m(\nu)=C_{\mathrm{opt}}$, the optimal value in Proposition~\ref{thm:optimal}. Completing the square makes the price of calibration error explicit,
\begin{equation}
\begin{gathered}
  h_m(\nu;u)=h_m(\nu)+(u-u_\nu)^{\mathsf T}\mathsf H_y(u-u_\nu),\\
 u_\nu=\mathsf H_y^{-1}r_y.
\end{gathered}
 \label{eq:mismatch}
\end{equation}
The excess leading error is quadratic in the polynomial mismatch. It is unnecessary to reconstruct a probability density if the relevant moments are known.

For a reflection-symmetric prior, uniqueness of the monic residual gives $q_m(-x)=(-1)^mq_m(x)$. The noise moment matrix $\mathsf H_y$ splits into even and odd powers; coefficients $d_n$ with $m-n$ odd vanish. In particular $d_{m-1}=0$, removing the order-$\sigma$ last-shell adjustment in Appendix~\ref{app:attainment-details}; the first possible coherent adjustment is $O(\sigma^2)$. For a reflection-invariant uncertainty class, averaging any trial residual with $(-1)^mq(-x)$ gives the same parity restriction without increasing its worst squared norm. These reductions shorten the moment optimization and the physical decoder. They leave the occupied-shell defect unchanged, so symmetric noise alone does not imply leading Petz optimality.

If every member of a nonempty uncertainty class $\mathcal U$ of priors shares moments through degree $2m$, one recovery has the same leading coefficient throughout the class. A sufficient condition for uniform control is that the $2m$-th moment remain insensitive to arbitrarily distant tails,
\begin{equation}
 \lim_{R\to\infty}\sup_{\nu\in\mathcal U}
 \int_{|x|>R}|x|^{2m}\,d\nu(x)=0.
 \label{eq:uniform-integrability}
\end{equation}
This is uniform integrability of $|x|^{2m}$. A convenient sufficient condition is
\begin{equation}
 \sup_{\nu\in\mathcal U}\int|x|^{2m+2}\,d\nu(x)<\infty.
 \label{eq:uniform-tail}
\end{equation}
The higher-moment bound gives a uniform $O(\sigma^{m+1})$ Taylor error in amplitude. The weaker condition in Eq.~\eqref{eq:uniform-integrability} gives the uniform $o(\sigma^m)$ needed for the same leading fidelity. Proposition~\ref{thm:optimal} for any fixed compatible prior gives the matching converse. The tail control is what permits the remainder bound to hold over the entire prior class.\footnote{An adverse sequence can otherwise put rare probability at rescaled angles of order $1/\sigma$, where the weak-angle expansion is inaccurate. Uniform integrability excludes a leading contribution of that kind while allowing a class whose moments above order $2m$ are infinite.}

\subsection{A least-favorable prior and a common decoder}

For one chosen recovery, the least-favorable distribution is the compatible distribution giving its smallest fidelity. A minimax recovery maximizes this smallest value. The recovery is fixed from the calibration data before the true distribution is known; its guarantee must respect that order of choices. We will show that a single least-favorable moment vector both fixes the common decoder and bounds every possible recovery.

The bound and its attainment are separate steps. For any fixed compatible prior $\nu_*$ and any recovery,
\begin{equation}
 \begin{split}
 \inf_{\nu\in\mathcal U}F_e(\mathcal R\mathcal N_{\sigma,\nu})
 &\le F_e(\mathcal R\mathcal N_{\sigma,\nu_*})\\
 &\le F_{e,\mathrm{opt}}(\sigma,\nu_*).
 \end{split}
 \label{eq:one-prior-converse}
\end{equation}
For positive $\chi$, the strongest leading bound comes from maximizing $h_m(\nu_*)$. What must still be proved is that the polynomial chosen for this hardest moment vector also controls every other compatible prior. That will turn a bound for an informed decoder into an attainable guarantee for one common decoder.

Allow the moments themselves to be uncertain. Let their vectors through degree $2m$ form a nonempty compact convex set $\mathcal Y$ of realizable moments. Assume every vector is realized by a distribution in $\mathcal U$ and Eq.~\eqref{eq:uniform-integrability} holds.
For a singular noise moment matrix the same Gram projection gives $h_m(y)=y_{2m}-r_y^{\mathsf T}\mathsf H_y^+r_y$, since realizability ensures $r_y\in\operatorname{Ran}\mathsf H_y$. The minimizing residual is unique as an element of $L^2(\nu)$ even when its coefficient vector is not.

\begin{proposition}[Optimal recovery under moment uncertainty]
\label{thm:minimax}
 Retain the nondegenerate physical prefix of Proposition~\ref{thm:optimal}. Let the noise uncertainty class $\mathcal U$ have a nonempty compact convex set $\mathcal Y$ of moment vectors through degree $2m$, with every listed vector realized in $\mathcal U$, and impose uniform integrability as in Eq.~\eqref{eq:uniform-integrability}. Then
\begin{equation}
 \mathfrak h_{\mathcal U}
 =\min_{u\in\mathbb R^m}\sup_{\nu\in\mathcal U} h_m(\nu;u)
 =\max_{\nu\in\mathcal U}h_m(\nu),
 \label{eq:moment-minimax}
\end{equation}
\begin{equation}
 \begin{split}
 &\max_{\mathcal R}\inf_{\nu\in\mathcal U}
 F_e(\mathcal R\mathcal N_{\sigma,\nu})\\
 &\qquad=1-\chi\mathfrak h_{\mathcal U}\sigma^{2m}
 +o(\sigma^{2m}).
 \end{split}
 \label{eq:quantum-minimax}
\end{equation}
No uniform positive lower bound on the noise moment matrices is required.
\end{proposition}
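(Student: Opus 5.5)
The plan has two parts: a finite-dimensional convex minimax identity for the scalar residual \eqref{eq:moment-minimax}, and a transfer to recovery fidelity \eqref{eq:quantum-minimax}. The transfer uses the fixed-polynomial cost \eqref{eq:fixed-polynomial-cost} with a uniform remainder for attainment, and the single-prior converse \eqref{eq:one-prior-converse} for the upper bound.

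\textbf{Moment minimax.} Since $h_m(\nu;u)$ depends on $\nu$ only through its moment vector $y\in\mathcal Y$, we have $\sup_{\nu\in\mathcal U}h_m(\nu;u)=\max_{y\in\mathcal Y}h_m(y;u)$. The function
\[
h_m(y;u)=y_{2m}-2u^{\mathsf T}r_y+u^{\mathsf T}\mathsf H_yu
\]
is affine, hence concave and continuous, in $y$ on the compact convex set $\mathcal Y$. It is convex in $u\in\mathbb R^m$, because $\mathsf H_y\succeq0$ by realizability. Sion's minimax theorem then gives
\[
\inf_u\max_y h_m(y;u)=\max_y\inf_u h_m(y;u).
\]
The inner infimum equals $h_m(y)=y_{2m}-r_y^{\mathsf T}\mathsf H_y^+r_y$, which is a minimum because $r_y\in\operatorname{Ran}\mathsf H_y$. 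It is upper semicontinuous in $y$, being an infimum of continuous functions, so the outer maximum is attained at some $y_*$ realized by some $\nu_*\in\mathcal U$.

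The outer infimum over $u$ must also be attained. I would argue coercivity: if $\max_y u^{\mathsf T}\mathsf H_yu$ vanished for some $u\ne0$, the polynomial $\sum u_nx^n$ would vanish on the support of every compatible prior. In that case the component of $u$ along this common null direction does not change any cost, and we quotient it out. After the quotient, $u\mapsto\max_y h_m(y;u)$ is coercive, so a minimizer $u_*$ exists. This is also why no uniform positivity of the $\mathsf H_y$ is needed.

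\textbf{Attainment.} Build the recovery $\mathcal R_{\sigma,u_*}$ of Section~\ref{app:optimal}, with $q_{u_*}$ in place of $q_m$. This recovery depends only on $V$, $H$, $\sigma$ and $u_*$, not on $\nu$. By \eqref{eq:fixed-polynomial-cost}, each $\nu$ gives infidelity $\chi h_m(\nu;u_*)\sigma^{2m}+o(\sigma^{2m})$, and $h_m(\nu;u_*)\le\mathfrak h_{\mathcal U}$. The main obstacle is making the $o(\sigma^{2m})$ uniform over $\mathcal U$.

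For this I will go back to the exact functional \eqref{eq:app-exact-error-functional}. The recovery is analytic with polynomial generator, so the amplitude is
\[
c\sigma^mq_{u_*}(x)\mathcal D+\text{(explicit terms of higher order in }\sigma x\text{ and }\sigma)+\text{Taylor remainder}.
\]
The Taylor remainder is pointwise bounded by $C\min(|\sigma x|^m,|\sigma x|^{m+1})$ times a constant; the first bound uses unitarity together with the bounded polynomial. Split the integral at $|x|=R$. The inner region contributes $O(\sigma^{2m+2}R^{2m+2})$. The outer region contributes at most $C\sigma^{2m}\sup_\nu\int_{|x|>R}|x|^{2m}d\nu$, which is small uniformly by \eqref{eq:uniform-integrability}. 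The finitely many explicit higher-order terms are treated the same way.

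The cross term between the leading residual and the remainder is handled by Cauchy--Schwarz. This uses that the leading term has $L^2(\nu)$ norm bounded uniformly, since $h_m(\nu;u_*)\le\mathfrak h_{\mathcal U}$. Choosing $R=R(\sigma)\to\infty$ slowly then gives
\[
\sup_{\nu\in\mathcal U}\bigl[1-F_e-\chi h_m(\nu;u_*)\sigma^{2m}\bigr]=o(\sigma^{2m}),
\]
which yields the lower bound in \eqref{eq:quantum-minimax}.

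\textbf{Converse.} For any recovery, \eqref{eq:one-prior-converse} with the maximizer $\nu_*$ and Proposition~\ref{thm:optimal} (or its singular-moment-matrix version) give
\[
\max_{\mathcal R}\inf_{\nu}F_e\le1-\chi h_m(\nu_*)\sigma^{2m}+o(\sigma^{2m}).
\]
One caveat: if $\mathsf H_{y_*}$ is singular, Proposition~\ref{thm:optimal} does not apply as stated. In that case I would either note that the compactness converse in Section~\ref{app:optimal} only needs the orthonormal $\phi_k$ for the nonzero $h_k$, with the converse of $h_m$ computed in the quotient, or approximate $y_*$ from the relative interior of $\mathcal Y$ with $\mathsf H_y\succ0$ and use upper semicontinuity. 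If neither works, I fall back on the trivial bound when $h_m(y_*)=0$. Combining the attainment and converse bounds proves \eqref{eq:quantum-minimax}.
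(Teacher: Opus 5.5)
Your proposal is correct. The attainment step matches the paper: truncate the Taylor remainder at $|x|=R$, use uniform integrability, and apply Cauchy--Schwarz to the cross term. The converse matches as well: Proposition~\ref{thm:optimal} applied to one least-favorable prior bounds every recovery.

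The real difference is in the scalar identity \eqref{eq:moment-minimax}.
\begin{itemize}
\item You get it from Sion's theorem, using compactness of $\mathcal Y$, affinity in $y$ and convexity in $u$. You then obtain attainment of the outer minimum by quotienting the common null space $\bigcap_{y}\ker\mathsf H_y$ and invoking coercivity.
\item The paper argues directly and splits cases. When $h_m(y_*)>0$, it shows that the informed polynomial $u_*=\mathsf H_{y_*}^{-1}r_{y_*}$ at the least-favorable moment vector already satisfies $h_m(y;u_*)\le h_m(y_*)$ for every $y\in\mathcal Y$, via the envelope theorem along feasible segments. When the maximum is zero, it takes a relative-interior point and shows that every compatible prior lives on the zero set of one monic polynomial, i.e.\ on a common alphabet of at most $m$ points.
\end{itemize}
Your route is shorter and handles both cases uniformly. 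The paper's route is constructive: it names the common decoder as the informed decoder for $y_*$, and it characterizes a vanishing robust factor by a shared alphabet. You can recover the saddle identification from your argument. Any minimax $u^\dagger$ satisfies $h_m(y_*;u^\dagger)\le\max_y h_m(y;u^\dagger)=h_m(y_*)$, so it minimizes $h_m(y_*;\cdot)$ and equals $u_*$ whenever $\mathsf H_{y_*}\succ0$.

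Your caveat about a singular $\mathsf H_{y_*}$ closes in one line, and you need neither workaround. Suppose $\mathsf H_{y_*}$ is singular. Then some nonzero polynomial of degree at most $m-1$ has zero $L^2(\nu_*)$ norm, so $\nu_*$ is supported on at most $m-1$ points. A monic degree-$m$ polynomial vanishes on those points, hence $h_m(y_*)=0$, and the trivial bound $F_e\le1$ already gives the converse. Otherwise $\mathsf H_{y_*}\succ0$, and Proposition~\ref{thm:optimal} applies as stated. Your fallback therefore covers every singular case. Note also that the approximation option would have needed concavity of $h_m$ along a segment, because upper semicontinuity gives the inequality in the wrong direction.
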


\begin{proof}
For realizable moments, least squares over the finite-dimensional span of lower powers is attained, even if the coefficient vector is not unique. Thus $h_m(y)=y_{2m}-r_y^{\mathsf T}\mathsf H_y^+r_y$.
For fixed $u$, $h_m(y;u)$ is affine in $y$; its minimum over $u$ is therefore concave and upper semicontinuous. It is nonnegative and has a maximizer $y_*$ on the compact convex set $\mathcal Y$.

Suppose first that $h_m(y_*)>0$. A singular $\mathsf H_{y_*}$ would imply a nonzero polynomial of degree at most $m-1$ vanishing on the support of every representing measure. That support would have at most $m-1$ points, making the degree-$m$ monic minimum zero. Hence $\mathsf H_{y_*}\succ0$. Its minimizing vector $u_*$ is unique and varies smoothly in a neighborhood of $y_*$. Differentiating the minimized quadratic along a feasible segment from $y_*$ to $y$ gives
\begin{equation}
  h_m(y;u_*)- h_m(y_*;u_*)\le0.
\end{equation}
By the envelope theorem, the derivative of $h_m$ at $y_*$ toward $y$ equals this affine change at the fixed minimizer $u_*$, and maximality of $y_*$ on the convex set $\mathcal Y$ makes it nonpositive. It follows that $\sup_yh_m(y;u_*)=h_m(y_*)$. The reverse minimax inequality follows by evaluating any polynomial against $y_*$. The fixed-polynomial recovery and uniform integrability give the achievable coefficient. Proposition~\ref{thm:optimal} for any representing distribution at $y_*$ gives the converse over all recovery channels.

If the maximum is zero,
choose a moment vector $y^\circ$ in the relative interior of $\mathcal Y$  and a minimizing monic polynomial $q^\circ$ there.
Relative interior means interior within the affine space spanned by $\mathcal Y$.
 The quantity $\int(q^\circ)^2d\nu$, evaluated with any distribution having moment vector $y$, is an affine functional of $y$. It is nonnegative on $\mathcal Y$ and zero at its relative-interior point. It must vanish everywhere on $\mathcal Y$. Every compatible distribution is therefore supported on the same zero set of $q^\circ$, with at most $m$ real points. Conversely, any common alphabet of at most $m$ points admits a monic degree-$m$ polynomial vanishing throughout it. Its fixed-polynomial recovery has error $o(\sigma^{2m})$ uniformly by Eq.~\eqref{eq:uniform-integrability}. This proves both minimax equalities and the common-alphabet characterization without a lower eigenvalue bound for all noise matrices.

For completeness, the degree-$m$ Taylor remainder satisfies
\begin{equation}
 \|r_\sigma(x)\|_{2,d}
 \le C_H\sigma^m|x|^m\min\{1,\sigma|x|\}.
 \label{eq:uniform-taylor-truncation}
\end{equation}
On $|x|\le R$, its squared norm divided by $\sigma^{2m}$ is bounded by $C_H^2\sigma^2R^2\mathbb E [|x|^{2m}]$. On the complement, Eq.~\eqref{eq:uniform-integrability} bounds it by a uniformly vanishing tail. First taking $\sigma\to0$, then $R\to\infty$, proves the uniform $o_{L^2}(\sigma^m)$. The recovery generators have fixed coefficients and their remaining terms involve powers of $x$ through degree $m$, so the completion in Appendix~\ref{app:attainment-details} has the same uniform remainder. The all-CPTP upper bound uses a single fixed least-favorable distribution, so no supremum is interchanged with a pointwise asymptotic limit.
\end{proof}

Convexity supports the general minimax equality between the common-polynomial optimum and the largest individual residual norm in Eq.~\eqref{eq:moment-minimax}; it cannot be dropped in general. Each known deterministic angle is individually invertible, but the convexified class of angles in a nondegenerate interval has a positive robust noise factor $\mathfrak h_{\mathcal U}$. It produces a positive leading error when $\chi>0$. For a fixed recovery, fidelity is affine in the channel, so convexifying this uncertainty leaves its worst value unchanged.

 The robust-QEC optimizations of Refs.~\cite{KosutShabaniLidar2008,TaghaviKosutLidar2010} select a common recovery for an uncertain channel. Equations~\eqref{eq:moment-minimax}--\eqref{eq:quantum-minimax} evaluate their weak-noise objective in terms of limited moments and give one physical recovery with a matching unrestricted bound. Partial characterization can also improve decoding by completing partially learned Pauli-error probabilities~\cite{IyerEtAl2025}; the present guarantee applies simultaneously to every compatible prior.

\paragraph{Implementing the moment optimization.}
\label{app:common-moment-minimax}\label{app:minimax}
The finite moment vector $y$ contains all the statistical information entering the leading error. Introducing a scalar $t$ writes its optimization as
\begin{equation}
 \mathfrak h_{\mathcal U}=\max_{y\in\mathcal Y,t}\ t,
 \qquad
 \begin{pmatrix}\mathsf H_y&r_y\\r_y^{\mathsf T}&y_{2m}-t\end{pmatrix}\succeq0.
 \label{eq:moment-sdp}
\end{equation}
Equation~\eqref{eq:gram-residual-projection} identifies this positive block constraint with $t\le h_m(y)$, including singular moment matrices. For an arbitrary compact convex realizable set $\mathcal Y$, Eq.~\eqref{eq:moment-sdp} is an exact convex variational expression. It becomes a semidefinite program (SDP), with a linear objective and affine positive-semidefinite constraints, when $\mathcal Y$ itself has such a finite representation. Support and measurement constraints give an important case. For support in $[-R,R]$, the exact univariate interval conditions are
\begin{equation}
\begin{gathered}
y_0=1,\quad [y_{i+j}]_{i,j=0}^m\succeq0,\\
[R^2y_{i+j}-y_{i+j+2}]_{i,j=0}^{m-1}\succeq0.
\end{gathered}
\label{eq:interval-moments}
\end{equation}
 The first inequality expresses $\mathbb E[a(x)^2]\ge0$ for every polynomial $a$ of degree at most $m$. The second expresses $\mathbb E[(R^2-x^2)b(x)^2]\ge0$ for $\deg b\le m-1$, using the support bound. In one variable these tests are also sufficient at this truncation order~\cite{KreinNudelman1977,CurtoFialkow2008}, because every nonnegative interval polynomial has the corresponding sum-of-squares representation. Thus every feasible moment vector belongs to an actual probability distribution and obeys its joint positivity constraints. Linear confidence intervals on the measured moments can be intersected with Eq.~\eqref{eq:interval-moments} and used in Eq.~\eqref{eq:moment-sdp}.

\subsection{Support information and Chebyshev recovery}

With a support bound but no probability weights, a robust decoder must control every allowed angle. A distribution can concentrate at the angle where its residual polynomial is largest. The statistical optimization therefore changes from least squares for a known measure to minimizing the maximum residual over the interval. For the recovery of Eq.~\eqref{eq:fixed-polynomial-cost}, the leading loss is $\chi\mathbb E_\nu[q(x)^2]\sigma^{2m}$. Optimizing over all distributions supported in $[-R,R]$ replaces that expectation by $\max_{|x|\le R}|q(x)|^2$, since a deterministic allowed angle attains the maximum. The common decoder therefore needs the monic degree-$m$ polynomial with smallest maximum magnitude. Equioscillation identifies it as the rescaled Chebyshev polynomial in Eq.~\eqref{eq:chebyshev}; its alternating extrema also determine a compatible prior for the converse. 

 Retain the general physical coefficient $\chi$ and suppose the statistical information is $x\in[-R,R]$, $R>0$. The least-favorable problem becomes uniform approximation by a monic polynomial. With $T_m(\cos\phi)=\cos(m\phi)$, choose
\begin{equation}
 q_*(x)=2^{1-m}R^mT_m(x/R).
 \label{eq:chebyshev}
\end{equation}
Its squared largest magnitude is $R^{2m}/4^{m-1}$, by the monic minimax property~\cite{NISTDLMFChebyshev}, and hence
\begin{equation}
 \max_{\mathcal R}\inf_{\operatorname{supp}\nu\subseteq[-R,R]}F_e
 =1-\chi\frac{R^{2m}}{4^{m-1}}\sigma^{2m}+o(\sigma^{2m}).
 \label{eq:support-optimum}
\end{equation}
A matching prior puts weights $1/(2m)$ at the endpoints and $1/m$ at the interior extrema $x_j=R\cos(j\pi/m)$. Discrete cosine orthogonality makes $q_*$ orthogonal to all lower powers, and its norm equals its squared maximum. This one prior bounds every recovery; the recovery designed with $q_*$ attains the bound uniformly over the interval.

 Alternation and extremal-design results identify the polynomial and a measure attaining its uniform error~\cite{DetteTitoff2009}. Chebyshev approximation also enters QEC sensing-code design~\cite{LaydenEtAl2019}. Here the encoding is fixed, the approximation variable is the uncertain angle, and the optimized observable is memory fidelity. The physical recovery and converse turn the approximation error into an operational quantum limit.

 For a general compact set $\mathcal K\subset\mathbb R$ with at least $m+1$ points, let $q_{\mathcal K}$ be the monic polynomial minimizing its uniform norm and let $e_{\mathcal K}=\|q_{\mathcal K}\|_{\infty,\mathcal K}$. Alternation gives ordered contact points $t_0<\cdots<t_m$ with $q_{\mathcal K}(t_j)=(-1)^{m-j}e_{\mathcal K}$. Put $P(t)=\prod_{j=0}^m(t-t_j)$ and
\begin{equation}
 w_j=\frac{|P'(t_j)|^{-1}}{\sum_{\ell=0}^m|P'(t_\ell)|^{-1}}.
\end{equation}
 All weights are positive. Lagrange interpolation gives $\sum_jt_j^n/P'(t_j)=0$ for $0\le n<m$. The signs of $P'(t_j)$ alternate with $q_{\mathcal K}(t_j)$, so this prior makes $q_{\mathcal K}$ orthogonal to lower powers and has $h_m=e_{\mathcal K}^2$. Its optimal coefficient bounds every recovery. The fixed recovery using $q_{\mathcal K}$ has error $\chi e_{\mathcal K}^2\sigma^{2m}+o(\sigma^{2m})$ uniformly over distributions supported on $\mathcal K$. This construction connects extremal approximation to a physically admissible least-favorable channel.
The gapped-support and ternary examples below specialize this compact-set result. Their scalar residuals determine the statistical part of the recovery cost, while the same physical coefficient $\chi$ remains fixed.

 More detailed support information has a direct value. For $m=2$ and $\mathcal K=[-b,-a]\cup[a,b]$, $0\le a<b$,
\begin{equation}
 q_{\mathcal K}(x)=x^2-\frac{a^2+b^2}{2},\quad
 \mathfrak h_{\mathcal K}=\frac{(b^2-a^2)^2}{4}.
 \label{eq:gapped-support}
\end{equation}
Symmetrization removes the odd coefficient, after which $x^2$ ranges over $[a^2,b^2]$. Equal weights at $-b,-a,a,b$ attain the residual norm. Relative to knowledge of the enclosing interval, the leading worst-case error is multiplied by $(1-a^2/b^2)^2$.

 For $m=2$, the same support problem gives the smallest protected case in which an informed, nominal and robust decoder can be compared exactly. Set $R=1$. Each run either has zero angle or one of two opposite angles with fixed magnitude; the probability $p$ of a nonzero rotation remains uncertain.
\begin{equation}
 \nu_p=(1-p)\delta_0+\tfrac p2(\delta_{-1}+\delta_1),
 \qquad p\in[0,1].
 \label{eq:three-angle-family}
\end{equation} Knowing $p$ gives $q_2=x^2-p$ and $h_2=p(1-p)$, using the least-squares definition also at $p=0$. A common decoder using $x^2-u_{\rm c}$ has coefficient vector $u=(u_{\rm c},0)^{\mathsf T}$ and pays
\begin{equation}
  h_2(p;u_{\rm c})=p(1-u_{\rm c})^2+(1-p)u_{\rm c}^2.
 \label{eq:ternary-polynomial-decision}
\end{equation}
 Its best worst-case choice is $u_{\rm c}=1/2$, with cost $1/4$ for every $p$ in this family. The prior $p=1/2$ proves that no smaller worst cost is possible. Since $|x^2-1/2|\le1/2$ throughout $[-1,1]$, the same decoder works for the entire support class. The least-favorable prior places weights $(1/4,1/2,1/4)$ at $(-1,0,1)$. Adding zero-angle runs makes the squared angle uncertain even as it reduces the angle variance.
A nominal decoder designed for a declared $p_0\le1/2$ uses $u_{\rm c}=p_0$. Its worst leading cost over $p\in[0,1]$ is $\chi(1-p_0)^2$, compared with $\chi/4$ for the common minimax decoder. At the nominal channel its cost is $\chi p_0(1-p_0)$. These coefficients follow by maximizing the affine function $h_2(p;p_0)$ at an endpoint; compact support makes the weak-noise remainder uniform in $p$.  For $0<p<1$ and $\chi>0$, adding zero-angle runs restores a quartic loss that vanishes for the pure binary prior. For general $R$, $q_*=x^2-R^2/2$; on a known interval $[a,b]$, first invert the midpoint rotation and take $R=(b-a)/2$. For general degree $m$, the same decision is made by the monic residual on the declared support, with the physical geometry held fixed.

 \subsection{Finite noise alphabets and their stability}
\label{sec:finite-support}

 The leading coefficient in Eq.~\eqref{eq:optimal-coefficient} can vanish for two different reasons, with different physical consequences. The generator may never distinguish logical states on the reachable space, or the allowed angles may make one harmful power cancellable even though later powers are not. A finite noise alphabet means that $x$ takes a finite list of values. On that list, a polynomial can vanish identically, removing an error amplitude that would remain for a continuous distribution. This section determines what follows beyond the zero leading coefficient and how sensitive that improvement is to rare other angles.

A vanishing physical defect decides whether the Krylov shell construction continues or terminates. Under the nondegenerate-prefix hypothesis, $\mathcal V=0$ is equivalent to exact correction through $H^{L+1}$. Indeed, the recurrence then has $A_\star=a_LI_d$ and $B_\star=b_{L+1}I_d$, so every overlap involving the next power is scalar. Conversely, those scalar overlaps force $A_\star$ and $B_\star^2$ to be scalar, hence $\mathcal V=0$. If $b_{L+1}>0$, add the normalized next shell and repeat. If $b_{L+1}=0$, the existing shells form an invariant space and discarding their syndrome corrects the entire orbit. This stopping rule separates a postponed boundary cost from exact recovery without computing further moments. Two terminal mechanisms deserve separate treatment.
Temporarily relaxing the nondegenerate-prefix assumption, suppose the moments through degree $2L$ remain scalar but the first vanishing physical Gram--Schmidt norm is $g_{n_0}=0$, with $n_0\le L$. Its null polynomial obeys $p_{n_0}(H)V=0$. All later powers reduce to the first $n_0$ powers. The whole cyclic pair is then equivalent to
\begin{equation}
 V=|0\rangle\otimes I_d,\qquad H=J_{ n_0}\otimes I_d,
 \label{eq:physical-termination}
\end{equation}
 where $J_{n_0}$ is the $n_0\times n_0$ scalar tridiagonal matrix acting on the syndrome shells. Discarding the syndrome then recovers the logical register exactly for every angle distribution. If the physical prefix remains nondegenerate but $h_m=0$, the noise has exactly $m$ atoms. There are at most $m$ because $q_m$ vanishes on the support, and at least $m$ because Proposition~\ref{thm:optimal} assumes a positive-definite noise moment matrix through degree $m-1$. This can improve the error order without making the Hamiltonian orbit exactly correctable.

Binary noise makes the distinction visible. For $x=\pm1$,
\begin{equation}
 e^{-i\sigma xH}=\cos(\sigma H)-ix\sin(\sigma H).
 \label{eq:binary-operator-closure}
\end{equation}
There are two independent functions of the angle, $1$ and $x$. Their physical coefficients still contain all even and odd powers of $H$. Correcting $I,H$ controls their leading terms without necessarily correcting $\cos(\sigma H)$ and $\sin(\sigma H)$ exactly. Thus $h_2=0$ does not imply perfect finite-width recovery.

\begin{proposition}[Recovery for a fixed finite angle alphabet]
\label{prop:finite-alphabet}
 Retain the nondegenerate physical prefix through $L$. For $r$ distinct fixed
rescaled values $x_j$ with positive weights, actual angles $\sigma x_j$, and
$2\le r\le L+1$, a polar/Petz recovery gives the bound
\begin{equation}
 1-F_{e,\mathrm{opt}}=O(\sigma^{4L-2r+6}).
 \label{eq:alphabet-order}
\end{equation}
 This bounds the error from above; its first nonzero power may be higher.
\end{proposition}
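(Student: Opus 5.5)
The plan is to work with a finite, noise-adapted Kraus representation and bound the Petz error through the exact Gram formula, Eq.~\eqref{eq:petz-gram}. Since $\nu=\sum_{j=1}^r w_j\delta_{x_j}$, the space $L^2(\nu)$ is $r$-dimensional. The monic noise polynomials $q_0,\ldots,q_{r-1}$ have positive norms, so the orthonormal $\phi_0,\ldots,\phi_{r-1}$ form a basis. As in Eq.~\eqref{eq:noise-basis-kraus}, I set $N_k=\int\phi_k(x)e^{-i\sigma xH}V\,d\nu(x)$ for $0\le k<r$.

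Expanding the exponential gives $N_k=\sum_{n\ge k}c_{kn}\sigma^nH^nV$, with $c_{kn}=(-i)^n\langle\phi_k,x^n\rangle/n!$ and $c_{kk}\ne0$. The series starts at degree $k$ because $\phi_k\perp x^n$ for $n<k$. Powers $x^n$ with $n\ge r$ do not vanish on the alphabet, so every block contains all higher powers of $H$. The block Gram matrix is therefore
\[
(\mathsf N^\dagger\mathsf N)_{kl}=\sum_{a\ge k,\,b\ge l}\overline{c_{ka}}c_{lb}\,\sigma^{a+b}M_{a+b}.
\]
The entry is scalar up to total degree $2L$. Writing $D_\sigma=\operatorname{diag}(\sigma^kI_d)$, I factor $\mathsf N^\dagger\mathsf N=D_\sigma(\mathsf S_\sigma+\Delta_\sigma)D_\sigma$. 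Here $\mathsf S_\sigma$ is scalar, and $\mathsf S_0=[\overline{c_{kk}}c_{ll}\mu_{k+l}]\otimes I_d$ is positive definite. This positivity holds because $k+l\le2r-2\le2L$, so the leading terms only involve the nondegenerate physical Hankel block; this is where $r\le L+1$ enters. The nonscalar part $\Delta_\sigma$ has block $(k,l)$ of order $\sigma^{a+b-k-l}$ with $a+b\ge2L+1$.

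The key cancellation is at total degree $2L+1$. For $a+b$ odd, the coefficient $\overline{c_{ka}}c_{lb}$ carries the phase $i^{a}(-i)^{b}$, which is purely imaginary. Combined with $M_{2L+1}$ Hermitian and the real $\langle\phi,x^n\rangle$, the degree-$(2L+1)$ nonscalar contribution to $\mathsf S_0^{-1/2}\Delta_\sigma\mathsf S_0^{-1/2}$ is anti-Hermitian in the combined shell--logical indices. Its only nonscalar content is $(M_{2L+1})_c\propto(A_\star)_c$. To first order in the square root, such a term becomes a unitary correction of $\mathsf S^{1/2}$, namely a coherent logical/syndrome rotation. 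In the Petz--polar form this rotation is absorbed by the choice of polar factor, exactly as the rotation $R_\epsilon$ absorbed $A_{\rm sv}$ in the proof of Proposition~\ref{thm:petz}.

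I expect this step to be the main obstacle. It must show that the degree-$(2L+1)$ term contributes no $\sigma^{2(2L+1)}$-type error. I would first try to verify it by applying the Hermitian-dilation rotation trick to the $r\times r$ block matrix, and to control the remainder with the Lipschitz estimate of Eq.~\eqref{eq:main-absolute-lipschitz}. That estimate is what makes the argument safe against the emerging Gram eigenvalues. If the Petz map itself retains this coherent term at second order, I would instead use the polar recovery with an appended known rotation of the kind in Proposition~\ref{prop:worst-input}. The statement allows either, since it only claims a polar/Petz recovery.

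After this cancellation, the first harmful nonscalar entries have total degree $2L+2$. The worst block is $k=l=r-1$, where the normalized defect is of order $\sigma^{2L+2-2(r-1)}$. Undoing the $D_\sigma$ scaling, the nonscalar part of $|\mathsf N|$ is of order $\sigma^{2L+2}/\sigma^{r-1}=\sigma^{2L+3-r}$ in Hilbert--Schmidt norm. Off-diagonal blocks with smaller $k,l$ are at least as small. Squaring in Eq.~\eqref{eq:petz-gram} then gives $1-F_{e,\mathrm{Petz}}=O(\sigma^{4L-2r+6})$. Optimality only improves this, which yields Eq.~\eqref{eq:alphabet-order}.

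The bound is only an upper estimate, because further cancellations, such as scalar $(M_{2L+2})$ contributions or a vanishing $\mathcal V$, may raise the true order. As a sanity check, $r=L+1$ gives $\sigma^{2L+4}=\sigma^{2m+2}$, consistent with $h_m=0$ in Proposition~\ref{thm:optimal} postponing the loss beyond $\sigma^{2m}$.
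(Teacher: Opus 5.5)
The step you flag as the main obstacle fails, and with it your claim that the first harmful entries have total degree $2L+2$.

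\textbf{Why the absorption argument fails.} In your basis the degree-$(2L+1)$ part of $\mathsf N^\dagger\mathsf N$ is $\sigma^{2L+1}(i\gamma)\otimes M_{2L+1}$, with $\gamma$ real antisymmetric on the noise labels. As a Taylor coefficient of a Hermitian matrix it is Hermitian, not anti-Hermitian. A Hermitian perturbation shifts $\sqrt{\mathsf S}$ by the Hermitian solution of a Sylvester equation, not by a unitary. The Petz error $\frac1d\|\mathcal Q_c|\mathsf N|\|_{\rm HS}^2$ also has no polar freedom that could absorb anything. In the proof of Proposition~\ref{thm:petz}, $R_\epsilon$ was only a device for evaluating $|T_\sigma|$. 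The blocks $a_\sigma A_{\rm sv}$ remain in $|T_\sigma|$, which is exactly why Petz pays $2\kappa v_A$.

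The four-photon memory is a concrete counterexample: $L=1$, $A_\star=\sigma_y$, $B_\star=0$, and equally weighted $x=\pm1$, so $r=2$. There every even compressed moment is scalar and every odd one is a multiple of $\sigma_y$. With $N_0=\cos(\sigma H)V$ and $N_1=-i\sin(\sigma H)V$, all nonscalar Gram content is odd-degree and sits in the $(0,1)$ block. If the degree-$3$ term were absorbed, your counting would give loss $O(\sigma^8)$. The paper's exact optimal binary loss is $16\sigma^6/9$.

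Your fallback does not repair this. Proposition~\ref{prop:worst-input} inherits the hypotheses of Proposition~\ref{thm:optimal}, which need at least $L+1$ nodes. Even at $r=L+1$ it yields only $o(\sigma^{2L+2})$.

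\textbf{What actually fixes the exponent.} The correct cancellation is narrower. Let $\mathsf D$ be the scalar part of the Gram matrix, with eigenvalues $\ell_j\asymp\sigma^{2j}$, and put $\mathsf R=\sqrt{\mathsf N^\dagger\mathsf N}-\sqrt{\mathsf D}$. Then $\sqrt{\mathsf D}\,\mathsf R+\mathsf R\sqrt{\mathsf D}=\Delta-\mathsf R^2$, so in the scalar eigenbasis block $(j,k)$ is divided by $\sqrt{\ell_j}+\sqrt{\ell_k}$.

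Every denominator except the last diagonal one is $\gtrsim\sigma^{r-2}$. The degree-$(2L+1)$ term therefore survives in those blocks at order $\sigma^{2L+3-r}$. That is the target order, not zero; in the optical example it produces the entire $\sigma^6$ loss.

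Only on the smallest scalar direction, with denominator $\asymp\sigma^{r-1}$, would this term give the too-large order $\sigma^{2L+2-r}$. The needed fact is that this eigenvector lies within $O(\sigma)$, up to phase, of a real vector. In your basis that vector is $e_{r-1}$, where $\gamma_{r-1,r-1}=0$. The antisymmetric form therefore vanishes there to leading order, and that block is $O(\sigma^{2L+2})$.

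You must also control the quadratic term. The crude bound $\mathsf R=O(\sigma^{2L-r+2})$ makes its contribution $O(\sigma^{4L-3r+5})$. This is subdominant exactly because $r\le L+1$, a second use of the hypothesis that is missing from your sketch. The Lipschitz estimate of Eq.~\eqref{eq:main-absolute-lipschitz} cannot supply these eigenvalue-weighted gains.
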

\begin{proof}
Appendix~\ref{app:alphabet-details} combines the finite-node Vandermonde scales with cancellation of the first odd Gram term, then bounds the traceless square root. Squaring this amplitude estimate gives Eq.~\eqref{eq:alphabet-order}.
\end{proof} The constant depends on the fixed nodes and weights. A family in which nodes approach one another as $\sigma$ changes requires a separate uniform estimate. If the alphabet is known but its weights are unknown, construct the decoder using synthetic uniform weights. Its nonnegative single-angle errors have a sum bounded by $r$ times the uniform-prior error, so the same order holds uniformly over all actual weights, including zero weights. A single known angle is inverted exactly.

\paragraph{Why fewer angles improve the error order.}
The $r$ angle errors have a scalar Gram spectrum of orders $1,\sigma^2,\ldots,\sigma^{2r-2}$. An invertible weighted Vandermonde matrix fixes these scales. The first possible nonscalar Gram term is of order $\sigma^{2L+1}$ and is antisymmetric in the angle labels. The smallest-eigenvalue vector approaches a real vector, so its diagonal matrix element in this odd term vanishes. That extra cancellation improves the square-root perturbation to $O(\sigma^{2L-r+3})$. Squaring its traceless norm in Eq.~\eqref{eq:petz-gram} proves Eq.~\eqref{eq:alphabet-order}. Appendix~\ref{app:alphabet-details} derives the scales and the Sylvester-equation estimate, including the limiting case $r=L+1$. Ordinary denominator counting without the antisymmetric cancellation would miss the improved order.

 For binary noise $x=\pm1$, the identity $x^2=1$ makes the degree-two noise residual vanish, $h_2=0$. When $I,H$ are correctable, the quartic boundary term in Eq.~\eqref{eq:optimal-coefficient} consequently vanishes. 
  The decomposition in Eq.~\eqref{eq:binary-operator-closure} retains all higher physical powers. A symmetry now determines finite-width recovery for a class of binary channels. Suppose a Hermitian involution $\Xi=\Xi^\dagger=\Xi^{-1}$ obeys $\Xi H\Xi=-H$ and $\Xi V=V$, and the angles are equally likely $\pm\sigma$. The Kraus maps $\cos(\sigma H)V$ and $\sin(\sigma H)V$ lie in orthogonal $\Xi$ sectors.
These are the even and odd error branches, which recovery can treat separately. Their parity does not identify the unobserved sign of the angle.

 Set $E_{\rm odd}(\sigma)=V^\dagger\sin^2(\sigma H)V$. Polar decoding in the two sectors gives
\begin{equation}
 F_{e,\mathrm{opt}}=\frac{(\Tr\sqrt{E_{\rm odd}(\sigma)})^2
 +(\Tr\sqrt{I_d-E_{\rm odd}(\sigma)})^2}{d^2}.
 \label{eq:binary-exact}
\end{equation}
The singular-value bound used in Proposition~\ref{thm:optimal} proves optimality separately in each sector.

  The three-shell model tests whether the sixth-order bound is attained with a nonzero coefficient. Set $A=0$ in Eq.~\eqref{eq:three-shell-teaching-model}, so the boundary distortion comes entirely from leakage. The involution $\Xi=\operatorname{diag}(I_d,-I_d,I_d)$
reverses $H$ and fixes the code, so Eq.~\eqref{eq:binary-exact} applies.
For this specialization its exact odd-sector weight is
\begin{equation}
E_{\rm odd}(\sigma)=b^2(b^2I_d+B^2)^{-1}
\sin^2\!\left(\sigma\sqrt{b^2I_d+B^2}\right).
\end{equation}
 Its expansion has scalar quadratic term $b^2\sigma^2I_d$ and nonscalar
quartic term $-b^2(B^2)_c\sigma^4/3$.
Consequently $(\sqrt{E_{\rm odd}(\sigma)})_c
=-b(B^2)_c\sigma^3/6+O(\sigma^5)$; the centered square root of
$I_d-E_{\rm odd}(\sigma)$ enters at higher order. The two squared centered norms give
\begin{equation}
1-F_{e,\mathrm{opt}}
=\frac{b^2}{36}\operatorname{Var}_d(B^2)\sigma^6+O(\sigma^8).
\label{eq:binary-example}
\end{equation}
For $\operatorname{Var}_d(B^2)>0$, this attains the exponent $4L-2r+6=6$ of Proposition~\ref{prop:finite-alphabet}, with $L=1$ and two angles. The same Hamiltonian has a positive quartic coefficient under centered Gaussian noise. The alphabet removes the leading statistical obstruction while the higher physical powers still cause loss.

For any positive-weight finite alphabet, exact correction at a width $\sigma$ is equivalent to $V^\dagger e^{-i\sigma(x_k-x_j)H}V\in\mathbb C I_d$ for every pair of nodes. With two distinct nodes, exactness on an interval of widths forces every compressed energy projector to be scalar, by analyticity and independence of the finite exponential frequencies. Otherwise the exact widths are isolated. A vanishing perturbative coefficient alone does not imply an interval of exact correction.

 \paragraph{Contamination changes the projection problem.}

 An exactly discrete angle distribution is an idealization. To test its usefulness, add a small probability of runs outside the original alphabet. Their amplitudes need not obey the polynomial identity responsible for the improved correction. The following projection determines the leading weak-noise coefficient with the contamination probability fixed as $\sigma\to0$. We then ask when a bound uniform in that probability permits a simultaneous limit.

\begin{proposition}[Leading cost of a fixed contaminant]
\label{prop:contamination}
 Let $\nu_\epsilon=(1-\epsilon)\nu_{\rm at}+\epsilon\nu_c$, with $0<\epsilon<1$, where $\nu_{\rm at}$ has $r\le m$ distinct fixed nodes $x_j$ with positive weights. Let $\pi_{\rm at}(x)=\prod_j(x-x_j)$, and assume the contaminant has finite $2m$-th moment and a positive-definite moment matrix through degree $m$. As $\epsilon\downarrow0$,
\begin{align}
 h_m(\nu_\epsilon)&=\epsilon C_*+O(\epsilon^2),\label{eq:contamination}\\
 C_*&=\min_{\substack{s\ \mathrm{monic}\\\deg s=m-r}}
       \int \pi_{\rm at}(x)^2s(x)^2\,d\nu_c(x)>0.\nonumber
\end{align}
\end{proposition}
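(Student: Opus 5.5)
The plan is to work directly with the variational definition $h_m(\nu_\epsilon)=\min_{q}\int q^2\,d\nu_\epsilon$ over monic $q$ of degree $m$, and to split every such $q$ by polynomial division by $\pi_{\rm at}$. Since $\deg\pi_{\rm at}=r\le m$, each monic $q$ is uniquely $q=\pi_{\rm at}s+\rho$ with $s$ monic of degree $m-r$ and $\deg\rho<r$. Because $\pi_{\rm at}$ vanishes on the nodes, $\int q^2\,d\nu_{\rm at}=\int\rho^2\,d\nu_{\rm at}$. Hence
\begin{equation*}
 h_m(\nu_\epsilon)=\min_{s,\rho}\Bigl[(1-\epsilon)\|\rho\|_{\rm at}^2+\epsilon\|\pi_{\rm at}s+\rho\|_c^2\Bigr],
\end{equation*}
where $\|\cdot\|_{\rm at}$ and $\|\cdot\|_c$ are the $L^2(\nu_{\rm at})$ and $L^2(\nu_c)$ norms. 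The finite $2m$-th moment of $\nu_c$ makes every term finite.

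\emph{Upper bound and positivity.} First establish $C_*>0$ and that its minimum is attained. Every admissible $\pi_{\rm at}s$ is a nonzero polynomial of degree exactly $m$. The positive-definite moment matrix of $\nu_c$ through degree $m$ makes $\|\cdot\|_c$ a norm on polynomials of degree at most $m$. The problem is therefore a strictly convex quadratic minimization over the affine space of monic $s$, with a unique minimizer $s_*$ and positive value. Choosing $s=s_*$ and $\rho=0$ in the display gives $h_m(\nu_\epsilon)\le\epsilon C_*$ exactly, for every $\epsilon$.

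\emph{Lower bound.} Fix $s$ and minimize over $\rho$ in the $r$-dimensional space $\mathcal P_{<r}$. The atomic norm is a genuine norm there, because the $r$ nodes are distinct with positive weights, so the weighted Vandermonde Gram matrix $G_{\rm at}$ is positive definite. Write $g=\pi_{\rm at}s$ and let $\gamma$ be the vector of $\nu_c$-overlaps of $g$ with a basis of $\mathcal P_{<r}$, with Gram matrix $G_c$ there. Minimizing the quadratic in $\rho$ (the Schur-complement identity, Eq.~\eqref{eq:gram-residual-projection}) gives
\begin{equation*}
 \min_\rho[\cdots]=\epsilon\|g\|_c^2-\epsilon^2\,\gamma^\dagger\bigl((1-\epsilon)G_{\rm at}+\epsilon G_c\bigr)^{-1}\gamma
 \ge\epsilon\|g\|_c^2-\frac{\epsilon^2}{1-\epsilon}\,\gamma^\dagger G_{\rm at}^{-1}\gamma .
\end{equation*}
It remains to control $\gamma$ uniformly. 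From the upper bound, any near-minimizer obeys $\epsilon\|\pi_{\rm at}s+\rho\|_c^2\le\epsilon C_*$, so $\|q\|_c^2\le C_*$. Positive definiteness through degree $m$ then bounds all coefficients of $q$, and hence those of $s$ (the quotient) and of $g$. Therefore $\gamma$ is bounded by a constant independent of $\epsilon$, and minimizing over $s$ yields $h_m(\nu_\epsilon)\ge\epsilon C_*-O(\epsilon^2)$. Combining this with the upper bound proves Eq.~\eqref{eq:contamination}.

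\emph{Main obstacle.} The delicate step is obtaining the $O(\epsilon^2)$ correction rather than the cruder $O(\epsilon^{3/2})$. A naive argument gives only $\|\rho\|=O(\sqrt\epsilon)$, and the cross term $2\epsilon\langle g,\rho\rangle_c$ is then $O(\epsilon^{3/2})$. The exact minimization over $\rho$ above avoids this: it shows that the optimal $\rho$ is actually $O(\epsilon)$, so the cross term and the remaining terms are $O(\epsilon^2)$. The one point that needs care is confining the relevant $s$ to a bounded set before taking the minimum over $s$, and the a priori bound from $h_m\le\epsilon C_*$ supplies this.
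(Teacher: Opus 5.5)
Your proof is correct, but it uses a different decomposition from the paper's.

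\emph{The paper's route.} It fixes the optimal $q_*=\pi_{\rm at}s_*$ and writes every monic competitor as $q_*+z^{\mathsf T}v$, with $v\in\mathbb R^m$ ranging over all lower-degree perturbations. Exact minimization gives the identity $h_m(\nu_\epsilon)=\epsilon C_*-\epsilon^2 b^{\mathsf T}\mathsf H_\epsilon^{-1}b$. When $r<m$ the $m\times m$ atomic moment matrix $\mathsf H_{\rm at}$ is singular, so $\mathsf H_\epsilon^{-1}$ grows like $1/\epsilon$ on its kernel. The paper therefore needs two further ingredients. Stationarity of $s_*$ makes $b\perp\ker\mathsf H_{\rm at}$. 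A range/kernel Schur complement then gives $b^{\mathsf T}\mathsf H_\epsilon^{-1}b=b^{\mathsf T}\mathsf H_{\rm at}^+b+O(\epsilon)$.

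\emph{Your route.} The division $q=\pi_{\rm at}s+\rho$ carries out that range/kernel split directly on polynomials. The kernel directions, which are multiples of $\pi_{\rm at}$, are absorbed into $s$. The remainder space $\mathcal P_{<r}$ carries an invertible $r\times r$ atomic Gram matrix. You therefore need neither a pseudoinverse nor the stationarity condition.

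\emph{What each buys.} You pay with the a priori bound $\|q_\epsilon\|_c^2\le C_*$, which confines the minimizer. You also obtain only an $O(\epsilon^2)$ correction with an unspecified constant. The paper gets the explicit second-order coefficient $-b^{\mathsf T}\mathsf H_{\rm at}^+b$ with an $O(\epsilon^3)$ remainder. In exchange, your argument is more elementary. It gives the two-sided nonasymptotic bound $\epsilon C_*-c\,\epsilon^2/(1-\epsilon)\le h_m(\nu_\epsilon)\le\epsilon C_*$ for every $\epsilon\in(0,1)$. It also proves that $C_*$ is attained and positive, which the paper leaves implicit.

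\emph{One small tightening.} The moment matrix of $\nu_\epsilon$ through degree $m$ is positive definite, so the minimizer $q_\epsilon$ exists. Apply the a priori bound to that exact minimizer rather than to near-minimizers; otherwise their slack would have to be kept $O(\epsilon)$ for the coefficient bound to stay uniform.
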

 \begin{proof}
Appendix~\ref{app:contamination-details} splits the residual into a polynomial vanishing on the original nodes and its complementary component. The resulting constrained quadratic minimum gives Eq.~\eqref{eq:contamination}; the same argument with a compact-set norm proves the robust extension below.
\end{proof}

Avoiding an order-one penalty from the original atoms requires the leading residual to vanish at every $x_j$, so it has the form $q=\pi_{\rm at}s$. The added runs then contribute the norm of $s$ with weight $\pi_{\rm at}(x)^2$.  This polynomial reweighting is a Christoffel modification of the noise measure~\cite{Gautschi2004,ChowdhuryMahapatra2026Christoffel}. Here polynomial reweighting acts on the angle distribution in the statistical projection. It measures how strongly each added angle violates the cancellation identity of the original alphabet, and turns the contamination problem into another polynomial projection.

 If the contaminant is unknown, assume its support lies in a fixed compact set $\mathcal K$ with at least $m+1$ points. Let $\mathcal U_\epsilon$ contain the mixtures $(1-\epsilon)\nu_{\rm at}+\epsilon\nu_c$ over all such contaminants. For the robust noise factor $\mathfrak h_\epsilon=\mathfrak h_{\mathcal U_\epsilon}$, Proposition~\ref{thm:minimax} gives
\begin{equation}
 \mathfrak h_\epsilon
 =\epsilon\min_{\substack{s\ \mathrm{monic}\\\deg s=m-r}}
       \|\pi_{\rm at}s\|_{\infty,\mathcal K}^2+O(\epsilon^2).
 \label{eq:robust-contamination}
\end{equation}
The same node constraint organizes known-contaminant least squares and unknown-contaminant uniform approximation. Their recovery consequence takes $\sigma\downarrow0$ at each fixed $\epsilon>0$. A simultaneous limit requires a remainder uniform in the prior as this coefficient vanishes.

 Allowing small departures from the node constraint changes the minimum at order $\epsilon^2$. For a known contaminant this follows from the constrained quadratic minimum; for unknown contamination a supporting affine bound to the squared compact-set norm controls the departure. Appendix~\ref{app:contamination-details} gives the exact quadratic identity and proves both expansions even when the atomic moment matrix is singular.

For a memory correcting $I,H$, symmetric noise makes the degree-two residual norm $h_2=y_4-y_2^2=\operatorname{Var}(x^2)$. It measures fluctuations in the squared rotation magnitude. Binary angles $x=\pm1$ have $h_2=0$; Proposition~\ref{prop:finite-alphabet} then bounds the loss by $O(\sigma^6)$; the sixth-order coefficient can vanish. For a symmetric contaminant $\nu_c$ mixed with that binary prior at probability $\epsilon$, direct evaluation gives $h_2=\epsilon C_*-\epsilon^2(y_2^{(c)}-1)^2$, where $C_*=\int(x^2-1)^2d\nu_c$. Proposition~\ref{prop:contamination} gives the first-order conclusion for contaminants satisfying its hypotheses. Independent unit-Gaussian broadening of each binary peak by a rescaled width $\delta$ gives $h_2=4\delta^2+2\delta^4$.

If a particular code has binary infidelity $c_6\sigma^6+o(\sigma^6)$ with $c_6>0$ and $\chi C_*>0$, balancing it against the restored term $\chi C_*\epsilon\sigma^4$ predicts a crossover at $\epsilon$ of order $\sigma^2$. This balance is a fixed-prior heuristic for a general code. A simultaneous-limit statement requires a remainder uniform as the prior changes, and a code whose binary term vanishes can have a different scale.

\paragraph{The boundary of the one-angle argument.}
The improved finite-alphabet order uses more than the number of independent noise functions. With several fluctuating coefficients, a new polynomial degree can contain several independent directions, and the logical matrix overlaps within that space matter. A commuting two-generator example makes this limitation explicit. Take three equally weighted angle vectors at the vertices of a rotated equilateral triangle. Every quadratic function on those three points is already a linear function on the alphabet. For the six-dimensional memory constructed in Appendix~\ref{app:triangle-obstruction}, the errors $\{I,H_1,H_2\}$ are exactly correctable, yet
\begin{equation}
1-F_{e,\mathrm{opt}}=\frac{\sin^2(3\vartheta)}{32}\sigma^4+O(\sigma^6),
\label{eq:triangle-obstruction}
\end{equation}
where $\vartheta$ is the triangle's orientation relative to the two generators. At $\vartheta=0$ recovery is exact; at $\vartheta=\pi/6$ the quartic coefficient is positive. Both cases have the same polynomial dimensions. Thus a multivariate extension must retain the matrix geometry of each new degree. This example delineates the one-scalar results above while preserving their applicability to arbitrary logical dimension.

\section{Recovery from finite calibration data}
\label{sec:finite-angle}
\label{sec:compatibility-calibration}
\label{sec:calibration-bootstrap}

The weak-noise results assume a specified family of angle distributions as the width decreases. Calibration poses a further question. A small measured disturbance can arise from frequent small rotations or from rare rotations of finite size. A recovery guarantee based on that measurement must cover both possibilities. We now fix the encoder and generator and optimize one final recovery over every phase distribution compatible with the observed means. Calibration uses separate preparations drawn from the same distribution as storage; the individual storage angle remains unavailable.

The physical Krylov prefix and the calibration data constrain different parts of this problem. The prefix determines the order to which a recovery can cancel harmful amplitudes near zero angle. Positivity of the phase distribution constrains how much probability can remain away from zero. We first express the compatible channels through finitely many moments, then determine which positive statistic controls each correction order. Bounds on adverse phase probability and finite measurement precision complete the general calibration analysis. Section~\ref{sec:sector-applications} will then use conserved logical sectors to simplify recovery in the applications.

\subsection{A finite description of every compatible channel}
\label{sec:finite-calibration}

Assume that the occupied energy gaps and all measured calibration frequencies are commensurate. Choose their common phase unit and remove a scalar energy shift, so the occupied generator takes the form
\begin{equation}
 H=\sum_a n_aP_a,\quad n_a\in\mathbb Z,\quad
 U_\theta=e^{-i\theta H}.
 \label{eq:calibration-integer-spectrum}
\end{equation} Here \(P_a\) are the energy projectors of the occupied levels; terms with \(P_aV=0\) can be omitted. Writing the original generator as $H_{\rm original}=E_0I+\Lambda H$, with a real scalar shift $E_0$ and common frequency unit $\Lambda>0$, gives $\theta\equiv\Lambda\sigma x\pmod{2\pi}$. The scalar shift contributes a global phase, so it disappears from the channel. The integer normalization is used for all periodic statistics in this section; the weak-noise lift in Proposition~\ref{prop:calibration-mismatch} returns to an ordinary real angle.
 We use the code-generated cyclic space as the physical space throughout this section. Proposition~\ref{prop:equivalence} guarantees that this reduction preserves the optimum for every compatible prior. The wrapped phase \(\theta\) lies on the circle modulo \(2\pi\); the unwrapped angle $\Lambda\sigma x$ may be unbounded. Its representatives within a phase class have the same action on the encoded memory. Arbitrary calibration durations need not share this periodicity, even for a commensurate storage spectrum.

Let \(\nu\) be the unknown circle probability measure and define its complex Fourier moments by
\begin{equation}
 \widehat\nu_k=\int e^{-ik\theta}\,d\nu(\theta),\qquad
 \mathsf E_a=P_aV.
 \label{eq:calibration-fourier-moments}
\end{equation}
The encoded channel depends on these moments through
\begin{equation}
 \mathcal N_{\widehat\nu}(\rho)
 =\sum_{a,b}\widehat\nu_{n_a-n_b}\mathsf E_a\rho \mathsf E_b^\dagger.
 \label{eq:calibration-channel}
\end{equation}
Only the Fourier moments $\widehat\nu_k$ at the occupied gaps $k=n_a-n_b$ enter. Thus recovery needs finite information about the prior even when the prior itself remains undetermined.

Suppose the measured real trigonometric polynomials \(g_j^{\rm cal}(\theta)\) have means \(y_j^{\rm cal}\), for \(1\le j\le p\), and include \(g_0^{\rm cal}=1\), \(y_0^{\rm cal}=1\). Collect these means as \(\mathbf y^{\rm cal}\). Write \(D_{\rm gap}=\max_{a,b}|n_a-n_b|\) and choose \(K\) at least \(D_{\rm gap}\) and every \(\deg g_j^{\rm cal}\). For a real trigonometric polynomial \(g(\theta)=\sum_{k=-K}^K\widehat g_k e^{-ik\theta}\), its mean is the real linear functional \(L_{\widehat\nu}(g)=\sum_k\widehat g_k\widehat\nu_k\). The compatible moment vectors form
\begin{align}
 \mathfrak D_{\rm cal}=\{\,\widehat\nu:\;&\widehat\nu_0=1,\quad
 \widehat\nu_{-k}=\overline{\widehat\nu_k},\nonumber\\
 &\mathsf T_K(\widehat\nu)
   =[\widehat\nu_{i-j}]_{i,j=0}^K\succeq0,\nonumber\\
 &L_{\widehat\nu}(g_j^{\rm cal})=y_j^{\rm cal}\,\}.
 \label{eq:calibration-moment-set}
\end{align}
The Toeplitz matrix \(\mathsf T_K\), whose entries are constant along each diagonal, encodes positivity of the prior. Indeed, with \(v_K(\theta)=(1,e^{-i\theta},\ldots,e^{-iK\theta})^{\mathsf T}\), it equals \(\int v_Kv_K^\dagger\,d\nu\). In one phase variable this condition is also sufficient because every positive Toeplitz matrix in Eq.~\eqref{eq:calibration-moment-set} has a representing circle measure with at most \(K+1\) atoms. Appendix~\ref{app:calibration-moments} gives the finite Gram-space proof and its rank refinement. This is the trigonometric moment construction underlying the Carath\'eodory--Fej\'er decomposition~\cite{YangXieStoica2016}.

Equation~\eqref{eq:calibration-moment-set} is a finite moment bootstrap for the unknown angle distribution. The measured means fix some linear combinations of moments; positivity of every squared trigonometric polynomial constrains the remaining combinations. The one-circle realization just proved makes these constraints exact at bandwidth $K$, without a limiting hierarchy. Moment-optimization hierarchies~\cite{Lasserre2001} and quantum-mechanical bootstrap methods~\cite{HanHartnollKruthoff2020} use the same positivity principle on their respective unknown moments. Here it closes the compatible noise-channel set.

We assume \(\mathfrak D_{\rm cal}\ne\varnothing\). Its elements obey \(|\widehat\nu_k|\le1\), so it is compact and convex. Every feasible vector already represents an actual prior. Once the channel and measured frequencies are included, adding unmeasured higher moment orders cannot strengthen the guarantee. Additional observations or a support restriction can change it. Several independent phases require a different moment-realization analysis.

To represent the physical recovery, use logical-output \(\otimes\) physical-input vectorization,
\begin{equation}
\begin{gathered}
 |R_\alpha\rangle\!\rangle
 =\sum_{l,i}(R_\alpha)_{li}|l\rangle_L|i\rangle_P,\\
 J_{\mathcal R}=\sum_\alpha
 |R_\alpha\rangle\!\rangle\langle\!\langle R_\alpha|.
\end{gathered}
\label{eq:calibration-recovery-choi}
\end{equation}
The Choi matrix \(J_{\mathcal R}\) represents a deterministic recovery precisely when \(J_{\mathcal R}\succeq0\) and \(\operatorname{Tr}_LJ_{\mathcal R}=I_P\). Define vectors \(\mathsf b_a=\operatorname{vec}(V^\dagger P_a)\) and the positive payoff matrix
\begin{equation}
 \mathsf W(\widehat\nu)
 =\frac1{d^2}\sum_{a,b}
 \widehat\nu_{n_b-n_a}|\mathsf b_a\rangle\langle\mathsf b_b|.
 \label{eq:calibration-payoff}
\end{equation}
It acts on the same logical-output/physical-input space as \(J_{\mathcal R}\), and
\begin{equation}
 F_e(\mathcal R\mathcal N_{\widehat\nu})
 =\operatorname{Tr}[J_{\mathcal R}\mathsf W(\widehat\nu)].
 \label{eq:calibration-choi-fidelity}
\end{equation}
The reversed frequency difference in Eq.~\eqref{eq:calibration-payoff} follows from vectorizing the adjoint encoded isometry. It fixes the complex conjugations; Appendix~\ref{app:calibration-moments} checks them directly through the Kraus traces.

\begin{proposition}[Optimal recovery from compatible Fourier data]
\label{prop:finite-calibration}
Under Eqs.~\eqref{eq:calibration-integer-spectrum}--\eqref{eq:calibration-moment-set}, the largest common entanglement-fidelity guarantee is
\begin{align}
 \mathcal F_{\rm cal}(\mathbf y^{\rm cal})
 &=\max_{\substack{J_{\mathcal R}\succeq0\\
             \operatorname{Tr}_LJ_{\mathcal R}=I_P}}
       \min_{\widehat\nu\in\mathfrak D_{\rm cal}}
       \operatorname{Tr}[J_{\mathcal R}\mathsf W(\widehat\nu)]
       \label{eq:calibration-minimax-sdp} \\
 &=\min_{\substack{\widehat\nu\in\mathfrak D_{\rm cal}\\Y=Y^\dagger}}
       \{\operatorname{Tr}Y:
       I_d\otimes Y\succeq\mathsf W(\widehat\nu)\}. \notag
\end{align}
The Hermitian matrix \(Y\) acts on the physical-input cyclic space and is dual to the constraint \(\operatorname{Tr}_LJ_{\mathcal R}=I_P\). An optimal common recovery and a least-favorable compatible prior exist. The prior can be chosen with at most \(K+1\) phase atoms.
\end{proposition}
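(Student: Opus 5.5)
The proof is a convex--concave minimax argument followed by conic duality on the inner problem. Three structural facts are needed first.

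First, by Proposition~\ref{prop:equivalence} I work on the finite-dimensional cyclic space. Then the feasible recovery set $\mathcal C=\{J\succeq0,\ \operatorname{Tr}_LJ=I_P\}$ is convex and compact, because $\operatorname{Tr}J=\dim\mathcal H_{\rm cyc}$ bounds its norm.

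Second, $\mathfrak D_{\rm cal}$ is nonempty, convex and compact, as noted after Eq.~\eqref{eq:calibration-moment-set}.

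Third, the payoff $(J,\widehat\nu)\mapsto\operatorname{Tr}[J\mathsf W(\widehat\nu)]$ is bilinear. It is real-valued because $\widehat\nu_{-k}=\overline{\widehat\nu_k}$ makes $\mathsf W$ Hermitian.

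Sion's (or von Neumann's) minimax theorem then gives
\[
\max_J\min_{\widehat\nu}\operatorname{Tr}[J\mathsf W(\widehat\nu)]=\min_{\widehat\nu}\max_J\operatorname{Tr}[J\mathsf W(\widehat\nu)].
\]
Compactness and continuity give an optimal common recovery $J_*$ and a least-favorable moment vector $\widehat\nu_*$, which together form a saddle point. The operational meaning is supplied by Eq.~\eqref{eq:calibration-choi-fidelity}. Each compatible prior yields the channel of Eq.~\eqref{eq:calibration-channel}, which depends only on its Fourier moments at the occupied gaps, all of order at most $K$. Conversely, by the realization result of Appendix~\ref{app:calibration-moments}, every element of $\mathfrak D_{\rm cal}$ comes from an actual circle prior. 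The infimum over compatible priors therefore equals the minimum over $\mathfrak D_{\rm cal}$, which establishes the first line of Eq.~\eqref{eq:calibration-minimax-sdp}.

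For the second line, I fix $\widehat\nu$ and compute $\max_{J\in\mathcal C}\operatorname{Tr}[J\mathsf W]$ by SDP duality. Introducing a Hermitian multiplier $Y$ for $\operatorname{Tr}_LJ=I_P$, the Lagrangian is
\[
\operatorname{Tr}[J(\mathsf W-I_d\otimes Y)]+\operatorname{Tr}Y .
\]
Maximizing over $J\succeq0$ gives the dual $\min\{\operatorname{Tr}Y:\ I_d\otimes Y\succeq\mathsf W\}$. Strong duality holds by Slater: $J=I_L\otimes I_P/d$ is strictly feasible in the primal, and $Y=(\|\mathsf W\|_\infty+1)I_P$ is strictly feasible in the dual. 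Substituting this into the min--max form and merging the two minimizations gives the joint minimization over $(\widehat\nu,Y)$.

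Two points need care. The first is that the joint minimum is attained. The constraint forces $\operatorname{Tr}Y$ to be bounded below, and $Y$ can be restricted to a compact set: $Y\succeq$ the partial compression of $\mathsf W$, and $\operatorname{Tr}Y$ is bounded by the value at the scalar feasible choice. The second is the ordering convention for $\mathsf W$ and the vectorization, which I take as given from Eq.~\eqref{eq:calibration-choi-fidelity}.

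The remaining claim is that the least-favorable prior can be taken with at most $K+1$ atoms. Because the payoff depends on the prior only through its moments, it suffices to show that every point of $\mathfrak D_{\rm cal}$, in particular $\widehat\nu_*$, is represented by a circle measure with at most $K+1$ atoms. This is exactly the Carath\'eodory--Fej\'er statement cited before the proposition: a positive semidefinite $(K+1)\times(K+1)$ Toeplitz matrix $\mathsf T_K$ of rank $\rho$ admits an atomic representing measure with $\rho\le K+1$ atoms. I expect the main obstacle to be the singular, rank-deficient boundary case together with making sure the realized measure reproduces all calibration means. The latter is automatic, since the means are linear in $\widehat\nu_k$ for $|k|\le K$ and so depend only on the matched moments. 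I would rely on the Gram-space proof with rank refinement in Appendix~\ref{app:calibration-moments}.
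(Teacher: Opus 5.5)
Your proposal is correct and follows essentially the same route as the paper: compactness of the recovery Choi set and of $\mathfrak D_{\rm cal}$, a Sion minimax exchange for the bilinear payoff, strict feasibility of $I_d\otimes I_P/d$ for the inner recovery SDP to obtain its $Y$-dual, and the finite Toeplitz (Carath\'eodory--Fej\'er) realization of Appendix~\ref{app:calibration-moments} for the prior with at most $K+1$ atoms. The only minor difference is how you get joint attainment in $(\widehat\nu,Y)$: you use $Y\succeq0$ together with a trace cap, whereas the paper uses dual attainment at each $\widehat\nu$ plus continuity of the optimized fidelity in $\widehat\nu$, and both arguments are valid.
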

\begin{proof}
The recovery Choi set is convex and closed, and positivity together with the partial-trace constraint fixes \(\operatorname{Tr}J_{\mathcal R}=\dim\mathcal H_{\rm cyc}\), making the set bounded and hence compact. The payoff is continuous and bilinear on this set and \(\mathfrak D_{\rm cal}\). Convex minimax exchange therefore gives a saddle and permits minimizing over \(\widehat\nu\) after optimizing recovery~\cite{Sion1958}. For fixed \(\widehat\nu\), the recovery semidefinite program has strictly positive feasible point \(I_d\otimes I_P/d\). Its dual is the \(Y\)-minimization in Eq.~\eqref{eq:calibration-minimax-sdp}; strong duality and dual attainment follow from this strict feasibility. The optimized fidelity is continuous in \(\widehat\nu\), so it has a minimizer on \(\mathfrak D_{\rm cal}\). A representing atomic measure preserves its channel and every measured mean.
\end{proof}

The recovery semidefinite program and robust minimax principle have earlier QEC formulations~\cite{FletcherShorWin2007,KosutShabaniLidar2008,ElimelechAlbertBarg2026Universal}. Here the compatible prior set is fixed by the measured Fourier data. This gives an exact finite problem for arbitrary finite logical dimension, including \(d=2^k\). The matrix dimensions and the required harmonic bandwidth remain resources; finite exactness alone implies no polynomial-cost algorithm for an unrestricted growing register.

A useful way to check a proposed common decoder is a pointwise inequality. Let \(\mathcal N_\theta(\rho)=U_\theta V\rho V^\dagger U_\theta^\dagger\) be the encoded channel at a deterministic phase, and put \(f_{\mathcal R}(\theta)=F_e(\mathcal R\mathcal N_\theta)\). Real coefficients \(\lambda_j\) satisfying
\begin{equation}
 f_{\mathcal R}(\theta)-\sum_{j=0}^p\lambda_jg_j^{\rm cal}(\theta)
 =v_K(\theta)^\dagger Z v_K(\theta),\quad Z\succeq0,
 \label{eq:calibration-polynomial-witness}
\end{equation}
imply \(F_e\ge\sum_j\lambda_jy_j^{\rm cal}\) for every compatible prior. Matching Fourier coefficients turns this identity into affine constraints. Fej\'er--Riesz factorization makes such a Gram representation equivalent to nonnegativity on the circle~\cite{DritschelRovnyak2010}. An optimal decoder for one least-favorable channel can fail on another compatible channel; a saddle decoder or Eq.~\eqref{eq:calibration-polynomial-witness} checks the required common guarantee. This check is complete. Even at singular data, the best witness value equals the decoder's worst fidelity over $\mathfrak D_{\rm cal}$. When an optimal witness is attained, every adverse compatible prior is supported on the zeros of its nonnegative residual, at most $K$ phases unless that residual vanishes identically. At singular data, the physical optimum remains attained while finite optimal multipliers in this unreduced witness can fail to exist. Appendix~\ref{app:calibration-witnesses} gives an explicit example and explains the limiting witness.

The same framework decides when an extra measurement matters. Let \(\mathfrak M_{\rm cal}\) be the set of moment vectors minimizing the optimized known-channel fidelity over \(\mathfrak D_{\rm cal}\). For a further real trigonometric observable, first enlarge $K$ to contain its harmonics, if needed, and define the adverse set in that enlarged moment space. Call it $g_{p+1}^{\rm cal}$ and write $t$ for its observed value. The guarantee stays unchanged exactly when
\begin{equation}
 t\in \{L_{\widehat\nu}(g_{p+1}^{\rm cal}):\widehat\nu\in\mathfrak M_{\rm cal}\}.
 \label{eq:calibration-measurement-criterion}
\end{equation}
The minimizing set is compact and convex, so these outcomes form a closed interval. Equality means that at least one old adverse channel survives the observation; otherwise compactness makes the improvement strict. Thus an extra exact observation can exclude the channels limiting recovery, while its worst possible outcome in advance retains the old guarantee.

\paragraph{What the data determine exactly.}
\label{sec:calibration-identifiability}
Exact identification asks whether the data fix a chosen target mean for every compatible prior.
For bounded real functions \(g_1^{\rm cal},\ldots,g_p^{\rm cal}\) and a target \(g_{\rm tgt}\) on an allowed noise-parameter set,
\(\mathbb E[g_{\rm tgt}]\) is determined by \((\mathbb E[g_j^{\rm cal}])_j\) for every probability measure if and only if \(g_{\rm tgt}\in\operatorname{span}\{1,g_1^{\rm cal},\ldots,g_p^{\rm cal}\}\).
If the inclusion fails, finite-dimensional linear algebra finds a finitely
supported signed measure that annihilates \(1,g_1^{\rm cal},\ldots,g_p^{\rm cal}\) but not \(g_{\rm tgt}\).
Its positive and negative parts have equal mass. After normalization they are
two probability measures with identical observations and different target
means. This excludes any nonlinear estimator as well as a linear one.
An observation can improve the guarantee without fully identifying the channel. Equation~\eqref{eq:calibration-measurement-criterion} tests removal of all currently adverse channels; the span criterion tests exact determination of one mean.

\subsection{Calibration order and Krylov protection}
\label{sec:calibration-order}
\label{sec:calibration-prefix}
The finite moment problem treats all angles at once. We now ask which calibration statistic justifies the protection suggested by a correctable power prefix, retaining the integer-normalized generator and cyclic physical space of Section~\ref{sec:finite-calibration}. This connects the physical geometry of Section~\ref{sec:geometry} to a bound valid for every prior compatible with the data. Restricting the available information to one positive statistic makes the correction order justified by that measurement explicit. For an integer \(r\ge1\), define
\begin{align}
 s(\theta)&=\sin^2(\theta/2),\qquad
 \beta_r=\mathbb E_\nu [s(\theta)^r], \label{eq:calibration-statistic}\\
 \beta_r
 &=\frac{1}{4^{r}}\binom{2r}{r}
 +\frac{2}{4^{r}}\sum_{k=1}^r(-1)^k
       \binom{2r}{r-k}\operatorname{Re}\widehat\nu_k. \notag
\end{align}
The integrand \(s(\theta)^r\) vanishes as \(\theta^{2r}\) near zero and stays positive everywhere else on the phase circle. Its mean therefore constrains both small rotations and the amount of finite-angle probability.

For a fixed recovery, set \(\ell_{\mathcal R}(\theta)=1-f_{\mathcal R}(\theta)\). Let \(\mathfrak R_r\) be the recoveries exactly correcting the error span of \(I,H,\ldots,H^{r-1}\) on \(V\). The generators in this span have the integer normalization of Eq.~\eqref{eq:calibration-integer-spectrum}; shifting and rescaling the original Hamiltonian preserves their span.

\begin{proposition}[Periodic factorization of a correctable prefix]
\label{prop:periodic-prefix-factorization}
For \(\mathcal R\in\mathfrak R_r\), the deterministic-phase loss factors as
\begin{equation}
 \ell_{\mathcal R}(\theta)=s(\theta)^r q_{\mathcal R}(\theta),
 \qquad q_{\mathcal R}(\theta)\ge0,
 \label{eq:periodic-prefix-factorization}
\end{equation}
where \(q_{\mathcal R}\) is a trigonometric polynomial of degree at most \(D_{\rm gap}-r\). If \(r>D_{\rm gap}\), the loss vanishes identically. For every circle prior,
\begin{equation}
 1-F_e(\mathcal R\mathcal N_\nu)
 \le \|q_{\mathcal R}\|_\infty\,\beta_r.
 \label{eq:periodic-prefix-bound}
\end{equation}
\end{proposition}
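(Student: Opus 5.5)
The plan is to show that the deterministic-phase loss $\ell_{\mathcal R}$ is a nonnegative trigonometric polynomial with a zero of order $2r$ at $\theta=0$, and then divide out $s^r$ algebraically. By Proposition~\ref{prop:equivalence} we can work on the cyclic space with the integer-normalized generator of Eq.~\eqref{eq:calibration-integer-spectrum}.

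\textbf{Step 1: trigonometric-polynomial structure.} For a deterministic phase the prior is $\delta_\theta$, with $\widehat\nu_k=e^{-ik\theta}$. Equations~\eqref{eq:calibration-payoff}--\eqref{eq:calibration-choi-fidelity} then give
\[
f_{\mathcal R}(\theta)=d^{-2}\sum_{a,b}e^{-i(n_b-n_a)\theta}\langle\mathsf b_b|J_{\mathcal R}|\mathsf b_a\rangle .
\]
This is a real trigonometric polynomial of degree at most $D_{\rm gap}$. Hence $\ell_{\mathcal R}=1-f_{\mathcal R}$ is one as well, and it satisfies $0\le\ell_{\mathcal R}\le1$. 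Since $F_e$ is affine in the channel, $1-F_e(\mathcal R\mathcal N_\nu)=\int\ell_{\mathcal R}\,d\nu$ for every circle prior.

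\textbf{Step 2: vanishing order at $\theta=0$.} Fix an isometric realization $U$ of $\mathcal R$. The exact identity Eq.~\eqref{eq:app-exact-error-functional}, applied at a point mass, gives
\[
\ell_{\mathcal R}(\theta)=\|\mathcal Q(Ue^{-i\theta H}V)\|_{2,d}^2 .
\]
Membership in $\mathfrak R_r$ means $U$ factors every element of $\operatorname{span}\{I,\dots,H^{r-1}\}V$ as an auxiliary vector tensored with $I_d$, as in Eq.~\eqref{eq:recovery-factorization}. Hence $\mathcal Q(UH^jV)=0$ for $j<r$. Because $\mathcal Q$ is linear, the Taylor expansion of the entire function $\theta\mapsto\mathcal Q(Ue^{-i\theta H}V)$ starts at order $\theta^r$. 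Therefore $\ell_{\mathcal R}(\theta)=O(\theta^{2r})$, and all derivatives of $\ell_{\mathcal R}$ of order below $2r$ vanish at $0$.

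\textbf{Step 3: algebraic division (the main step).} Put $z=e^{i\theta}$ and write $\ell_{\mathcal R}(\theta)=z^{-D}P(z)$, where $D=D_{\rm gap}$ and $P$ is an ordinary polynomial of degree at most $2D$. Since $\theta\mapsto z$ is locally biholomorphic near $\theta=0$, the vanishing order in $\theta$ equals the multiplicity of the root $z=1$ of $P$. So $(z-1)^{2r}$ divides $P$.

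If $r>D$, then $2r>2D\ge\deg P$, which forces $P\equiv0$, so the loss vanishes identically.

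Otherwise, use the identity $s(\theta)=\sin^2(\theta/2)=-(z-1)^2/(4z)$. It gives $P(z)=(-4z)^r s(\theta)^r\widetilde P(z)$ with $\deg\widetilde P\le2D-2r$. Then
\[
q_{\mathcal R}=(-4)^r z^{r-D}\widetilde P(z)
\]
is a trigonometric polynomial of degree at most $D-r$. It is real because $\ell_{\mathcal R}$ and $s^r$ are real. Bookkeeping the degrees and reality in this division is the step I expect to need care.

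\textbf{Step 4: nonnegativity and the bound.} Nonnegativity of $q_{\mathcal R}$ holds for $\theta\ne0$ because $s(\theta)>0$ there and $\ell_{\mathcal R}\ge0$; it extends to $\theta=0$ by continuity. Finally, integrating Eq.~\eqref{eq:periodic-prefix-factorization} against $\nu$ and bounding $q_{\mathcal R}\le\|q_{\mathcal R}\|_\infty$ pointwise gives Eq.~\eqref{eq:periodic-prefix-bound}, since $\int s^r\,d\nu=\beta_r$.
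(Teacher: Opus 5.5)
Your proof is correct. It follows the paper's overall architecture: the trigonometric degree is at most $D_{\rm gap}$ because the occupied spectrum is finite, exact prefix correction makes the traceless amplitudes vanish to order $r$, and averaging gives the bound. The key division step, however, is done differently.

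The paper applies Fej\'er--Riesz factorization to the nonnegative loss. It writes $\ell_{\mathcal R}=|p(e^{i\theta})|^2$ with $\deg p\le D_{\rm gap}$, deduces $p(z)=(z-1)^r a(z)$, and reads off $q_{\mathcal R}=4^r|a(e^{i\theta})|^2$. The quotient is therefore exhibited directly as a squared modulus. Nonnegativity and the degree bound arrive together as a positivity certificate of the same form as the Gram-matrix witnesses used elsewhere in the paper, with no separate continuity argument.

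You instead divide the Laurent polynomial $z^{-D}P(z)$ directly by $(z-1)^{2r}=(-4z)^r s^r$. This uses only root multiplicity and the identity $s=-(z-1)^2/(4z)$, and avoids spectral factorization entirely. It also shows that the factorization holds for any real trigonometric polynomial with a zero of order $2r$. You need $\ell_{\mathcal R}\ge0$ only to conclude $q_{\mathcal R}\ge0$, from the sign of $s$ away from zero and continuity at zero; the paper itself uses that argument for its reduced coefficient program in Eq.~\eqref{eq:calibration-coefficient-sdp}.

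Your treatment of $r>D_{\rm gap}$, via $2r>2D\ge\deg P$, is equivalent to the paper's observation that a $p$ of degree below $r$ cannot have an $r$-fold root.
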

\begin{proof}
Write \(B_\alpha(\theta)=R_\alpha U_\theta V\) and remove its scalar logical part,
\begin{equation}
 B_{\alpha,c}=B_\alpha-\frac{\operatorname{Tr}B_\alpha}{d}I_d.
 \label{eq:calibration-traceless-amplitude}
\end{equation}
Trace preservation gives the positive loss identity
\begin{equation}
 \ell_{\mathcal R}(\theta)
 =\frac1d\sum_\alpha\|B_{\alpha,c}(\theta)\|_{\rm HS}^2.
 \label{eq:calibration-positive-loss}
\end{equation}
Exact prefix correction is equivalent to each \(R_\alpha H^jV\) being scalar for \(0\le j<r\). Thus every traceless amplitude vanishes to order at least \(r\), and the loss vanishes to order at least \(2r\). The nonnegative loss has trigonometric degree at most \(D_{\rm gap}\), so Fej\'er--Riesz factorization writes it as \(|p(e^{i\theta})|^2\), with \(\deg p\le D_{\rm gap}\). Its zero at \(1\) requires \(p(z)=(z-1)^r a(z)\). Since \(|e^{i\theta}-1|^2=4s(\theta)\), Eq.~\eqref{eq:periodic-prefix-factorization} follows with \(q_{\mathcal R}=4^r|a(e^{i\theta})|^2\). A polynomial of degree less than \(r\) with that zero is identically zero. Averaging completes the proof.
\end{proof}

The converse is equally useful. A finite bound on \(\ell_{\mathcal R}/s^r\) implies \(\ell_{\mathcal R}=O(\theta^{2r})\). The positive sum in Eq.~\eqref{eq:calibration-positive-loss} then forces every \(B_{\alpha,c}\) to vanish to order \(r\), hence \(\mathcal R\in\mathfrak R_r\). This argument permits a singular or prematurely terminating physical shell sequence.

Now suppose the only observed statistic is \(\beta_r=\epsilon\), with no additional restriction on the prior. Allow the recovery to depend on \(\epsilon\), and define
\begin{equation}
 \mathcal E_r(\epsilon)
 =\min_{\mathcal R}\ 
   \max_{\nu:\,\mathbb E_\nu [s^r]=\epsilon}
   \mathbb E_\nu [\ell_{\mathcal R}].
 \label{eq:calibration-robust-loss}
\end{equation}
The extrema range over all final CPTP recoveries and all compatible circle probability measures.

For a specified decoder this adverse-prior problem has a simple convex geometry. Define $a_{\mathcal R,r}(z)=\max_{\theta:s(\theta)^r=z}\ell_{\mathcal R}(\theta)$ for $0\le z\le1$. Then
\begin{equation}
 \max_{\nu:\,\mathbb E_\nu [s^r]=\epsilon}\mathbb E_\nu [\ell_{\mathcal R}]
 =\operatorname{cav}a_{\mathcal R,r}(\epsilon),
 \label{eq:calibration-concave-envelope}
\end{equation}
where $\operatorname{cav}$ is the least concave function lying above the curve. Averaging phase losses takes convex combinations of the points $(z,a_{\mathcal R,r}(z))$; their upper boundary is this concave envelope. A boundary point lies on a supporting chord, so at most two phase atoms attain the maximum for this fixed decoder~\cite{Winkler1988}. A saddle prior that also bounds every other recovery can need more atoms.\footnote{For $V=I_3$, $H=\operatorname{diag}(0,1,2)$ and $\beta_1=1/2$, uniform phases $0,2\pi/3,4\pi/3$ give $\widehat\nu_1=\widehat\nu_2=0$ and optimal fidelity $1/3$. In the sector-recovery formulation of Eq.~\eqref{eq:sector-gram-sdp}, the decoder is represented by a positive Gram matrix $\mathsf X$ with unit diagonal. Any nonzero off-diagonal channel coherence permits a small feasible perturbation of $\mathsf X=I_3$ that increases the fidelity above $1/3$. Thus every least-favorable channel has $\mathsf T_2=I_3$, requiring at least three phase atoms. The two-atom statement concerns optimization against a fixed recovery.}
Taking the infimum over recoveries preserves concavity, so $\mathcal E_r(\epsilon)$ is concave. Both endpoints vanish because $\epsilon=0$ fixes phase zero and $\epsilon=1$ fixes phase $\pi$, each a known invertible rotation. The optimal loss for exact-mean data therefore need not grow monotonically with the observed mean.

\begin{proposition}[Calibration order and correctable Krylov depth]
\label{prop:calibration-prefix}
As \(\epsilon\downarrow0\),
\begin{equation}
 \mathcal E_r(\epsilon)=O(\epsilon)
 \quad\Longleftrightarrow\quad
 \mathfrak R_r\ne\varnothing.
 \label{eq:calibration-prefix-equivalence}
\end{equation}
When this prefix is correctable, the finite attained coefficient
\begin{equation}
 \Upsilon_r
 =\min_{\mathcal R\in\mathfrak R_r}
   \sup_{\theta:\,s(\theta)>0}
        \frac{\ell_{\mathcal R}(\theta)}{s(\theta)^r}
 \label{eq:calibration-sharp-coefficient}
\end{equation}
satisfies
\begin{equation}
 \mathcal E_r(\epsilon)\le\Upsilon_r\epsilon,\qquad
 \mathcal E_r(\epsilon)
 =\Upsilon_r\epsilon+o(\epsilon).
 \label{eq:calibration-sharp-asymptotic}
\end{equation}
Moreover, \(\Upsilon_r=0\) precisely when one recovery exactly corrects the whole phase orbit.
\end{proposition}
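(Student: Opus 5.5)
The plan is to prove the two directions of Eq.~\eqref{eq:calibration-prefix-equivalence} separately. The sharp coefficient will then come from combining the upper bound of Proposition~\ref{prop:periodic-prefix-factorization} with a matching adverse-prior lower bound.

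\emph{Sufficiency and the upper bound.} First show that the minimum in Eq.~\eqref{eq:calibration-sharp-coefficient} is attained. The set $\mathfrak R_r$ is nonempty by assumption. It is compact, because exact prefix correction is the closed linear condition that each $R_\alpha H^jV$ be scalar, imposed on the compact Choi set. By Proposition~\ref{prop:periodic-prefix-factorization}, $\ell_{\mathcal R}/s^r=q_{\mathcal R}$ is a trigonometric polynomial. Its coefficients depend linearly on $J_{\mathcal R}$ and are bounded in degree, so $\|q_{\mathcal R}\|_\infty$ is a continuous, finite function on $\mathfrak R_r$. Hence $\Upsilon_r$ is a finite attained minimum. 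For the minimizer, Eq.~\eqref{eq:periodic-prefix-bound} gives $\mathcal E_r(\epsilon)\le\Upsilon_r\epsilon$ for every $\epsilon$. This proves the ``$\Leftarrow$'' direction and the inequality in Eq.~\eqref{eq:calibration-sharp-asymptotic}.

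\emph{Necessity and the lower bound.} The tool is a two-atom adverse prior. Fix a phase $\theta_0$ with $s(\theta_0)>0$ and a tiny $\epsilon$. Place mass $\epsilon/s(\theta_0)^r$ at $\theta_0$ and the remaining mass at a phase $\delta$ close to zero. Alternatively, to meet $\beta_r=\epsilon$ exactly without a second finite atom, place the rest at $0$ and adjust the first mass. The rotation at phase $0$ is the identity on the code.

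The quantity to bound is $\mathcal E_r(\epsilon)/\epsilon$. Let $\mathcal R_\epsilon$ be optimal recoveries and pass to a convergent subsequence $\mathcal R_\epsilon\to\mathcal R_0$. The following argument is the one I expect to need most care.
\begin{itemize}
\item Suppose $\mathcal E_r(\epsilon)\le C\epsilon$ along the sequence.
\item Take priors $(1-w)\delta_0+w\delta_\theta$ with $w=\epsilon/s(\theta)^r$, allowing every $\theta$ with $s(\theta)^r\ge\epsilon$.
\item These give $\ell_{\mathcal R_\epsilon}(\theta)\,\epsilon/s(\theta)^r\le C\epsilon$, so $\ell_{\mathcal R_\epsilon}(\theta)\le C s(\theta)^r$ whenever $s(\theta)^r\ge\epsilon$.
\item The priors also give $(1-w)\ell_{\mathcal R_\epsilon}(0)\le C\epsilon$, so $\ell_{\mathcal R_\epsilon}(0)\to0$.
\item Letting $\epsilon\to0$, and using uniform convergence of the trigonometric polynomials $\ell_{\mathcal R_\epsilon}\to\ell_{\mathcal R_0}$, yields $\ell_{\mathcal R_0}(\theta)\le Cs(\theta)^r$ for all $\theta\neq0$.
\end{itemize}
By the converse remark after Proposition~\ref{prop:periodic-prefix-factorization}, $\mathcal R_0\in\mathfrak R_r$. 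This proves ``$\Rightarrow$''.

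\emph{The sharp asymptotic.} Run the same argument with $C=\liminf\mathcal E_r(\epsilon)/\epsilon$. It produces a limit recovery in $\mathfrak R_r$ whose ratio satisfies $\sup_\theta\ell_{\mathcal R_0}/s^r\le C$. Hence $\Upsilon_r\le\liminf\mathcal E_r(\epsilon)/\epsilon$, and combined with the upper bound this gives $\mathcal E_r(\epsilon)=\Upsilon_r\epsilon+o(\epsilon)$. The main obstacle is controlling phases with $s(\theta)^r<\epsilon$ near zero, where two-atom priors with exact mean $\epsilon$ do not exist. There I would rely on continuity: those $\theta$ shrink to $0$, and the bound $\ell_{\mathcal R_0}\le Cs^r$ extends by continuity of the degree-bounded trigonometric polynomials together with pointwise limits on $\theta\ne0$.

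\emph{The case $\Upsilon_r=0$.} A recovery with $\Upsilon_r=0$ has $\ell_{\mathcal R}\equiv0$ off $\theta=0$, hence everywhere by continuity. It therefore corrects every $U_\theta V$, that is, the whole phase orbit. Conversely, such a recovery has zero loss, so it corrects the span of all powers of $H$ on $V$ and lies in $\mathfrak R_r$ with ratio $0$.
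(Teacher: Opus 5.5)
Your proposal is correct and follows essentially the same route as the paper: compactness of $\mathfrak R_r$ together with Proposition~\ref{prop:periodic-prefix-factorization} gives the attained coefficient and the global bound $\mathcal E_r(\epsilon)\le\Upsilon_r\epsilon$, while the two-atom prior $(1-\epsilon/s(\theta_*)^r)\delta_0+(\epsilon/s(\theta_*)^r)\delta_{\theta_*}$, a convergent subsequence of optimal recoveries and the converse remark after that proposition give both necessity and the matching $\liminf$. The phases with $s(\theta)^r<\epsilon$ that you flag are not a real obstacle, because each fixed $\theta\neq0$ is eventually admissible, and the supremum defining $\Upsilon_r$ needs only these pointwise bounds at nonzero phases.
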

\begin{proof}
The scalar Kraus conditions defining \(\mathfrak R_r\) are equivalent to vanishing derivatives of the nonnegative loss through order \(2r-1\). These are affine closed constraints on the compact recovery Choi set. Thus \(\mathfrak R_r\) is compact. For each member, Proposition~\ref{prop:periodic-prefix-factorization} makes the supremum in Eq.~\eqref{eq:calibration-sharp-coefficient} finite. As a supremum of continuous functions of the recovery it is lower semicontinuous, so its minimum is attained. Averaging the attaining pointwise bound proves the global upper bound in Eq.~\eqref{eq:calibration-sharp-asymptotic}.

For the matching lower bound, take optimal recoveries \(\mathcal R_\epsilon\) in Eq.~\eqref{eq:calibration-robust-loss} along a sequence realizing its lower limiting ratio. Compactness gives a subsequence converging to a recovery \(\mathcal R_0\). For every fixed \(\theta_*\ne0\), the two-atom prior
\begin{equation}
 \nu_{\epsilon,\theta_*}
 =\left(1-\frac{\epsilon}{s(\theta_*)^r}\right)\delta_0
  +\frac{\epsilon}{s(\theta_*)^r}\delta_{\theta_*}
 \label{eq:calibration-rare-prior}
\end{equation}
is feasible once \(\epsilon\le s(\theta_*)^r\). Nonnegative loss gives
\begin{equation}
 \frac{\mathcal E_r(\epsilon)}{\epsilon}
 \ge
 \frac{\ell_{\mathcal R_\epsilon}(\theta_*)}{s(\theta_*)^r}.
 \label{eq:calibration-rare-lower}
\end{equation}
A finite limiting ratio therefore bounds
\(\ell_{\mathcal R_0}(\theta_*)/s(\theta_*)^r\) at every nonzero phase. The converse following Proposition~\ref{prop:periodic-prefix-factorization} puts \(\mathcal R_0\) in \(\mathfrak R_r\). Taking the supremum over \(\theta_*\) proves that this ratio is at least \(\Upsilon_r\), completing the asymptotic equality. The same argument shows that any bounded subsequence of \(\mathcal E_r(\epsilon)/\epsilon\) requires \(\mathfrak R_r\ne\varnothing\); if the prefix is absent, the ratio diverges. Finally, attainment and nonnegative loss make \(\Upsilon_r=0\) equivalent to loss zero throughout the circle.
\end{proof}

The statement compares a measured order with an exactly correctable error span. Its necessity uses the full compatible prior class in Eq.~\eqref{eq:calibration-robust-loss}; known support or further data define a more informative problem. The rare prior in Eq.~\eqref{eq:calibration-rare-prior} also changes shape with \(\epsilon\), whereas the local coefficient fixes the rescaled distribution before taking the weak-noise limit. This explains how both results coexist.

For \(r=1\), every isometric encoding has a zero-angle corrector. Unless the whole orbit is exactly correctable, calibration of the fundamental cosine mean, or first fringe, consequently gives a strictly positive linear coefficient \(\Upsilon_1\). Increasing the local correction order alone cannot turn that information into a higher-order guarantee. A single rare phase $\theta_*$ of probability $p$ lowers the optimal fidelity by $p\,\kappa_{\rm pair}(\theta_*)+o(p)$, where $\kappa_{\rm pair}$ depends only on the compression $V^\dagger U_{\theta_*}V$ and vanishes exactly when that compression is scalar; consequently $\Upsilon_1\ge\sup_{\theta:\,s(\theta)>0}\kappa_{\rm pair}(\theta)/s(\theta)$. Appendix~\ref{app:calibration-rare-pair} gives $\kappa_{\rm pair}$ in closed form, Eq.~\eqref{eq:calibration-pair-coefficient}, and identifies its zero with the Knill--Laflamme condition for the pair $\{V,U_{\theta_*}V\}$. Clader et al. analyzed how rotation-angle tails affect coherent-error suppression in specific code models~\cite{CladerEtAl2021}. Here the information constraint specifies the entire prior class against which recovery is optimized.

This information constraint connects with distributionally robust optimization. The normalized chord distance \(d_{\rm ch}(\theta,\phi)=|e^{i\theta}-e^{i\phi}|/2\) satisfies \(d_{\rm ch}(\theta,0)^2=\sin^2(\theta/2)=s(\theta)\), so \(\beta_r\) is the mean \(2r\)-th power of the distance from zero rotation. The \(2r\)-Wasserstein distance minimizes the mean \(2r\)-th power transport cost and takes its \(2r\)-th root. Transport from the point measure \(\delta_0\) has a unique coupling, hence \(\mathrm W_{2r}(\nu,\delta_0)^{2r}=\beta_r\). Thus \(\beta_r\le\epsilon\) is a Wasserstein ball. Its recovery value can differ from that for \(\beta_r=\epsilon\) at finite \(\epsilon\), while the lower bound in Eq.~\eqref{eq:calibration-sharp-asymptotic} and the prefix upper bound give the same leading coefficient for both. Transport duality, sparse adverse priors and sensitivity to rare finite excursions have earlier optimization treatments~\cite{GaoKleywegt2023,BartlEtAl2021}. The QEC content of Proposition~\ref{prop:calibration-prefix} is the exact identification of admissible vanishing orders with correctable Krylov prefixes and its consequence for the recovery justified by calibration.

The sharp coefficient has a smaller finite optimization when the forced zero is removed first. For $r\le D_{\rm gap}$ put $k_{\rm red}=D_{\rm gap}-r$. Minimize $\gamma$ over recovery Choi matrices, the coefficients of a real trigonometric polynomial $q_{\mathcal R}$ of degree at most $k_{\rm red}$, and a positive matrix $Z$, imposing
\begin{equation}
\begin{aligned}
 1-f_{\mathcal R}(\theta)&=s(\theta)^r q_{\mathcal R}(\theta),\\
 \gamma-q_{\mathcal R}(\theta)&=v_{k_{\rm red}}(\theta)^\dagger Z
                                      v_{k_{\rm red}}(\theta),\quad Z\succeq0.
\end{aligned}
\label{eq:calibration-coefficient-sdp}
\end{equation}
Matching Fourier coefficients gives affine constraints. The first identity enforces the correctable-prefix condition; positivity of physical loss makes $q_{\mathcal R}\ge0$, including at zero by continuity. Fej\'er--Riesz factorization makes the second identity equivalent to $q_{\mathcal R}\le\gamma$. Thus the attained optimum is $\Upsilon_r$ when the prefix is correctable; otherwise the constraints are infeasible. The positivity matrix has size $D_{\rm gap}-r+1$, in place of $D_{\rm gap}+1$ for a direct representation of $\gamma s^r-\ell_{\mathcal R}$. Removing the common factor also removes its forced root singularity. For $r=D_{\rm gap}$ the residual constraint is scalar. For $r>D_{\rm gap}$, prefix feasibility already implies exact correction of the whole orbit and $\Upsilon_r=0$; otherwise it is infeasible. This reduction includes degenerate physical prefixes. The distinct multiplier issue in Appendix~\ref{app:calibration-witnesses} concerns an unreduced dual for incomplete moment data.

The same factorization explains the bandwidth of the measured statistic. A nonzero nonnegative trigonometric polynomial vanishing to order $2r$ at zero is $|p(e^{i\theta})|^2$ with a root of multiplicity at least $r$ in $p$. Its harmonic degree is therefore at least $r$. The choice $s^r$ saturates this lower bound. Harmonic degree specifies the highest frequency required; experimental settings and independent phase draws are separate resources.

\subsection{Sharp fractional loss when calibration exceeds correction order}
The divergence statement above can be closed quantitatively by returning to the statistical projection. Suppose the first imperfectly correctable power is $H^m$, with the nondegenerate physical prefix of Proposition~\ref{thm:optimal} and $\mathcal V>0$. The termination argument in Section~\ref{sec:finite-support} explains the rank condition here. If the scalar physical chain terminated before degree $m$, Eq.~\eqref{eq:physical-termination} would make the whole orbit exactly correctable. A genuine first defective power therefore excludes earlier termination. A statistic of order $r>m$ vanishes faster near zero than the residual physical loss. The worst compatible small angles then have scale $\epsilon^{1/(2r)}$. Since $\beta_r\approx\mathbb E[(\theta/2)^{2r}]$, an uncertain ensemble on this scale with positive rescaled statistical residual $h_m$ has optimal local loss of order $\epsilon^{m/r}$. This exceeds the order-$\epsilon$ cost of rare finite rotations because $m<r$. The leading worst-prior cost is determined by the same moment-minimax polynomial as before, with a bound on the $2r$-th moment replacing the detailed prior.

\begin{proposition}[Sharp recovery under a mismatched calibration order]
\label{prop:calibration-mismatch}
Under these physical hypotheses and the integer spectrum convention of Eq.~\eqref{eq:calibration-integer-spectrum}, let $r>m$ be an integer and set
\begin{equation}
 \mathfrak h_{m,r}
 =\max_{\nu:\,\mathbb E_\nu [|x|^{2r}]\le1}h_m(\nu).
 \label{eq:calibration-universal-moment}
\end{equation}
This maximum over probability measures is attained and lies in $(0,1]$. With $\chi=g_{m-1}^2\mathcal V/(m!)^2$, the coefficient of Eq.~\eqref{eq:chi-J} with $g_L=g_{m-1}$, evaluated for this normalized generator,
\begin{equation}
 \mathcal E_r(\epsilon)
 =4^m\chi\mathfrak h_{m,r}\epsilon^{m/r}
   +o(\epsilon^{m/r}).
 \label{eq:calibration-mismatch-coefficient}
\end{equation}
One common polynomial recovery attains this asymptotic guarantee, with a matching converse over all CPTP recoveries. Together with Proposition~\ref{prop:calibration-prefix}, the loss therefore has order $\epsilon^{\min\{1,m/r\}}$ for every integer $r\ge1$.
\end{proposition}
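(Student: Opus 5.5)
The plan is to reduce both directions to the weak-noise moment-minimax result, Proposition~\ref{thm:minimax}, at the $\epsilon$-dependent scale $\sigma_\epsilon=2\epsilon^{1/(2r)}$. Write the phase as $\theta=\sigma_\epsilon x$ with $x$ the rescaled angle. Near zero, $s(\theta)^r=(\theta/2)^{2r}(1+O(\theta^2))$, so the datum $\beta_r=\epsilon$ says roughly $\mathbb E[|x|^{2r}]\le1$ for the small-angle part of the prior. With this scale, $\sigma_\epsilon^{2m}=4^m\epsilon^{m/r}$, which is exactly the prefactor in Eq.~\eqref{eq:calibration-mismatch-coefficient}.

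\textbf{Step 1: properties of $\mathfrak h_{m,r}$.} Let $\mathcal U$ be the class of probability measures with $\mathbb E[|x|^{2r}]\le1$.
\begin{itemize}
\item The bound $\mathfrak h_{m,r}\le1$ follows from $h_m(\nu)\le\mathbb E[x^{2m}]\le(\mathbb E[x^{2r}])^{m/r}\le1$, using the trial polynomial $x^m$ and Jensen.
\item Positivity holds because any measure with at least $m+1$ support points, rescaled to satisfy the constraint, has $h_m>0$.
\item Since $r>m$, the bound $\sup_{\mathcal U}\mathbb E[|x|^{2r}]\le1$ gives uniform integrability of $|x|^{2m}$, i.e.\ Eq.~\eqref{eq:uniform-integrability}.
\item The moment vectors through degree $2m$ form a convex set, and it is compact by tightness together with this uniform integrability. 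Every listed vector is realized in $\mathcal U$ by construction.
\end{itemize}
So the hypotheses of Proposition~\ref{thm:minimax} hold. Attainment of the maximum in Eq.~\eqref{eq:calibration-universal-moment} follows from upper semicontinuity of the concave function $y\mapsto h_m(y)$ on this compact set. Proposition~\ref{thm:minimax} then supplies one common polynomial $u_*$ with $\sup_{\mathcal U}h_m(\nu;u_*)=\mathfrak h_{m,r}$. It also shows that the fixed-polynomial recovery $\mathcal R_{\sigma,u_*}$ has loss $\chi\mathfrak h_{m,r}\sigma^{2m}+o(\sigma^{2m})$ uniformly over $\mathcal U$.

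\textbf{Step 2: achievability.} Apply $\mathcal R_{\sigma_\epsilon,u_*}$, with a slightly inflated moment bound $1+\eta$, against an arbitrary circle prior with $\beta_r=\epsilon$. Fix a small $\delta>0$ and split the prior into its part on $|\theta|\le\delta$ and the rest.
\begin{itemize}
\item \emph{Far part.} Its mass is at most $\epsilon/s(\delta)^r$ and the loss is at most $1$, so it contributes $O_\delta(\epsilon)=o(\epsilon^{m/r})$ because $m<r$.
\item \emph{Near part.} Normalize it to a probability measure. Its rescaled angles satisfy $\mathbb E[|x|^{2r}]\le(1+O(\delta^2))/(1-O(\epsilon))$ after dividing by its mass. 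Homogeneity then lets us rescale by $(1+\eta)^{1/(2r)}$ with $\eta=O(\delta^2)$ and land in $\mathcal U$. The uniform weak-noise estimate from Step~1 gives contribution $(1+\eta)^{m/r}4^m\chi\mathfrak h_{m,r}\epsilon^{m/r}+o(\epsilon^{m/r})$.
\end{itemize}
Letting $\epsilon\to0$ and then $\delta\to0$ gives the upper bound.

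The main obstacle is uniformity. The remainder must be uniform over priors whose rescaled support reaches $|x|\sim\delta/\sigma_\epsilon$. I would handle this with the truncated Taylor bound Eq.~\eqref{eq:uniform-taylor-truncation} and the uniform-integrability argument in the proof of Proposition~\ref{thm:minimax}. That argument uses only the uniform $2r$-th moment bound and boundedness of the loss, so the $\delta$-window is harmless.

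\textbf{Step 3: converse.} Take the maximizer $\nu_*$ of Eq.~\eqref{eq:calibration-universal-moment}, and let $\tilde\nu_\epsilon$ be its image under $\theta=\sigma_\epsilon x$ wrapped to the circle. Then $\mathbb E_{\tilde\nu_\epsilon}[s^r]=\epsilon(1-o(1))$, because the $2r$-th moment is finite and $\sin^{2}(u)\le u^2$. Restore the exact datum $\beta_r=\epsilon$ by mixing in an atom at $\theta=\pi$ with mass $p_\epsilon=o(\epsilon)$. For any recovery, fidelity is affine in the prior, so the optimal loss of the mixture is at least $(1-p_\epsilon)$ times the optimal loss for $\tilde\nu_\epsilon$. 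The rescaled prior $\nu_*$ is now fixed while $\sigma_\epsilon\downarrow0$. Proposition~\ref{thm:optimal}, applied to the normalized generator, therefore gives optimal loss at least $\chi h_m(\nu_*)\sigma_\epsilon^{2m}+o(\sigma_\epsilon^{2m})=4^m\chi\mathfrak h_{m,r}\epsilon^{m/r}+o(\epsilon^{m/r})$ over all CPTP recoveries. This matches Step~2. The hypotheses of Proposition~\ref{thm:optimal} hold: $\nu_*$ has $h_m>0$, hence at least $m+1$ support points and a positive-definite moment matrix through degree $m-1$, and its $2m$-th moment is finite. (Wrapping acts trivially since only Fourier moments enter.)

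\textbf{Step 4: the order statement.} For $r\le m$, the span $I,\dots,H^{r-1}$ lies inside the correctable prefix, so Proposition~\ref{prop:calibration-prefix} gives order $\epsilon$. For $r>m$, Eq.~\eqref{eq:calibration-mismatch-coefficient} gives order $\epsilon^{m/r}$, and $\chi\mathfrak h_{m,r}>0$ because $\mathcal V>0$. Together the order is $\epsilon^{\min\{1,m/r\}}$.
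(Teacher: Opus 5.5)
Your proof is correct. Step 1 matches the paper, but Steps 2 and 3 take a different route.

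\textbf{Achievability.} The paper does not split the prior. It lifts the whole compatible phase prior on $[-\pi,\pi]$ to $x=\theta/\sigma_\epsilon$ and uses $|\theta|\le\pi|\sin(\theta/2)|$ to get the uniform bound $\mathbb E[|x|^{2r}]\le(\pi/2)^{2r}$. That bound already gives the uniform Taylor remainder for the full lift. The paper then argues by weak compactness: any weak subsequential limit of the lifted priors has $\mathbb E[|x|^{2r}]\le1$ and convergent moments through $2m$. Hence the cost of the fixed $q_*$ is at most $\mathfrak h_{m,r}+o(1)$ uniformly. Your near/far split with a window $\delta$ replaces this compactness step by explicit estimates. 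It shows directly that rare finite rotations cost only $O_\delta(\epsilon)=o(\epsilon^{m/r})$. It then absorbs the $1+O(\delta^2)$ inflation of the near part by homogeneity. This is more quantitative, at the price of an extra limit $\delta\to0$.

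\textbf{Converse.} The paper first proves by scaling that $\nu_*$ saturates $\mathbb E[|x|^{2r}]=1$. It then chooses $t_\epsilon\sim2\epsilon^{1/(2r)}$ by continuity, so that $\nu_*$ itself meets $\beta_r=\epsilon$ exactly. Your atom at $\theta=\pi$ avoids both the saturation fact and the intermediate-value step, since only $0\le p_\epsilon\le\epsilon\to0$ matters. Your stated $p_\epsilon=o(\epsilon)$ would in fact require saturation, but you never use it.

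\textbf{Minor points.}
\begin{enumerate}
\item Suppose you keep the recovery $\mathcal R_{\sigma_\epsilon,u_*}$ fixed. Rescaling $x$ by $\lambda=(1+\eta)^{-1/(2r)}$ changes its coefficient vector to $(\lambda^{m-n}u_{*,n})_n$. The near-part bound then acquires an extra additive $O(\eta)$, from continuity of $h_m(\nu;u)$ in $u$ on the bounded-moment class. If you instead rescale the design width, the recovery depends on $\delta$. You should then let $\delta=\delta(\epsilon)\to0$ slowly to keep one common recovery for each $\epsilon$. Either way the limit is unchanged.
\item In Step 4, Proposition~\ref{prop:calibration-prefix} gives only $O(\epsilon)$ for $r\le m$. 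The exact order also needs $\Upsilon_r>0$. This holds because exact correction of the whole orbit would make $H^m$ correctable, forcing $\mathcal V=0$.
\end{enumerate}
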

\begin{proof}
The measures with $\mathbb E [|x|^{2r}]\le1$ are tight, and their moments through degree $2m$ are uniformly integrable because $r>m$. Their lower-moment image is compact, convex and realizable. Proposition~\ref{thm:minimax} therefore gives an attaining common monic polynomial $q_*$ and a maximizing prior $\nu_*$. A compactly supported prior on at least $m+1$ points has $h_m>0$, while the trial polynomial $x^m$ bounds the maximum by one. Positivity of the maximum forces the noise moment matrix through degree $m-1$ to be positive definite. Scaling $x$ shows that $\nu_*$ has $\mathbb E [|x|^{2r}]=1$; otherwise its positive $h_m$ could be increased.

For the upper bound set $\sigma_\epsilon=2\epsilon^{1/(2r)}$ and represent every wrapped phase in $[-\pi,\pi]$. Lift a compatible phase prior to $x=\theta/\sigma_\epsilon$. Since $|\theta|\le\pi|\sin(\theta/2)|$, these lifted priors satisfy
\begin{equation}
 \mathbb E [|x|^{2r}]\le(\pi/2)^{2r}.
 \label{eq:calibration-lifted-moment-bound}
\end{equation}
Every weak subsequential limit has $\mathbb E [|x|^{2r}]\le1$. On each bounded $x$ interval, the calibration identity and $\sin(\sigma_\epsilon x/2)/(\sigma_\epsilon/2)\to x$ bound the truncated limiting moment by one, and the interval can then increase to the real line. The bound in Eq.~\eqref{eq:calibration-lifted-moment-bound} also gives convergence of all moments through $2m$. Consequently the cost of the fixed polynomial $q_*$ is at most $\mathfrak h_{m,r}+o(1)$ uniformly over the lifted priors; a violating sequence would have a limiting moment vector in the unit-moment class. Their common $2r$-th moment bound gives the uniform remainder in Eq.~\eqref{eq:fixed-polynomial-cost}. Multiplying by $\chi\sigma_\epsilon^{2m}$ proves the upper coefficient.

For the converse keep $\nu_*$ fixed and take angles $\theta=t x$, wrapped onto the circle. Integer normalization preserves the channel and statistic under this wrapping. Dominated convergence gives $\mathbb E [\sin^{2r}(tx/2)]=(t/2)^{2r}+o(t^{2r})$. Continuity permits a scale $t=t_\epsilon\sim2\epsilon^{1/(2r)}$ with this mean exactly $\epsilon$. Proposition~\ref{thm:optimal} for this one compatible prior gives the matching lower coefficient. For $r\le m$ the prefix is correctable and the nonzero defect excludes exact orbit correction, so the earlier coefficient $\Upsilon_r$ is strictly positive. This proves the final order statement.
\end{proof}

The strict inequality $r>m$ is essential. At $r=m$, finite-angle probability can enter at the same leading order; the coefficient is the full-orbit quantity $\Upsilon_m$. For $r>m$, the compactness argument suppresses that contribution at leading order and returns the problem to the local Krylov boundary. The constant in Eq.~\eqref{eq:calibration-universal-moment} is optimized over realizable probability moments, as in Proposition~\ref{thm:minimax}; positivity of a truncated Hankel moment matrix for a measure on the real line alone must not replace realizability at a singular boundary.

The first two boundaries distinguish a generator whose first power already distorts the logical state ($m=1$) from a memory that corrects the first power ($m=2$). Their explicit coefficients show how the same moment constraint acts on unprotected and protected memories. 
 For $m=1$, $\mathfrak h_{1,r}=1$, attained by equal weights at $x=\pm1$. The unencoded qubit with $H=\operatorname{diag}(0,1)$ has $\chi=1/4$ and even satisfies
\begin{equation}
 \mathcal E_r(\epsilon)=\epsilon^{1/r},\qquad
 0\le\epsilon\le2^{-r}.
 \label{eq:calibration-qubit-exact}
\end{equation}
Identity recovery has loss $s(\theta)$, so Jensen's inequality gives the upper bound. A symmetric pair with $s(\theta)=\epsilon^{1/r}$ has nonnegative real dephasing factor $1-2\epsilon^{1/r}$, for which that recovery is optimal, proving equality.

For the quartic boundary $m=2$ and every integer $r>2$, define the probability $w_r=\frac{r-2}{2(r-1)}$. Then
\begin{gather}
 \mathfrak h_{2,r}=(1-w_r)w_r^{\,1-2/r},\notag \\
 \nu_*=(1-w_r)\delta_0+\frac{w_r}{2}
       \bigl(\delta_{-w_r^{-1/(2r)}}+\delta_{w_r^{-1/(2r)}}\bigr), \notag \\
 q_*(x)=x^2-w_r^{\,1-1/r}. \label{eq:calibration-quartic-factor}
\end{gather}
To verify global optimality, put $z=x^2\ge0$ and $\lambda_r=w_r^{1-2/r}/(r-1)$. The convex function $\lambda_r z^{r-1}-z+2w_r^{1-1/r}$ has minimum zero at $z=w_r^{-1/r}$. Multiplication by $z$ gives the pointwise majorant
\begin{equation}
 (z-w_r^{1-1/r})^2\le w_r^{2-2/r}+\lambda_r z^r.
 \label{eq:calibration-quartic-majorant}
\end{equation}
Its expectation is at most $\mathfrak h_{2,r}$, and the stated three-atom prior attains equality and makes $q_*$ orthogonal to lower powers. This proves the scalar saddle directly.  For general $m$, the single moment constraint has a Lagrangian dual with one multiplier,
\begin{equation}
\mathfrak h_{m,r}={}\min_{\substack{q\ \mathrm{monic}\\ \deg q=m}}\ \min_{\lambda\ge0}\Bigl[\lambda +\sup_{x\in\mathbb R}\bigl(q(x)^2-\lambda x^{2r}\bigr)\Bigr].%
\label{eq:calibration-mismatch-dual}
\end{equation}
For fixed $q$, let $\mathfrak v_q(t)$ be the largest $\mathbb E[q(x)^2]$ over priors with $\mathbb E[|x|^{2r}]\le t$. For $t>0$, tightness and uniform integrability give a finite attained maximum. Mixing priors makes $\mathfrak v_q$ concave, and increasing the moment budget makes it nondecreasing. A supporting line at the interior budget $t=1$ thus has slope $\lambda\ge0$ and obeys $\mathfrak v_q(t)\le\mathfrak v_q(1)+\lambda(t-1)$. Applying this bound to a point mass at $x$ gives $q(x)^2\le\mathfrak v_q(1)-\lambda+\lambda x^{2r}$; continuity at zero includes $x=0$. Conversely, integrating any such pointwise majorant bounds every prior in the unit-budget class. The two inequalities prove the inner minimum in Eq.~\eqref{eq:calibration-mismatch-dual}, including attainment of its multiplier. Since a monic $q$ is unbounded, every finite majorant has $\lambda>0$.

Proposition~\ref{thm:minimax} exchanges the optimizations over $q$ and $\nu$. For fixed $q$ and $\lambda$, the supremum in Eq.~\eqref{eq:calibration-mismatch-dual} is the smallest $c$ for which $q^2\le c+\lambda x^{2r}$ holds pointwise. The bracket $c+\lambda$ bounds $\mathbb E[q(x)^2]$ under the unit moment budget. Equation~\eqref{eq:calibration-quartic-majorant} is the $m=2$ majorant. Averaging a prior with its reflection preserves the constraint and cannot decrease the concave function $h_m$, so a symmetric adverse prior exists and $q$ can be taken of parity $m$. Every adverse prior is supported where the optimal majorant touches $q^2$. These contact points identify its possible angles; positive weights must also enforce normalization, the moment constraint and orthogonality of $q$ to lower powers.

The constants approach the support problem of Section~\ref{sec:uncertainty}. Lyapunov's inequality nests the classes $\mathbb E[|x|^{2r}]\le1$ as $r$ increases, so $\mathfrak h_{m,r}$ is nonincreasing. Maximizing priors are tight, and along subsequences their moments through degree $2m$ converge to those of a limit supported in $[-1,1]$. Upper semicontinuity of $h_m$ bounds the limiting value by $4^{1-m}$. Every prior on $[-1,1]$ is feasible for every $r$, giving the opposite bound. Thus $\mathfrak h_{m,r}\to4^{1-m}$, the Chebyshev value of Eq.~\eqref{eq:support-optimum} at $R=1$. This is a limit of the normalized local moment problem, after the weak-noise limit at each fixed $r$.

The next protected boundary, $m=3$, gives an exact four-angle example of the contact construction. For $r=4$, choose
\begin{gather}
q(x)=x^3-3^{-1/4}x,\qquad
\lambda_{3,4}=\frac{3^{1/4}(2-\sqrt2)}4, \notag\\
z_\pm=3^{-1/4}\left(1+\frac1{\sqrt2}\pm\sqrt{\frac12+\sqrt2}\right). \label{eq:calibration-cubic-contact-data}
\end{gather}
The squared contact radii $z_-<z_+$ are positive. Direct factorization gives
\begin{equation}
q(x)^2-\lambda_{3,4}x^8
=\frac{\lambda_{3,4}}3 -\lambda_{3,4}(x^2-z_-)^2(x^2-z_+)^2.
\label{eq:calibration-cubic-majorant}
\end{equation}
The dual is bounded by $4\lambda_{3,4}/3$, with four contacts. Choose total masses
\begin{equation}
w_+=\frac{1-z_-^4}{z_+^4-z_-^4},\qquad w_-=1-w_+,
\label{eq:calibration-cubic-contact-weights}
\end{equation}
with mass $w_\pm/2$ at each of $\pm\sqrt{z_\pm}$. Both weights are positive and their sum is one. They give $\mathbb E[x^8]=1$ and $\mathbb E[xq(x)]=0$; symmetry gives $\mathbb E[q(x)]=\mathbb E[x^2q(x)]=0$. These are all the lower-power orthogonality conditions, so $q$ is the monic residual for this prior. Its support lies at the contacts, making the integrated majorant an equality. Therefore
\begin{equation}
\mathfrak h_{3,4}=\frac{4\lambda_{3,4}}3
=\frac{2-\sqrt2}{3^{3/4}}=0.2569794\ldots.
\label{eq:calibration-cubic-exact-value}
\end{equation}
Together with the $m=1,2$ cases, this example shows how a polynomial majorant and its contact weights determine an exact recovery coefficient without optimizing over an unrestricted distribution numerically.

\subsection{How the data constrain damaging phase probability}
\label{sec:calibration-christoffel}

A code suppresses sufficiently small angles through its correctable prefix. Finite-angle alternatives can still determine a common decoder, as Eq.~\eqref{eq:calibration-rare-prior} demonstrates. The Christoffel function quantifies the largest mass compatible with the measured moments at any chosen phase. It is an extremal inverse Gram form. The Christoffel modification in Section~\ref{sec:finite-support} reweighted a measure by a polynomial square; both constructions use polynomial Gram projections, with distinct roles in recovery.

Suppose the full Toeplitz matrix \(\mathsf T_K\) is known and positive definite. Subtracting an atom of weight \(w\) at \(\theta_0\) must leave
\(\mathsf T_K-wv_K(\theta_0)v_K(\theta_0)^\dagger\succeq0\).
Conjugation by \(\mathsf T_K^{-1/2}\) gives the sharp bound
\begin{equation}
 w\le
 \left[v_K(\theta_0)^\dagger
       \mathsf T_K^{-1}v_K(\theta_0)\right]^{-1}.
 \label{eq:calibration-christoffel-atom}
\end{equation}
At equality the residual is positive Toeplitz, so its representing measure attains the bound. The denominator is the reproducing kernel evaluated at the phase, and its inverse is the Christoffel function. Christoffel functions characterize this extremal probability~\cite{KordaPutinarMezic2020,BeckermannEtAl2021}. Here the characterization bounds the probability that a prior compatible with calibration assigns to one damaging phase.

If \(\mathsf T_K\) is singular, the maximum mass 
 is zero when \(v_K(\theta_0)\) lies outside its range; inside the range, its Moore--Penrose inverse replaces the inverse in Eq.~\eqref{eq:calibration-christoffel-atom}. For partially measured moments one maximizes \(w\) jointly over all compatible completions. Choosing an arbitrary completion changes the uncertainty question.

When calibration measures individual cosine harmonics, the first two means separate second-order from fourth-order bounds on finite rotations. They are phase-referenced components of interference fringes at the fundamental frequency and its second harmonic. Their joint positivity gives the following sharp bound on the probability of damaging phases. Put
\begin{equation}
\begin{gathered}
 u=\cos\theta,\quad a_c=\mathbb E [u]=\operatorname{Re}\widehat\nu_1,\\
 S_c=\mathbb E [u^2]=\frac{1+\operatorname{Re}\widehat\nu_2}{2},\quad
 v_c=S_c-a_c^2.
\end{gathered}
\label{eq:calibration-cosine-moments}
\end{equation}

These are feasible exactly when \(a_c^2\le S_c\le1\). For a fixed \(-1\le t<a_c\), the event \(u\le t\) consists of phases separated from zero. The sharp bound is
\begin{equation}
\begin{gathered}
 \sup\Pr(u\le t)=\frac{v_c}{v_c+(a_c-t)^2},\\
 a_c+\frac{v_c}{a_c-t}\le1.
\end{gathered}
\label{eq:calibration-two-fringe-tail}
\end{equation}
It is attained by a two-point cosine distribution; it is the one-sided Chebyshev, or Cantelli, bound, an extremal moment inequality~\cite{KreinNudelman1977}. Appendix~\ref{app:calibration-interval} gives its proof, the second branch when the interval endpoint is active, and the sharp mass at an individual cosine value.

For a fixed narrow family \(\theta=\sigma x\), with \(\mathbb E [x^2]>0\) and \(\mathbb E [x^4]<\infty\),
\begin{equation}
 v_c=\frac{\mathbb E [x^4]-(\mathbb E [x^2])^2}{4}
        \sigma^4+o(\sigma^4).
 \label{eq:calibration-fringe-variance}
\end{equation}
At each fixed \(-1\le t<1\), Eq.~\eqref{eq:calibration-two-fringe-tail} applies for sufficiently small \(\sigma\). Two fringes therefore limit compatible finite-angle probability to fourth order. With the first fringe alone, its sharp upper bound is \((1-a_c)/(1-t)\), of second order for this family. For an integer-spectrum memory correcting $I,H$, the same two cosine means determine $\beta_2=(3-4a_c+\operatorname{Re}\widehat\nu_2)/8=O(\sigma^4)$. Equation~\eqref{eq:periodic-prefix-bound} therefore gives a quartic guarantee for one prefix decoder. Whether this equals the informed coefficient is code dependent.

\subsection{Finite precision and the meaning of a guarantee}
\label{sec:calibration-confidence}

Measured means carry uncertainty. Write $y_j^{\rm obs}$ for the empirical estimate of the exact calibration mean $y_j^{\rm cal}$. If simultaneous intervals
\(y_j^{\rm cal}\in[y_j^{\rm obs}-\delta_j,y_j^{\rm obs}+\delta_j]\)
have coverage at least \(1-\alpha_{\rm conf}\), replace the equalities in Eq.~\eqref{eq:calibration-moment-set} by those intervals. On their coverage event the true prior belongs to the resulting set, so the common-recovery guarantee has the same coverage. This remains true when the selected recovery depends on the data. An empty intersection signals an inconsistency between the reported intervals and the assumed model.

For the explicit witness in Eq.~\eqref{eq:calibration-polynomial-witness}, the corresponding lower bound is
\begin{equation}
 F_e\ge\lambda_0+\sum_{j=1}^p\lambda_j y_j^{\rm obs}
                  -\sum_{j=1}^p|\lambda_j|\delta_j.
 \label{eq:calibration-confidence-witness}
\end{equation}
A projection of infeasible raw means to a single feasible point does not retain this worst-case statement. The compatible-class guarantee requires an information set with justified coverage. Bias, drift between calibration and storage, or generator uncertainty must enter the information set if their effect is to be covered.

The measured observables determine the compatible class and its best recovery guarantee. More trials estimate those same means more accurately, allowing the guarantee to approach its exact-mean value.

To allocate trials, fix the witness in Eq.~\eqref{eq:calibration-confidence-witness}. Suppose observable $j$ is estimated from $N^{\rm sh}_j$ independent outcomes in an interval of length $\Delta g_j^{\rm cal}$. Choose failure allowances $\alpha_j>0$ with $\sum_j\alpha_j\le\alpha_{\rm conf}$. Hoeffding's inequality and a union bound give simultaneous half-widths $\delta_j=A_j/\sqrt{N^{\rm sh}_j}$, where $A_j=\Delta g_j^{\rm cal}\sqrt{\log(2/\alpha_j)/2}$~\cite{Hoeffding1963}. When $S_{\rm shot}=\sum_j(|\lambda_j|A_j)^{2/3}>0$, treat the counts with $|\lambda_j|A_j>0$ as positive continuous variables. Minimizing $\sum_j|\lambda_j|A_j/\sqrt{N^{\rm sh}_j}$ subject to $\sum_jN^{\rm sh}_j=N^{\rm sh}_{\rm tot}$ gives $N^{\rm sh}_j=N^{\rm sh}_{\rm tot}(|\lambda_j|A_j)^{2/3}/S_{\rm shot}$. A Lagrange multiplier gives this allocation directly; substitution gives minimum penalty $S_{\rm shot}^{3/2}/\sqrt{N^{\rm sh}_{\rm tot}}$. Thus a target penalty $\epsilon_{\rm fid}$ has sufficient count $N^{\rm sh}_{\rm tot}\ge S_{\rm shot}^3/\epsilon_{\rm fid}^2$ in this allocation model. Terms with $|\lambda_j|A_j=0$ need no additional trials for this penalty, and positive counts can be rounded up to integers. If $S_{\rm shot}=0$, the penalty already vanishes, so this fixed witness requires no additional sampling. The result optimizes sampling for this witness and the fixed failure allowances; choosing the observables or optimizing the witness changes the design problem.

The positive statistic \(\beta_r\) also has a direct probe interpretation. Prepare \(r\) auxiliary qubits in \( |+\rangle_x=(|0\rangle+|1\rangle)/\sqrt2\), apply the same phase to all of them through \(\operatorname{diag}(1,e^{-i\theta})\), and measure each in the \(\{ |+\rangle_x, |-\rangle_x\}\) basis, with \( |-\rangle_x=(|0\rangle-|1\rangle)/\sqrt2\). Conditional on \(\theta\), the all-minus probability is \(s(\theta)^r\); its mean over independent trials is \(\beta_r\). This requires a shared phase within each trial, independent readouts conditional on that phase, and the storage prior across trials. Separately drawn probe phases instead measure \(\beta_1^r\). The matched probe coupling and readout are additional physical resources whose implementation must be specified for the memory.

For \(N_{\rm trial}\) independent trials, zero all-minus events give the exact one-sided binomial upper bound
\begin{equation}
 \beta_r\le 1-\alpha_{\rm conf}^{1/N_{\rm trial}}
       \le\frac{\log(1/\alpha_{\rm conf})}{N_{\rm trial}}
 \label{eq:calibration-positive-probe}
\end{equation}
at confidence \(1-\alpha_{\rm conf}\), since the probability of a zero count is \((1-\beta_r)^{N_{\rm trial}}\). For a specified prefix decoder, multiplication by \(\|q_{\mathcal R}\|_\infty\) converts this interval into an infidelity bound; a minimizer in Eq.~\eqref{eq:calibration-sharp-coefficient} uses \(\Upsilon_r\). The resource count is \(rN_{\rm trial}\) probes and \(N_{\rm trial}\) independent phase draws. The direct positive observable simplifies the confidence interpretation. A sampling comparison with harmonic estimation must also account for its variances, correlations and experimental costs.

\section{Recovery in conserved logical sectors}
\label{sec:sector-applications}
The calibration-order bounds apply to general encodings. We now specialize the recovery optimization to memories in which each logical basis label evolves in an orthogonal conserved physical sector. This structure reduces recovery to a common alignment of the sector coherences. We first derive that reduction, then determine which partial coherence data suffice and how local calibration can miss joint-register information. The resulting decoder preserves superpositions across the logical sectors while transferring syndrome information to a disposable auxiliary system.

\subsection{One physical recovery must align all conserved sectors}
\label{sec:register-calibration}

When logical basis labels remain in orthogonal sectors, their relative phases are carried by cross-sector coherences. A positive Gram matrix expresses the requirement that one operation recover all those coherences in arbitrary logical dimension. The sector assumption therefore simplifies finite-angle recovery; the weak-noise result of Proposition~\ref{thm:optimal} applies to general encodings.

Formally, let $\mathcal S_a$ be mutually orthogonal physical sectors, one for each logical basis state $a=0,\ldots,d-1$, and put $s_a=\dim\mathcal S_a$. The channel can have any
positive block matrix $\mathsf C=[\mathsf C_{ab}]$ with
$\Tr\mathsf C_{aa}=1$, acting as
$\mathcal N(|a\rangle\langle b|)=\mathsf C_{ab}:\mathcal S_b\to\mathcal S_a$.
The uncertain-Hamiltonian model has this form when its code-cyclic space splits into $H$-invariant $\mathcal S_a$ containing $V|a\rangle$. Proposition~\ref{prop:equivalence} permits restriction to that cyclic space before the sector reduction. In that realization, define
$|\Psi_a(\tau)\rangle=e^{-i\tau H}V|a\rangle\in\mathcal S_a$.

Here $\tau$ is the full rotation angle and $\nu_\tau$ its distribution. In the weak-noise families of Section~\ref{sec:geometry}, $\nu_\tau$ is obtained from the rescaled prior by $x\mapsto\sigma x$. The sector reduction also accepts arbitrary finite-angle distributions, without a commensurability assumption. The blocks are
\begin{equation}
\begin{gathered}
 \mathsf C_{ab}=\int|\Psi_a(\tau)\rangle
                      \langle\Psi_b(\tau)|\,d\nu_\tau(\tau),\\
 \mathsf C=[\mathsf C_{ab}]\succeq0,
 \qquad \Tr\mathsf C_{aa}=1.
\end{gathered}
\label{eq:sector-coherence-matrix}
\end{equation}
The block $\mathsf C_{ab}:\mathcal S_b\to\mathcal S_a$ is the noisy
image of $|a\rangle\langle b|$. Thus the matrix retains every logical
coherence and its syndrome dependence.

For a superposition $\sum_a\psi_a|a\rangle$, the noisy physical state is $\sum_{ab}\psi_a\overline{\psi_b}\mathsf C_{ab}$. The sectors retain the basis labels, while their cross blocks carry the relative phases. Consider a decoder sending sector $a$ to $|a\rangle$ and applying an isometry $\mathsf U_a$ to its remaining degrees of freedom. Discarding the common auxiliary space gives the logical action
\begin{equation}
 \begin{gathered}
 |a\rangle\langle b|\longmapsto\zeta_{ab}|a\rangle\langle b|,\\
 \zeta_{ab}=\Tr(\mathsf U_a\mathsf C_{ab}\mathsf U_b^\dagger),\qquad
 \zeta_{aa}=1.
 \end{gathered}
 \label{eq:register-recovered-coherence}
\end{equation}
 These decoders preserve populations and aim to retain relative phases. Their entanglement fidelity is $d^{-2}\sum_{ab}\zeta_{ab}$, with conjugate terms combining to a real number. Because all $\mathsf U_a$ act into the same auxiliary space, the overlaps $\mathsf X_{ab}=\mathsf U_a^\dagger\mathsf U_b$ form one positive Gram matrix with identity diagonal blocks. Conversely, every such matrix factors into a family of these isometries. The following reduction also bounds recoveries that change logical populations.
\begin{proposition}[Recovery with conserved logical labels]
\label{prop:sector-recovery}
For a channel with the orthogonal-sector block form above, its optimal
entanglement fidelity is
\begin{equation}
 F_{e,\mathrm{opt}}(\mathsf C)
 =\frac1{d^2}
 \max_{\substack{\mathsf X\succeq0\\
                  \mathsf X_{aa}=I_{\mathcal S_a}}}
       \Tr(\mathsf X\mathsf C).
\label{eq:sector-gram-sdp}
\end{equation}
A Gram factorization of an optimizer constructs an attaining decoder.
\end{proposition}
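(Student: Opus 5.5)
The plan has two inequalities: an upper bound on the fidelity of an arbitrary CPTP recovery, and a matching construction from the Gram matrix. I will write a general recovery through its Kraus operators $R_\alpha:\mathcal H\to\mathbb C^d$ and use the entanglement-fidelity formula. The state $(\mathcal N\otimes\mathrm{id})(|\Omega_d\rangle\langle\Omega_d|)$ equals $d^{-1}\sum_{ab}\mathsf C_{ab}\otimes|a\rangle\langle b|$. This gives
\begin{equation}
F_e=\frac1{d^2}\sum_\alpha\sum_{a,b}\Tr\bigl(\langle a|R_\alpha\,\mathsf C_{ab}\,R_\alpha^\dagger|b\rangle\bigr)
=\frac1{d^2}\sum_{a,b}\Tr\Bigl(\sum_\alpha R_\alpha^\dagger|b\rangle\langle a|R_\alpha\,\mathsf C_{ab}\Bigr),
\end{equation}
where the trace in the last form is over $\mathcal S_b$ and only the restrictions of $R_\alpha$ to the sectors enter.

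Next I define vectors $\mathsf u_a=\sum_\alpha|\alpha\rangle\otimes(R_\alpha|_{\mathcal S_a})^\dagger|a\rangle$. Equivalently, $\mathsf u_a$ is the map from $\mathcal S_a$ into the auxiliary space given by $x\mapsto\sum_\alpha\langle a|R_\alpha x\rangle|\alpha\rangle$. Its adjoint composition is $\mathsf Y_{ab}=\mathsf u_a^\dagger\mathsf u_b:\mathcal S_b\to\mathcal S_a$. The block matrix $\mathsf Y=[\mathsf Y_{ab}]$ is then positive semidefinite as a Gram matrix, and $F_e=d^{-2}\Tr(\mathsf Y\mathsf C)$ after matching indices; up to a transpose or adjoint convention this fixes which block pairs with $\mathsf C_{ab}$.

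The diagonal blocks satisfy $\mathsf Y_{aa}=\sum_\alpha R_\alpha^\dagger|a\rangle\langle a|R_\alpha$ restricted to $\mathcal S_a$. Summing over $a$ and using trace preservation $\sum_\alpha R_\alpha^\dagger R_\alpha=I$ shows that $\mathsf Y_{aa}\preceq I_{\mathcal S_a}$, though equality need not hold. This is the step I expect to be the main obstacle: the bound allows population-changing recoveries, while the stated maximum is over $\mathsf X$ with identity diagonal blocks. I would first try to pad $\mathsf Y$ to an admissible matrix. Set $\mathsf X=\mathsf Y+\operatorname{diag}(I-\mathsf Y_{aa})$, which is positive with identity diagonal blocks. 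Then $\Tr(\mathsf X\mathsf C)=\Tr(\mathsf Y\mathsf C)+\sum_a\Tr[(I-\mathsf Y_{aa})\mathsf C_{aa}]\ge\Tr(\mathsf Y\mathsf C)$, because $\mathsf C_{aa}\succeq0$. Hence every CPTP recovery has $F_e$ at most the value of the semidefinite program.

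For attainment, I will take an optimizer $\mathsf X$ and factor it as $\mathsf X_{ab}=\mathsf U_a^\dagger\mathsf U_b$ into a common auxiliary space of dimension at least $\operatorname{rank}\mathsf X$. Each $\mathsf U_a$ is an isometry because $\mathsf X_{aa}=I$. The decoder is the isometry $\bigoplus_a(|a\rangle\otimes\mathsf U_a)$ followed by discarding the auxiliary space, exactly as in Eq.~\eqref{eq:register-recovered-coherence}. On the complement of $\bigoplus_a\mathcal S_a$ it can be completed arbitrarily to a CPTP map, and that complement is unoccupied by the channel. Its fidelity is $d^{-2}\sum_{ab}\zeta_{ab}=d^{-2}\Tr(\mathsf X\mathsf C)$, possibly with the conjugate or transposed pairing. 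Finally I will check that this pairing convention matches the converse step, which can be done by relabeling $\mathsf X\mapsto\mathsf X^{\mathsf T}$ if needed, and note that the optimum exists by compactness of the feasible set.
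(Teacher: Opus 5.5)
Your proposal is correct and follows the paper's proof essentially step for step. The paper likewise forms the Kraus rows $r_{\alpha a}=\langle a|R_\alpha|_{\mathcal S_a}$ and their Gram matrix $\mathsf X_{ab}=\sum_\alpha r_{\alpha a}^\dagger r_{\alpha b}$ with $\mathsf X_{aa}\preceq I$, pads the diagonal deficits using $\mathsf C_{aa}\succeq0$, and obtains the attaining decoder from a Gram factorization into isometries. Your convention hedges are unnecessary: with the map form $\mathsf u_a=\sum_\alpha|\alpha\rangle r_{\alpha a}$ you get exactly $F_e=d^{-2}\Tr(\mathsf X\mathsf C)$, and the constructed decoder's own Kraus rows reproduce $\mathsf U_a$, so no transpose relabeling is needed.
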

\begin{proof} Let $R_j$ be any recovery Kraus family and retain the rows
$r_{ja}=\langle a|R_j|_{\mathcal S_a}$. These are the rows entering
\begin{equation}
 F_e=\frac1{d^2}\sum_{j,a,b}r_{ja}\mathsf C_{ab}r_{jb}^\dagger.
\label{eq:sector-recovery-rows}
\end{equation}
Their Gram matrix $\mathsf X_{ab}=\sum_jr_{ja}^\dagger r_{jb}$ is
positive with $\mathsf X_{aa}\preceq I$. Adding its positive diagonal
deficits makes the diagonal blocks identity and increases the objective,
since each $\mathsf C_{aa}$ is positive. This proves the upper bound in
Eq.~\eqref{eq:sector-gram-sdp} for arbitrary recoveries, including those
that mix logical populations.

Conversely, a feasible Gram matrix factors as
$\mathsf X_{ab}=\mathsf U_a^\dagger\mathsf U_b$, where the maps
$\mathsf U_a:\mathcal S_a\to\mathcal H_{\rm aux}$ are isometries into a common
finite auxiliary space. The physical isometry
\begin{equation}
 \mathsf U_{\rm sec}\left(\bigoplus_a x_a\right)
    =\sum_a|a\rangle\otimes\mathsf U_a x_a
\label{eq:sector-gram-recovery}
\end{equation}
followed by discarding $\mathcal H_{\rm aux}$ attains the objective.
The logical label controls the syndrome transformation coherently; it is
not measured. The auxiliary dimension can be chosen at most
$\sum_as_a$. Unused physical output directions admit an arbitrary CPTP
completion.
\end{proof}
 Here positivity of $\mathsf X$ enforces a common set of decoder isometries, while its diagonal blocks are fixed to identity. This specializes the recovery semidefinite program of
Ref.~\cite{FletcherShorWin2007}. Orthogonal logical sectors and generalized
Procrustes alignment also occur in Ref.~\cite{WuEtAl2026}.
 The optimization matrix has size $\sum_as_a$, the dimension of the reachable physical space. For the Hamiltonian model, Eq.~\eqref{eq:sector-coherence-matrix}
constructs the required block matrix directly from the angle prior.

Every feasible $\mathsf X$ has $\|\mathsf X_{ab}\|_\infty\le1$.
Independent optimization of each pair therefore gives the upper bound
\begin{equation}
 F_{\rm pair}
 =\frac{d+2\sum_{a<b}\|\mathsf C_{ab}\|_1}{d^2}
 \ge F_{e,\mathrm{opt}}.
\label{eq:sector-pair-bound}
\end{equation}
For $d=2$, there is one independent cross block. Its supported polar alignment is feasible in the two-sector Gram matrix and gives $F_{e,\mathrm{opt}}=(1+\|\mathsf C_{01}\|_1)/2$~\cite{WuEtAl2026}. The two-sector optimum is therefore fixed once its single cross block is known. For a larger register, the preferred pairwise alignments must fit into one Gram matrix. Here a pair is the two-dimensional span of two logical basis states. Optimizing all such subcodes separately need not define one operation on their superpositions.
 The benchmark $F_{\rm pair}$ is attained precisely when the pairwise choices fit one feasible decoder Gram matrix. The full positive-matrix constraint in Eq.~\eqref{eq:sector-gram-sdp} retains this compatibility for any number of logical sectors.

A useful graph test can bypass that optimization. Join distinct labels $a,b$ when $\mathsf C_{ab}\ne0$. If this graph has no cycles, every pairwise optimum can be realized simultaneously. To construct the decoder, choose any isometry at a root in each component and attach one new vertex $b$ to its parent $a$. Choose a polar contraction $\mathsf X_{ab}:\mathcal S_b\to\mathcal S_a$ attaining the pairwise trace norm, and set
\begin{equation}
 \mathsf U_b=\mathsf U_a\mathsf X_{ab}
 +\mathsf Q_b\sqrt{I_{\mathcal S_b}-\mathsf X_{ab}^\dagger\mathsf X_{ab}},
 \label{eq:sector-forest-completion}
\end{equation}
where $\mathsf Q_b$ embeds $\mathcal S_b$ isometrically into fresh auxiliary directions orthogonal to the existing images. Then $\mathsf U_b^\dagger\mathsf U_b=I$ and $\mathsf U_a^\dagger\mathsf U_b=\mathsf X_{ab}$. Unspecified nonedge overlaps have zero weight in the objective. Induction gives $F_{e,\mathrm{opt}}=F_{\rm pair}$, including rectangular or rank-deficient cross blocks. This is the acyclic case of positive-matrix completion~\cite{GroneEtAl1984}. It covers every two-sector memory and the disjoint coherence pairs of Section~\ref{sec:joint-calibration-example}. Cycles are where compatibility can impose a genuinely joint cost.

\subsection{Reflection-conditioned recovery in arbitrary dimension}

 The preceding sector reduction used the full coherence matrix. We now retain its common-decoder constraint while allowing the angle distribution to remain partly unknown. If measurements fix the real cross-sector coherences, a real Gram matrix can have the same payoff for every compatible channel. The diagonal syndrome states need not be reconstructed, because every feasible decoder gives their fixed total contribution.

Assume that reversing every angle sends $\mathsf C$ to its entrywise conjugate in fixed branch bases. Write $\check\nu_\tau$ for the reflected prior, the distribution of $-\tau$, and assume that $(\nu_\tau+\check\nu_\tau)/2$ is compatible whenever $\nu_\tau$ is compatible. The compatible class is nonempty. Also assume the observations determine every $\operatorname{Re}\mathsf C_{ab}$ for $a\ne b$. The following result applies to this additional symmetry and measurement class. Let $\mathsf C_{\rm re}$ have those real off-diagonal blocks
and zero diagonal blocks.
The diagonal blocks contribute the constant $d$ to every feasible recovery
objective, and positivity is imposed on the decoder Gram
matrix $\mathsf X$.
\begin{proposition}[Recovery from real sector coherences]
\label{prop:sector-data}
Under these reflection and observation hypotheses, the unrestricted minimax fidelity is
\begin{equation}
\begin{split}
 &\max_{\mathcal R}\inf_{\nu_\tau\ {\rm compatible}}
                 F_e(\mathcal R\mathcal N_{\nu_\tau})\\
 &\quad=\frac{d+
      \max_{\mathsf X\succeq0,\,\mathsf X_{aa}=I}
                   \Tr(\mathsf X\mathsf C_{\rm re})}{d^2}.
\end{split}
\label{eq:sector-data-minimax}
\end{equation}
\end{proposition}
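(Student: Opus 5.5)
Two inequalities will be proved, each by reduction to Proposition~\ref{prop:sector-recovery}. Write $\mathcal P$ for the set of real positive matrices $\mathsf X$ with identity diagonal blocks and $\mathsf X_{ab}$ real in the fixed branch bases. The claim will follow from
\begin{equation}
 \max_{\mathsf X\succeq0,\,\mathsf X_{aa}=I}\Tr(\mathsf X\mathsf C_{\rm re})
 =\max_{\mathsf X\in\mathcal P}\Tr(\mathsf X\mathsf C_{\rm re}),
\end{equation}
which holds because $\mathsf C_{\rm re}$ is real symmetric. If $\mathsf X$ is feasible, then $\overline{\mathsf X}$ is feasible with the same objective, so the real part $(\mathsf X+\overline{\mathsf X})/2$ is feasible and attains the same value.

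\emph{Achievability.} Take a real optimizer $\mathsf X_*\in\mathcal P$ and factor it as in Proposition~\ref{prop:sector-recovery} to obtain the sector-aligning decoder of Eq.~\eqref{eq:sector-gram-recovery}. For any compatible channel, the fidelity is $d^{-2}\Tr(\mathsf X_*\mathsf C)$. The diagonal blocks give $\sum_a\Tr\mathsf C_{aa}=d$. The off-diagonal part is real because it comes from a Hermitian pairing, and since $\mathsf X_*$ is real it equals $\Tr(\mathsf X_*\operatorname{Re}\mathsf C)$ over the off-diagonal blocks, which is $\Tr(\mathsf X_*\mathsf C_{\rm re})$. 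By the observation hypothesis this value is the same for every compatible prior, so one decoder guarantees the right-hand side uniformly.

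\emph{Converse.} Fix any compatible $\nu_\tau$ and pass to its symmetrization $\bar\nu=(\nu_\tau+\check\nu_\tau)/2$, which is compatible by assumption. By the reflection hypothesis and linearity of $\mathsf C$ in the prior, $\mathsf C(\bar\nu)=(\mathsf C+\overline{\mathsf C})/2=\operatorname{Re}\mathsf C$ in the branch bases. This matrix has the observed real off-diagonal blocks and real diagonal blocks of unit trace. For every recovery, the worst case over compatible priors is at most its fidelity on $\bar\nu$, which is at most $F_{e,\mathrm{opt}}(\operatorname{Re}\mathsf C)$. Applying Eq.~\eqref{eq:sector-gram-sdp} to the real matrix $\operatorname{Re}\mathsf C$ gives the maximum of $\Tr(\mathsf X\operatorname{Re}\mathsf C)$ over feasible $\mathsf X$. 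Any feasible $\mathsf X$ contributes exactly $d$ from the diagonal, since $\mathsf X_{aa}=I$ and $\Tr\mathsf C_{aa}=1$, and contributes $\Tr(\mathsf X\mathsf C_{\rm re})$ from the off-diagonal blocks. This is the same expression as in the achievability step.

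The main point needing care is the branch-basis bookkeeping. One must check that ``reflection sends $\mathsf C$ to $\overline{\mathsf C}$'' fixes the same real structure used to define $\operatorname{Re}\mathsf C_{ab}$ and the real decoder Gram matrix. In particular, the Hermitian sum $\sum_{a\ne b}\Tr(\mathsf X_{ab}\mathsf C_{ba})$ must reduce to real parts only when $\mathsf X$ is real; I would verify this entrywise. Attainment of the maximum over $\mathsf X$ holds because the feasible set is compact, and no attainment over priors is needed, since the converse uses a single symmetrized prior.
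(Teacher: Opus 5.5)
Your proposal is correct and follows essentially the same route as the paper's proof. You make an optimal Gram matrix real by averaging it with its conjugate. You then use the Gram-factorized sector decoder, whose payoff $d+\Tr(\mathsf X\mathsf C_{\rm re})$ is the same on every compatible channel. Finally, you get the converse from the compatible symmetrized prior, whose coherence matrix is $\operatorname{Re}\mathsf C$, via Proposition~\ref{prop:sector-recovery}; your entrywise bookkeeping only fills in steps the paper states tersely.
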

\begin{proof}
 The full matrix $\mathsf C_{\rm re}$ is real symmetric because its $ab$ and $ba$ blocks are transposes; an individual cross block need not be symmetric. Averaging an optimizer with its conjugate therefore makes $\mathsf X$ real without changing the objective. Its payoff on every
compatible channel is $d+\Tr(\mathsf X\mathsf C_{\rm re})$; the
diagonal contribution is fixed and the imaginary part does not contribute.
Equation~\eqref{eq:sector-gram-recovery} realizes this common recovery.
A compatible symmetrized prior attains exactly the same unrestricted
optimum, proving the converse.
\end{proof}
Using this reduction for a particular register requires a measurement scheme
that identifies its real cross blocks. This identification is a substantive requirement, since it fixes the part of the compatible channel needed for this common alignment.
 Even observations that characterize each qubit separately may miss joint coherences, because those coherences can depend on different angle harmonics. The two-qubit register in Section~\ref{sec:joint-calibration-example} 
 isolates that missing information.


\subsection{Identical local calibration can hide different register fidelities}
\label{sec:joint-calibration-example}

Local calibration can omit coherences that determine recovery of a register. When the same uncertain angle rotates several qubits, a joint coherence can depend on an angle harmonic that no individual qubit probes at that duration. Complete local channel data may then leave the joint channel undetermined. To isolate this issue from redundant encoding and conflicting syndrome alignments, take $V=I_d$ and a diagonal generator $H=\sum_aE_a|a\rangle\langle a|$. Averaging its phases gives
\begin{gather}
 \mathcal N_\Gamma(|a\rangle\langle b|)
       =\Gamma_{ab}|a\rangle\langle b|,\label{eq:register-schur}\\
 \Gamma_{ab}=\varphi_{\nu_\tau}(E_a-E_b),
 \quad \Gamma\succeq0,\quad\Gamma_{aa}=1. \notag
\end{gather}
Here $\varphi_{\nu_\tau}(\omega)=\int e^{-i\omega\tau}\,d\nu_\tau(\tau)$ is the characteristic function of Section~\ref{sec:model}, written for the full angle. This entrywise action is called a Schur channel. Its recovery problem is the
$s_a=1$ case of Eq.~\eqref{eq:sector-gram-sdp},
\begin{equation}
 F_{e,\mathrm{opt}}(\mathcal N_\Gamma)
 =\frac1{d^2}\max_{\mathsf X\succeq0,\,\mathsf X_{aa}=1}
                         \Tr(\Gamma\mathsf X).
\label{eq:register-correlation-sdp}
\end{equation}
If every $\Gamma_{ab}$ is real and nonnegative,
$|\mathsf X_{ab}|\le1$ makes identity recovery optimal, with value
$d^{-2}\sum_{ab}\Gamma_{ab}$.

Two qubits form the smallest register in which identical local channels can hide different joint coherence. For $H=Z_1+Z_2$, single-qubit coherences probe energy gap $2$, while $|00\rangle\langle11|$ probes gap $4$. Comparing those harmonics tests which recovery information local calibration misses. 
 The energies are $(2,0,0,-2)$ in the computational basis; angles are full rotation angles $\tau$ in $e^{-i\tau H}$, with no factor $1/2$. Compare the priors
\begin{equation}
\begin{split}
 \nu_A&=\tfrac12(\delta_0+\delta_{\pi/2}),\\
 \nu_B&=\tfrac14(\delta_0+\delta_{\pi/4}
                         +\delta_{\pi/2}+\delta_{3\pi/4}).
\end{split}
\label{eq:register-two-priors}
\end{equation}
These two full-angle priors both have $\varphi_{\nu_\tau}(2)=0$, so every one-qubit input experiences complete
$Z$ dephasing at this storage time. Their local process tomographies coincide.
The joint harmonic differs, with $\varphi_{\nu_A}(4)=1$ and
$\varphi_{\nu_B}(4)=0$. Writing $\gamma_{\rm J}=\varphi_{\nu_\tau}(4)$ gives
\begin{equation}
 \Gamma( \gamma_{\rm J})=
 \begin{pmatrix}
 1&0&0& \gamma_{\rm J}\\
 0&1&1&0\\
 0&1&1&0\\
  \overline\gamma_{\rm J}&0&0&1
 \end{pmatrix}.
\label{eq:register-harmonic-matrix}
\end{equation}

Both priors preserve the coherence between $|01\rangle$ and $|10\rangle$, which have the same energy. Prior $A$ also preserves $|00\rangle\langle11|$. Its channel is $\rho\mapsto(\rho+Z_1Z_2\rho Z_1Z_2)/2$, preserving two parity subspaces of dimensions $(2,2)$. Prior $B$ instead separates the energy subspaces of dimensions $(1,2,1)$. Dephasing into subspaces of dimensions $d_j$ has entanglement fidelity $\sum_jd_j^2/d^2$, as follows from the projector Kraus maps. The nonnegative entries of both matrices make identity recovery optimal by the preceding bound. Consequently
\begin{equation}
 F_{e,\mathrm{opt}}(\nu_A)=\frac12,
 \qquad F_{e,\mathrm{opt}}(\nu_B)=\frac38.
\label{eq:register-known-optima}
\end{equation}
 Correlated dephasing can preserve joint coherences absent from local channels~\cite{PalmaSuominenEkert1996}. Here that distinction changes the exact optimal register fidelity while leaving local calibration identical.

If the decoder knows $H$ and $\varphi_{\nu_\tau}(2)=0$, pinching, meaning removal of coherences between the
three energy sectors, with projectors $|00\rangle\langle00|$,
$|01\rangle\langle01|+|10\rangle\langle10|$ and
$|11\rangle\langle11|$, attains fidelity $3/8$ for every compatible
complex $\gamma_{\rm J}$. The prior $\nu_B$ gives the matching all-CPTP upper bound.
Hence
\begin{equation}
 \max_{\mathcal R}\inf_{\nu_\tau:\varphi_{\nu_\tau}(2)=0}
 F_e(\mathcal R\mathcal N_{\nu_\tau})=\frac38.
\label{eq:register-local-minimax}
\end{equation}
The missing information is the coherence between $|00\rangle$ and
$|11\rangle$. A joint coherence measurement determines its harmonic;
local tomography at twice the storage time can also determine it if the
same quasi-static distribution persists. The obstruction concerns local
data at the specified time. It does not require an additional noise
generator or a new encoding.

The same example has an exact answer for an entire complex data class. Continue to take \(V=I_4\), \(H=Z_1+Z_2\), and \(\varphi_{\nu_\tau}(2)=0\), with \(\gamma_{\rm J}=\varphi_{\nu_\tau}(4)\). The moment matrix on these three frequencies is
\begin{equation}
 \begin{pmatrix}1&0& \overline\gamma_{\rm J}\\0&1&0\\ \gamma_{\rm J}&0&1\end{pmatrix}.
 \label{eq:joint-calibration-disk-moment}
\end{equation}
Its eigenvalues are \(1,1\pm|\gamma_{\rm J}|\), so every \(|\gamma_{\rm J}|\le1\) is compatible with a circle prior. The data therefore leave a disk of possible joint coherences despite fixing the individual dephasing channels.

Applying Proposition~\ref{prop:sector-recovery} to the two disjoint coherence pairs and using Eq.~\eqref{eq:sector-forest-completion} gives an explicit test of how extra joint data improve the guarantee.
Suppose further observations restrict \(\gamma_{\rm J}\) to a nonempty closed convex set \(\mathcal C_{\rm J}\) in that disk. Then the exact common-recovery value is
\begin{equation}
 F_{e,\rm joint}^{\rm rob}
 =\frac{3+\operatorname{dist}(0, \mathcal C_{\rm J})}{8}.
 \label{eq:joint-calibration-disk-optimum}
\end{equation}
This formula permits complex observations and requires no reflection symmetry of the prior.

The two disjoint coherence pairs of $\Gamma(\gamma_{\rm J})$ form a forest. The all-recovery bound and simultaneous polar alignment in Eq.~\eqref{eq:sector-forest-completion} therefore give the known-channel optimum $(4+2+2|\gamma_{\rm J}|)/16=(3+|\gamma_{\rm J}|)/8$. The terms count the four diagonal contributions and the two aligned pairs. For these one-dimensional sectors a diagonal unitary attains the alignment.

Let \(\gamma_*\) be the point of \(\mathcal C_{\rm J}\) nearest the origin. If \(\gamma_*\ne0\), the first-order condition for projection onto a convex set gives
\begin{equation}
 \frac{\operatorname{Re}(\overline\gamma_*\gamma_{\rm J})}{|\gamma_*|}
 \ge |\gamma_*|,
 \qquad \gamma_{\rm J}\in\mathcal C_{\rm J}.
\end{equation}
Choose the fixed diagonal recovery aligning the outer pair with \(\gamma_*\) and leaving the central pair aligned. Its score is
\([3+\operatorname{Re}(\overline\gamma_*\gamma_{\rm J})/|\gamma_*|]/8\),
which is at least the right side of Eq.~\eqref{eq:joint-calibration-disk-optimum} throughout the class. The prior representing \(\gamma_*\) gives the matching all-recovery upper bound. If \(0\in\mathcal C_{\rm J}\), pinching between the energy sectors keeps the central pair and attains \(3/8\); \(\gamma_{\rm J}=0\) gives the converse.

For instance, the extra observation \(\operatorname{Re}\gamma_{\rm J}\ge\alpha\), with \(0\le\alpha\le1\), gives \(F_{e,\rm joint}^{\rm rob}=(3+\alpha)/8\). The added information is useful precisely when it excludes the origin. The example isolates a joint-calibration limitation in two unencoded logical qubits; the general framework applies separately to encoded registers of arbitrary finite dimension.
Encoded memories can add nontrivial syndrome spaces to these sectors. Under the same sector assumptions, the common-alignment construction retains their full logical register.

\section{Optical recovery from incomplete population data}
\label{sec:optical-application}

 The two-sector recovery formula of Proposition~\ref{prop:sector-recovery} and the real-data reduction of Proposition~\ref{prop:sector-data} turn population calibration into a common-decoder problem for a concrete memory. We now keep the optical encoding fixed and ask how specified population measurements improve its recovery guarantee at a chosen storage time. One logical qubit is stored in four photons shared by two optical modes. A fluctuating rotation mixes the modes and changes their relative phase while conserving the total photon number. Separate calibration copies experience the same distribution of rotation rates as the memory. Their probability of leaving the code subspace constrains the angle distribution and hence the best recovery guarantee. We first determine the exact optimum from three population means, then solve the two- and one-duration information problems. This order identifies what the extra measurements contribute while keeping the same physical memory and available recovery operations. We then derive the population readout and its precision, quantify sensitivity to the calibrated axis, and evaluate the declared-family and finite-alphabet mechanisms of Section~\ref{sec:uncertainty}. The finite-width benchmark and uniform optical crossover in Appendices~\ref{app:benchmark-details} and~\ref{app:joint-details} complete those comparisons.

\subsection{Physical model and the recovery observable}

We use the four-photon code of Chuang, Leung and Yamamoto~\cite{ChuangLeungYamamoto1997},
\begin{equation}
 |0\rangle_L=\frac{|40\rangle+|04\rangle}{\sqrt2},\qquad
 |1\rangle_L=|22\rangle.
 \label{eq:opt-code}
\end{equation}
The entries in a Fock ket count photons in modes $a,b$. Thus $|40\rangle$ means four photons in the first mode and none in the second. The unknown stored state is $\psi_0 |0\rangle_L+\psi_1 |1\rangle_L$, with $|\psi_0|^2+|\psi_1|^2=1$. Its code occupies two dimensions of the five-dimensional sector with total photon number four. The remaining dimensions are available for error motion without immediately identifying the logical state. Define
\begin{gather}
 J_x=(a^\dagger b+b^\dagger a)/2,\quad
 J_y=(a^\dagger b-b^\dagger a)/(2i), \notag \\
 J_z=(a^\dagger a-b^\dagger b)/2.
\end{gather}
 Here $a,b$ also denote the mode annihilation operators. The first two generators mix the modes coherently; the third changes their relative phase. They obey the angular-momentum algebra and act as spin two in this fixed-photon-number sector. We first choose a calibrated generator for which the first error shell is correctable. Section~\ref{sec:optical-axis-calibration} determines which other axes retain the correctable first shell.

\paragraph{The selected axis and its closed cyclic space.}

We now fix the dimensionless generator
\begin{equation}
 X_{\rm opt}=\frac{J_x+J_y+J_z}{\sqrt3}.
 \label{eq:opt-axis}
\end{equation}
 In each experimental run the Hamiltonian is \(H_\omega=\omega X_{\rm opt}\). Here
\(\omega\in\mathbb R\) is the signed rate of the noise-induced rotation. It is constant during that run and independent of the encoded input.
Its distribution \(\nu_\omega\) is common to the memory experiment and independently
prepared calibration ensembles. At storage time \(T>0\), the uncertain angle is
\(\tau=\omega T\). The exact finite-angle formulas below require no density, symmetry, moment or weak-noise assumption on
\(\nu_\omega\).
The encoded channel at this storage time is
\(\mathcal N_{\nu_\omega}(\rho)=\int e^{-i\omega TX_{\rm opt}}V\rho V^\dagger e^{i\omega TX_{\rm opt}}\,d\nu_\omega(\omega)\).

 For the encoding isometry
\(V\) in Eq.~\eqref{eq:opt-code}, the complete code-generated space consists
of the two orthogonal blocks \(W_0=V\) and \(W_1=X_{\rm opt}V/\sqrt2\). In this space,
\begin{equation}
 (W_0,W_1)^\dagger X_{\rm opt}(W_0,W_1)
 =\begin{pmatrix}0&\sqrt2 I_2\\ \sqrt2 I_2&\sigma_y\end{pmatrix}.
 \label{eq:opt-blocks}
\end{equation}
Here $\sigma_y$ acts on the logical input of Eq.~\eqref{eq:opt-code}. Block zero is the code, and block one is its orthogonal image $X_{\rm opt}V/\sqrt2$. Each block therefore carries a complete logical copy. The hopping $\sqrt2 I_2$ moves both logical amplitudes in the same way; their first departure from the code can be placed entirely in a syndrome label. The $\sigma_y$ action in block one distinguishes logical directions and is the first obstruction to continuing this separation.

 The identity $X_{\rm opt}^2V=2V+X_{\rm opt}V\sigma_y$, checked directly with the Schwinger generators on the two codewords, proves closure of these four dimensions. Together with $V^\dagger X_{\rm opt}V=0$ and $V^\dagger X_{\rm opt}^2V=2I_2$, it makes the first shell correctable. In the shell and boundary notation of Eqs.~\eqref{eq:shells}--\eqref{eq:physical-defect}, the terminal block has $L=1$, $m=2$, $g_1=\sqrt2$, $A_\star=\sigma_y$, $B_\star=0$ and $\mathcal V=1$, so $\chi=1/2$. Here $\sigma_y$ is the logical Pauli matrix and $\sigma$ remains the angle scale. The first harmful amplitude is thus an occupied-shell distortion. Proposition~\ref{thm:petz} gives the factor-two leading Petz gap because the defect is entirely in $A_\star$. Proposition~\ref{thm:optimal} gives optimal quartic coefficient $h_2/2$, equal to one for unit-variance Gaussian noise. Closure also makes the following finite-angle computation exact.

The eigenstates of $\sigma_y$, \(|\pm\rangle=(|0\rangle\pm i|1\rangle)/\sqrt2\),
therefore generate the conditional syndrome Hamiltonians
\begin{equation}
 \mathsf H_\pm=\begin{pmatrix}0&\sqrt2\\\sqrt2&\pm1\end{pmatrix}.
 \label{eq:opt-conditional}
\end{equation}
Only one of these blocks needs exponentiation. With $Z_{\rm syn}=\operatorname{diag}(1,-1)$,
\begin{equation}
 \mathsf H_-=-Z_{\rm syn}\mathsf H_+Z_{\rm syn},\quad
 u_-(\tau)=Z_{\rm syn}\overline{u_+(\tau)},
 \label{eq:opt-branch-symmetry}
\end{equation}
where $u_\pm(\tau)=e^{-i\tau\mathsf H_\pm}|0\rangle_{\rm syn}$. The second identity follows because $\mathsf H_+$ is real and $Z_{\rm syn}|0\rangle=|0\rangle$. This symmetry exchanges the two logical branches; it differs from the code-fixing reflection in Eq.~\eqref{eq:physical-reflection}.

 The identification $W_n|\psi\rangle\leftrightarrow|n\rangle_{\rm syn}\otimes|\psi\rangle$ is a change of coordinates on the four reachable physical dimensions. The syndrome label specifies the shell within the four-dimensional cyclic space. In these coordinates an arbitrary superposition of the conserved logical branches, with logical amplitudes $\psi_-$ and $\psi_+$, evolves as
\begin{align}
 &\psi_-|0\rangle_{\rm syn}|-\rangle
 +\psi_+|0\rangle_{\rm syn}|+\rangle \label{eq:optical-branch-superposition}\\
 &\quad\longmapsto
 \psi_-|u_-(\tau)\rangle_{\rm syn}|-\rangle
 +\psi_+|u_+(\tau)\rangle_{\rm syn}|+\rangle. \notag
\end{align}
The two-component vectors $u_\pm$ are syndrome amplitudes in the basis of code and terminal shell. The logical populations stay fixed, but the two logical branches can become entangled with different syndrome states. Discarding the syndrome then reduces logical coherence according to the inner product of these vectors. Evaluating the propagator for $\mathsf H_+$ from $|0\rangle$ and using Eq.~\eqref{eq:opt-branch-symmetry} gives
\begin{equation}
 u_\pm(\tau)=e^{\mp i\tau/2}
 \begin{pmatrix}
 \cos(3\tau/2)\pm \frac{i}{3}\sin(3\tau/2)\\
 -\frac{2i\sqrt2}{3}\sin(3\tau/2)
 \end{pmatrix}.
 \label{eq:opt-amplitudes}
\end{equation}
The probability of occupying shell one is $|(u_\pm)_1|^2$, the same in both logical branches. It therefore contains information about the angle without revealing $\psi_-$ or $\psi_+$. We estimate this population on separate calibration copies; the attaining recovery keeps the stored shell amplitudes coherent through its alignment. Explicitly,
\begin{equation}
 P_1(\tau)=\frac89\sin^2\frac{3\tau}{2}
 =\frac49(1-\cos3\tau).
 \label{eq:opt-population}
\end{equation}
The prefix decoder $\mathcal R_{\rm pre}$ discards the syndrome in these coordinates. We call this the fixed prefix decoder. Its logical coherence is described by
\begin{equation}
 \zeta(\tau)=\langle u_+(\tau)|u_-(\tau)\rangle
 =\frac{8e^{i\tau}+2e^{-2i\tau}-e^{4i\tau}}9.
 \label{eq:opt-fixed-coherence}
\end{equation}
 Its entanglement fidelity is \((1+\operatorname{Re}\mathbb E[\zeta])/2\).
Half the diamond distance~\cite{Watrous2018} of this decoded channel from the logical identity is \(|1-\mathbb E[\zeta]|/2\); this distance maximizes output distinguishability including a reference, and thus coincides with entanglement infidelity when the average coherence is real.

The deterministic-angle channel $\mathcal N_\tau$ gives a simple exact loss for this decoder. Substituting Eq.~\eqref{eq:opt-fixed-coherence},
\begin{align}
1-F_e(\mathcal R_{\rm pre}\mathcal N_\tau)
&=\frac12-\frac{\cos2\tau}{9}
+\frac{\cos4\tau}{18}-\frac{4\cos\tau}{9}\notag \\
&\le\min\{1,\tau^4/2\}. \label{eq:optical-tail-attainment}
\end{align}
To check the bound directly, put $c=\cos\tau$. The loss equals $(1-c)^2(4c^2+8c+6)/9$. The second factor is at most $18$ on $[-1,1]$, and $1-\cos\tau\le\tau^2/2$. To compare with the local result, take $\tau=\sigma x$ with $x$ a unit Gaussian. The prefix decoder has infidelity $3\sigma^4/2+O(\sigma^6)$, whereas the optimized recovery has $\sigma^4+o(\sigma^4)$. In Eq.~\eqref{eq:optimal-coefficient}, coherent subtraction replaces the raw fourth moment $3$ by the residual norm $h_2=2$. The exact finite-angle dynamics and the local polynomial projection thus describe two aspects of the same recoverable error.

 A coherent recovery can improve the overlap before discarding the syndrome. A unitary controlled by the logical branch rotates its syndrome vector without measuring that branch, so it preserves the superposition in Eq.~\eqref{eq:optical-branch-superposition}. For a known angle the two normalized vectors could be aligned perfectly. With the angle unavailable, the same relative unitary must act on their entire ensemble. Their averaged outer product retains the correlations entering the recovered overlap. The required matrix is
\begin{equation}
 C_{\nu_\omega}=\mathbb E_{\nu_\omega}\bigl[|u_-(\omega T)\rangle\langle u_+(\omega T)|\bigr].
 \label{eq:opt-cross-matrix}
\end{equation}
The branch symmetry gives $C_{\nu_\omega}^{\mathsf T}=Z_{\rm syn}C_{\nu_\omega}Z_{\rm syn}$ for every prior. Hence the cross matrix has opposite off-diagonal entries before any reflection average. This reduces its description to three complex coordinates and anticipates the two real polar branches used below.

  A relative syndrome unitary $O$ in the negative branch multiplies $|-\rangle\langle+|$ by $\Tr(OC_{\nu_\omega})$ while preserving populations. Its entanglement fidelity is $[1+\operatorname{Re}\Tr(OC_{\nu_\omega})]/2$. This is the two-sector alignment problem already solved in Proposition~\ref{prop:sector-recovery}~\cite{WuEtAl2026}. Polar alignment maximizes the ensemble overlap, and the converse in Proposition~\ref{prop:sector-recovery} bounds every CPTP recovery. Substituting the optical cross block gives
\begin{equation}
 F_{e,\mathrm{opt}}(\nu_\omega)=\frac{1+\|C_{\nu_\omega}\|_1}{2}.
 \label{eq:opt-polar-value}
\end{equation}

\subsection{Exact inference and a data-conditioned minimax decoder}
The exact finite-angle problem uses the same coherent syndrome separation as Eq.~\eqref{eq:prefix-tensor-decoding}. With two conserved logical branches, Proposition~\ref{prop:sector-recovery} reduces optimal recovery to the trace norm of one cross block. The measured real part then invokes Proposition~\ref{prop:sector-data}, so the observations determine one common alignment even while sine harmonics remain unknown.

 At calibration duration $t_{\rm cal}$, the population mean determines $\mathbb E[\cos(3\omega t_{\rm cal})]$, whereas the fixed fidelity depends on the cosine means at $T,2T,4T$ through Eq.~\eqref{eq:opt-fixed-coherence}. Independently prepared calibration ensembles at $t_{\rm cal}=T/3,2T/3,4T/3$ therefore determine the required means,
\begin{equation}
\begin{aligned}
 Q_1&=\mathbb E[P_1(\omega T/3)],\\
 Q_2&=\mathbb E[P_1(2\omega T/3)],\\
 Q_4&=\mathbb E[P_1(4\omega T/3)].
\end{aligned}
 \label{eq:opt-three-data}
\end{equation}
They determine \(c_k=\mathbb E[\cos(k\omega T)]=1-9Q_k/4\) for \(k=1,2,4\). In the notation of Section~\ref{sec:finite-calibration}, each $Q_k$ is a measured calibration mean $y_j^{\rm cal}$ for the population observable at duration $kT/3$, and $c_k=\operatorname{Re}\widehat\nu_k$. The guarantees $G$, $G_1$ and $G_2$ below are the corresponding optical specializations of $\mathcal F_{\rm cal}$ with different measured subsets. Equation~\eqref{eq:opt-fixed-coherence} then gives the exact identity
\begin{equation}
 1-F_{e,\mathrm{fixed}}=Q_1+\frac14Q_2-\frac18Q_4.
 \label{eq:opt-fixed-inference}
\end{equation}
 The negative coefficient is consistent with positivity because the three
expectations belong to one probability distribution. Feasible triples must arise from one rate distribution; independent choices of numbers in \([0,1]\) can violate this constraint.

The same observations determine more than a fixed-decoder score. They determine the entrywise real part of $C_{\nu_\omega}$ in the syndrome basis used in Eq.~\eqref{eq:opt-amplitudes}. This real matrix need not be symmetric; taking the Hermitian part would discard different information. Collect the measured means as $Q=(Q_1,Q_2,Q_4)$. Explicitly,
\begin{equation}
\begin{gathered}
 C_s(Q)=\operatorname{Re}C_{\nu_\omega}
 =\begin{pmatrix}A_Q&B_Q\\-B_Q&D_Q\end{pmatrix},\\
 \begin{aligned}
 A_Q&=1-Q_1-Q_2-Q_4/4,\\
 D_Q&=-Q_1+(Q_2+Q_4)/2,\\
 B_Q&=\frac{\sqrt2}{4}(2Q_2-Q_1-Q_4).
 \end{aligned}
\end{gathered}
 \label{eq:opt-real-matrix}
\end{equation}

 The unmeasured sine moments can change the channel. Recovery must therefore be selected from the measured data and assessed on every distribution still compatible with them. Let $\mathfrak M(Q)$ be the nonempty set of input-independent quasi-static frequency distributions with these population means. Proposition~\ref{prop:sector-data} now determines the optimal guarantee for the observed real optical cross block. 
For the code in Eq.~\eqref{eq:opt-code}, the calibrated generator in Eq.~\eqref{eq:opt-axis}
and storage time $T$, assume that storage and calibration ensembles share
one input-independent quasi-static rate distribution. For feasible population
data $Q$, the unrestricted recovery guarantee is
\begin{equation}
\begin{aligned}
 &\sup_{\mathcal R}\inf_{\nu_\omega\in\mathfrak M(Q)}
 F_e(\mathcal R\circ\mathcal N_{\nu_\omega})\\
 &\quad=G(Q)=\frac{1+\|C_s(Q)\|_1}{2}.
\end{aligned}
 \label{eq:opt-data-minimax}
\end{equation}
In explicit scalar form,
\begin{equation}
\begin{split}
 G(Q)=\frac12+\frac12\Big[
  &A_Q^2+D_Q^2+2B_Q^2\\
  &+2|A_QD_Q+B_Q^2|\Big]^{1/2}.
\end{split}
 \label{eq:opt-data-value}
\end{equation}
 One real polar alignment of $C_s(Q)$ attains $G(Q)$ for every distribution
in $\mathfrak M(Q)$. The absolute value in Eq.~\eqref{eq:opt-data-value}
is required even for symmetric distributions.

 Equation~\eqref{eq:opt-population} makes the observations invariant under rate reflection. Equation~\eqref{eq:opt-real-matrix} shows that they determine the real cross block, and reflection conjugates $C_{\nu_\omega}$. Thus the hypotheses of Proposition~\ref{prop:sector-data} hold. Its two-sector value is $G(Q)$. We construct the corresponding optical recovery explicitly.
Take a real singular-value decomposition \(C_s=O_LD_{\rm sv}O_R^{\mathsf T}\) with $O_L,O_R$ orthogonal and $D_{\rm sv}$ diagonal and nonnegative, and choose
\(O_Q=O_RO_L^{\mathsf T}\), completing null singular spaces orthogonally if necessary.
Apply \(O_Q\) to the syndrome coherently controlled by the negative logical branch, preserving superpositions of the two logical states, then
discard the syndrome. For every compatible \(\nu_\omega\),
\begin{equation}
 \operatorname{Re}\operatorname{Tr}(O_QC_{\nu_\omega})
 =\operatorname{Tr}(O_QC_s)=\|C_s\|_1.
 \label{eq:opt-uniform-payoff}
\end{equation}
This one recovery therefore attains \(G(Q)\) throughout the uncertainty class. For the matching converse, Proposition~\ref{prop:sector-data} uses the compatible symmetrized rate distribution. Its cross block is $C_s$, so Eq.~\eqref{eq:opt-polar-value} bounds every recovery by the same $G(Q)$.
 For symmetric noise the data-conditioned value equals the actual channel's
optimum. For asymmetric noise it is the best guarantee supported by these
observations. For this observation class, the measured real coherence matrix reduces robust channel recovery~\cite{KosutShabaniLidar2008} to the explicit value $G(Q)$ and an operation attaining it uniformly.

This guarantee concerns entanglement fidelity. The recovered channel preserves logical populations and multiplies $|-\rangle\langle+|$ by $ \gamma_{\nu_\omega}=\Tr(O_QC_{\nu_\omega})$, with $| \gamma_{\nu_\omega}|\le1$ and $\operatorname{Re} \gamma_{\nu_\omega}=2G(Q)-1$. The unmeasured sine harmonics can leave $\operatorname{Im} \gamma_{\nu_\omega}\ne0$. For this qubit channel, its half diamond distance from the identity is $D_\diamond=|1- \gamma_{\nu_\omega}|/2$.  The qubit formula follows because the channel difference acts only on the off-diagonal input blocks; each off-diagonal block has trace norm at most $1/2$, attained by an equatorial pure state. Therefore
\begin{equation}
 1-G(Q)\le D_\diamond\le\sqrt{1-G(Q)}.
 \label{eq:opt-fidelity-diamond}
\end{equation}
The upper bound follows from $| \gamma_{\nu_\omega}|\le1$; symmetric noise makes $ \gamma_{\nu_\omega}$ real and attains the lower bound. The population data thus control this operational error too, but do not fix its value for asymmetric noise or determine diamond-distance minimax optimality. In contrast with the known-prior bracket of Proposition~\ref{prop:worst-input}, the unmeasured sine harmonics leave an unknown logical phase in $\gamma_{\nu_\omega}$, which no fixed final rotation can remove.

 The relative syndrome alignment is coherent; it does not measure the logical branch. Together with syndrome disposal it defines a deterministic recovery on the cyclic space. Appendix~\ref{app:optical-implementation} gives its Kraus operators on all five physical dimensions and the analyzer derivation. Realization using passive two-mode optics alone remains unresolved.

\subsection{How two durations recover the quartic guarantee}
\label{sec:optical-calibration-hierarchy}

The three means determine the fixed-decoder score and the optimal guarantee
supported by their information. We now ask how much protection remains when
one or two of these means are available. The code, calibrated generator,
storage time and quasi-static assumption remain those of
Eq.~\eqref{eq:opt-data-minimax}. The recovery may use the measured means,
and must work for every rate distribution compatible with them. Even when the actual noise is narrow, a few small population means can remain compatible with rare finite rotations. The guarantee must therefore retain these alternatives.

Keep \(c_k=\mathbb E[\cos(k\tau)]=1-9Q_k/4\), with \(\tau=\omega T\).
These are the real parts of the Fourier moments
\(\widehat\nu_k=\mathbb E[e^{-ik\tau}]\).
The protocol with one duration measures \(Q_2\) at \(2T/3\); the protocol
with two durations also measures \(Q_1\) at \(T/3\).
Write \(G_1(c_2)\) and \(G_2(c_1,c_2)\) for their optimal worst-prior
entanglement fidelities, and
\(G_3(c_1,c_2,c_4)=G(Q)\) for the three-duration value already derived.
The subscript counts calibration durations.

The entrywise real coherence matrix in Eq.~\eqref{eq:opt-real-matrix}
becomes
\begin{equation}
\begin{gathered}
C_s(c_1,c_2,c_4)=\begin{pmatrix}A_Q&B_Q\\-B_Q&D_Q\end{pmatrix},\\
\begin{aligned}
A_Q&=(4c_1+4c_2+c_4)/9,\\
D_Q&=(4c_1-2c_2-2c_4)/9,\\
B_Q&=\sqrt2(c_1-2c_2+c_4)/9.
\end{aligned}
\end{gathered}
\label{eq:opt-harmonic-cross-matrix}
\end{equation}
Its trace norm has a particularly simple geometry,
\begin{equation}
3\|C_s\|_1=\max\left\{
 |c_4+2c_2|,\sqrt{(c_4-2c_2)^2+8c_1^2}\right\}.
\label{eq:opt-two-polar-branches}
\end{equation}
The two branches correspond to a real reflection or rotation of the
syndrome, coherently conditioned on the logical branch. They meet when
\(c_2c_4=c_1^2\), where \(C_s\) is singular.
To check Eq.~\eqref{eq:opt-two-polar-branches}, use the entries of Eq.~\eqref{eq:opt-harmonic-cross-matrix}. Then \(\|C_s\|_1^2=A_Q^2+D_Q^2+2B_Q^2+2|A_QD_Q+B_Q^2|\), \(A_Q-D_Q=(2c_2+c_4)/3\), and \((A_Q+D_Q)^2+4B_Q^2=[(c_4-2c_2)^2+8c_1^2]/9\).

The missing harmonic is constrained by positivity of the same underlying
probability measure. Put \(S_c=(1+c_2)/2\), the cosine second moment of Eq.~\eqref{eq:calibration-cosine-moments}, in which \(c_1=a_c\). The observed pair is feasible precisely when
\(c_1^2\le  S_c\le1\). Its possible fourth harmonics fill
\([\underline c_4,\overline c_4]\), where
\begin{equation}
\begin{aligned}
\underline c_4&=2c_2^2-1,\\
\overline c_4&=
\begin{cases}
1,&|c_1|\le S_c,\\
1-\dfrac{8(|c_1|-S_c)^2(1-S_c)}{(1-|c_1|)^2},
 &S_c<|c_1|<1.
\end{cases}
\end{aligned}
\label{eq:opt-fourth-harmonic-interval}
\end{equation}
At \( |c_1|=1\), feasibility forces \( S_c=1\) and \(c_4=1\).
This is the real harmonic slice of the bandwidth-four moment set in Eq.~\eqref{eq:calibration-moment-set}. Explicit finite measures attain both bounds. Their construction
in Appendix~\ref{app:optical-bootstrap} also identifies adverse priors for
the recovery problem.

For each feasible triple $(c_1,c_2,c_4)$, Proposition~\ref{prop:sector-data} determines the common optimum while sine harmonics remain unknown. When only $(c_1,c_2)$, or only $c_2$, is measured, Proposition~\ref{prop:finite-calibration} additionally minimizes over the missing cosine means and constructs one decoder for the entire enlarged class.
Retain the calibrated code and quasi-static shared-rate assumption of Eq.~\eqref{eq:opt-data-minimax}. For two means the feasible data are $c_1^2\le(1+c_2)/2\le1$; for one mean they are $c_2\in[-1,1]$. Define
\begin{equation}
\begin{aligned}
c_{4,0}&=
\begin{cases}
2c_2,&c_1^2\ge2c_2^2,\\
c_1^2/c_2,&c_1^2<2c_2^2,
\end{cases}\\
c_{4,*}&=\min\{\overline c_4,
                  \max\{\underline c_4,c_{4,0}\}\}.
\end{aligned}
\label{eq:opt-clipped-harmonic}
\end{equation}
The second branch has \(c_2\ne0\). The exact common guarantees are
\begin{equation}
\begin{aligned}
G_2(c_1,c_2)
&=\frac{1+\|C_s(c_1,c_2,c_{4,*})\|_1}{2},\\
G_1(c_2)
&=\frac12+\frac{2|c_2|+\max\{0,2c_2^2-1\}}6.
\end{aligned}
\label{eq:opt-reduced-data-values}
\end{equation}
Each value is attained by one physical recovery throughout its data class
and has a matching symmetric adverse prior. Consequently,
\begin{equation}
G_1(c_2)\le G_2(c_1,c_2)\le G_3(c_1,c_2,c_4)
\label{eq:opt-calibration-hierarchy}
\end{equation}
for every feasible triple $(c_1,c_2,c_4)$; $G_2$ and $G_1$ use its two-mean and one-mean projections, respectively.

The unconstrained minimization of Eq.~\eqref{eq:opt-two-polar-branches} over the unmeasured
\(c_4\) gives the unconstrained minimum \(c_{4,0}\). The convex
objective is minimized on the feasible interval by clipping this point.
For one duration, \(c_1=0\) is feasible for every \(c_2\in[-1,1]\),
allows the full interval \([2c_2^2-1,1]\), and minimizes the norm at
each fixed \(c_4\). Its optimal fourth harmonic is
\(\max\{0,2c_2^2-1\}\).

These channel-wise minima become common-recovery guarantees through the
explicit polar construction. A useful case is an interior minimum at the
singular crossing \(c_1^2<2c_2^2\),
\(c_{4,0}=c_1^2/c_2\). Let \(O_-\) and \(O_+\) be the attaining
reflection and rotation at this crossing. Use them with probabilities
\begin{equation}
w_-=\frac{2c_2^2-c_1^2}{4c_2^2},
\qquad
w_+=\frac{2c_2^2+c_1^2}{4c_2^2}.
\label{eq:opt-polar-mixture-weights}
\end{equation}
Their mixed coherence payoff is
\((2|c_2|+c_1^2/|c_2|)/3\), independently of the unknown \(c_4\).
Thus randomized coherent alignments remove the sensitivity to the
unmeasured harmonic exactly where the two polar branches meet.
For a clipped endpoint, an active polar alignment gives an affine
supporting payoff whose slope has the required sign over the feasible
interval. Every alignment used here is real, so unobserved sine harmonics
make no contribution to entanglement fidelity.
Appendix~\ref{app:optical-bootstrap} gives all decoder matrices, degenerate
cases and endpoint measures. A symmetric prior attaining the minimizing
harmonics has a real cross matrix; Eq.~\eqref{eq:opt-polar-value} then
bounds every CPTP recovery by the stated value.

Both shorter durations have a consequential weak-noise interpretation.
Let \(\tau=\sigma x\) with a fixed rescaled prior, and define
\(y_j=\mathbb E[x^j]\). If \(y_2>0\) and \(y_4<\infty\), then
\begin{align}
1-G_1&=2y_2\sigma^2
 -\left(\frac43y_2^2+\frac23y_4\right)\sigma^4
 +o(\sigma^4),\notag \\
1-G_2&=\frac{y_4-y_2^2}{2}\sigma^4+o(\sigma^4), \label{eq:opt-duration-orders}\\
G_3-G_2&=o(\sigma^4). \notag
\end{align}
The proof in Appendix~\ref{app:optical-bootstrap} controls the apparently
singular endpoint in Eq.~\eqref{eq:opt-fourth-harmonic-interval}.
For a symmetric prior, Eq.~\eqref{eq:optimal-coefficient} gives the same quartic coefficient for the prior-known optimum, since its statistical residual is $x^2-y_2$.
For asymmetric priors, odd-moment information can improve a prior-known
decoder. Equation~\eqref{eq:opt-duration-orders} remains the guarantee
from population means in that case.

Adding \(Q_1\) therefore changes the leading guaranteed loss from
quadratic to quartic. The later \(Q_4\) measurement can improve the exact
finite-angle value and is still required to identify the fixed-decoder
score over arbitrary distributions. A vanishing quartic coefficient
requires a higher-order analysis; the fourth-moment assumptions here
justify \(o(\sigma^4)\), without an asserted sixth-order remainder.
For fixed-magnitude binary angles the two means determine \(c_4\)
exactly, so \(G_2=G_3\) at every width.

The largest possible increase \(G_2-G_1\) is \(\sqrt5/6\), attained at
\(c_2=0\), \(|c_1|=1/\sqrt2\); a symmetric pair
\(\tau=\pm\pi/4\) realizes it. Figure~\ref{fig:optical-calibration-frontier}
shows the complete data dependence and the narrow-Gaussian comparison. Its sharp corners have two sources. The one-duration optimum changes branch at $c_2=0$ and $c_2=\pm1/\sqrt2$; the two-duration optimum develops additional corners when its crossing minimum reaches a feasible endpoint. By contrast, clipping a smooth minimum preserves the first derivative. The distinction explains both the zero-gain boundary and the bent contours, as derived in Appendix~\ref{app:optical-corner-loci}.
The variance $\operatorname{Var}(\cos\tau)=(1+c_2)/2-c_1^2$ is the cosine variance of Eq.~\eqref{eq:calibration-cosine-moments}. Its fourth-order scaling in Eq.~\eqref{eq:calibration-fringe-variance} bounds finite-angle probability through Eq.~\eqref{eq:calibration-two-fringe-tail}, while the optical saddle fixes the attainable loss coefficient,
\(1-G_2=2\operatorname{Var}(\cos\tau)+o(\sigma^4)\)
along the specified narrow family. The optimality statement additionally
uses the common recovery and adverse-prior construction above.

\begin{figure*}[t]
\centering
\includegraphics[width=\textwidth]{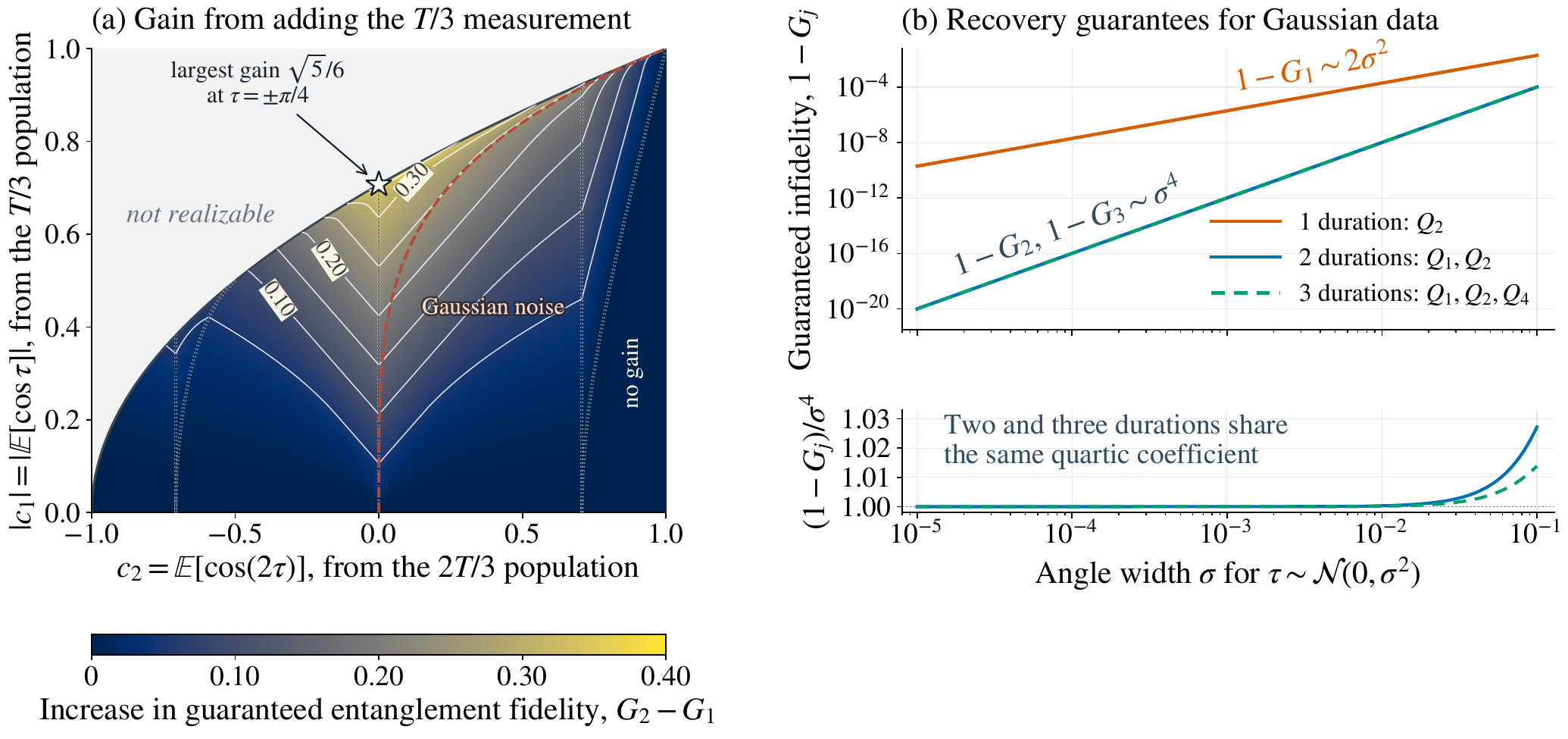}
\caption{Exact calibration-dependent recovery of the four-photon memory.
The angle is \(\tau=\omega T\), and \(Q_k\) is the mean first-shell
population at duration \(kT/3\).
(a) Gain \(G_2(c_1,c_2)-G_1(c_2)\) from adding the \(T/3\) mean to
the \(2T/3\) mean, with \(c_k=1-9Q_k/4\), the measured value of $\mathbb E[\cos k\tau]$.
The guarantee is invariant under $c_1\mapsto-c_1$, so the plot uses $c_1\ge0$. Each point is a possible pair of measured means. The shaded pairs are not realizable, since every angle distribution has $c_1^2\le\mathbb E[\cos^2\tau]=(1+c_2)/2$. The gain is largest on the upper edge, where every run has the same cosine of the storage angle and $c_1,c_2$ therefore also fix $c_4$; the star marks the maximum $\sqrt5/6$, at $\tau=\pm\pi/4$. The gain vanishes at $c_1=0$ and in the wedge $c_2\ge1/\sqrt2$ and $c_1^2\le c_2(2c_2^2-1)$. Dotted curves mark the principal visible corner loci where the optimal recovery changes form; Appendix~\ref{app:optical-corner-loci} derives all corner families. The dashed curve is the data of centered Gaussian angles of width $\sigma$, $c_1=e^{-\sigma^2/2}$ and $c_2=e^{-2\sigma^2}$, so
\(c_2=c_1^4\). It runs from $(1,1)$ toward the origin as $\sigma$ grows.
(b) Guaranteed infidelities $1-G_j$ from $j=1,2,3$ duration means for data generated
by a centered Gaussian angle of width \(\sigma\), that is, along the dashed curve of (a) near $(1,1)$. Each optimization permits all
priors compatible with those data. One mean gives \(1-G_1\sim2\sigma^2\);
two and three give \(1-G_j\sim\sigma^4\).
The lower strip separates the nearly coincident quartic curves. 
Each duration uses five analyzer settings. This exact-mean comparison
does not determine an optimal total shot allocation.}
\label{fig:optical-calibration-frontier}
\end{figure*}

\paragraph{The periodic calibration coefficients in the same memory.}
The general criterion of Proposition~\ref{prop:calibration-prefix} can
also be evaluated exactly here. Because $X_{\rm opt}$ has integer spectrum $\{-2,-1,0,1,2\}$, the class of $\tau=\omega T$ modulo $2\pi$ is the wrapped phase used in Section~\ref{sec:finite-calibration}. For $\beta_r=\mathbb E[\sin^{2r}(\tau/2)]$, the sharp coefficients of Proposition~\ref{prop:calibration-prefix} 
 in
Eq.~\eqref{eq:calibration-sharp-coefficient} are
\begin{equation}
\Upsilon_1=\frac{272+20\sqrt{10}}{243},
\qquad \Upsilon_2=8.
\label{eq:opt-calibration-coefficients}
\end{equation}
Their proof is in Appendix~\ref{app:optical-calibration-coefficients}. For this code, $\chi=1/2$ and Eq.~\eqref{eq:calibration-quartic-factor} give $\mathcal E_3(\epsilon)=3\times2^{1/3}\epsilon^{2/3}+o(\epsilon^{2/3})$ when the sole statistic is $\beta_3=\epsilon\downarrow0$. This value belongs to the single-statistic class of Proposition~\ref{prop:calibration-mismatch}. 

The first coefficient controls the optimal loss given the fundamental
statistic \(\beta_1\), which differs from the single-\(Q_2\)
experiment defining \(G_1\). The second coefficient quantifies the
protection justified by \(\beta_2\) when all compatible prior shapes
are allowed. It can exceed the coefficient inferred along a fixed-shape
narrow family because the adverse priors themselves can change shape as
\(\beta_r\) shrinks. These coefficients connect the general
calibration-order criterion to the same physical memory.

\subsection{What the protocol requires and what it cannot infer}
\label{app:optical-details}

  We now determine how to obtain the three means used by the recovery, retaining the calibrated generator $X_{\rm opt}=(J_x+J_y+J_z)/\sqrt3$ and its two-shell cyclic space. A Fock count alone does not measure the code projector, because $ |0\rangle_L$ contains a coherence between $|40\rangle$ and $|04\rangle$. The calibration therefore combines direct counts with mode mixing that converts this coherence into a count probability. Five optical settings suffice for each calibration ensemble. One directly resolves the Fock occupations. The other four use the balanced analyzer $U_\varphi=e^{-i(\pi/2)J_y}e^{-i\varphi J_z}$, at $\varphi=j\pi/4$, $j=0,1,2,3$, and count the event $(n_a,n_b)=(2,2)$, meaning two photons in each output mode. Write $P_{n_an_b}$ for the direct-count probability of a Fock outcome. If $p_j^{\rm an}$ is the analyzer probability of $(2,2)$ and $d_{\mathrm{cnt}}=P_{22}+(P_{40}+P_{04})/2$, then
\begin{equation}
\begin{gathered}
 P_{\mathrm{code}}=d_{\mathrm{cnt}}+\frac{p_0^{\rm an}-p_1^{\rm an}+p_2^{\rm an}-p_3^{\rm an}}{3},\\
 P_1=1-P_{\mathrm{code}}
\end{gathered}
 \label{eq:opt-count-readout}
\end{equation}
within the cyclic model. The projector identity itself holds for any density
matrix in the fixed four-photon sector. The analyzer's four-phase alternating sum isolates the coherence between $|40\rangle$ and $|04\rangle$. Direct Fock counts add the diagonal part of the code projector. Appendix~\ref{app:optical-implementation} evaluates the detection row and verifies Eq.~\eqref{eq:opt-count-readout}.

Calibration uses independently prepared copies with known logical input; for example, every copy can start in $ |1\rangle_L=|22\rangle$. The population mean is independent of that input under the specified generator. The stored unknown state is therefore not measured, and logical-state process tomography is unnecessary for these functionals. The rate distribution must remain the same across these ensembles and the memory experiment. The last calibration duration exceeds $T$.
A calibrated common pulse envelope can replace durations by pulse areas with
the same ratios if the multiplicative-noise distribution is unchanged.

\paragraph{Why these durations are required.}

The general expectation-identifiability criterion in Section~\ref{sec:calibration-identifiability} applies to the measured population harmonics.
A population mean at a positive calibration duration $t_{\rm cal}$ measures the frequency $3t_{\rm cal}$. The fixed-decoder fidelity in Eq.~\eqref{eq:opt-fixed-coherence} contains the three independent nonzero frequencies $T,2T,4T$. The span criterion therefore requires at least three duration means; with exactly three, their positive durations are $T/3,2T/3,4T/3$, up to order. In particular,
no finite set of these population means with all durations bounded by \(T\)
can identify its \(4T\) harmonic over arbitrary frequency distributions. The necessity result concerns exact identification of the fixed-decoder fidelity from this observable. Equation~\eqref{eq:opt-reduced-data-values} determines the common-recovery guarantee from fewer means, with two durations retaining its leading quartic coefficient.

 With finitely many calibration copies, replace each counting probability by its observed frequency and combine them as in Eq.~\eqref{eq:opt-count-readout}. This defines $Q_k^{\rm obs}$. Substituting these estimates in Eq.~\eqref{eq:opt-real-matrix} defines $C_s^{\rm obs}$. With $N$ independent shots per setting per duration, a total of $15N$, Hoeffding's inequality for weighted averages of independent bounded outcomes~\cite{Hoeffding1963} gives
\begin{equation}
\begin{gathered}
 | Q_k^{\rm obs}-Q_k|\le\epsilon_N,\qquad k=1,2,4,\\
 \epsilon_N=\sqrt{\frac{13\log(6/\alpha_{\rm fail})}{18N}},
\end{gathered}
 \label{eq:opt-data-confidence}
\end{equation}
with probability at least \(1-\alpha_{\rm fail}\), where $0<\alpha_{\rm fail}<1$. Put
\(\eta_N=(3/\sqrt2+9/4)\epsilon_N\).
A real polar recovery obtained from \(C_s^{\rm obs}\) then obeys
\begin{equation}
 F_e( \mathcal R^{\rm obs}\circ\mathcal N_{\nu_\omega})
 \ge\frac{1+\| C_s^{\rm obs}\|_1}{2}-\frac{\eta_N}{2}.
 \label{eq:opt-estimated-recovery}
\end{equation}
Each direct-count shot contributes \(1\)
for outcome \(22\), \(1/2\) for \(40\) or \(04\), and zero otherwise.
Each analyzer shot is a Bernoulli event. The squared weights of the five
independent sample means sum to \(1+4/9=13/9\), giving
Eq.~\eqref{eq:opt-data-confidence} by weighted Hoeffding and a union bound.
The coefficient matrices of \(Q_1,Q_2,Q_4\) in
Eq.~\eqref{eq:opt-real-matrix} have trace norms
\(3/\sqrt2,3/2,3/4\). Thus
\(\|C_s^{\rm obs}-C_s\|_1\le\eta_N\), and the real polar objective proves
Eq.~\eqref{eq:opt-estimated-recovery} even for infeasible empirical means.

Under this concentration bound, an additive fidelity allowance $\delta_F$ has sufficient scaling $N=O(\delta_F^{-2}\log(1/\alpha_{\rm fail}))$. Resolving a generic quartic loss at fixed relative precision therefore has sufficient scaling $N=O(\sigma^{-8}\log(1/\alpha_{\rm fail}))$, with $15N$ total shots. Optimal sample complexity remains a separate question. The duration count determines which exact means are measured, while this sampling bound determines the precision of their estimates.

For fewer measured durations, retain a confidence region inside the moment optimization. This treatment also applies to the three-duration experiment and can exploit feasibility, whereas Eq.~\eqref{eq:opt-estimated-recovery} controls its estimated polar decoder directly. If \(j\in\{1,2,3\}\)
durations each use \(N\) independent shots in each of the five settings,
weighted Hoeffding and a union bound give
\begin{equation}
| Q_k^{\rm obs}-Q_k|\le
\epsilon_{N,j}
=\sqrt{\frac{13\log(2j/\alpha_{\rm fail})}{18N}}
\label{eq:opt-fewer-duration-confidence}
\end{equation}
simultaneously for the measured \(k\), with probability at least
\(1-\alpha_{\rm fail}\). The total is \(5jN\) shots.
The corresponding cosine intervals have half-width
\(9\epsilon_{N,j}/4\).

Intersect this confidence region with the feasible moment set of Eq.~\eqref{eq:calibration-moment-set} and optimize one common recovery over every retained prior, as in Section~\ref{sec:calibration-confidence}. The witness bound in Eq.~\eqref{eq:calibration-confidence-witness} then carries the measured uncertainty into the fidelity guarantee. On the simultaneous confidence event,
the true prior belongs to this set, so the resulting lower bound
applies to the selected physical decoder. Substituting empirical means
into Eq.~\eqref{eq:opt-reduced-data-values}, or projecting them to a
single feasible point, does not by itself give that guarantee.
An empty intersection flags incompatibility between the statistical
region and the assumed model. Drift, detector bias, timing, uncertain
couplings and recovery-implementation errors require their own calibrated allowances beyond independent-trial fluctuations. The shared-phase coincidence probe
for \(\beta_r\), when used, adds a different measurement capability
and must retain its matched-coupling and common-phase assumptions.

Even complete population histories cannot identify the actual optimum without
additional noise information. Let $\omega_0>0$ and take \(\omega=\pm\omega_0\) with probabilities
\(p,1-p\). Every population history is independent of \(p\). At
\(T_{\rm ret}=2\pi/(3\omega_0)\),
the physical state returns to the code with an uncertain logical phase, while the syndrome returns to its initial state, giving
\begin{equation}
\begin{gathered}
 F_{e,\mathrm{fixed}}=\frac14,\\
 F_{e,\mathrm{opt}}(p)=\frac12+\frac14\sqrt{1+3(2p-1)^2}.
\end{gathered}
 \label{eq:opt-sign-ambiguity}
\end{equation}
The observable therefore supports the uniform guarantee \(3/4\), while the
actual optimum can reach one. The quasi-static restriction cannot be replaced by arbitrary sign fluctuations.
Consider two cycles of duration \(T_{\rm ret}\), with Hamiltonian
\(s_j\omega_0X_{\rm opt}\) in cycle \(j\), \(s_j=\pm1\). Every sign sequence has
the same population at every time, because complete cycles shift the sine
argument in Eq.~\eqref{eq:opt-population} by an integer multiple of \(\pi\).
At the end, the three sequences \((++),(+-),(--)\) have coherence phases
\(e^{-2\pi i/3},1,e^{2\pi i/3}\). An equal mixture completely dephases the
logical qubit and has optimum \(1/2\); a known deterministic sequence is
invertible. This model lies outside the quasi-static uncertainty class and
pinpoints the information about intra-run correlations omitted by populations.

\paragraph{Statistical comparisons in the same memory.}
The measured population means select a compatible class of priors. We now keep the selected axis fixed and change the statistical information to a declared noise family, testing the ordinary-moment and finite-alphabet mechanisms at finite width.
 The ordinary-moment comparison of Section~\ref{sec:uncertainty} also has an exact finite-width realization in this memory. With the same ternary family and recovery resources, the nominal and robust worst-case quartic coefficients are $81/200$ and $1/8$ when the nominal model uses $p_0=1/10$. Their ratio is $81/25$, while the nominal decoder has coefficient $9/200$ on its assumed channel. At the finite width $\sigma=1/5$, the robust decoder lowers the worst-case infidelity by a factor $2.89$, and by $2.78$ when Gaussian jitter of standard deviation $\sigma/10$ broadens each angle peak (Fig.~\ref{fig:benchmark}). The least-favorable prior changes with the width and first reaches the binary endpoint at $\sigma_c\approx0.4271$, where $\cos\sigma_c$ is a root of an irreducible sextic polynomial. Appendix~\ref{app:benchmark-details} derives the finite-width saddle, its changing adverse prior and its persistence under Gaussian broadening.

The binary endpoint tests what a vanishing statistical residual means beyond leading order. For this code, the optimal binary loss starts at $16\sigma^6/9$. Symmetric contamination by zero-angle runs restores a quartic term; along $\epsilon=c\sigma^2$ the optimal infidelity divided by $\sigma^6$ tends to $16/9+c/2$ (Fig.~\ref{fig:contamination-crossover}). Appendix~\ref{app:joint-details} proves this simultaneous limit with an explicit uniform remainder and a common-decoder consequence for bounded contamination. The same uniform bound covers smooth broadening. If Gaussian jitter of standard deviation $\delta\sigma$ broadens each binary angle, the optimal infidelity divided by $\sigma^6$ tends to $16/9+2c$ along $\delta^2/\sigma^2\to c$. Broadening of order $\sigma^2$ therefore keeps the sixth-order loss with a changed coefficient, while broadening $o(\sigma^2)$ retains $16/9$. This realizes the conditional crossover mechanism of Section~\ref{sec:finite-support} with a remainder uniform in the contamination probability.

\subsection{The calibration required for the optical axis}
\label{sec:optical-axis-calibration}
The population-data analysis has kept the optical generator fixed. We now test that physical assumption by varying its direction. Axis calibration is also essential to the weak-noise protection. Rotating about a generic fixed direction can distinguish logical states at an earlier order. We  determine which directions retain the correctable first shell and the optimal penalty when they do not. For a real vector
$\mathbf n=(n_x,n_y,n_z)$, write
$H_{\mathbf n}=\mathbf n\cdot\mathbf J$.
The Pauli matrices below act on the ordered logical basis in
Eq.~\eqref{eq:opt-code}. For another real vector $\mathbf m$, direct spin-two matrix elements give
\begin{align}
V^\dagger H_{\mathbf n}V&=0,\label{eq:opt-axis-cross-compression}\\
V^\dagger H_{\mathbf n}H_{\mathbf m}V
&=2(\mathbf n\cdot\mathbf m)I_2 \notag\\
&\quad+\sqrt3(n_xm_x-n_ym_y)\sigma_x \notag\\
&\quad+(2n_zm_z-n_xm_x-n_ym_y)\sigma_z. \notag
\end{align}
Thus a nonzero direction corrects $\{I,H_{\mathbf n}\}$ precisely when
$n_x^2=n_y^2=n_z^2$. These are four unoriented cube-diagonal axes.
For such a direction, joint correction of
$\{I,H_{\mathbf n},H_{\mathbf m}\}$ requires
$n_xm_x=n_ym_y=n_zm_z$, which forces $\mathbf m$ to be parallel to
$\mathbf n$. Separate correctability along two allowed axes therefore does not imply a common recovery for their combined error family.
This statement concerns the first-order family, not arbitrary ordered
products from a driven evolution.

The same compression identities quantify the cost of a fixed axis error. Define $\Delta_{\mathbf n}\ge0$ by
\begin{equation}
\begin{gathered}
\Delta_{\mathbf n}^2
=3(n_x^2-n_y^2)^2+(2n_z^2-n_x^2-n_y^2)^2,\\
\chi_{\rm axis}(\mathbf n)
=|\mathbf n|^2-
\sqrt{|\mathbf n|^4-\Delta_{\mathbf n}^2/4}.
\end{gathered}
\label{eq:opt-axis-calibration-coefficient}
\end{equation}
For a fixed rescaled prior $\nu$ with finite second moment, define
$\bar x=\int x\,d\nu(x)$ and
$v_\nu=\int(x-\bar x)^2d\nu(x)$.
Proposition~\ref{thm:optimal} at $L=0$ gives
\begin{equation}
1-F_{e,\mathrm{opt}}(H_{\mathbf n},\sigma,\nu)
=v_\nu\chi_{\rm axis}(\mathbf n)\sigma^2+o(\sigma^2).
\label{eq:opt-axis-calibration-error}
\end{equation}
 Indeed $A_\star=0$. The compression $V^\dagger H_{\mathbf n}^2V$
has eigenvalues $2|\mathbf n|^2\pm\Delta_{\mathbf n}$, and
$B_\star$ is its positive square root. The two square-root eigenvalues
have variance $\chi_{\rm axis}(\mathbf n)$.
For $v_\nu>0$, every fixed nonzero direction outside the four axes
therefore restores a positive quadratic error, even if the recovery knows
that direction. This formula includes rank-one leakage; when
$\mathbf n=0$ the channel is exactly reversible. It makes no uniform claim when the axis and noise width approach zero error together.

\section{Discussion}
\label{sec:discussion}
The recovery limit separates two resources that a quantum memory uses together. The encoding and generator determine the first error amplitude that distorts logical states. Knowledge of the angle statistics determines how much of that distortion one coherent recovery can cancel. The first contribution is a squared distance from a correctable Krylov boundary after polar alignment; the second is a squared polynomial residual norm. Together with the squared propagation amplitude to the boundary, they fix the leading optimal loss. An encoding can postpone the harmful power, while calibration can reduce the residual at that power. This separation reduces the local coefficient to scalar polynomial orthogonalization and one $d\times d$ matrix square root. Physical reflection can remove the occupied-shell Petz gap, while statistical reflection independently removes unnecessary decoder couplings. The optimal coefficient extends perturbative correctability and constructive-recovery analyses~\cite{Beny2011,BarnumKnill2002,ZhengEtAl2024} by determining the leading unrestricted optimum under the stated finite-moment hypotheses. 

In the language of spread complexity~\cite{BalasubramanianEtAl2022Spread}, a correctable prefix is the initial segment of each encoded state's Lanczos chain on which the recurrence coefficients are common to all logical inputs. Their variation at the first boundary detects the physical distortion $\mathcal V$, information that no single state's spread complexity contains; the noise residual $h_m$ determines how much of it survives recovery. The distinction between coherent rotation and irreversible loss also matters for the error criterion. When $C_{\mathrm{opt}}>0$, removing a known logical rotation makes the optimal worst-input error and entanglement infidelity have the same leading order. For $m\ge2$, an asymmetric prior can otherwise leave a coherent logical error of order $\sigma^{2m-1}$. The bounds in Proposition~\ref{prop:worst-input} control the ratio without assuming that it converges.

The calibration results identify the information required to exploit that physical protection. A specified narrow prior and a class of priors compatible with small measured means are different physical inputs, since the latter can include rare finite rotations. When $\beta_r$ is the sole measured moment, Proposition~\ref{prop:calibration-prefix} gives an optimal worst-prior loss $O(\beta_r)$ exactly when the span $I,H,\ldots,H^{r-1}$ is correctable. It also determines the sharp coefficient. When the first defective power is $H^m$ and $r>m$, the adverse compatible prior changes scale as the measured mean shrinks. Under the hypotheses of Proposition~\ref{prop:calibration-mismatch}, the ordinary-moment projection then fixes both the fractional exponent and its coefficient. The dual polynomial bound identifies the angles that an adverse prior can occupy. The three-point prior for the first protected boundary $m=2$ in Eq.~\eqref{eq:calibration-quartic-factor} turns those contact points into an explicit distribution attaining the limit. So does the four-angle construction for $m=3$, $r=4$ in Eqs.~\eqref{eq:calibration-cubic-majorant}--\eqref{eq:calibration-cubic-exact-value}; it gives $\mathfrak h_{3,4}=(2-\sqrt2)/3^{3/4}$ exactly. These results develop robust-recovery principles~\cite{KosutShabaniLidar2008,TaghaviKosutLidar2010} into quantitative bounds and common decoders for the declared information class.

The nested unit-moment classes make the constants $\mathfrak h_{m,r}$ nonincreasing in $r$; their limit is the Chebyshev value for a known support interval. The order of limits matters. Weak noise is taken at each fixed $r$ before $r$ increases, and a finite periodic-moment constraint continues to permit rare finite angles. Moment control approaches the support optimum through these local constants while retaining a different set of compatible priors at finite noise. With normalized chord distance, $\beta_r=\mathrm W_{2r}(\nu,\delta_0)^{2r}$ also expresses calibration as control of transport distance from zero phase. The correctable powers determine which recovery-error scalings follow from that control, complementing the transport and sensitivity analyses of Refs.~\cite{GaoKleywegt2023,BartlEtAl2021}.

In the optical memory of Section~\ref{sec:optical-application}, the answer is explicit for a hierarchy of population means. For data from a fixed symmetric narrow prior with positive variance and finite fourth moment, the population mean $Q_2$ at calibration duration $2T/3$ 
 leaves a quadratic worst-case loss. Adding the mean $Q_1$ at duration $T/3$ 
 bounds the probability of finite rotations at fourth order through Eq.~\eqref{eq:calibration-two-fringe-tail}; the optical common decoder attains the fully informed quartic coefficient in Eq.~\eqref{eq:opt-duration-orders}. A third population mean, at duration $4T/3$, determines the exact finite-width value of the three-mean data class and the fixed-decoder score. The decoder at a singular polar crossing is a mixture chosen to protect every remaining fourth harmonic. This is an operational reason to distinguish a common recovery from an arbitrary optimizer for one adverse channel. The finite-shot analysis keeps the corresponding statistical uncertainty in the same information set. As in calibrated and adaptive recovery more generally~\cite{Danageozian2023IncompleteRecovery,IyerEtAl2025}, the available observations and their precision are part of the recovery resource.

Moment positivity organizes the compatible channels in this comparison. The shared positivity principle links the construction to moment optimization and quantum-mechanical bootstrap methods~\cite{Lasserre2001,HanHartnollKruthoff2020}. Here it constrains an angle distribution; in the quantum-mechanical bootstrap it constrains state expectations. For one commensurate phase, finite Toeplitz realization closes compatibility at the bandwidth entering the channel~\cite{YangXieStoica2016}. Increasing that bandwidth without new measurements therefore leaves the recovery information unchanged. The useful experimental question is which observation excludes the adverse channels that currently fix the guarantee.

These conclusions concern one scalar integrated angle multiplying a known finite generator and one final deterministic quantum recovery. The local coefficient fixes the rescaled prior and its moment and rank hypotheses; varying-prior limits use the uniform bounds stated separately. The optical application additionally assumes a loss-free memory and the same quasi-static rate distribution during calibration and storage. Time-dependent scalar noise still fits the general channel through its integrated angle, but relating different calibration durations then requires information about its temporal correlations.

Measurement design can use the extra-observable criterion in Eq.~\eqref{eq:calibration-measurement-criterion} to identify data that eliminate all current adverse priors. The confidence multipliers quantify the value of improving a selected observable's precision. Optimizing the observable, duration and shot allocation together would require the attainable measurement set and its noise model. The shared-phase coincidence protocol demonstrates a positive estimator of the required periodic moment, with an explicit cost in probes and independent noise draws. It leaves open which implementation is most efficient for a given memory. Drift or generator uncertainty can be incorporated by enlarging the compatible channel class, if their calibration constraints remain justified.

Implementing the selected recovery is a further physical task. The unrestricted optical optimum gives a target against which a specified apparatus can be assessed. The population analyzers use two-mode operations and photon counts, whereas passive-optical realization of every attaining recovery remains unresolved. Adding photon loss also changes the fixed-number cyclic space and the error family. The original code was designed for amplitude damping~\cite{ChuangLeungYamamoto1997}; joint control of coherent rotations and loss should retain the corresponding cross-error overlaps. The axis criterion already shows why individually correctable perturbations need not admit one common decoder. These are physical implementation and model questions beyond the present loss-free rotation channel.

Extending the noise model to several fluctuating coefficients requires a matrix organization of statistical polynomial degrees. Positive multivariate moments and quotient constructions can encode noise support, but the triangle example proves that their dimension counts alone cannot determine recovery. Even commuting generators can act differently within a multidimensional new degree. A useful extension of the interaction-algebra viewpoint~\cite{KnillLaflammeViola2000} should retain that logical Gram information while exploiting polynomial positivity to constrain the prior. Noncommuting time-dependent noise additionally carries ordered products of generators. The one-angle theory identifies both the projection mechanism to preserve and the point at which a scalar residual ceases to suffice.

For a growing memory, the remaining question is the resource cost of the exact finite-dimensional formulation. Conserved logical sectors reduce the optimization, but the sector number, reachable dimension and harmonic bandwidth can still grow. Acyclic coherence graphs attain the pairwise polar bound, and dividing out the correctable periodic zero lowers the polynomial degree in the positivity constraint by the corrected order. Positive-semidefinite completion on chordal graphs has a characterized scalar-matrix structure~\cite{GroneEtAl1984}. For real scalar matrices, positive-definite completion has also been analyzed on cycle-completable graphs~\cite{BarrettJohnsonLoewy1996}. The recovery problem additionally optimizes positive-semidefinite block overlaps against fidelity and any imposed locality constraints. 
 These earlier completion results suggest reductions whose compatibility with that objective must be checked. Any such reduction must retain both the optimality bound and one realizable common decoder. For the present one-angle problem, the Krylov boundary identifies what the encoding can protect, and the calibrated moment set determines how much of that protection one recovery can guarantee.

\section*{Acknowledgments}
A.C. acknowledges the support and hospitality of the International Centre for Theoretical Sciences (ICTS), Bengaluru, during participation in the program Monopole Moduli 2026 (ICTS/MM2026/09).

\section*{Author contribution and use of AI tools}
The author acknowledges the use of the free versions of ChatGPT (GPT-5 series) and Claude (Claude 5 series) for assistance with organizing the literature, secondary numerical cross-checking, and LaTeX formatting. All primary research, derivations, and conclusions remain the sole work of the author, who retains full responsibility for the manuscript.

\section*{Data availability}
Data will be shared on request.

\appendix

\section*{Appendices}

\section{Explicit attaining recovery and remainder}
\label{app:attainment-details}
The main-text argument identified the harmful boundary amplitude and the noise polynomial that should replace it. This appendix constructs the actual small unitary that performs that coherent subtraction. It also controls the remainder in the averaged fidelity and treats singular leakage, so the attainment statement refers to a complete physical recovery.

 \label{app:attaining-recovery}

\begin{proof}[Proof of Proposition~\ref{thm:optimal}, completion]
Choose an isometry $U_0$ that decodes the old shells as in
Eq.~\eqref{eq:app-forced-prefix-decoding} and aligns the leakage,
\begin{equation}
 U_0E_\star=|e_m\rangle\otimes B_\star,
 \qquad e_m\perp e_0,\ldots,e_L.
 \label{eq:app-polar-leakage-decoding}
\end{equation}
To construct it when $B_\star$ is singular, use the polar decomposition
$E_\star=J_EB_\star$ on $\operatorname{Ran}B_\star$ and map
$J_Ev$ to $e_m\otimes v$ there. This is compatible with the old-shell
action because the physical subspaces are orthogonal.
An enlarged auxiliary space completes the map to an isometry on
$\mathcal H$; no full-rank leakage shell is required.

Define the traceless boundary column and Taylor coefficients
\begin{gather}
\mathcal D=|e_L\rangle\otimes(A_\star)_c+|e_m\rangle\otimes(B_\star)_c, \label{eq:app-boundary-amplitude}
\\
c=\frac{(-i)^mg_L}{m!},\qquad
c_n=\frac{(-i)^ng_n}{n!}\quad(0\le n\le L). \notag
\end{gather}
Equations~\eqref{eq:app-l2-taylor} and
\eqref{eq:app-first-uncontrolled-power} give
\begin{equation}
 \mathcal Q(U_0e^{-i\sigma xH}V)
 =c\sigma^mx^m\mathcal D+o_{L^2}(\sigma^m).
 \label{eq:app-uncalibrated-amplitude}
\end{equation}
We replace $x^m$ by its orthogonal residual $q_m$ using a small unitary
on the recovery output.
The couplings below route each degree-$n$ syndrome amplitude into the occupied-shell and leakage defects. The ratio of Taylor coefficients fixes their phase and normalization; pairing each coupling with its negative adjoint keeps the adjustment unitary.

For $0\le n\le m-2$, set
\begin{align}
R_n={}&|e_L\rangle\langle e_n|\otimes(A_\star)_c
 +|e_m\rangle\langle e_n|\otimes(B_\star)_c, \notag\\
\alpha_n={}&(c/c_n)d_n\sigma^{m-n}, \label{eq:app-lower-shell-generators}\\
G_n={}&-\alpha_nR_n+\overline{\alpha_n}R_n^\dagger. \notag
\end{align}
The sum over these indices is empty for $m=1$.
For the last occupied shell, $L=m-1$,
$\alpha_L=-id_L\sigma/m$ is purely imaginary.
Its diagonal and off-diagonal parts must be treated separately,
\begin{equation}
\begin{split}
G_L={}&-\alpha_L|e_L\rangle\langle e_L|\otimes(A_\star)_c\\
&-\alpha_L|e_m\rangle\langle e_L|\otimes(B_\star)_c\\
&+\overline{\alpha_L}|e_L\rangle\langle e_m|\otimes(B_\star)_c.
\end{split}
\label{eq:app-last-shell-generator}
\end{equation}
The first term is already anti-Hermitian and is included once.
All $G_n$ are anti-Hermitian. Define
\begin{equation}
\begin{gathered}
G_\sigma=\sum_{n=0}^LG_n,\qquad U_\sigma=e^{G_\sigma}U_0,\\
\mathcal R_\sigma(\rho)=\Tr_{\mathcal J}(U_\sigma\rho U_\sigma^\dagger).
\end{gathered}
\label{eq:app-attaining-recovery}
\end{equation}
Thus $\mathcal R_\sigma$ is a physical CPTP recovery for every real
$\sigma$.

The leading amplitude in shell $e_n$ before this rotation is
$c_n\sigma^nx^nI_d$.
The forward action of $G_n$ on it is
$-c d_n\sigma^mx^n\mathcal D$.
For $n\le m-2$, the reverse action first reaches an occupied shell at
order $\sigma^{m-n+L}$, which is at least $\sigma^{m+1}$.
The reverse leakage term acts on an amplitude of order $\sigma^m$ and
is smaller still. Higher terms in each old-shell amplitude add at least
one power of $\sigma$.
Finally $\|G_\sigma\|=O(\sigma)$ and
$G_\sigma U_0e^{-i\sigma xH}V=O_{L^2}(\sigma^m)$, so two or more
generator actions contribute $O_{L^2}(\sigma^{m+1})$.
It follows that
\begin{equation}
 \mathcal Q(U_\sigma e^{-i\sigma xH}V)
 =c\sigma^m q_m(x)\mathcal D+o_{L^2}(\sigma^m).
 \label{eq:app-attaining-amplitude}
\end{equation}
The finite moment of order $2m$ controls every polynomial retained in
this expansion and the Taylor remainder in
Eq.~\eqref{eq:app-l2-taylor}.
Since
$\|\mathcal D\|_{2,d}^2=
\operatorname{Var}_d(A_\star)+\operatorname{Var}_d(B_\star)$,
squaring Eq.~\eqref{eq:app-attaining-amplitude} proves the upper bound
matching Eq.~\eqref{eq:app-optimal-converse}.
If $h_m=0$, then $q_m=0$ almost surely and the same construction has
error $o(\sigma^{2m})$.
\end{proof}

\section{Control of the Petz singular hierarchy}
\label{app:petz-details}
The Petz coefficient depends on singular values that approach zero at different powers of the angle width. A square-root expansion that discarded the smaller scales would miss part of the leading error. We bound the absolute-value map and separate those scales before evaluating the nonscalar Gram blocks used in the main text. The final subsection compares the resulting coefficient with finite-model evaluations, including noncommuting defects and singular leakage.

\subsection{Singular reduction of the Gram square root}

\begin{proof}[Proof of Proposition~\ref{thm:petz}, reduction]
We use the singularity-safe estimate
\begin{equation}
 \big\||X|-|Y|\big\|_{\mathrm{HS}}
 \le\sqrt2\,\|X-Y\|_{\mathrm{HS}}.
 \label{eq:app-absolute-value-lipschitz}
\end{equation}
For Hermitian $A,B$ with eigenvalues $a_i,b_j$ and eigenvectors
$u_i,v_j$, the spectral-basis identity
\begin{equation}
 \|f(A)-f(B)\|_{\mathrm{HS}}^2
 =\sum_{i,j}|f(a_i)-f(b_j)|^2|\langle u_i,v_j\rangle|^2
 \label{eq:app-hs-functional-calculus}
\end{equation}
shows that a scalar $1$-Lipschitz function is $1$-Lipschitz in this norm.
Apply this to $f(a)=|a|$ and the Hermitian dilations
$\left(\begin{smallmatrix}0&X\\X^\dagger&0\end{smallmatrix}\right)$,
$\left(\begin{smallmatrix}0&Y\\Y^\dagger&0\end{smallmatrix}\right)$.
Their absolute values are block diagonal, with blocks
$|X^\dagger|,|X|$ and $|Y^\dagger|,|Y|$ respectively, proving
Eq.~\eqref{eq:app-absolute-value-lipschitz}.
Finite-rank approximation extends the same estimate to the
Hilbert--Schmidt column operators used here.

\paragraph{Separating the scalar singular scales.}
The singular-scale reduction assumes $h_m>0$; the main proof treats $h_m=0$ separately.

Apply $U_\sigma$ from Eq.~\eqref{eq:app-attaining-recovery} to the
physical output, without taking its partial trace.
It is an isometry, so $\mathsf N^\dagger \mathsf N$ and the Petz score remain unchanged.
Taking the componentwise scalar logical trace and using
Eq.~\eqref{eq:app-attaining-amplitude} gives
\begin{equation}
\begin{gathered}
\begin{aligned}
\mathsf N={}&f_\sigma\otimes I_d+o_{\rm HS}(\sigma^m)\\
 &+\widetilde a_\sigma\bigl(|e_L\rangle(A_\star)_c
 +|e_m\rangle(B_\star)_c\bigr)\langle\phi_m|,
\end{aligned}\\
\widetilde a_\sigma=c\sqrt{h_m}\sigma^m.
\end{gathered}
\label{eq:app-petz-column-expansion}
\end{equation}
Here $f_\sigma$ is a scalar operator from the noise-label space to
$\mathcal J$. The remainder includes all other Kraus columns
collectively, by Parseval's identity.

Its old-shell row functions satisfy
\begin{equation}
 (f_\sigma)_n(x)=c_n\sigma^nx^n+O_{L^2}(\sigma^{n+1})
 \quad(0\le n\le L),
 \label{eq:app-scalar-old-rows}
\end{equation}
whereas
\begin{equation}
(f_\sigma)_m(x)=c\beta\sigma^mx^m+o_{L^2}(\sigma^m),
 \quad \beta=\operatorname{tr}_d(B_\star).
 \label{eq:app-scalar-new-row}
\end{equation}
All remaining scalar rows are $o_{L^2}(\sigma^m)$.
The centered first-order generators have zero logical trace; their
products enter beyond the leading order of the corresponding row.
They therefore preserve the terms in
Eqs.~\eqref{eq:app-scalar-old-rows}--\eqref{eq:app-scalar-new-row}.

Gram--Schmidt on these actual row functions gives a factorization
$f_\sigma=L_\sigma D_\sigma\Psi_\sigma$ after discarding
$o_{\mathrm{HS}}(\sigma^m)$ rows.
The rows of $\Psi_\sigma$ are orthonormal, and the lower triangular
factor satisfies $L_\sigma=I+O(\sigma)$.
Indeed an overlap of a row of order $\sigma^n$ with the normalized
residual of row $k<n$ has size $O(\sigma^n)$, while that earlier
residual has norm of order $\sigma^k$.
The diagonal entries satisfy
\begin{equation}
\begin{aligned}
(D_\sigma)_{nn}&\sim|c_n|\sqrt{h_n}\sigma^n\quad(n\le L),\\
(D_\sigma)_{mm}&\sim|c|\beta\sqrt{h_m}\sigma^m\quad(\beta>0).
\end{aligned}
\label{eq:app-scalar-singular-scales}
\end{equation}
Multiplicative singular-value bounds for $L_\sigma D_\sigma$ transfer
these asymptotics to the singular values of $f_\sigma$.
To identify the singular vectors, let $u_{n,\sigma}^{\rm sv}$ and
$v_{n,\sigma}^{\rm sv}$ be normalized left and right singular vectors
of the retained scalar operator, with singular value $s_n\asymp\sigma^n$,
for $0\le n\le L$ and also $n=m$ when $\beta>0$.
Let $\Pi_n^{\rm aux}$ project onto
$\operatorname{span}\{e_0,\ldots,e_n\}$, and let
$\Pi_{n,\sigma}^{\rm row}$ project in the noise-label space onto
$\operatorname{span}\{f_\sigma^\dagger e_0,\ldots,f_\sigma^\dagger e_n\}$.
The singular-vector equations
$f_\sigma v_{n,\sigma}^{\rm sv}=s_nu_{n,\sigma}^{\rm sv}$ and
$f_\sigma^\dagger u_{n,\sigma}^{\rm sv}=s_nv_{n,\sigma}^{\rm sv}$ give
\begin{equation}
\begin{gathered}
\|(I-\Pi_{n,\sigma}^{\rm row})v_{n,\sigma}^{\rm sv}\|,\quad
\|(I-\Pi_n^{\rm aux})u_{n,\sigma}^{\rm sv}\|\\
\le \frac{\|(I-\Pi_n^{\rm aux})f_\sigma\|_\infty}{s_n}
=O(\sigma).
\end{gathered}
\end{equation}
For the first bound, the row projection removes
$f_\sigma^\dagger\Pi_n^{\rm aux}$; for the second, project the first
singular-vector equation onto the later auxiliary rows.
Those rows have norm $O(\sigma^{n+1})$. At the last retained row the
numerator is zero. Rescaling the first $n+1$ rows by their leading powers
of $\sigma$ gives independent limiting monomials, since
$h_0,\ldots,h_n>0$. Thus $\Pi_{n,\sigma}^{\rm row}$ converges to the
projection onto $\operatorname{span}\{1,x,\ldots,x^n\}$.
Starting at $n=0$, orthogonality to the earlier singular vectors now
forces $v_{n,\sigma}^{\rm sv}$ to converge to $\phi_n$ and
$u_{n,\sigma}^{\rm sv}$ to $e_n$, each up to phase. The estimate controls
the discarded later-row component; convergence of the row spaces
completes the argument without an additional noise-moment assumption.

When $\beta=0$, positivity implies $B_\star=0$.
After discarding the $o_{\mathrm{HS}}(\sigma^m)$ scalar rows, take the
normalized projection of $\phi_m$ onto the complement of the old
right singular space. It tends to $\phi_m$ and is exactly unused in
the truncated scalar operator. This handles the case in which no
new scalar singular value exists.

 Scalar rotations of the output auxiliary basis and of the noise labels
preserve Eq.~\eqref{eq:petz-gram}.
They reduce Eq.~\eqref{eq:app-petz-column-expansion}, up to
$o_{\mathrm{HS}}(a_\sigma)$, to scalar blocks and
\begin{gather}
T_\sigma=\begin{pmatrix}s_LI_d&a_\sigma A_{\rm sv}\\0&a_\sigma B_\star\end{pmatrix}, \label{eq:app-triangular-petz-block}
\\
a_\sigma=|\widetilde a_\sigma|>0,\quad a_\sigma/s_L=O(\sigma), \notag\\
A_{\rm sv}=e^{i\varphi_{\rm sv}}(A_\star)_c. \notag
\end{gather}
 The phase $\varphi_{\rm sv}$ depends on the scalar singular-vector conventions
and does not affect the norm of $A_{\rm sv}$.
\end{proof}

\subsection{Finite-model checks of the Petz coefficient}
\label{app:petz-numerics}
 For these checks the Hamiltonian has $L+2$ syndrome shells, encoding at shell zero. Earlier occupied-shell diagonals are $\tfrac3{10}I_d$; the final unused-shell diagonal is $\tfrac{17}{100}I_d$. The scalar hoppings are $b_1=4/5$ and, when needed, $b_2=11/10$. The boundary diagonal is $A_\star$ and its positive hopping is $B_\star$. For $d=2$, set
\begin{equation}
 A=\tfrac1{10}I+\tfrac15 Z+\tfrac3{10}X,\quad
 B=\begin{pmatrix}11/10&1/5\\1/5&3/5\end{pmatrix}.
\end{equation}
 The four cases use $(L,A_\star,B_\star)=(0,A,0)$, $(0,\tfrac1{10}I,B)$, $(1,A,B)$, and $(2,A,\operatorname{diag}(1,0))$. The two-logical-qubit case has $d=4$, $L=1$, and
\begin{align}
 A_\star&=\tfrac15 Z\otimes I+\tfrac3{10}X\otimes I+\tfrac3{20}Z\otimes X,\\
 B_\star&=I_4+\tfrac1{10}X\otimes I+\tfrac15 I\otimes Z+\tfrac3{25}Z\otimes Z.\notag
\end{align}
 The prior has nodes $(-2,-1/2,3/4,2)$ and weights $(1/5,2/5,1/10,3/10)$. We evaluated Eq.~\eqref{eq:petz-gram} with propagators $e^{-is_{\rm test}xH}V$ at $s_{\rm test}=\pm1/50,\pm1/100,\pm1/200,\pm1/400$, using 80 decimal digits. At positive width $\sigma=|s_{\rm test}|$, the negative coordinate evaluates the reflected prior; the leading coefficient is invariant under this reflection. The Gram square-root residuals were below $10^{-60}$. We averaged the two signed infidelities after division by $\sigma^{2m}$ and extrapolated in positive $\sigma$, cancelling powers one through three. The predicted coefficient is recovered with relative error below $2\times10^{-6}$ in all five cases. This procedure permits the $|\sigma|$ correction noted after Proposition~\ref{thm:petz} without assuming an even analytic remainder.

\section{Finite-alphabet Gram estimate}
\label{app:alphabet-details}
A vanishing noise-polynomial norm removes the leading term but does not by itself determine the next order. The following Gram estimate uses the fixed finite list of angles to control that remainder. The additional cancellation comes from the antisymmetry of the first nonscalar overlap term on the final scalar singular direction.
 \label{app:alphabet}
\begin{proof}[Proof of Proposition~\ref{prop:finite-alphabet}]

Assume a nondegenerate physical prefix and fixed distinct noise nodes $x_1,\ldots,x_r$ with positive weights $p_j$, where $2\le r\le L+1$. Define $E_j=\sqrt{p_j}e^{-i\sigma x_jH}V$, $\mathsf N_\sigma=[E_1\ \cdots\ E_r]$ and
\begin{equation}
\begin{gathered}
\mathsf G_{\rm at}=\mathsf N_\sigma^\dagger \mathsf N_\sigma,\quad
 \bar{\mathsf G}_{\rm at}=d^{-1}\Tr_{\rm logical}\mathsf G_{\rm at},\\
\mathsf D_{\rm at}= \bar{\mathsf G}_{\rm at}\otimes I_d,\quad \Delta_\sigma=\mathsf G_{\rm at}-\mathsf D_{\rm at}.
\end{gathered}
\end{equation}
The bar denotes the normalized trace over the logical register, leaving an $r\times r$ Gram matrix on the noise labels. With $s=2L+1$, the first possible nonscalar term is
\begin{equation}
\begin{gathered}
\Delta_\sigma=\frac{(-i)^s\sigma^s}{s!}\mathsf A_s\otimes(M_s)_c+O(\sigma^{s+1}),\\
(\mathsf A_s)_{jk}=\sqrt{p_jp_k}(x_k-x_j)^s.
\end{gathered}
\label{eq:alphabet-antisym}
\end{equation}
 $\mathsf A_s$ is real antisymmetric. The scalar Gram eigenvalues, in decreasing order, obey $\ell_j(\sigma)\asymp\sigma^{2j}$ for $0\le j<r$. Projection onto the first $r$ physical shells has the matrix $\operatorname{diag}(1,\sigma,\ldots,\sigma^{r-1})(\mathsf V_{\rm at}+O(\sigma))$, where $(\mathsf V_{\rm at})_{nj}=(-i)^ng_n\sqrt{p_j}x_j^n/n!$ is an invertible weighted Vandermonde matrix. This bounds singular values from below. A rank-$j$ Taylor approximation through degree $j-1$ has error $O(\sigma^j)$ and gives the upper bounds.

The smallest-eigenvalue unit vector approaches, up to phase, the real null vector $u_0$ of the first $r-1$ weighted Vandermonde rows, with error $O(\sigma)$. To see this, project its column-map image onto each shell $n<r-1$; its norm $O(\sigma^{r-1})$ forces the corresponding row products to vanish at least as $O(\sigma)$. The common null space is one dimensional. Thus the final diagonal block of $\Delta_\sigma$ in the scalar eigenbasis is $O(\sigma^{2L+2})$, since $u_0^{\mathsf T}\mathsf A_su_0=0$.

 Set $\mathsf R_{\rm at}=\sqrt{\mathsf G_{\rm at}}-\sqrt{\mathsf D_{\rm at}}$. The exact mixed Sylvester identity $\sqrt{\mathsf G_{\rm at}} \mathsf R_{\rm at}+\mathsf R_{\rm at}\sqrt{\mathsf D_{\rm at}}=\Delta_\sigma$ first gives $\mathsf R_{\rm at}=O(\sigma^{2L-r+2})$. The map $\mathcal M_{\mathsf D_{\rm at}}(X)=\sqrt{\mathsf D_{\rm at}} X+X\sqrt{\mathsf D_{\rm at}}$ divides block $jk$ by $\sqrt{\ell_j(\sigma)}+\sqrt{\ell_k(\sigma)}$ upon inversion. Every denominator except the final diagonal is bounded below at order $\sigma^{r-2}$; the final diagonal has the additional numerator power derived above. Therefore
 \begin{equation}
\begin{aligned}
  \mathcal M_{\mathsf D_{\rm at}}^{-1}(\Delta_\sigma)&=O(\sigma^{2L-r+3}),\\
  \mathcal M_{\mathsf D_{\rm at}}^{-1}(\mathsf R_{\rm at}^2)&=O(\sigma^{4L-3r+5}).
\end{aligned}
\end{equation}
 Their difference is $2(L-r+1)\ge0$ for $r\le L+1$, including equality at the endpoint $r=L+1$. The quadratic remainder therefore contributes at order $\sigma^{2L-r+3}$ or smaller. From $\mathsf R_{\rm at}=\mathcal M_{\mathsf D_{\rm at}}^{-1}(\Delta_\sigma-\mathsf R_{\rm at}^2)$ we obtain $\mathsf R_{\rm at}=O(\sigma^{2L-r+3})$. Also $\|\Delta_\sigma\|=o(\lambda_{\min}\mathsf D_{\rm at})$, so $\mathsf G_{\rm at}$ is positive for sufficiently small nonzero $\sigma$. Equation~\eqref{eq:petz-gram}, together with $\mathcal Q_c\sqrt{\mathsf D_{\rm at}}=0$, now proves Eq.~\eqref{eq:alphabet-order}. The matrix dimensions remain fixed throughout; there is no uniformity under colliding nodes.
\end{proof}

\section{Contamination and constrained approximation}
\label{app:contamination-details}
Rare angles outside a finite alphabet break the polynomial identity that improved recovery. We evaluate the resulting least-squares problem without treating its singular atomic moment matrix as invertible. Separating its range and kernel identifies the linear contamination cost and the correction to it.
 \label{app:contamination}
\begin{proof}[Proof of Proposition~\ref{prop:contamination}]
Let $s_*$ minimize the integral in Eq.~\eqref{eq:contamination}, set $q_*=\pi_{\rm at}s_*$ and let $z=(1,x,\ldots,x^{m-1})^{\mathsf T}$. Denote the degree-$(m-1)$ moment matrices of the atomic and contaminating distributions by $\mathsf H_{\rm at}$ and $\mathsf H_c$, and put $b=\int zq_*\,d\nu_c$. Every monic competitor is $q_*+z^{\mathsf T}v$. Its squared norm under $\nu_\epsilon$ is
\begin{equation}
 \epsilon C_*+2\epsilon b^{\mathsf T}v
 +v^{\mathsf T}\mathsf H_\epsilon v,\quad
 \mathsf H_\epsilon=(1-\epsilon)\mathsf H_{\rm at}+\epsilon\mathsf H_c.
\end{equation}
Minimization gives the exact identity
\begin{equation}
 h_m(\nu_\epsilon)=\epsilon C_*
 -\epsilon^2b^{\mathsf T}\mathsf H_\epsilon^{-1}b.
 \label{eq:contamination-identity}
\end{equation}
Vectors in $\ker\mathsf H_{\rm at}$ represent lower-degree polynomials vanishing at every atom. Stationarity of the constrained minimizer makes $b$ orthogonal to this kernel. In the range/kernel decomposition, the range block of $\mathsf H_\epsilon$ is order one, its off-diagonal blocks are order $\epsilon$ and its positive kernel block is order $\epsilon$. Its range Schur complement is therefore $\mathsf H_{\rm at}|_{\rm ran}+O(\epsilon)$, so
\begin{equation}
 h_m(\nu_\epsilon)=\epsilon C_*
 -\epsilon^2b^{\mathsf T}\mathsf H_{\rm at}^+b+O(\epsilon^3).
\end{equation}
This proves Eq.~\eqref{eq:contamination} without assuming an invertible atomic moment matrix.

 For unknown contamination on $\mathcal K$, minimize instead
\begin{equation}
 (1-\epsilon)\sum_jp_jq(x_j)^2+\epsilon\|q\|_{\infty,\mathcal K}^2
\end{equation}
 over monic degree-$m$ polynomials. Let $\mathcal W(v)$ be the smallest squared uniform norm under the constraints $q(x_j)=v_j$. Interpolation makes every $v$ feasible. The compact-set norm controls the finite polynomial coefficient vector, so the minimum exists and $\mathcal W$ is a finite convex function. At zero its value is $C_{\mathcal K,\rm at}=\min_s\|\pi_{\rm at}s\|_{\infty,\mathcal K}^2$. Choose a subgradient $g$ at zero and put $P_{\rm at}=\operatorname{diag}(p_j)$. The affine lower bound $\mathcal W(v)\ge C_{\mathcal K,\rm at}+g^{\mathsf T}v$ and the admissible choice $v=0$ give
\begin{equation}
 \epsilon C_{\mathcal K,\rm at}
 -\frac{\epsilon^2g^{\mathsf T}P_{\rm at}^{-1}g}{4(1-\epsilon)}
 \le\mathfrak h_\epsilon\le\epsilon C_{\mathcal K,\rm at}.
\end{equation}
This proves Eq.~\eqref{eq:robust-contamination} for fixed nodes, weights and compact support.
\end{proof}

\section{A two-generator obstruction to degree-only recovery}
\label{app:triangle-obstruction}

This example changes the noise model to two commuting generators and a vector-valued angle. It tests whether the univariate finite-alphabet argument can be extended using the number of independent polynomial noise functions alone. Let
\begin{equation}
\begin{gathered}
\mathcal H=\mathbb C^3\otimes\mathbb C^2,\quad V=|s\rangle\otimes I_2,\\
|s\rangle=(1,1,1)^{\mathsf T}/\sqrt3,\\
\mathbf e_j=(\cos(2\pi j/3),\sin(2\pi j/3)),\\
H_\ell=\operatorname{diag}(e_{0\ell},e_{1\ell},e_{2\ell})\otimes Z,\\
j=0,1,2,\qquad \ell=1,2.
\end{gathered}
\label{eq:triangle-generators}
\end{equation}
The generators commute, and direct averaging over the three directions gives
\begin{equation}
V^\dagger H_\ell V=0,\qquad
V^\dagger H_\ell H_kV=\tfrac12\delta_{\ell k}I_2.
\end{equation}
Hence one recovery corrects $\{I,H_1,H_2\}$ exactly. Let the vector angle take the three values
\begin{equation}
\mathbf x_k=(\cos(\vartheta+2\pi k/3),\sin(\vartheta+2\pi k/3))
\end{equation}
with equal probabilities, and evolve by $\exp[-i\sigma(x_1H_1+x_2H_2)]$. The points are noncollinear, so $1,x_1,x_2$ already span every function on this alphabet. In particular, every degree-two statistical residual against degree at most one vanishes.

The logical $Z$ labels remain in orthogonal sectors. Their syndrome vectors are complex conjugates, and their three noise branches are cyclic shifts of $3^{-1/2}(e^{-i\sigma\cos(\vartheta+2\pi j/3)})_j$, up to a harmless relabeling. For $r=0,1,2$, define their discrete Fourier amplitudes by
\begin{equation}
v_r=\frac13\sum_{j=0}^2 e^{-i\sigma\cos(\vartheta+2\pi j/3)}
 e^{-2\pi i rj/3}.
\end{equation}
Fourier transformation of the cross-sector matrix gives singular values $|v_0|^2,|v_1v_2|,|v_1v_2|$. Parseval gives $\sum_r|v_r|^2=1$. The two-sector case of Proposition~\ref{prop:sector-recovery} therefore gives the exact unrestricted optimum
\begin{equation}
1-F_{e,\mathrm{opt}}=\tfrac12(|v_1|-|v_2|)^2.
\label{eq:triangle-exact}
\end{equation}
The small-width amplitudes are
\begin{equation}
\begin{aligned}
v_1&=-\frac{i\sigma e^{i\vartheta}}2-\frac{\sigma^2e^{-2i\vartheta}}8+O(\sigma^3),\\
v_2&=-\frac{i\sigma e^{-i\vartheta}}2-\frac{\sigma^2e^{2i\vartheta}}8+O(\sigma^3).
\end{aligned}
\end{equation}
Their modulus difference gives Eq.~\eqref{eq:triangle-obstruction}; exchanging $\sigma$ and $-\sigma$ exchanges the two moduli, so the loss has an even expansion. At $\vartheta=\pi/6$, the three generator values are $\sqrt3/2,0,-\sqrt3/2$, and, for $|\sqrt3\sigma/2|<2\pi/3$, direct substitution gives
\begin{equation}
1-F_{e,\mathrm{opt}}=\frac29[1-\cos(\sqrt3\sigma/2)]^2.
\end{equation}
At $\vartheta=0$, two generator values coincide and $|v_1|=|v_2|$, giving exact recovery for every width.

The univariate proof used a one-dimensional last polynomial degree, on which the leading antisymmetric Gram contribution vanished. The triangle has two new degree-one directions. The analogous matrix contribution can act within their span. Counting polynomial functions captures the noise alphabet but leaves precisely this recovery obstruction undetermined.

\section{Finite moments and recovery conventions}
\label{app:calibration-moments}

The single-circle moment problem closes at finite order. To see sufficiency of Eq.~\eqref{eq:calibration-moment-set}, realize its positive Toeplitz matrix as a Gram matrix of vectors \(x_0,\ldots,x_K\) in a space of dimension \(q=\operatorname{rank}\mathsf T_K\). The shift \(x_j\mapsto x_{j+1}\), for \(0\le j<K\), preserves their pairwise inner products because the matrix is Toeplitz. It is well defined on their span because a zero-norm linear combination remains zero-norm after the shift. Its domain and image have equal dimension, so it extends to a unitary \(U\) on the \(q\)-dimensional Gram space. In particular, \(x_j=U^jx_0\) for every \(j\le K\).

Write the spectral decomposition as \(U=\sum_\ell e^{i\theta_\ell}\Pi_\ell\). With the inner product conjugate-linear in its first argument,
\begin{equation}
 \langle x_i,x_j\rangle
 =\langle x_0,U^{j-i}x_0\rangle
 =\sum_\ell
   e^{-i(i-j)\theta_\ell}\|\Pi_\ell x_0\|^2.
 \label{eq:calibration-atomic-construction}
\end{equation}
The weights are nonnegative and sum to \(\|x_0\|^2=1\). They therefore define the desired circle probability measure, supported on at most \(q\le K+1\) phases. A zero-mass residual matrix has the zero measure as its representation. The construction applies also to unnormalized positive Toeplitz residuals, a fact used in the atom bound.

We next check the Choi pairing in Eq.~\eqref{eq:calibration-payoff}. For a deterministic phase let \(B_\theta=U_\theta V\). Entanglement fidelity is
\begin{equation}
 f_{\mathcal R}(\theta)
 =\frac1{d^2}\sum_\alpha
       |\operatorname{Tr}(R_\alpha B_\theta)|^2.
 \label{eq:calibration-kraus-trace}
\end{equation}
With \(\mathsf b_\theta=\operatorname{vec}(B_\theta^\dagger)\), the specified tensor order gives
\(\langle \mathsf b_\theta|R_\alpha\rangle\!\rangle
=\sum_{l,i}(B_\theta)_{il}(R_\alpha)_{li}
=\operatorname{Tr}(R_\alpha B_\theta)\).
Since \(\mathsf b_\theta=\sum_a e^{in_a\theta}\mathsf b_a\), averaging
\(|\mathsf b_\theta\rangle\langle \mathsf b_\theta|/d^2\)
gives exactly Eq.~\eqref{eq:calibration-payoff}. The partial trace in Eq.~\eqref{eq:calibration-recovery-choi} equals
\((\sum_\alpha R_\alpha^\dagger R_\alpha)^{\mathsf T}\),
which is \(I_P\) for a trace-preserving recovery. Useful normalization checks are
\begin{equation}
 \operatorname{Tr}\mathsf W=\frac1d,\qquad
 \operatorname{Tr}\!\left[
       \frac{I_d\otimes I_P}{d}\,\mathsf W\right]=\frac1{d^2}.
 \label{eq:calibration-choi-checks}
\end{equation}
The second expression is the fidelity of a recovery replacing every input by the maximally mixed logical state.

\section{Polynomial witnesses and singular calibration}
\label{app:calibration-witnesses}

For a fixed recovery, minimizing its fidelity over Eq.~\eqref{eq:calibration-moment-set} is a finite semidefinite moment problem. Its dual consists of the pointwise lower bounds in Eq.~\eqref{eq:calibration-polynomial-witness}. The correspondence follows by averaging a Gram polynomial and using
\(\mathbb E[v_K^\dagger Zv_K]=\operatorname{Tr}(Z\mathsf T_K)\).
Every real trigonometric polynomial of degree at most \(K\) has a Hermitian Gram representation; its positivity on the circle permits a positive one by Fej\'er--Riesz factorization.

There is no duality gap, including at singular moment data. To verify a sufficient strict-feasibility condition on the witness side, take any Hermitian coefficient-matching Gram matrix for \(f_{\mathcal R}\). Adding a sufficiently large multiple of the identity makes it positive definite and changes the polynomial by a constant, since \(v_K^\dagger I v_K=K+1\). This is absorbed by a sufficiently negative \(\lambda_0\), with the other multipliers set to zero. Strict dual feasibility and compactness of the nonempty primal moment set give equality of the primal minimum and dual supremum. Taking the best recovery gives Eq.~\eqref{eq:calibration-minimax-sdp}. The supremum over finite multipliers may nevertheless be unattained.

For an explicit boundary example, take an unencoded qubit and
\begin{equation}
 U_\theta=\operatorname{diag}(1,e^{-i\theta}),\qquad
 \mathbb E_\nu [s(\theta)^2]=0.
 \label{eq:calibration-singular-example}
\end{equation}
Positivity of \(s^2\) restricts the prior to the zero phase, so identity recovery has fidelity one. Any recovery perfect on this full unencoded input is the identity channel, whose loss at a general phase is \(s\). An attained target-one polynomial witness would require
\begin{equation}
 -s-\lambda_1s^2\ge0\qquad(0\le s\le1),
 \label{eq:calibration-unattained-multiplier}
\end{equation}
which fails for all finite \(\lambda_1\) near \(s=0\). Targets \(1-\eta\), for every \(\eta>0\), are feasible because
\begin{equation}
 \eta-s+\frac{s^2}{4\eta}
 =\frac{(s-2\eta)^2}{4\eta}\ge0.
 \label{eq:calibration-limiting-witness}
\end{equation}
Thus the physical optimum is attained and the witness supremum equals it, while the unreduced multipliers diverge. Using the support implied by the zero moment removes the redundant boundary description. A strictly positive calibrated Toeplitz matrix is a sufficient condition for witness attainment in the unreduced moment problem.

When an optimal witness does attain its value, complementary slackness has a physical interpretation. If \(\widehat\nu_*,Y_*,J_*,Z_*\) attain the common value, then
\begin{align}
 (I_d\otimes Y_*-\mathsf W(\widehat\nu_*))J_*&=0,\nonumber\\
 \mathsf T_K(\widehat\nu_*)Z_*&=0.
 \label{eq:calibration-complementarity}
\end{align}
These identities follow from zero trace pairing of positive matrices. Every representing adverse prior is supported on zeros of
\(f_{\mathcal R_*}-\sum_j\lambda_jg_j^{\rm cal}\),
because its nonnegative integral is zero. A nonzero such residual of degree \(K\) has at most \(K\) distinct circle zeros because its Fej\'er--Riesz factor has degree at most \(K\). If \(\mathsf T_K(\widehat\nu_*)\) is positive definite, the second identity forces \(Z_*=0\), and the common decoder's entire pointwise fidelity belongs to the measured-observable span.

\section{The adverse two-angle coefficient}
\label{app:calibration-rare-pair}

The coefficient controlling a rare finite rotation can be evaluated without an expansion of the full uncertain channel. Fix a phase \(\theta_*\) and set
\begin{equation}
 \mathsf C_{\rm pair}=V^\dagger U_{\theta_*}V,\quad
 \mathsf B_{\rm pair}=(I_P-VV^\dagger)U_{\theta_*}V.
 \label{eq:calibration-pair-blocks}
\end{equation}
For the prior \((1-p)\delta_0+p\delta_{\theta_*}\),
\begin{equation}
\begin{gathered}
 F_{e,\rm opt}=1-p\,\kappa_{\rm pair}(\theta_*)+o(p),\\
 \begin{aligned}
 \kappa_{\rm pair}(\theta_*)
 &=1-\frac{|\operatorname{Tr}\mathsf C_{\rm pair}|^2}{d^2}\\
 &\quad-\frac{\left[\operatorname{Tr}
       \sqrt{I_d-(\mathsf C_{\rm pair})^\dagger\mathsf C_{\rm pair}}\right]^2}{d^2}.
 \end{aligned}
\end{gathered}
 \label{eq:calibration-pair-coefficient}
\end{equation}
The subscript identifies these angle-dependent maps and distinguishes them from the local Krylov boundary.

We derive Eq.~\eqref{eq:calibration-pair-coefficient} from the right derivative of optimized fidelity at zero contamination and the polar alignment of leakage.
For a recovery \(\mathcal R\), write \(a_{\mathcal R}=f_{\mathcal R}(0)\) and \(b_{\mathcal R}=f_{\mathcal R}(\theta_*)-f_{\mathcal R}(0)\). The optimized fidelity is the maximum of \(a_{\mathcal R}+p b_{\mathcal R}\) over a compact set. Its right derivative at zero is the maximum slope among recoveries with \(a_{\mathcal R}=1\). To justify the upper bound, choose maximizing recoveries as \(p\downarrow0\) and a convergent subsequence. Comparison with a perfect zero-angle decoder forces their \(a_{\mathcal R}\to1\). The limit is zero-angle-perfect, and
\((a_{\mathcal R}-1)/p+b_{\mathcal R}\le b_{\mathcal R}\)
bounds the limiting slope by the largest slope on that face. A maximizing member of the face gives the reverse inequality.

A zero-angle-perfect recovery has \(R_\alpha V=c_\alpha I_d\) and \(\sum_\alpha|c_\alpha|^2=1\). A unitary change of Kraus representation makes \(R_0V=I_d\) and \(R_{\alpha>0}V=0\). Put \(Q=I_P-VV^\dagger\). The \(V\)-to-\(Q\) block of trace preservation gives
\begin{equation}
 0=V^\dagger\!\left(\sum_\alpha R_\alpha^\dagger R_\alpha\right)Q
   =R_0Q,
\end{equation}
so \(R_0=V^\dagger\). The remaining Kraus maps obey
\(\sum_{\alpha>0}R_\alpha^\dagger R_\alpha=Q\)
and can be chosen arbitrarily subject to this condition. Their optimal contribution to the rare-phase numerator is
\begin{equation}
\begin{aligned}
 &\max_{\{R_{\alpha>0}\}}
   \sum_{\alpha>0}|\operatorname{Tr}(R_\alpha\mathsf B_{\rm pair})|^2\\
 &\qquad=\left[\operatorname{Tr}
     \sqrt{(\mathsf B_{\rm pair})^\dagger\mathsf B_{\rm pair}}\right]^2.
\end{aligned}
\label{eq:calibration-pair-polar}
\end{equation}
For the upper bound, use a singular-value decomposition
\(\mathsf B_{\rm pair}=\sum_j b_j|u_j\rangle\langle v_j|\).
The vectors with components
\(\langle v_j|R_\alpha|u_j\rangle\), indexed by \(\alpha>0\),
have Euclidean norm at most one by trace preservation. The triangle inequality bounds the square root of the left side by \(\sum_jb_j\). The partial polar inverse attains the bound; Kraus maps on the unused orthogonal complement complete trace preservation and make no contribution to this score. Since
\((\mathsf B_{\rm pair})^\dagger \mathsf B_{\rm pair}
=I_d-(\mathsf C_{\rm pair})^\dagger \mathsf C_{\rm pair}\),
the derivative above proves Eq.~\eqref{eq:calibration-pair-coefficient}.

For completeness, \(\kappa_{\rm pair}(\theta_*)\ge0\), with equality precisely when \(\mathsf C_{\rm pair}\) is scalar. If its singular values are \(a_j\in[0,1]\), then
\begin{align}
 &|\operatorname{Tr}\mathsf C_{\rm pair}|^2
  +\left(\sum_j\sqrt{1-a_j^2}\right)^2\\
 &\qquad\le\left(\sum_ja_j\right)^2
       +\left(\sum_j\sqrt{1-a_j^2}\right)^2\le d^2. \notag
\end{align}
The second equality requires coincident unit vectors
\((a_j,\sqrt{1-a_j^2})\); the first requires a common polar phase. Together these give a scalar compression, including the zero case. Conversely a scalar compression saturates both bounds. This is the Knill--Laflamme condition for the pair \(\{V,U_{\theta_*}V\}\).

The rare prior with \(p=\epsilon/s(\theta_*)\) consequently gives
\begin{equation}
 \Upsilon_1\ge
 \sup_{\theta:\,s(\theta)>0}\frac{\kappa_{\rm pair}(\theta)}{s(\theta)}.
 \label{eq:calibration-first-fringe-lower}
\end{equation}
The general result uses this lower bound without assuming equality, since different phases can prefer incompatible zero-angle-perfect recoveries. If every compression \(V^\dagger U_\theta V\) is scalar, finite Fourier inversion makes every \(V^\dagger P_aV\) scalar. All differences \(U_{\theta'}^\dagger U_\theta\) then obey the exact correction condition, so the whole orbit has a common perfect recovery.

\section{Sharp adverse-phase bounds from two fringes}
\label{app:calibration-interval}

For a probability measure on \(u\in[-1,1]\), the first two moments \(a_c,S_c\) are feasible precisely when
\begin{equation}
 \begin{pmatrix}1&a_c\\a_c&S_c\end{pmatrix}\succeq0,
 \qquad 1-S_c\ge0.
 \label{eq:calibration-interval-cone}
\end{equation}
Necessity follows from the square of every affine polynomial and from \(1-u^2\ge0\). Sufficiency can be seen directly. For \(-1<a_c<1\), the point mass at \(a_c\) has second moment \(a_c^2\), while the endpoint distribution with mean \(a_c\) has second moment one. A convex mixture realizes every \(S_c\in[a_c^2,1]\). At \(a_c=\pm1\), feasibility forces the corresponding endpoint point mass. Scaling proves the same characterization for unnormalized finite measures.

An atom of weight \(w\) at \(u_0\) leaves the residual moment matrix
\begin{equation}
 \begin{pmatrix}1&a_c\\a_c&S_c\end{pmatrix}
 -w\begin{pmatrix}1\\u_0\end{pmatrix}
      \begin{pmatrix}1&u_0\end{pmatrix}
 \succeq0
\end{equation}
and the residual support inequality
\(1-S_c-w(1-u_0^2)\ge0\).
For \(v_c=S_c-a_c^2>0\), the rank-one matrix bound is
\begin{equation}
 w_{\max}(u_0)=\min\left\{
 \frac{v_c}{v_c+(a_c-u_0)^2},\
 \frac{1-S_c}{1-u_0^2}\right\}.
 \label{eq:calibration-interval-atom}
\end{equation}
The second restriction is vacuous at \(u_0=\pm1\). Both residual conditions are sufficient by Eq.~\eqref{eq:calibration-interval-cone}, so the result is sharp. For \(v_c=0\) the distribution is \(\delta_{a_c}\). The atom in \(u\) represents the total mass at the phase pair \(\pm\arccos u_0\), with the coincident endpoints counted once.

We next prove the finite-angle event bound. Assume \(-1\le t<a_c\), and first take \(v_c>0\). Put
\(u_1=a_c+v_c/(a_c-t)\).
If \(u_1\le1\), the quadratic inequality
\begin{equation}
 \mathbf1_{\{u\le t\}}
 \le\frac{(u-u_1)^2}{(t-u_1)^2}
\end{equation}
gives Eq.~\eqref{eq:calibration-two-fringe-tail} after averaging. The two-point distribution on \(\{t,u_1\}\), with weight
\(v_c/[v_c+(a_c-t)^2]\) at \(t\), has the prescribed mean and variance and saturates the bound. When \(v_c=0\), the event has probability zero and the same formula applies.

If \(-1<t<a_c\) and \(u_1>1\), the active endpoint changes the extremizer. On \([-1,1]\),
\begin{equation}
 \mathbf1_{\{u\le t\}}
 \le 1-\frac{(u+1)(u-t)}{2(1-t)}.
 \label{eq:calibration-arc-majorant}
\end{equation}
For \(u\le t\), the subtracted product is nonpositive. For \(t\le u\le1\), it increases from zero to \(2(1-t)\), so the right side stays nonnegative. Averaging proves
\begin{equation}
 \sup\Pr(u\le t)
 =\frac{2-t-S_c-(1-t)a_c}{2(1-t)}.
 \label{eq:calibration-arc-endpoint-branch}
\end{equation}
Equality is attained on \(\{-1,t,1\}\), with weights
\begin{align}
 w_t&=\frac{1-S_c}{1-t^2},\nonumber\\
 w_{-1}&=\frac{S_c-a_c+(1-a_c)t}{2(1+t)},  \label{eq:calibration-arc-extremizer}\\
 w_1&=\frac{S_c+a_c-(1+a_c)t}{2(1-t)}. \nonumber
\end{align}
Their sum and first two moments are \(1,a_c,S_c\). The condition \(u_1>1\) is equivalent to
\(S_c>a_c-(1-a_c)t\), making \(w_{-1}>0\). Also
\(S_c+a_c-(1+a_c)t\ge(a_c+1)(a_c-t)>0\),
and \(w_t\ge0\). Hence all weights are feasible. The two branches join continuously when \(u_1=1\). At \(t=-1<a_c\), feasibility already gives \(u_1\le1\), so the first branch covers that endpoint.
With just the mean \(a_c\) known,
\(\mathbf1_{\{u\le t\}}\le(1-u)/(1-t)\)
gives \((1-a_c)/(1-t)\), attained on \(\{t,1\}\). This proves the sharp first-fringe comparison in the main text.

Finally, the narrow-family statement needs only a finite fourth moment. Let \(Y_\sigma=(1-\cos(\sigma x))/\sigma^2\). Pointwise \(Y_\sigma\to x^2/2\), while \(|Y_\sigma|\le x^2/2\); dominated convergence in \(L^2\) therefore gives
\begin{equation}
 \operatorname{Var}(\cos(\sigma x))
 =\sigma^4\operatorname{Var}(Y_\sigma)
 =\frac{\operatorname{Var}(x^2)}4\sigma^4+o(\sigma^4).
\end{equation}
Also \(1-a_c=\mathbb E [x^2]\,\sigma^2/2+o(\sigma^2)\). For fixed \(t<1\) and \(\mathbb E [x^2]>0\), this second-order distance to the endpoint exceeds
\(v_c/(a_c-t)=O(\sigma^4)\)
for sufficiently small \(\sigma\), justifying the use of the first branch.

\section{Optical recovery completion and analyzer identities}
\label{app:optical-implementation}
Two separate operations entered the optical application. The memory recovery coherently aligns syndrome amplitudes, whereas calibration estimates a code population on independent copies. We first complete the recovery from the four-dimensional cyclic space to all five physical dimensions, then derive the analyzer identity used to measure that population.

The vectors $|w_{n,\pm}\rangle=W_n|\pm\rangle$ form a physical basis of the cyclic space. After coherent alignment, tracing over the two syndrome states gives two physical-to-logical Kraus maps $R_n^{\rm opt}$, where the superscript identifies the optical recovery. Their sum $\sum_{n=0}^1R_n^{\rm opt}\rho(R_n^{\rm opt})^\dagger$ is the recovered logical state for any input $\rho$ supported there. In this representation the Kraus label describes the discarded syndrome; it need not be measured. The maps are
\begin{equation}
R_n^{\rm opt}=|+\rangle\langle w_{n,+}|
+|-\rangle\sum_{j=0}^1(O_Q)_{nj}\langle w_{j,-}|,
\end{equation}
for $n=0,1$. 
The unused normalized physical state is
\begin{equation}
\begin{aligned}
\sqrt6|{\rm dark}\rangle={}&-|40\rangle+(1+i)|31\rangle\\
 &+(1-i)|13\rangle+|04\rangle,\\
X_{\rm opt}|{\rm dark}\rangle={}&0.
\end{aligned}
\end{equation}
Adding $R_{\rm dark}^{\rm opt}=|-\rangle\langle{\rm dark}|$ completes the recovery, since $\sum_{n=0}^1(R_n^{\rm opt})^\dagger R_n^{\rm opt}+(R_{\rm dark}^{\rm opt})^\dagger R_{\rm dark}^{\rm opt}=I_5$.
Re-encoding can be appended if the recovered output must remain optical.
This recovery needs a coherent change to the logical/syndrome factorization,
a conditional syndrome alignment and syndrome disposal. A unitary dilation
exists, but passive two-mode optics alone has not been shown to realize it.
The analyzer settings measure independently prepared copies and do not enact
this coherent recovery on the stored state.

The detection row of the balanced analyzer in the ordered input basis
\((40,31,22,13,04)\) is
\begin{equation}
 r_\varphi=\langle22|U_\varphi
 =\left(\frac{\sqrt6}{4}e^{-2i\varphi},0,-\frac12,0,
 \frac{\sqrt6}{4}e^{2i\varphi}\right).
\end{equation}
Expanding \(r_\varphi\rho r_\varphi^\dagger\) at the four phases cancels
the diagonal and second-harmonic terms in the alternating sum. The remaining
term is \(3(\rho_{40,04}+\rho_{04,40})/2\). Adding direct counts proves
Eq.~\eqref{eq:opt-count-readout}.

\section{Optical harmonic completion and common recoveries}
\label{app:optical-bootstrap}

The reduced-data problem has two parts. A moment optimization determines
which unobserved fourth harmonics remain possible; polar alignment
constructs one recovery that is optimal against this entire interval. We give
the endpoint measures as well as the decoders, making the minimax
statement reproducible without solving a numerical optimization.

\subsection{The exact fourth-harmonic interval}

We first derive the feasible interval in Eq.~\eqref{eq:opt-fourth-harmonic-interval}, which is the statistical input to Eq.~\eqref{eq:opt-reduced-data-values}. Put \(u=\cos\tau\), so
\(\mathbb E[u]=c_1\), \(\mathbb E[u^2]= S_c=(1+c_2)/2\), and
\(c_4=8\mathbb E[u^4]-8 S_c+1\).
The conditions \(c_1^2\le  S_c\le1\) are necessary by variance
positivity and \(|u|\le1\). For \( S_c>0\), the measure on
\(\{\sqrt{ S_c},-\sqrt{ S_c}\}\) with masses
\begin{equation}
p_\pm=\frac{1\pm c_1/\sqrt{ S_c}}2
\label{eq:opt-lower-endpoint-measure}
\end{equation}
has these moments. For \( S_c=0\), use \(\delta_0\).
This proves sufficiency. It also attains the Jensen bound
\(\mathbb E[u^4]\ge  S_c^2\), giving
\(\underline c_4=2c_2^2-1\).

If \( |c_1|\le S_c\), the pointwise inequality
\(u^4\le u^2\) is sharp. The measure on \(\{1,-1,0\}\) with masses
\begin{equation}
\frac{ S_c+c_1}{2},\quad
\frac{ S_c-c_1}{2},\quad
1- S_c
\label{eq:opt-upper-three-point-measure}
\end{equation}
attains \(\mathbb E[u^4]= S_c\), hence \(\overline c_4=1\).
For \( S_c< |c_1|<1\), first reflect \(u\) if necessary to give it
mean \( |c_1|>0\), and set
\begin{equation}
t_c=\frac{ |c_1|- S_c}{1- |c_1|},
\quad 0<t_c\le |c_1|<1.
\label{eq:opt-upper-interior-node}
\end{equation}
The quadratic polynomial
\begin{equation}
q_{t_c}(u)=(1+2t_c+3t_c^2)u^2
-2t_c(1+t_c)^2u+t_c^2(1+2t_c)
\end{equation}
majorizes \(u^4\) on \([-1,1]\), as follows from the factorization
\begin{equation}
q_{t_c}(u)-u^4
=(1-u)(u-t_c)^2(u+1+2t_c)\ge0.
\label{eq:opt-fourth-majorant}
\end{equation}
A measure on \(\{1,t_c\}\) with mass
\begin{equation}
p_1=\frac{ |c_1|-t_c}{1-t_c}
=\frac{ S_c- |c_1|^2}{1-2 |c_1|+ S_c}
\label{eq:opt-upper-two-point-measure}
\end{equation}
at \(1\) attains equality. Its first two moments are \( |c_1|, S_c\).
Reflect its nodes when \(c_1<0\). Thus
\begin{equation}
\max\mathbb E[u^4]
= S_c-(1- |c_1|)t_c^2(1+t_c),
\end{equation}
which gives Eq.~\eqref{eq:opt-fourth-harmonic-interval}.
At \( S_c= |c_1|^2\), the measure is deterministic at
\(u=c_1\). At \( |c_1|= S_c\), the two upper-bound formulas agree.
The endpoint \( |c_1|= S_c=1\) is deterministic and is treated directly.

Both endpoint measures have the same first two moments. When
\(\underline c_4<\overline c_4\), their mixture with upper-endpoint
weight
\begin{equation}
\lambda_4=
\frac{c_4-\underline c_4}{\overline c_4-\underline c_4}
\label{eq:opt-endpoint-mixture}
\end{equation}
attains any intermediate fourth harmonic. Replace each atom \(u\) by
equal masses at \(\tau=\pm\arccos u\). This produces a symmetric
angle prior with the required \(c_1,c_2,c_4\), and hence a compatible
rate prior at \(\omega=\tau/T\). No minimal atom count is needed.

\subsection{Polar alignments and the saddle construction}

We now complete the common-decoder construction attaining Eq.~\eqref{eq:opt-reduced-data-values}. Introduce the fixed real orthogonal matrices
\begin{gather}
O_z=\begin{pmatrix}1&0\\0&-1\end{pmatrix},\quad
O_c=\frac13\begin{pmatrix}1&2\sqrt2\\-2\sqrt2&1\end{pmatrix},\notag \\
O_s=\frac13\begin{pmatrix}2\sqrt2&-1\\1&2\sqrt2\end{pmatrix}. \label{eq:opt-basic-polar-alignments}
\end{gather}
Their payoffs against Eq.~\eqref{eq:opt-harmonic-cross-matrix} are
\begin{equation}
\begin{aligned}
\operatorname{Tr}(O_zC_s)&=(2c_2+c_4)/3,\\
\operatorname{Tr}(O_cC_s)&=(2c_2-c_4)/3,\\
\operatorname{Tr}(O_sC_s)&=2\sqrt2c_1/3.
\end{aligned}
\label{eq:opt-basic-polar-payoffs}
\end{equation}
The combination \(\eta O_c+\xi O_s\) is a rotation when
\(\eta^2+\xi^2=1\). If
\(\rho_4=[(2c_2-c_4)^2+8c_1^2]^{1/2}>0\), the rotation
\begin{equation}
O_+(c_4)=
\frac{(2c_2-c_4)O_c+2\sqrt2c_1O_s}{\rho_4}
\label{eq:opt-attaining-rotation}
\end{equation}
attains the square-root branch. The reflection
\(O_-(c_4)=\operatorname{sgn}(2c_2+c_4)O_z\)
attains the other branch when its argument is nonzero.
A convex mixture of these physical decoders is CPTP. The resulting
matrix in the linear payoff is a real contraction; it need not itself
be the unitary of one deterministic alignment.
Each individual alignment has the full five-dimensional completion
described in Appendix~\ref{app:optical-implementation}.

The unconstrained minimizer of the convex norm in
Eq.~\eqref{eq:opt-two-polar-branches} is unique.
When \(c_1^2\ge2c_2^2\), the square-root branch dominates at its own
minimum \(c_4=2c_2\). Otherwise the decreasing and increasing active
branches meet at \(c_4=c_1^2/c_2\). These give
Eq.~\eqref{eq:opt-clipped-harmonic}.

Suppose first that \(c_{4,0}\) lies in the feasible interval.
If \(c_1^2\ge2c_2^2\) and \((c_1,c_2)\ne(0,0)\), take
\(O=\operatorname{sgn}(c_1)O_s\). Its payoff
\(2\sqrt2|c_1|/3\) is independent of \(c_4\).
For \(c_1^2<2c_2^2\), set \(\varepsilon_2=\operatorname{sgn}(c_2)\).
At \(c_{4,0}=c_1^2/c_2\), the mixture
in Eq.~\eqref{eq:opt-polar-mixture-weights} has the explicit contraction
\begin{equation}
O_*=
\varepsilon_2\frac{2c_2^2-c_1^2}{4c_2^2}(O_z+O_c)
+\frac{c_1}{\sqrt2|c_2|}O_s.
\label{eq:opt-common-kink-decoder}
\end{equation}
Substitution in Eq.~\eqref{eq:opt-basic-polar-payoffs} gives
\begin{equation}
\operatorname{Tr}[O_*C_s(c_1,c_2,c_4)]
=\frac{2|c_2|+c_1^2/|c_2|}{3}
\label{eq:opt-common-kink-payoff}
\end{equation}
for every \(c_4\). At \(c_1=c_2=0\), logical dephasing attains
fidelity \(1/2\), and the compatible choice \(c_4=0\) proves the converse.

If clipping selects \(c_{4,*}\ne c_{4,0}\), choose any active
real polar alignment at that endpoint. Its payoff is a supporting
affine function of the norm. Since the endpoint is strictly to one side
of the unique unconstrained minimizer, all supporting slopes there
have the sign making this affine payoff at least its endpoint value
throughout the feasible interval.
If \(c_{4,*}=c_{4,0}\) lies at an endpoint, use the constant-payoff
construction above. If the feasible interval is a single point, any real
polar completion suffices. Thus singular endpoints cause no gap in the
common-decoder construction.

For one-duration data, the matrices in
Eq.~\eqref{eq:opt-basic-polar-alignments} give a similarly explicit
recovery. For \(0<|c_2|\le1/\sqrt2\), use
\(\varepsilon_2O_z\) and \(\varepsilon_2O_c\) with equal
probabilities. Their average payoff is \(2|c_2|/3\), independently
of \(c_1,c_4\). For \(c_2\ge1/\sqrt2\), use \(O_z\); for
\(c_2\le-1/\sqrt2\), use \(-O_c\). The bound
\(c_4\ge2c_2^2-1\) proves the claimed guarantee in both cases.
At \(c_2=0\), use logical dephasing.

All of these payoffs are real functions of the measured cosine data.
They remain valid for asymmetric priors with arbitrary compatible sine
moments. Conversely, use the endpoint mixture
Eq.~\eqref{eq:opt-endpoint-mixture} with the minimizing \(c_{4,*}\),
and \(c_1=0\) in the one-duration problem. The resulting symmetric
prior has real coherence matrix \(C_s\).
Equation~\eqref{eq:opt-polar-value} bounds every recovery by precisely
the value attained above. This finishes the all-CPTP saddle proof.

\subsection{The narrow-family limit and largest information gain}

This subsection contains two arguments. The first, a narrow-family expansion, proves Eq.~\eqref{eq:opt-duration-orders}; the second determines the largest gain from the additional duration. For \(\tau=\sigma x\) with a fixed rescaled distribution and
\(y_2>0,y_4<\infty\), dominated Taylor remainders give
\begin{equation}
\begin{aligned}
c_1&=1-\frac{y_2}{2}\sigma^2
       +\frac{y_4}{24}\sigma^4+o(\sigma^4),\\
c_2&=1-2y_2\sigma^2
       +\frac{2y_4}{3}\sigma^4+o(\sigma^4).
\end{aligned}
\label{eq:opt-cosine-narrow-expansion}
\end{equation}
For sufficiently small nonzero \(\sigma\), these imply
\(c_1> S_c\), \(c_1^2<2c_2^2\), and
\(c_{4,0}=c_1^2/c_2>1\ge\overline c_4\).
The minimizer is therefore \(c_{4,*}=\overline c_4\).
Set
\begin{equation}
\delta_c=1-c_1,\quad
v_c= S_c-c_1^2=\operatorname{Var}(\cos\tau).
\end{equation}
Then \(\delta_c\sim y_2\sigma^2/2\),
\(v_c=(y_4-y_2^2)\sigma^4/4+o(\sigma^4)\), and
\((c_1- S_c)/(1-c_1)=c_1-v_c/\delta_c\).
Using \(v_c/\delta_c=O(\sigma^2)\), the exact upper endpoint expands as
\begin{equation}
\begin{aligned}
\overline c_4
&=1-8\left(c_1-\frac{v_c}{\delta_c}\right)^2
             (2\delta_c-\delta_c^2-v_c)\\
&=1-16\delta_c+40\delta_c^2+40v_c+o(\sigma^4)\\
&=1-8y_2\sigma^2+\frac{32y_4}{3}\sigma^4+o(\sigma^4).
\end{aligned}
\label{eq:opt-upper-endpoint-narrow}
\end{equation}
The square-root polar branch is active here, with
\begin{equation}
(\overline c_4-2c_2)^2+8c_1^2
=9-18(y_4-y_2^2)\sigma^4+o(\sigma^4).
\end{equation}
This proves the \(G_2\) expansion in Eq.~\eqref{eq:opt-duration-orders}.
The actual \(c_4\) has the same expansion as
Eq.~\eqref{eq:opt-upper-endpoint-narrow}, proving the \(G_3\) statement.
Expanding the \(c_2>1/\sqrt2\) branch of \(G_1\) proves its formula.
These arguments include \(y_4=y_2^2\); they then assert a vanishing
quartic coefficient, leaving the next order to the stronger analysis.

For completeness, the largest gain in the feasible moment region also
has an exact value. At fixed \(c_2\), increasing \(|c_1|\) raises
the norm at every \(c_4\) and shrinks its feasible interval.
Thus \(G_2-G_1\) is maximized on
\(c_1^2=(1+c_2)/2\), where \(c_4=2c_2^2-1\).
The square-root radicand is
\(4c_2^4-8c_2^3+8c_2+5\). Its difference from the competing squared branch is
\(4(1-c_2)(2c_2+1)^2\ge0\), with equality at $c_2=-1/2$ and $c_2=1$.
Thus the square-root branch gives the norm throughout this boundary, including the two ties. For $-1\le c_2\le1$, it obeys
\begin{equation}
\sqrt{4c_2^4-8c_2^3+8c_2+5}
\le\sqrt5+2|c_2|.
\end{equation}
For \(c_2>0\), subtracting the squared right-hand side and dividing
by \(4c_2\) leaves
\((2-c_2)(1-c_2^2)-\sqrt5\le0\).
For \(c_2=-t<0\), the corresponding expression is
\((t-1)(t+1)(t+2)-\sqrt5\le0\).
Since \(6(G_1-1/2)\ge2|c_2|\), the gain is at most \(\sqrt5/6\).
Equality occurs at \(c_2=0\), \(|c_1|=1/\sqrt2\), including the
symmetric phase pair \(\pm\pi/4\).

\paragraph{Where the calibration gain has corners.}
\label{app:optical-corner-loci}
The exact frontier also distinguishes physical changes in the adverse channel from numerical contour artifacts. In this paragraph set $x=|c_1|$, $y=c_2$, $a=x^2$, and $S_c=(1+y)/2$. The feasible half-domain is $0\le x\le\sqrt{S_c}$, $-1\le y\le1$. Write $\ell=2y^2-1$ and $u=\overline c_4$ from Eq.~\eqref{eq:opt-fourth-harmonic-interval}. Then
\begin{gather}
6(G_2-G_1)=\min_{\ell\le t\le u}f(t)-2|y|-\max(0,\ell), \notag \\
f(t)=\max\{|t+2y|,\sqrt{(t-2y)^2+8a}\}. \label{eq:opt-gain-variational}
\end{gather}
The squared branches differ by $8(ty-a)$. The unconstrained minimum lies at $t=2y$ for $a\ge2y^2$, with value $\sqrt{8a}$, and at their crossing $t=a/y$ otherwise, with value $2|y|+a/|y|$. Convexity gives the clipped minimizer in Eq.~\eqref{eq:opt-clipped-harmonic}. The values and gradients agree at $a=2y^2$. Clipping the smooth minimum $t=2y$ costs only second order in its displacement. Also $u$ is continuously differentiable where its formula changes at $x=S_c$, because its correction to one is quadratic in $x-S_c$. Thus additional corners can arise only when the crossing minimum is clipped, or from the explicit absolute-value and maximum terms in Eq.~\eqref{eq:opt-gain-variational}.

The same branches determine exactly where the additional calibration has zero gain. At $a=0$, the one-mean minimizer remains feasible and $G_2=G_1$. For $a>0$ and $|y|<1/\sqrt2$, either unconstrained minimum above exceeds $2|y|$, so the gain is positive. For $y\le-1/\sqrt2$, every feasible $t\ge\ell\ge0$ obeys $\sqrt{(t-2y)^2+8a}>\ell+2|y|$, again giving positive gain. For $y\ge1/\sqrt2$, the linear branch is at least $\ell+2y$. Equality requires $t=\ell$, where the square-root branch lies below it exactly when $a\le y\ell$. Thus the zero set consists of $c_1=0$ and the wedge $c_2\ge1/\sqrt2$, $c_1^2\le c_2(2c_2^2-1)$. The sign restriction on $c_2$ is essential.

Away from intersections, the complete interior corner families are $y=0$, $y=\pm1/\sqrt2$, and
\begin{equation}
\begin{gathered}
x^2=y(2y^2-1),\quad 1/\sqrt2<y<1,\\
x^2=b(1-2b^2),\quad y=-b,\\
\quad (\sqrt3-1)/2<b<1/\sqrt2,\\
x^2=y,\quad 1/2<y<1.
\end{gathered}
\label{eq:opt-gain-corner-loci}
\end{equation}
Each horizontal family is restricted to the feasible interior with $x>0$. The negative lower-clipping curve touches the outer boundary at $(x,y)=(1/2,-1/2)$, which is excluded from its interior locus. The reflected curves describe negative $c_1$. To verify the nonzero jumps, on the positive lower-clipping curve the local gain is $\max\{0,a/y-\ell\}/6$, so its transverse $x$-slope jumps by $x/(3y)$. This also identifies the zero-gain region. On the negative curve, put $P_b=2b^2+2b-1$ and $D_b=1+2b-2b^2$. The crossing value and clipped square root agree at $D_b$, and their $x$-slope difference divided by six is $xP_b/(3bD_b)>0$. Feasibility follows from $S_c-b(1-2b^2)=(b+1)(2b-1)^2/2\ge0$. For upper clipping, $a=yu$ reduces on the flat branch to $a=y$, with jump magnitude $x(2y-1)/[3y(1+2y)]$. On the curved branch the factorization
\begin{equation}
\begin{split}
(yu-x^2)(1-x)^2={}\\
-(x-y)^2[x^2+2xy-2x-y^2-y+1]
\end{split}
\end{equation}
has no root in the strict feasible interior $x>S_c$ that satisfies $a<2y^2$. For $y>0$, real quadratic roots require $y\ge1/2$, and the larger root $1-y+\sqrt{y(2y-1)}$ is at most $S_c$. For $y<0$, even the smaller root $1-y-\sqrt{y(2y-1)}$ lies at or beyond $\sqrt{S_c}$, with equality only at the tangency already identified. The double root $x=y$ lies outside $x>S_c$. This exhausts the upper-clipping possibilities. The explicit terms from $G_1$ contribute jumps $-2/3$ at $y=0$ and $-\sqrt2/3$ at $y=\pm1/\sqrt2$ wherever $G_2$ is smooth. At intersections the active branches determine the combined directional slopes.

Figure~\ref{fig:optical-calibration-frontier}(a) draws the closures of these curves as dotted lines, except the upper-clipping family $x^2=y$. For $1/2<y<1/\sqrt2$ that family is the level curve $G_2-G_1=1/6$: at the clipped endpoint $t=u=1$ the two polar branches coincide, the one-sided gradients are parallel, and the corner changes only the contour spacing. For $y>1/\sqrt2$ the gain along it is $(1-y^2)/3$, and contours cross it with bends below $5^\circ$. The figure also omits the continuation of the negative lower-clipping curve beyond its tangency with the outer boundary at $(1/2,-1/2)$, for $(\sqrt3-1)/2<b<1/2$. That piece lies within $0.05$ of the boundary, and its bend decreases from about $12^\circ$ at the tangency to zero at $b=(\sqrt3-1)/2$, where the crossing and smooth minima coincide. Some endpoints are differentiable, so a drawn endpoint need not have a nonzero jump. The initial upper-clipping point and the initial negative lower-clipping point are continuously differentiable; at $(0,\pm1/\sqrt2)$ the relative derivative on the feasible domain is zero. At the tangency $(1/2,-1/2)$, the clipped strip has width of order $(y+1/2)^2$, leaving relative gradient $(1/3,1/6)$ despite the neighboring corner arcs. The origin is a corner since $G_2-G_1=\sqrt2|c_1|/3$ along $c_2=0$, while it vanishes along $c_1=0$. On the half-domain the diagonal $x=y>0$ instead gives gain $x/6$, incompatible with the linear derivative inferred from the axes. The final junction $(1,1)$ is also a corner. Along $y=1-h$, $x=1-kh$, the gain divided by $h$ tends to $(5-2k)/6$ for $1/2\le k\le5/2$ and to zero for $k>5/2$, which precludes a single relative derivative. Other boundary points introduce no extra corner family. These branch arguments, together with Eq.~\eqref{eq:opt-gain-variational}, separate nonsmooth data dependence from a smooth change of optimal representation.

\subsection{Exact coefficients for periodic-moment calibration}
\label{app:optical-calibration-coefficients}

To derive Eq.~\eqref{eq:opt-calibration-coefficients}, write \(s=\sin^2(\tau/2)\). The fixed prefix decoder has loss
\begin{equation}
\ell_{\rm pre}(\tau)=\frac89s^2(9-16s+8s^2).
\label{eq:opt-prefix-periodic-loss}
\end{equation}
To optimize its comparison with all zero-angle-perfect recoveries,
use the two orthogonal isometries \(W_0=V\) and
\(W_1=X_{\rm opt}V/\sqrt2\).
For any recovery perfect on \(W_0\), write its Kraus restrictions
as \(R_\alpha W_0=c_\alpha I_2\).
A unitary Kraus rotation sets \(R_0W_0=I_2\) and
\(R_{\alpha>0}W_0=0\). Trace preservation on the off-diagonal shell
block then forces \(R_0W_1=0\).
The remaining \(R_{\alpha>0}W_1\) are Kraus maps of an arbitrary
qubit channel \(\mathcal S\). Consequently the recovery acts on the
cyclic space as
\begin{equation}
\mathcal R
\begin{pmatrix}\rho_{00}&\rho_{01}\\\rho_{10}&\rho_{11}\end{pmatrix}
=\rho_{00}+\mathcal S(\rho_{11}).
\label{eq:opt-perfect-zero-recovery}
\end{equation}
In particular, coherences between the two shells do not affect its
logical output.

For the equally weighted pair \(\pm\tau\), shell one has probability
\(t_\tau=8\sin^2(3\tau/2)/9\). Its normalized logical channel is
\begin{equation}
\begin{gathered}
\mathcal D_x=\frac{1+x}{2}\operatorname{id}
+\frac{1-x}{2}\operatorname{Ad}_{Z_b},\\
x=\cos\tau,\qquad
Z_b=|+\rangle\langle+|-|-\rangle\langle-|.
\end{gathered}
\label{eq:opt-pair-terminal-channel}
\end{equation}
Here \(\operatorname{Ad}_{Z_b}(\rho)=Z_b\rho Z_b^\dagger\), and \(Z_b\) is the original \(\sigma_y\) expressed in its conserved logical-branch basis. If \(t_\tau=0\), the shell contribution vanishes.
For any qubit CPTP map with Kraus matrices \(B_\alpha\), Pauli
orthogonality and trace preservation give
\(\frac14\sum_{\alpha,\mu=0}^3
|\operatorname{Tr}(B_\alpha\sigma_\mu)|^2=1\),
where \(\sigma_0=I_2\). The two components along \(I_2,Z_b\)
therefore imply
\(F_e(\mathcal S)+F_e(\mathcal S\operatorname{Ad}_{Z_b})\le1\).
Hence
\begin{equation}
\max_{\mathcal S}F_e(\mathcal S\mathcal D_x)=\frac{1+|x|}{2}.
\end{equation}
For \(x\ge0\), the identity channel attains the optimum.
The prefix decoder therefore optimizes this pair's fidelity over
every recovery perfect at zero.

Its quotient by \(s\) has maximum
\begin{equation}
\max_{0\le s\le1}\frac89s(9-16s+8s^2)
=\frac{272+20\sqrt{10}}{243}
\end{equation}
at \(s_*=(8-\sqrt{10})/12\).
The derivative is proportional to \(9-32s+24s^2\); evaluating both
stationary points and the endpoints proves the maximum.
At the maximizing angle,
\(\cos\tau_*=(\sqrt{10}-2)/6>0\).
Every zero-angle-perfect recovery has a maximum quotient at least
its mean quotient on \(\pm\tau_*\); the pairwise optimization above
bounds that mean below by the prefix value.
The prefix recovery also gives the matching upper bound.
This proves the first value in Eq.~\eqref{eq:opt-calibration-coefficients}
using Eq.~\eqref{eq:calibration-sharp-coefficient}.

Exact correction of both \(I\) and \(X_{\rm opt}\) also requires each
\(R_\alpha W_1\) to be scalar. Thus
\(\mathcal S=\operatorname{id}\) in
Eq.~\eqref{eq:opt-perfect-zero-recovery}; the prefix-correcting channel
on the cyclic space is unique. Its quotient by \(s^2\) has supremum
eight as \(s\to0\), proving the second value.

The adverse priors can be seen explicitly. For the first coefficient
use \((1-p)\delta_0+p(\delta_{\tau_*}+\delta_{-\tau_*})/2\),
with \(p=\epsilon/s_*\).
For the second coefficient, let both \(p\) and \(s\) vanish.
The exact symmetric-prior polar formula gives
\begin{equation}
\begin{aligned}
&1-F_{e,\mathrm{opt}}\!\left((1-p)\delta_0+
               \frac p2(\delta_\tau+\delta_{-\tau})\right)\\
&\qquad=8p(1-p)s^2+O(ps^3+p^3s^4).
\end{aligned}
\label{eq:opt-calibration-adverse-pair}
\end{equation}
For example, \(p=\sqrt\epsilon\) and \(s=\epsilon^{1/4}\) give
\(\beta_2=ps^2=\epsilon\) and loss \(8\epsilon[1+o(1)]\).
This changing-shape family explains the difference from a fixed-shape
narrow prior whose decoder can exploit more distributional information.

\section{Finite-width benchmark checks}
\label{app:benchmark-details}\label{app:smooth-benchmark}
This appendix evaluates the generic declared-family comparison for the four-photon memory. It retains the exact saddle, adverse-channel switch and broadened-noise case underlying the main-text comparison.
 \subsection{A matched comparison of nominal and robust recovery}
\label{sec:benchmark}

We evaluate the declared family of Section~\ref{sec:uncertainty}, Eq.~\eqref{eq:three-angle-family}, for the selected optical axis. Proposition~\ref{thm:minimax} fixes its weak-noise guarantee; the sector trace norm determines the exact finite-width comparison. This comparison changes the information available to recovery from population means to a family constraint. The code, Hamiltonian and actual channel are identical for all decoders; the information used in their design changes. A run in the family below either has no rotation or has one of two opposite rotations of magnitude $\sigma$. The parameter $p$ is the probability of a nonzero rotation. Its rescaled-angle distribution is $\nu_p$ of Eq.~\eqref{eq:three-angle-family}.

 The benchmark decoders receive $\sigma$ and the interval $p\in[0,1]$. The exact population means of Section~\ref{sec:optical-application} would add constraints on $p$ and are excluded from this comparison. The actual angle is $\tau=\sigma x$, with $x\sim\nu_p$, so its nonzero magnitude is $\sigma$ and its root-mean-square value is $\sigma\sqrt p$.

 The nominal recovery optimizes for the assumed value $p_0=1/10$. The robust recovery uses the entire allowed interval. An oracle recovery knows the actual $p$, while still not knowing the realized angle. All three have the same unrestricted CPTP recovery resources. The nominal decoder uses an assumed value of $p$ that these calibration data do not verify.

  \begin{figure*}[tp]
 \centering
 \includegraphics[width=\textwidth]{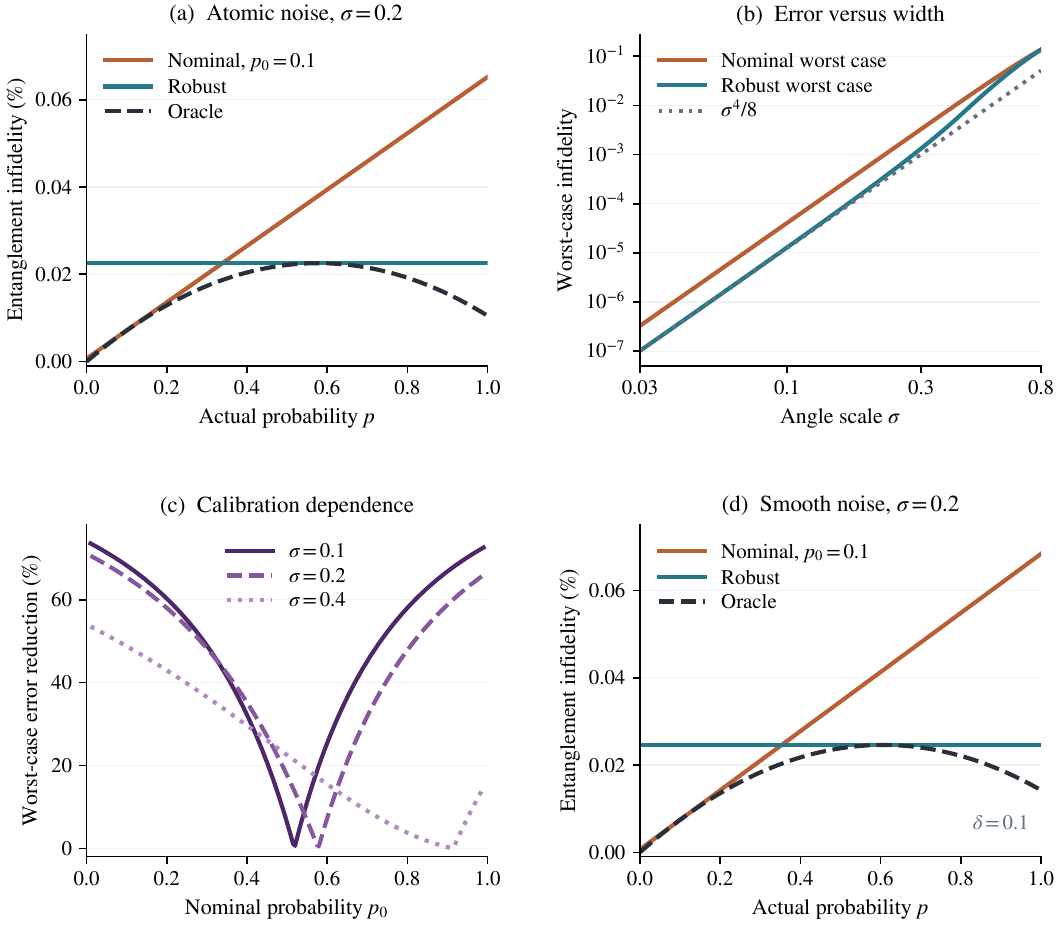}
 \caption{Recovery on the same optical code with different noise information in decoder design. The angle is zero with probability $1-p$ and equals each of $\pm\sigma$ with probability $p/2$. Panels (a), (b) and (d) use a nominal decoder designed for $p_0=1/10$; panel (c) varies $p_0$. The robust decoder knows $p\in[0,1]$, and the oracle knows the actual $p$. None knows the realized angle. (a) Actual-channel infidelity at $\sigma=1/5$. Robustness improves the worst case at a cost near the nominal channel. (b)   Worst-case errors over $p$, compared with the robust quartic coefficient. (c) Percentage reduction relative to the nominal worst-case infidelity, $100(1-E_R/E_{\rm nom})$. Here $E_R$ and $E_{\rm nom}$ are the robust and nominal worst-case entanglement infidelities over $p\in[0,1]$ at the plotted width, with $E_{\rm nom}$ evaluated for the plotted $p_0$. The reduction vanishes at the least-favorable nominal choice. (d) The comparison persists when independent Gaussian jitter of standard deviation $\delta\sigma$, with $\delta=1/10$, broadens each angle peak. Curves use exact finite matrices and characteristic functions.}
 \label{fig:benchmark}
\end{figure*}

For the optical blocks in Eq.~\eqref{eq:opt-conditional}, the real coherence matrix is
\begin{equation}
 C_p=\begin{pmatrix}1-p+pa_{\rm bin}&pb_{\rm bin}\\-pb_{\rm bin}&pd_{\rm bin}\end{pmatrix},
 \label{eq:benchmark-C}
\end{equation}
where
\begin{align}
 a_{\rm bin}&=(4\cos\sigma+4\cos2\sigma+\cos4\sigma)/9,\nonumber\\
 d_{\rm bin}&=(4\cos\sigma-2\cos2\sigma-2\cos4\sigma)/9,\label{eq:benchmark-coeffs}\\
 b_{\rm bin}&=\sqrt2(\cos\sigma-2\cos2\sigma+\cos4\sigma)/9.\nonumber
\end{align}
For $0\le\sigma\le\pi/2$, both quantities
\begin{align}
 d_{\rm bin}&=\tfrac89\sin^2(3\sigma/2)\cos\sigma,\\
 \det C_1&=\tfrac89\sin^2(3\sigma/2)\cos^2(3\sigma/2)
\end{align}
are nonnegative.

 Hence $\det C_p=p[(1-p)d_{\rm bin}+p\det C_1]\ge0$ throughout the family. Since $C_p$ has opposite off-diagonal entries and a nonnegative determinant, the trace-norm formula in Eq.~\eqref{eq:opt-data-value} reduces to $\|C_p\|_1^2=(\Tr C_p)^2+4p^2b_{\rm bin}^2$. This is why two real coordinates suffice here. Put $r_{\rm bin}=1-a_{\rm bin}-d_{\rm bin}$ and $\mathbf v_p=(1-pr_{\rm bin},2pb_{\rm bin})$. The oracle fidelity becomes
\begin{equation}
 F_{e,\mathrm{oracle}}(p)=\frac{1+\|\mathbf v_p\|_2}{2}.
 \label{eq:benchmark-oracle}
\end{equation}
 Here the trace norm becomes Euclidean and the allowed matrices form the segment $\{\mathbf v_p:0\le p\le1\}$. The dual recovery objective is its scalar product with a unit vector. The nearest point to the origin determines both the least-favorable channel and a supporting direction that protects the whole segment. It is
\begin{equation}
\begin{gathered}
 p_R=\operatorname{clip}_{[0,1]}\frac{r_{\rm bin}}{r_{\rm bin}^2+4b_{\rm bin}^2},\\
 F_R=\frac{1+\|\mathbf v_{p_R}\|_2}{2}.
\end{gathered}
 \label{eq:benchmark-saddle}
\end{equation}
 Here $\operatorname{clip}_{[0,1]}$ restricts a real number to that interval, and the zero-width channel is handled separately by exact inversion. The relative syndrome rotation with first column $\mathbf v_{p_R}/\|\mathbf v_{p_R}\|_2$ attains $F_R$ for the least-favorable channel. Projection onto a convex segment gives $\mathbf v_{p_R}\cdot(\mathbf v_p-\mathbf v_{p_R})\ge0$, so the same recovery has fidelity at least $F_R$ for every $p$. Conversely, every recovery is bounded by the oracle optimum at the admissible $p_R$. Equation~\eqref{eq:benchmark-saddle} is therefore the exact all-CPTP minimax value. At an interior minimizer the payoff is constant across the family.

 The same coordinates specify the nominal decoder without a separate optimization. Choose its relative syndrome rotation to have first column $\mathbf v_{p_0}/\|\mathbf v_{p_0}\|_2$. Its fidelity on the actual channel is
\begin{equation}
 F_{e,\mathrm{nom}}(p)
 =\frac12\left(1+\frac{\mathbf v_{p_0}\cdot \mathbf v_p}{\|\mathbf v_{p_0}\|_2}\right).
 \label{eq:benchmark-nominal-finite}
\end{equation}
 Indeed its recovered coherence is the dot product of this unit vector with $\mathbf v_p$. This expression evaluates the same physical recovery throughout the uncertain family and is the finite-width nominal score used below.

 The least-favorable prior varies with the noise width and first reaches the endpoint $p_R=1$ at $\sigma_c=\arccos z_c=0.4270649725077595\ldots$, where $z_c$ is the unique root in $(9/10,23/25)$ of the polynomial, irreducible over $\mathbb Q$, $16z^6+32z^5-32z^3-14z^2+4z+3$. The next subsection derives this first switch; the inverse-cosine characterization specifies the threshold exactly. At a singular endpoint, the polar map leaves null directions undetermined. The segment-projection prescription fixes those directions so that the same recovery retains its guarantee for every $p$.

The weak-noise limit makes the tradeoff explicit. Symmetry removes the linear term in the noise polynomial, and $\mathbb E[x^2]=p$, so the oracle cancels $x^2-p$. A nominal recovery instead cancels $x^2-p_0$, leaving both the fluctuation around the true mean and the error in the assumed mean. The robust choice controls that residual over all $p\in[0,1]$. With $m=2$, $\mathcal V=1$, $g_1^2=2$ and $\chi=1/2$, the three infidelities are
 \begin{align}
 1-F_{e,\mathrm{oracle}}(p)
 &=\tfrac12p(1-p)\sigma^4+o(\sigma^4),\notag \\
 1-F_R&=\tfrac18\sigma^4+O(\sigma^6),  \label{eq:benchmark-small}
\\
 1-F_{e,\mathrm{nom}}(p)
 &=\tfrac12\big[(1-p)p_0^2+p(1-p_0)^2\big] \notag \\
 &\qquad\times\sigma^4+O(\sigma^6). \notag
 \end{align}
 At $p_0=1/10$, the nominal and robust worst-case coefficients are $81/200$ and $1/8$, respectively. Their ratio tends to $81/25$. At the nominal channel itself, the nominal coefficient is $9/200$, so robustness has a cost near the assumed model. A nominal design using the saddle-point recovery at the least-favorable prior is already robust. At $p=1$, the vanishing quartic oracle coefficient is followed by $16\sigma^6/9+O(\sigma^8)$; binary noise is not exactly corrected at sufficiently small nonzero width in this optical example.

 Independent Gaussian broadening of the three peaks retains an exactly solvable affine family. Figure~\ref{fig:benchmark} compares its decoders alongside the atomic model; the Gaussian-broadening argument below gives the characteristic functions, determinant signs and numerical checks.

\subsection{The first adverse-prior switch and smooth noise}

As the angle width grows, the adverse prior can move from an interior mixture to the binary endpoint. We first locate that change, then test whether broadening the discrete angles preserves the recovery advantage.

\paragraph{The first change in the least-favorable prior.}

To derive the threshold characterized above, set $z_c=\cos\sigma_c$. The endpoint condition reduces to
\begin{equation}
2z_c^2+4z_c+3-16z_c^2(1-z_c)(1+z_c)^3=0.
\label{eq:benchmark-switch-root}
\end{equation}
 The isolated root in $(9/10,23/25)$ gives the stated exact threshold. The threshold $\sigma_c$ is transcendental and therefore has no rational $p/q$ expression. Indeed $w=e^{i\sigma_c}$ obeys $w^2-2z_cw+1=0$ and is algebraic, whereas Hermite--Lindemann makes $e^{i\sigma_c}$ transcendental if the nonzero $\sigma_c$ were algebraic~\cite{BakerWustholz2008}. The endpoint condition $r_{\rm bin}=r_{\rm bin}^2+4b_{\rm bin}^2$ reduces to Eq.~\eqref{eq:benchmark-switch-root} after excluding the zero-width solution. The polynomial is strictly increasing on $[9/10,1]$ and changes sign across the stated interval, so its isolated root gives the first switch.

\paragraph{Smooth noise and the same finite-width decision.}

 Independent Gaussian jitter preserves the affine family used to compare nominal and robust recovery. The strict improvement reported below concerns the nominal design $p_0=1/10$ at $\delta=1/10$ and $\sigma=1/5$; the finite-width construction and weak-noise coefficients have the parameter domains stated separately. Let $x=x_{\rm at}+\delta z_{\rm G}$, where $x_{\rm at}$ has distribution~\eqref{eq:three-angle-family}, $z_{\rm G}$ is an independent unit Gaussian and $\delta>0$ is fixed. Its characteristic function at the relevant arguments is
\begin{equation}
 \varphi_{p,\delta}(k\sigma)=e^{-\delta^2k^2\sigma^2/2}(1-p+p\cos k\sigma).
 \label{eq:smooth-char}
\end{equation}
 The coherence matrix is still affine in $p$. Put $c_k=\varphi_{p,\delta}(k\sigma)$
for $k=1,2,4$. Its entries are
\begin{equation}
\begin{gathered}
C_{p,\delta}=\begin{pmatrix}a_{p,\delta}&b_{p,\delta}\\-b_{p,\delta}&d_{p,\delta}\end{pmatrix},\\
\begin{aligned}
a_{p,\delta}&=(4c_1+4c_2+c_4)/9,\\
d_{p,\delta}&=(4c_1-2c_2-2c_4)/9,\\
b_{p,\delta}&=\sqrt2(c_1-2c_2+c_4)/9.
\end{aligned}
\end{gathered}
\label{eq:smooth-coherence-matrix}
\end{equation}

The planar segment construction for broadened noise requires the coherence determinant to remain nonnegative. This sign fixes the branch of the trace-norm formula used in the minimax argument. The bounds below prove it over the stated interval for $\delta=1/10$.

 For the broadened optical matrix with $\delta=1/10$ and $0<\sigma\le0.8$, the determinant is nonnegative throughout the family. To check this without a numerical sign test, at the endpoint $p=1$ put $c_k^{\rm sm}=e^{-\delta^2k^2\sigma^2/2}\cos(k\sigma)$. Then $c_2^{\rm sm},c_4^{\rm sm}\le c_1^{\rm sm}$, so the lower diagonal is nonnegative. The bounds $c_1^{\rm sm}\ge0.677824$, $c_2^{\rm sm}\ge-0.28$, $c_4^{\rm sm}\ge-1$ make $4c_1^{\rm sm}+4c_2^{\rm sm}+c_4^{\rm sm}>0$, proving the upper diagonal positive. At $p=0$ both diagonals are nonnegative directly. Affine interpolation preserves their signs and hence the nonnegativity of $\det C_{p,\delta}=a_{p,\delta}d_{p,\delta}+b_{p,\delta}^2$.

The segment-projection argument therefore gives the exact saddle between the broadened endpoint vectors. With $v_{\rm jit}=\delta^2$, the noise norm is $h_2=p(1-p)+4p v_{\rm jit}+2v_{\rm jit}^2$. For $v_{\rm jit}\le1/4$ its maximum occurs at $p=1/2+2v_{\rm jit}$, giving
\begin{equation}
 1-F_R=(1/8+v_{\rm jit}+3v_{\rm jit}^2)\sigma^4+o(\sigma^4).
 \label{eq:smooth-leading}
\end{equation}
 For the nominal decoder designed for the broadened family at $p_0=1/10$, whose residual is $x^2-p_0-v_{\rm jit}$, the worst coefficient is $81/200+2v_{\rm jit}+v_{\rm jit}^2$. The guarantee applies to the declared Gaussian-broadened mixture family, whose characteristic function fixes the required harmonics.

 At $\sigma=1/5$, the atomic family has nominal worst infidelity $6.52382\times10^{-4}$ and robust worst infidelity $2.25547\times10^{-4}$, a factor of $2.8924$. With $\delta=1/10$, the corresponding values are $6.84308\times10^{-4}$ and $2.46141\times10^{-4}$, a factor of $2.7801$. Figure~\ref{fig:benchmark} also shows the dependence on nominal calibration and noise width.

\section{Uniform optical expansion and constants}
\label{app:joint-details}
The finite-alphabet and contamination results of Propositions~\ref{prop:finite-alphabet} and~\ref{prop:contamination} motivate a simultaneous-limit question. For the four-photon memory, the optical trace-norm formula in Eq.~\eqref{eq:opt-polar-value}, with the binary coherence matrix of Eq.~\eqref{eq:benchmark-C} at $p=1$, has a positive sixth-order loss; the following expansion controls how contamination competes with it.
  \subsection{A uniform crossover out of binary noise}

\label{sec:joint-optical}

For an equally weighted binary prior $x=\pm1$, every run has the same squared rotation magnitude. The first harmful amplitude in this memory is proportional to $x^2$, so one calibrated coherent subtraction cancels it across the entire ensemble. The quartic loss vanishes, leaving a sixth-order term. Adding other magnitudes restores a fluctuating $x^2$ and hence quartic loss. We now determine how small this departure from binary support must be to retain the sixth-order behavior. A remainder bound uniform in the changing prior will control this simultaneous limit.

For the same optical code and generator, write $y_j=\mathbb E[x^j]$ and
$\mathcal E(\sigma,\nu)=1-F_{e,\mathrm{opt}}(\sigma,\nu)$.
The optical trace-norm formula gives an explicit bound in terms of two physical controls, an upper bound on the eighth moment and a positive lower bound on the second moment.
 Consider any family of symmetric rescaled priors with fixed constants $K_8\ge1$, $\eta>0$ satisfying
$y_8\le K_8$ and $y_2\ge\eta$. Uniformly over that family,
\begin{equation}
\begin{gathered}
\begin{aligned}
\mathcal E(\sigma,\nu)={}&\frac{y_4-y_2^2}{2}\sigma^4\\
&+\frac{75y_2y_4-11y_6}{36}\sigma^6+R_\nu,
\end{aligned}\\
|R_\nu|\le10K_8^2\sigma^8,\quad
0\le\sigma\le\min\left\{\frac14,\sqrt{\frac\eta{8K_8^2}}\right\}.
\end{gathered}
\label{eq:uniform-optical-sixth}
\end{equation}
 The distribution may vary with $\sigma$. Recovery is optimized over all CPTP maps separately for each prior; the remainder bound is uniform over the stated family. The uniform remainder is derived below, after its consequences for the changing prior. Equation~\eqref{eq:joint-minimax} below gives a common-decoder consequence when the available information is an upper bound on contamination.

\begin{figure*}[t]
\centering
\includegraphics[width=\textwidth]{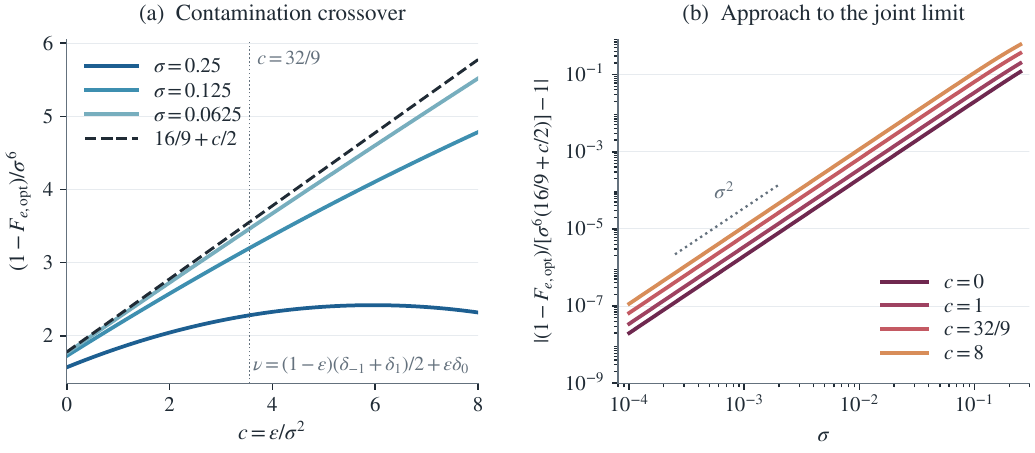}
\caption{Joint weak-noise and contamination limit for the optical code, with $\nu=(1-\epsilon)(\delta_{-1}+\delta_1)/2+\epsilon\delta_0$. (a) Exact prior-informed optimal infidelity divided by $\sigma^6$ along $\epsilon=c\sigma^2$. Recovery is selected separately for each specified prior. The dashed line is the proved limit $16/9+c/2$; its binary and contamination contributions are equal at $c=32/9$. (b) Relative discrepancy from this limit at fixed $c$. The uniform limit is analytical; finite-width curves illustrate its range of accuracy.}
\label{fig:contamination-crossover}
\end{figure*}

Let $C_\nu$ denote Eq.~\eqref{eq:opt-cross-matrix} for the angle $\tau=\sigma x$, with $x\sim\nu$. In this subsection let $b_\nu$ be the upper-right entry of $C_\nu$ and set $r_\nu=1-\Tr C_\nu$. The proof of Eq.~\eqref{eq:uniform-optical-sixth} expands its three cosine harmonics. The stated width range keeps its determinant nonnegative, so its trace norm is the Euclidean norm of $(1-r_\nu,2b_\nu)$. Rationalizing that norm controls the error uniformly as $h_2$ vanishes. Section~\ref{sec:optical-uniform-remainder} derives the explicit bound; no pointwise asymptotic formula is substituted into a varying prior.

For symmetric noise, $h_2=y_4-y_2^2=\operatorname{Var}(x^2)$ measures variation of the squared angle magnitude. Small $h_2$ therefore means that $x^2$ is nearly constant in mean square, even if the sign remains unknown. To compare the sixth-order coefficient with its binary value $16y_2^3/9$, write
\begin{equation}
\begin{split}
75y_2y_4-11y_6={}&64y_2^3+64y_2h_2\\
&-11\operatorname{Cov}(x^2,x^4),
\end{split}
\label{eq:optical-covariance-identity}
\end{equation}
 and apply the Cauchy--Schwarz bound
\begin{equation}
|\operatorname{Cov}(x^2,x^4)|\le\sqrt{K_8h_2}.
\end{equation}
Consequently,
\begin{equation}
\begin{split}
&\left|\mathcal E-\frac{h_2}{2}\sigma^4-\frac{16}{9}y_2^3\sigma^6\right|\\
&\qquad\le\frac{64K_8h_2+11\sqrt{K_8h_2}}{36}\sigma^6+10K_8^2\sigma^8.
\end{split}
\label{eq:joint-general-bound}
\end{equation}
 This error is uniformly small relative to the two retained terms as
$(h_2,\sigma)\to(0,0)$, because $y_2\ge\eta$.
In particular, let $y_{2,0}$ be the positive limiting second moment. Then
\begin{equation}
\begin{gathered}
y_2\to y_{2,0}>0,\qquad h_2/\sigma^2\to c<\infty\\
\Longrightarrow\quad
\mathcal E/\sigma^6\to c/2+16y_{2,0}^3/9.
\end{gathered}
\label{eq:joint-general-limit}
\end{equation}
  If instead $h_2\to0$ and $h_2/\sigma^2\to\infty$, then
$\mathcal E/(h_2\sigma^4)\to1/2$.
The lower second-moment bound excludes a simultaneous collapse of the
entire rescaled angle distribution to zero.

For the three-angle family, put $p=1-\epsilon$ and abbreviate $\mathcal E(\sigma,p)=\mathcal E(\sigma,\nu_p)$.
The constants sharpen to
\begin{equation}
\begin{gathered}
\begin{aligned}
\mathcal E(\sigma,1-\epsilon)={}&\frac{\epsilon(1-\epsilon)}2\sigma^4\\
&+\frac{(1-\epsilon)(64-75\epsilon)}{36}\sigma^6+R,
\end{aligned}\\
|R|\le6\sigma^8,\qquad 0\le\epsilon\le1,\quad0\le\sigma\le1/4.
\end{gathered}
\label{eq:joint-three-angle}
\end{equation}
The bound is uniform even at the deterministic endpoint $p=0$. Section~\ref{sec:optical-uniform-remainder} proves it from the same finite-width expression.

Let $\mathcal E_{\rm lead}=\epsilon\sigma^4/2+16\sigma^6/9$.
Equation~\eqref{eq:joint-three-angle} implies the relative bound
\begin{equation}
 \frac{|\mathcal E-\mathcal E_{\rm lead}|}{\mathcal E_{\rm lead}}
 \le\frac{139}{64}\epsilon+\frac{27}{8}\sigma^2
 \qquad(\sigma>0).
\label{eq:joint-relative-bound}
\end{equation}
Thus every path $\epsilon/\sigma^2\to c$ has
$\mathcal E/\sigma^6\to16/9+c/2$.
The quartic and sextic terms become comparable at
$\epsilon=(32/9)\sigma^2$.
Zero-angle events break the binary identity $x^2=1$ and restore
quartic error, even though they reduce the angle variance to
$1-\epsilon$. Retaining the binary advantage therefore requires control
of the noise alphabet as well as its variance. This crossover has an exact partial-information interpretation.
Suppose only $0\le\epsilon\le\bar\epsilon\le1/4$ is known, and write $\mathcal N_{\sigma,\epsilon}=\mathcal N_{\sigma,\nu_{1-\epsilon}}$ within this family.
For $0<\sigma\le1/4$,
use the binary quantities $r_{\rm bin},b_{\rm bin}$ defined in Section~\ref{sec:benchmark} at $p=1$, and write $D_{\rm bin}=r_{\rm bin}^2+4b_{\rm bin}^2$.
Section~\ref{sec:optical-uniform-remainder} bounds the stationary point by $r_{\rm bin}/D_{\rm bin}\le12/17<3/4$.
The recovery vector is $\mathbf v_p=(1-pr_{\rm bin},2pb_{\rm bin})$.
Its unconstrained norm minimum lies below $p=3/4$, while the admissible
segment is $p\in[1-\bar\epsilon,1]\subset[3/4,1]$.
Its nearest point is therefore the endpoint $p=1-\bar\epsilon$.
The segment-projection argument used for the benchmark proves
\begin{equation}
 \max_{\mathcal R}\inf_{0\le\epsilon\le\bar\epsilon}
 F_e(\mathcal R\mathcal N_{\sigma,\epsilon})
 =F_{e,\mathrm{opt}}(\sigma,\nu_{1-\bar\epsilon}).
\label{eq:joint-minimax}
\end{equation}
The endpoint's polar decoder is common to the whole uncertainty class.
Hence the same uniform crossover holds with $\epsilon$ replaced by the
known upper bound $\bar\epsilon$.

The effect persists for smooth noise.
For a fixed symmetric contaminant $\nu_c$ with finite eighth moment,
let $y_j^{(c)}=\int x^j d\nu_c$ and
$C_*=\int(x^2-1)^2d\nu_c$.
Mixing it into the binary prior with weight $\epsilon$ gives exactly
$h_2=\epsilon C_*-\epsilon^2(y_2^{(c)}-1)^2$.
Equation~\eqref{eq:uniform-optical-sixth} therefore gives
\begin{equation}
\begin{split}
\mathcal E={}&\frac{C_*\epsilon}{2}\sigma^4+\frac{16}{9}\sigma^6\\
&+O(\epsilon^2\sigma^4+\epsilon\sigma^6+\sigma^8).
\end{split}
\end{equation}
 uniformly for $\epsilon$ in a fixed sufficiently small interval.
The coefficient $C_*$ is the Christoffel-weighted residual for the binary
node polynomial $x^2-1$.
Alternatively, broaden both peaks by setting $x=x_{\rm at}+\delta z_{\rm G}$, where
$x_{\rm at}=\pm1$ equally likely and $z_{\rm G}$ is an independent unit Gaussian.
With $v_{\rm jit}=\delta^2$, its moments are
$y_2=1+v_{\rm jit}$, $y_4=1+6v_{\rm jit}+3v_{\rm jit}^2$ and
$y_6=1+15v_{\rm jit}+45v_{\rm jit}^2+15v_{\rm jit}^3$.
Equation~\eqref{eq:uniform-optical-sixth} becomes
\begin{equation}
\begin{split}
\mathcal E={}&(2v_{\rm jit}+v_{\rm jit}^2)\sigma^4\\
&+\left(\frac{16}{9}+10v_{\rm jit}+5v_{\rm jit}^2+\frac53v_{\rm jit}^3\right)\sigma^6\\
&+O(\sigma^8).
\end{split}
\label{eq:joint-smooth}
\end{equation}
 uniformly on any fixed bounded interval of $v_{\rm jit}$.
Thus $\delta^2/\sigma^2\to c$ gives
$\mathcal E/\sigma^6\to16/9+2c$.
The physical angular broadening is $\sigma\delta$. Broadening of order $\sigma^2$ preserves sixth-order suppression with a changed coefficient; broadening $o(\sigma^2)$ retains the binary coefficient $16/9$. Figure~\ref{fig:contamination-crossover} shows how the finite-width curves approach the controlled limit. These conclusions close the joint limit for this optical memory.
An arbitrary-code joint-limit result requires the corresponding
terminal higher-order recovery coefficient.

\subsection{Uniform remainder and endpoint control}
\label{sec:optical-uniform-remainder}

The binary-noise crossover lets the prior vary while the angle width shrinks. We therefore bound the cosine and square-root remainders uniformly over the declared moment class. The same estimates locate the least-favorable endpoint when contamination is known only through an upper bound.

To prove Eq.~\eqref{eq:uniform-optical-sixth}, write the real coherence matrix for the symmetric prior as
\begin{equation}
C_\nu=\begin{pmatrix}a_\nu&b_\nu\\-b_\nu&d_\nu\end{pmatrix},\quad r_\nu=1-a_\nu-d_\nu.
\end{equation}
Cosine remainders give
\begin{align}
|a_\nu-1+2y_2\sigma^2|&\le\tfrac32K_8\sigma^4, \notag \\
|d_\nu-2y_2\sigma^2|&\le3K_8\sigma^4,\\
r_\nu&=y_4\sigma^4-\tfrac{11}{18}y_6\sigma^6+\rho_r,\notag \\
b_\nu&=-\tfrac{y_2}{\sqrt2}\sigma^2+\tfrac{25\sqrt2y_4}{24}\sigma^4+\rho_b,\notag \\
|\rho_r|&\le\tfrac{K_8}{5}\sigma^8,\notag \\
|\rho_b|&\le K_8\sigma^6. \notag
\end{align}
Here the last two remainder constants follow respectively from
$(8+2\cdot2^8+4^8)/(9\cdot8!)<1/5$ and
$\sqrt2(1+2\cdot2^6+4^6)/(9\cdot6!)<1$.
The declared width range makes $a_\nu>1/2$ and
$d_\nu\ge13\eta\sigma^2/8>0$ for $\sigma>0$.
Thus $\det C_\nu=a_\nu d_\nu+b_\nu^2>0$, so the exact trace-norm formula gives
\begin{equation}
\begin{gathered}
\mathcal E=\frac{1-\sqrt{1-w}}2,\quad w=2r_\nu-r_\nu^2-4b_\nu^2,\\
0\le\mathcal E-\frac w4
=\frac{w^2}{4(1+\sqrt{1-w})^2}\le\frac{w^2}4.
\end{gathered}
\label{eq:joint-rationalization}
\end{equation}
 In particular $0\le w\le1$.
Squaring the expansion of $b_\nu$, with $\sigma\le1/4$, bounds its remainder
through sixth order by $16K_8^2\sigma^8$ in $4b_\nu^2$.
Also $|r_\nu|\le2K_8\sigma^4$.
The remainder of $w$ is therefore at most
$(2K_8/5+20K_8^2)\sigma^8$, and
$0\le w\le2r_\nu\le4K_8\sigma^4$ bounds the final square-root correction
by $4K_8^2\sigma^8$. Their sum after division by four is below
$10K_8^2\sigma^8$, proving Eq.~\eqref{eq:uniform-optical-sixth}.
These conservative bounds require no convergent full moment expansion.

For completeness, let $r_{\rm bin},b_{\rm bin}$ refer to $p=1$.
Their preceding cosine bounds give $0\le r_{\rm bin}\le\sigma^4$ and
\begin{equation}
 \left|4b_{\rm bin}^2-2\sigma^4+\frac{25}{3}\sigma^6\right|<16\sigma^8.
\end{equation}
The mixture has $w_p=2pr_{\rm bin}-p^2(r_{\rm bin}^2+4b_{\rm bin}^2)$.
Its remainder through degree six is at most $87\sigma^8/5$, uniformly
in $p$, and $0\le w_p\le2\sigma^4$.
Equation~\eqref{eq:joint-rationalization} bounds the remaining correction
by $\sigma^8$, giving $(87/20+1)\sigma^8<6\sigma^8$.

\paragraph{The least-favorable contamination endpoint.}
For the binary quantities of Section~\ref{sec:joint-optical}, put $D_{\rm bin}=r_{\rm bin}^2+4b_{\rm bin}^2$. The preceding estimates and $0<\sigma\le1/4$ give
\begin{equation}
\begin{gathered}
D_{\rm bin}\ge\sigma^4\left(2-\frac{25}{3}\sigma^2-16\sigma^4\right)
\ge\frac{17}{12}\sigma^4,\\
r_{\rm bin}/D_{\rm bin}\le12/17<3/4.
\end{gathered}
\end{equation}
Here $r_{\rm bin}\le\sigma^4$ was used for the ratio. This places the stationary norm minimum below the allowed segment $p\in[1-\bar\epsilon,1]\subset[3/4,1]$, proving the endpoint assertion used in Eq.~\eqref{eq:joint-minimax}.

\section{Computational methods and checks}
\label{app:computational-checks}
Exact polynomial arithmetic and symbolic identities were checked with SymPy 1.13.1~\cite{MeurerEtAl2017}; high-precision matrix and crossover evaluations used mpmath 1.3.0~\cite{Mpmath2023}, and the figures were drawn with Matplotlib 3.10.0~\cite{Hunter2007}. These computations test the finite matrices, polynomial identities and remainder bounds specified by the proofs; the general statements rest on the proofs.

The Petz coefficient of Proposition~\ref{thm:petz} was compared with exact Gram square roots in the five finite models of Appendix~\ref{app:petz-numerics} at 80-digit precision; extrapolation in the width reproduced the predicted coefficients with relative error below $2\times10^{-6}$. The calibration-frontier curves of Fig.~\ref{fig:optical-calibration-frontier} use 100-digit evaluations of the exact formulas. At seven widths from $10^{-5}$ to $10^{-1}$ they agree with an independently formulated computation of the same guarantees to relative precision below $10^{-79}$. The finite-width curves of Fig.~\ref{fig:contamination-crossover} evaluate the exact two-dimensional coherence formula at 100 decimal digits and agree with retained 160-digit physical-Hamiltonian computations.

The benchmark scores of Appendix~\ref{app:benchmark-details} were checked with explicit Kraus maps in the five-dimensional optical space; across 180 decoder/channel checks the largest score discrepancy was $1.34\times10^{-15}$. Six robust and eighteen oracle recovery problems were also solved as Choi semidefinite programs~\cite{FletcherShorWin2007} with CVXPY 1.9.2 and Clarabel. Their largest objective discrepancy from the analytical answer was $4.57\times10^{-8}$, with every run returning the reduced-accuracy status \texttt{optimal\_inaccurate}. A second comparison tested thirteen unrestricted recovery problems, including singular and noncommuting boundaries, optical cases, a known unitary and complete depolarization. It used CVXPY 1.9.3~\cite{DiamondBoyd2016,AgrawalEtAl2018}, with each problem evaluated by Clarabel 0.11.1~\cite{GoulartChen2024} and SCS 3.3.1~\cite{ODonoghueEtAl2016}. Repaired primal--dual gaps were at most $6.54\times10^{-8}$; SCS returned twelve \texttt{optimal} and one \texttt{optimal\_inaccurate} statuses, and Clarabel one and twelve, respectively.

The worst-input bracket of Proposition~\ref{prop:worst-input} was checked on the three-shell model of Eq.~\eqref{eq:three-shell-teaching-model} with logical dimensions two, three and four, for symmetric and asymmetric priors at widths down to $1/40$. The diamond norm was computed from its semidefinite program with CVXPY 1.9.2 and Clarabel 0.11.1, using NumPy 2.2.2~\cite{HarrisEtAl2020} and SciPy 1.18.1~\cite{VirtanenEtAl2020}. The corrected ratio of worst-input error to entanglement infidelity was $1.000$, $1.41$--$1.44$ and $1.57$ for $d=2,3,4$, while the uncorrected asymmetric error grew as $\sigma^{2m-1}$. The constants $\mathfrak h_{m,r}$ were computed from the dual in Eq.~\eqref{eq:calibration-mismatch-dual} and, independently, by maximizing over symmetric atomic priors; the two agree within $4\times10^{-11}$ for $(m,r)=(3,4),(3,5),(4,5),(4,6)$, and the dual reproduces Eq.~\eqref{eq:calibration-quartic-factor} at $m=2$. The exact value of $\mathfrak h_{3,4}$, its polynomial identity and the orthogonality of the stated prior were verified symbolically. The boundary Lanczos identity stated after Eq.~\eqref{eq:physical-defect} was checked for $d=3$, $L=2$ in a random physical frame; over 200 random logical inputs the coefficients agreed with the formula to $7\times10^{-15}$. These finite checks corroborate the analytical statements at the tested parameters.

\bibliographystyle{quantum}
\bibliography{references}
\onecolumngrid
\end{document}